\newif{\ifarxiv}
\journal{\relax} 
\newcommand{\updated}{\color{Brown}}
\newtheorem{definition}{Definition}[section]
\newtheorem{theorem}[definition]{Theorem}
\newtheorem{lemma}[definition]{Lemma}
\newtheorem{corollary}[definition]{Corollary}
\newtheorem{proposition}[definition]{Proposition}
\newenvironment{proof}{{\it Proof. }}{\hfill$\Box$\par}
\newtheorem{newdefinition}{Definition}[section]
\newtheorem{newremark}[newdefinition]{Remark}
\newtheorem{newlemma}[newdefinition]{Lemma}
\newcommand\Topcat{\mathbf{Top}}
\newcommand\dG{{\mathsf{d}}}
\newcommand\patch{{\mathsf{patch}}}
\newcommand\pt{{\mathsf{pt}}}
\newcommand{\real}{\mathbb{R}}
\newcommand{\creal}{\overline{\real}_+}
\newcommand\Rp{\real_+}
\newcommand\nat{\mathbb{N}}
\newcommand\Open{\mathcal O}
\newcommand\upc{\mathop{\uparrow}\nolimits}
\newcommand\dc{\mathop{\downarrow}\nolimits}
\newcommand\uuarrow{\rlap{$\uparrow$}\raise.5ex\hbox{$\uparrow$}}
\newcommand\ddarrow{\rlap{$\downarrow$}\raise.5ex\hbox{$\downarrow$}}
\newcommand\diff{\smallsetminus}
\newcommand\Val{\mathbf V}
\newcommand\Pred{\mathbb{P}}
\newcommand\Angel{{\mathtt{A}}}
\newcommand\Demon{{\mathtt{D}}}
\newcommand\Nature{{\mathtt{P}}}
\newcommand\AN{{\Angel\Nature}}
\newcommand\DN{{\Demon\Nature}}
\newcommand\ADN{{\Angel\Demon\Nature}}
\newcommand\Smyth{\mathcal Q}
\newcommand\Hoare{\mathcal H}
\newcommand\HV{\mathcal H_{\mathcal V}}
\newcommand\SV{\Smyth_{\mathcal V}}
\newcommand\Plotkin{\mathcal P\ell}
\newcommand\Plotkinn{\Plotkin_{\mathcal V}}
\newcommand\PV\Plotkinn 
\newcommand\Sober{{\mathcal S}}
\newcommand\identity[1]{\mathrm{id}_{#1}}
\newcommand\eqdef{\mathrel{\buildrel \text{def}\over=}}
\newcommand\Lform{\mathcal L}
\newcommand\one{\mathbf 1}
\newcommand{\interior}[1]{int ({#1})} 
\title{Isomorphism Theorems between Models of Mixed
  Choice {\updated (Revised)}}
\author{Jean Goubault-Larrecq\\
  Universit\'e Paris-Saclay, CNRS, ENS Paris-Saclay,\\
  Laboratoire M\'ethodes
  Formelles, 91190, Gif-sur-Yvette, France.\\
  \texttt{jgl@lmf.cnrs.fr}
}
\title[Isomorphisms]{Isomorphism Theorems between Models of Mixed
  Choice {\updated (Revised)}}
\author[J. Goubault-Larrecq]
{{J\ls E\ls A\ls N\ns G\ls O\ls U\ls B\ls A\ls U\ls
  L\ls T\ls -\ls L\ls A\ls R\ls R\ls E\ls C\ls Q$^1$%
  \thanks{Work partially supported by the INRIA ARC ProNoBiS and the
    ANR programme blanc project CPP.}}
\\
$^1$
\begin{tabular}[t]{l}
  Universit\'e Paris-Saclay, CNRS, ENS Paris-Saclay,\\
  Laboratoire M\'ethodes Formelles \\
  91190, Gif-sur-Yvette, France.
\end{tabular}
  \qquad \texttt{jgl@lmf.cnrs.fr}
}
\date{: --}
\begin{document}

\maketitle

\begin{abstract}
  We relate the so-called powercone models of mixed non-deterministic
  and probabilistic choice proposed by Tix, Keimel, Plotkin, Mislove,
  Ouaknine, Worrell, Morgan, and McIver, to our own models of
  previsions.  Under suitable topological assumptions, we show that
  they are isomorphic.  We rely on Keimel's cone-theoretic variants of
  the classical Hahn-Banach separation theorems, using functional
  analytic methods, and on the Schr\"oder-Simpson Theorem.  {\updated
    Lemma~3.4 in the original 2017 version, published at MSCS, had a
    wrong proof, and we prove a repaired, albeit slightly less general
    version here.}

  MSC [2020] classification: 46E27; 60B05; 68Q87; 28C05; 46T99
    
\end{abstract}

\section{Introduction}
\label{sec:introduction}

Consider the question of giving semantics to a programming language
with \emph{mixed choice}, i.e., with two interacting forms of choice,
non-deterministic and probabilistic.  For example, probabilistic
choice can be the result of random coin flips, while non-deterministic
choice can be the result of interaction with an environment, which
decides which option to take in our place.

We shall be especially interested in domain-theoretic models, where
the space of all choices over a directed complete poset (dcpo) of
values is itself a dcpo.

Models of only one form of choice have been known for a long time.
For non-deterministic choice, the best known models are the Hoare
powerdomain of angelic non-determinism, the Smyth powerdomain of
demonic non-determinism, and the Plotkin powerdomain of erratic
non-determinism, see \cite[Section~6.2]{AJ:domains} or
\cite[Section~IV.8]{GHKLMS:contlatt}: there, choices are represented
as certain sets of values, from which the environment may choose.  For
probabilistic choice, a natural model is the Jones-Plotkin model of
continuous valuations \cite{Jones:proba}, where choices are
represented as objects akin to (sub)probability distributions over the
set of possible values, called continuous valuations.

Several denotational models of mixed choice were proposed in the past.
Let us list some of them:
\begin{enumerate}
\item In the \emph{powercone models}, choices are modeled as certain
  sets $E$ of continuous valuations.  These sets are, in particular,
  convex, meaning that $a \nu + (1-a) \nu'$ is in $E$ for all $\nu,
  \nu' \in E$ and $a \in [0, 1]$.  Choice proceeds by picking a
  valuation $\nu$ from the set $E$, then drawing a value $v$ at random
  according to $\nu$.  The probability that $v$ \emph{may} fall in
  some set $U$ is $\sup_{\nu \in E} \nu (U)$, while the probability
  that it \emph{must} fall in $U$ is $\inf_{\nu \in E} \nu (U)$.  Such
  models were proposed and studied by Mislove
  \cite{Mislove:nondet:prob}, by Tix \emph{et al.}
  \cite{Tix:PhD,TKP:nondet:prob}, and by Morgan and McIver
  \cite{DBLP:journals/acta/McIverM01}.
\item In the \emph{prevision models}, choices are modeled as certain
  second-order functionals $F$, called previsions.  They represent
  directly the probability that a value $v$ may fall in $U$, or must
  fall in $U$, depending on the kind of prevision, as $F (\chi_U)$,
  where $\chi_U$ is the characteristic function of $U$.  These were
  proposed by the author in \cite{Gou-csl07}, and also underlie the
  (generalized) predicate transformer semantics of
  \cite{KP:predtrans:pow} or of \cite{MMIS:prob:wp}.
\item The models of indexed valuations
  \cite{Varacca:PhD,Varacca:indexed}
  are alternate models for mere
  probabilistic choice.  They combine well with angelic
  non-determinism, yielding models akin to the powercone models,
  except that we do not require the sets of indexed
  valuations/continuous random variables to be convex.  Categorically,
  the combination is obtained via a distributivity law between monads.
\item The monad coproduct models of \cite{Luth:PhD}, once instantiated
  to the monads of non-blocking non-deterministic and probabilistic
  choice \cite{GU:idmon:coprod}, also yield a model of mixed choice,
  which the latter authors argue is close to Varacca's.
\end{enumerate}
Our goal is to relate the first two, and to show that, under mild
assumptions, they are isomorphic.  We have already announced this
result in \cite{Gou-fossacs08a}.  The proof was complex and was
limited to continuous dcpos.  Similar isomorphisms were proved, with
the general aim of giving generalized predicate transformer semantics
to powercone models, by Keimel and Plotkin \cite{KP:predtrans:pow},
again for continuous dcpos, and for unbounded continuous valuations,
not (sub)probability valuations\footnote{While we were preparing the
  final version of this document, Klaus Keimel informed me of a newer
  paper by the same authors \cite{KP:mixed}, which, among other
  things, also deals with probability and subprobability valuations,
  relying on a novel notion of Kegelspitze.}.  We generalize these
results to much larger classes of topological spaces, using more
streamlined and more general arguments than in \cite{Gou-fossacs08a}.

The basic idea is simple.  There is a map $r$ from the powercone
model, which maps any set $E$ of continuous valuations to $F = r (E)$,
defined by $F (h) \eqdef \sup_{\nu \in E} \int_x h (x) d\nu$ (in the
angelic case; replace $\sup$ by $\inf$ in the demonic case).  One
needs to show that it is continuous, and that it has a continuous
inverse.

\begin{figure}
  \centering
  \ifarxiv
  \begin{picture}(0,0)%
    \includegraphics{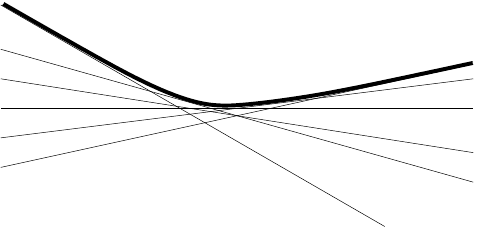}%
  \end{picture}%
  \setlength{\unitlength}{2072sp}%
  \begingroup\makeatletter\ifx\SetFigFont\undefined%
  \gdef\SetFigFont#1#2#3#4#5{%
    \reset@font\fontsize{#1}{#2pt}%
    \fontfamily{#3}\fontseries{#4}\fontshape{#5}%
    \selectfont}%
  \fi\endgroup%
  \begin{picture}(7258,3457)(663,-6373)
  \end{picture}%
  \else
  \input{r.pdf_t} 
  \fi
  \caption{Convex hulls}
  \label{fig:r}
\end{figure}

Our approach is typical of convex analysis.  By a form of the Riesz
representation theorem, which we shall make explicit below, there is
an isomorphism between continuous valuations $\nu$ and \emph{linear}
previsions, i.e., previsions $F$ such that
$F (a \cdot h+(1-a) \cdot h') = a F(h) + (1-a) F(h')$ for all
$a \in [0, 1]$: define $F (h)$ as $\int_x h (x) d\nu$.  It is useful
to imagine such linear previsions by drawing the curve $y = F (h)$,
where $h$ serves as $x$-coordinate, and convincing oneself that such
curves should be straight lines.  Modulo this isomorphism, $r$ maps a
set $E$ of linear previsions (the straight lines in
Figure~\ref{fig:r}) to its pointwise sup, shown as a fat curve.  This
fat curve is always convex (meaning that
$F (a \cdot h+(1-a) \cdot h') \leq a F(h) + (1-a) F(h')$ for all
$a \in [0, 1]$), and will be our (angelic) prevision $r (E)$.
Conversely, given any convex, fat curve $F$, the set of straight lines
below it form a convex set of linear previsions $s (F)$.

Showing that $s$ is a right inverse to $r$, i.e., that $r (s (F)) = F$
is essentially the Hahn-Banach theorem (or the corresponding variant
in our setting).  The difficult part will be to show that $s$ takes
its values in the right space, and that it is continuous.

\paragraph{\em Outline.}  We present all needed domain-theoretic,
topological, and cone-theoretic notions in
Section~\ref{sec:preliminaries}.  This is a fairly long section, and
our only excuse is that we have tried to make the paper as
self-contained as possible.  Section~\ref{sec:retr-powerc-onto}
exhibits our prevision models as retracts of spaces of certain sets
(closed, compact, or lenses) of certain continuous valuations
(unbounded, subprobability, probability).  The only difference that
these sets exhibit compared to powercones is that we are not requiring
them to be convex.  The existence of these retractions has a certain
number of interesting consequences, which we also list there.
Finally, using the Schr\"oder-Simpson Theorem, we establish the
desired isomorphisms in Section~\ref{sec:isom}.  We conclude in
Section~\ref{sec:conclusion}.

{\updated This paper is a revised version of \cite{JGL-mscs16}, fixing
  a mistake in Lemma~3.4 there and its consequences throughout the
  paper.  The changes will be shown in color, as with the current
  paragraph.  Some typos and minor errors have been corrected, too,
  most often without any particular highlighting.  In an additional
  Section~\ref{sec:review-papers-citing}, we investigate the
  consequences of the mistake on papers in the literature.  In short,
  no paper seems to be affected by the mistake.}

\section{Preliminaries}
\label{sec:preliminaries}

We refer the reader to \cite
{GHKLMS:contlatt,AJ:domains,Mislove:topo:CS} for background on domain
theory and topology.  We shall write $x \in X \mapsto f (x)$ for the
function that maps every element $x$ of some space $X$ to the value $f
(x)$, sometimes omitting mention of the space $X$.

\paragraph{\em Domain Theory.}

A set with a partial ordering $\leq$ is a {\em poset\/}.  We write
$\upc E$ for $\{y \in X \mid \exists x \in E \cdot x \leq y\}$, $\dc E
\eqdef \{y \in X \mid \exists x \in E \cdot y\leq x\}$.  A {\em dcpo\/} is
a poset in which every directed family ${(x_i)}_{i \in I}$ has a least
upper bound (a.k.a., supremum or {\em sup\/}) $\sup_{i \in I} x_i$.
Symmetrically, we call {\em inf\/} (or infimum) any greatest lower
bound.  A family ${(x_i)}_{i \in I}$ is {\em directed\/} iff it is
non-empty, and any two elements have an upper bound in the family.
Any poset can be equipped with the {\em Scott topology\/}, whose opens
are the upward closed sets $U$ such that whenever ${(x_i)}_{i \in I}$
is a directed family that has a least upper bound in $U$, then some
$x_i$ is in $U$ already.
A dcpo $X$ is {\em pointed\/} iff it has a least element,
which we shall always write $\bot$.

Given a poset $X$, we shall write $X_\sigma$ for $X$ seen as a
topological space, equipped with its Scott topology.

We shall always consider $\Rp$, or
$\creal \eqdef \Rp \cup \{+\infty\}$, as posets, and implicit endow
them with the Scott topology of their ordering $\leq$.  We shall still
write ${\Rp}_\sigma$ or ${\creal}_\sigma$ to make this fact clear.  The
opens of $\Rp$ in its Scott topology are the intervals $(r, +\infty)$,
$r \in \Rp$, together with $\Rp$ and $\emptyset$.  Those of
$[0, 1]$ are $\emptyset$, $[0, 1]$, and $(r, 1]$, $r \in [0, 1)$.
Those of $\creal$ are $\emptyset$, $\creal$, and $(r, +\infty]$,
$r \in \Rp$.

Given two dcpos $X$ and $Y$, a map $f : X \to Y$ is {\em
  Scott-continuous\/} iff it is monotonic and $f (\sup_{i \in
  I} x_i) = \sup_{i \in I} f (x_i)$ for every directed family
${(x_i)}_{i \in I}$ in $X$.

We write $[X \to Y]$ for the space of all Scott-continuous maps from
$X$ to $Y$, for two dcpos $X$ and $Y$.  This is again a dcpo, with the
pointwise ordering.  {\updated We also write $\Lform X$ for
  $[X \to {\creal}_\sigma]$.  Instead of writing $(\Lform X)_\sigma$,
  we will implicitly assume that $\Lform X$ is given its Scott
  topology, unless we say otherwise.}

The {\em way-below\/} relation $\ll$ on a poset $X$ is defined by
$x \ll y$ iff, for every directed family ${(z_i)}_{i \in I}$ that has
a least upper bound $z$ such that $y \leq z$, then $x \leq z_i$ for
some $i \in I$ already.  We also say that $x$ {\em approximates\/}
$y$.  Note that $x \ll y$ implies $x \leq y$, and that
$x' \leq x \ll y \leq y'$ implies $x' \ll y'$.  However, $\ll$ is not
reflexive or irreflexive in general.  Write
$\uuarrow E \eqdef \{y \in X \mid \exists x \in E \cdot x \ll y\}$,
$\ddarrow E \eqdef \{y \in X \mid \exists x \in E \cdot y \ll x\}$.
$X$ is {\em continuous\/} iff, for every $x \in X$, $\ddarrow x$ is a
directed family, and has $x$ as least upper bound.  A {\em basis\/} is
a subset $B$ of $X$ such that any element $x \in X$ is the least upper
bound of a directed family of elements way-below $x$ {\em in $B$\/}.
Then $X$ is continuous if and only if it has a basis, and in this case
$X$ itself is the largest basis.

In a continuous poset with basis $B$, {\em interpolation holds\/}: if
$x_1$, \ldots, $x_n$ are finitely many elements way-below $x$, then
there is a $b \in B$ such that $x_1$, \ldots, $x_n$ are way-below $b$,
and $b \ll x$.  (See for example \cite[Section~4.2]{Mislove:topo:CS}.)
In this case, the Scott opens are exactly the unions of sets of the
form $\uuarrow b$, $b \in B$.

\paragraph{\em Topology.}

A topology $\Open$ on a set $X$ is a collection of subsets of $X$,
called the {\em opens\/}, such that any union and any finite
intersection of opens is open.  The \emph{interior} {\updated
  $\interior A$} of a subset $A$ of $X$ is the largest open included
in $A$.  A \emph{closed} subset is the complement of an open subset.
The \emph{closure} $cl (A)$ of $A$ is the smallest closed subset
containing $A$.  An \emph{open neighborhood} $U$ of a point $x$ is
merely an open subset that contains $x$.

A topology $\Open_1$ is \emph{finer} than $\Open_2$ if and only if it
contains all opens of $\Open_2$.  We also say that $\Open_2$ is
\emph{coarser} than $\Open_1$.

A {\em base\/} $\mathcal B$ (not a basis) of $\Open$ is a collection
of opens such that every open is a union of elements of the base.
Equivalently, a family $\mathcal B$ of opens is a base iff for every
$x \in X$, for every open $U$ containing $x$, there is a $V \in
\mathcal B$ such that $x \in V \subseteq U$.  A {\em subbase\/} of
$\Open$ is a collection of opens such that the finite intersections of
elements of the subbase form a base; equivalently, the coarsest
topology containing the elements of the subbase is $\Open$, and then
we say that $\Open$ is \emph{generated} by the subbase.

The specialization preorder of a space $X$ is defined by $x \leq y$ if
and only if for every open subset $U$ of $X$ that contains $x$, $U$
also contains $y$.  For every subbase $\mathcal B$ of the topology of
$X$, it is equivalent to say that $x \leq y$ if and only if every $U
\in \mathcal B$ that contains $x$ also contains $y$.  The
specialization preorder of a dcpo $X$, with ordering $\leq$, in its
Scott topology, is $\leq$.  A topological space is $T_0$ if and only
if $\leq$ is a partial ordering, not just a preorder.  A subset $A$ of
$X$ is \emph{saturated} if and only if it is upward-closed in the
specialization preorder $\leq$.

A map $f$ from $X$ to $Y$ is \emph{continuous} if and only if $f^{-1}
(V)$ is open in $X$ for every open subset $V$ of $Y$.  The
\emph{characteristic} function $\chi_A \colon X \to {\creal}_\sigma$ of
a subset $A$ of $X$, defined as mapping every $x \in A$ to $1$ and
every $x \not\in A$ to $0$, is continuous if and only if $A$ is open
in $X$.  When $X$ and $Y$ are posets in their Scott topology, $f
\colon X \to Y$ is continuous if and only if it is Scott-continuous.

A subset $K$ of a topological space $X$ is \emph{compact} if and only
if every open cover of $K$ has a finite subcover.  The image $f [K]$
of any compact subset of $X$ by any continuous map $f \colon X \to Y$
is compact in $Y$.  A \emph{homeomorphism} is a bijective continuous
map whose inverse is also continuous.

The product of two topological spaces $X$, $Y$ is the set $X \times Y$
with the \emph{product topology}, which is the coarsest that makes the
projection maps $\pi_1 \colon X \times Y \to X$ and $\pi_2 \colon X
\times Y \to Y$ continuous.  Equivalently, a base of this topology is
given by the \emph{open rectangles} $U \times V$, where $U$ is open in
$X$ and $V$ is open in $Y$.

If $X$ and $Y$ are posets, viewed as topological spaces in their Scott
topology, then there is some ambiguity about the notation $X \times
Y$.  We might indeed see the latter as a topological product, with the
just mentioned product topology, or as a product of two posets, with
the Scott topology of the product ordering.  In general, the Scott
topology on the poset product is strictly finer than the product
topology.  This difficulty vanishes when $X$ and $Y$ are continuous
posets: indeed, in this case the product poset is continuous as well,
and a base of the Scott topology in the product is given by the
subsets of the form $\uuarrow (x \times y)$, which are also open in
the product topology, as they coincide with $\uuarrow x \times
\uuarrow y$.

A topological space $X$ is \emph{locally compact} if and only if for
every $x \in X$, for every open subset $U$ of $X$ containing $X$,
there is a compact subset $K \subseteq U$ whose interior contains $x$.
Then we can require $K$ to be saturated as well, replacing $K$ by its
upward closure.  In a locally compact space, every open subset $U$ is
the union of the directed family of all open subsets $V$ such that $V
\subseteq Q \subseteq U$ for some compact saturated subset $Q$.  A
refinement of this is the notion of a \emph{core-compact} space, which
is by definition a space whose lattice of open subsets is a continuous
dcpo.  We shall agree to write $\Subset$ for the way-below relation on
the lattice of open sets of a space.  Every locally compact space is
core-compact, in which $V \Subset U$ if and only if $V \subseteq Q
\subseteq U$ for some compact saturated subset $Q$.

Every continuous poset is locally compact, since whenever $x \in U$,
$U$ open, there is an $y \in U$ such that $y \ll x$, so that we can
take $K \eqdef \upc y$, and $V \eqdef \uuarrow y$.

A topological space $X$ is \emph{well-filtered} if and only if for
every filtered family ${(Q_i)}_{i \in I}$ of compact saturated subsets
of $X$ whose intersection is contained in some open subset $U$, $Q_i
\subseteq U$ for some $i \in I$ already.  (A family of subsets is
\emph{filtered} if and only if it is directed in the reverse inclusion
ordering $\supseteq$.)  Every sober space is well-filtered
\cite[Theorem~II-1.21]{GHKLMS:contlatt}, and every continuous dcpo is
sober in its Scott topology \cite[Corollary~II-1.12]{GHKLMS:contlatt}.
(We won't define sober spaces; but see the paragraph on Stone duality
below.)

There are several topologies one can put on the space $[X \to Y]$ of
continuous maps from $X$ to $Y$, and more generally on any space $Z$
of continuous maps from $X$ to $Y$.  Looking at $Z$, resp., $[X \to
Y]$, as a subspace of the product $Y^X$ (i.e., the space of \emph{all}
maps from $X$ to $Y$), we obtain the topology of \emph{pointwise
  convergence}.  This is also the coarsest topology that makes all the
maps $f \mapsto f (x)$ continuous, for each $x \in X$.  We write $[X
\to Y]_{\mathsf p}$ for $[X \to Y]$ with the topology of pointwise
convergence.  Note that if $Z \subseteq [X \to Y]$, then the topology
of pointwise convergence on $Z$ is also the subspace topology from $[X
\to Y]_{\mathsf p}$.

When $Y$ is ${\creal}_\sigma$ or ${\Rp}_\sigma$, a subbasis of open
sets of $[X \to Y]_{\mathsf p}$ is given by the subsets
$[x > r] \eqdef \{f \in [X \to Y] \mid f (x) > r\}$, $x \in X$,
$r \in \Rp$.  Note that the latter are Scott open.  In particular, the
Scott topology on $\Lform X$
is always finer than the topology of pointwise convergence.

\paragraph{\em Coherence.}

A space $X$ is \emph{coherent} if and only if the intersection of any
two compact saturated subsets is again compact.

A related notion is the following.  Say the the way-below relation
$\ll$ on a lattice is \emph{multiplicative} (see
\cite[Definition~7.2.18]{AJ:domains} or
\cite[Proposition~I.4.7]{GHKLMS:contlatt}) if and only if for all $x,
y_1, y_2$ with $x \ll y_1$ and $x \ll y_2$, we have $x \ll \inf (y_1,
y_2)$.  We have called \emph{core-coherent} those spaces $X$ such that
the way-below relation $\Subset$ was multiplicative on the lattice of
open subsets of $X$ \cite[Comment before Definition~5.8]{JGL-mscs09}.

It is easy to see that every locally compact, coherent space is
(core-compact and) core-coherent.  Indeed, assume $V \Subset U_1$, $V
\Subset U_2$.  By local compactness, there are compact saturated
subsets $Q_1$, $Q_2$ such that $V \subseteq Q_1 \subseteq U_1$, $V
\subseteq Q_2 \subseteq U_2$.  Then $V \subseteq Q_1 \cap Q_2
\subseteq U_1 \cap U_2$.  Since $X$ is coherent, $Q_1 \cap Q_2$ is
compact.  So $V \Subset U_1 \cap U_2$.  In fact, one can check that
the sober, core-compact, and core-coherent spaces are exactly the
sober, locally compact and coherent spaces
\cite[Theorem~7.2.19]{AJ:domains}.

\paragraph{\em Stable compactness.}

A space that is sober, locally compact, and coherent is called
\emph{stably locally compact}.  It is \emph{stably compact} iff stably
locally compact and compact.  An equivalent definition of a stably
locally compact space is a space that is $T_0$, locally compact,
well-filtered, and coherent.

Stably compact spaces were first studied in \cite{Nachbin:toporder},
see also \cite[Section~VI-6]{GHKLMS:contlatt}, or
\cite[Section~2]{AMJK:scs:prob}.  They enjoy the following property,
called de Groot duality.  Let $X^\dG$ be $X$, topologized by taking as
opens the complements of compact saturated subsets of $X$.  Then
$X^\dG$ is stably compact again, and $X^{\dG\dG} = X$.  Moreover, the
specialization ordering of $X^\dG$ is $\geq$, the opposite of the
specialization ordering of $X$.

The \emph{patch topology} on a stably compact space $A$ is the
coarsest topology finer than those of $A$ and $A^\dG$, i.e., it is
generated by the opens of $A$ and the complements of compact saturated
subsets of $A$.  Write $A^\patch$ for $A$ with its patch topology:
this is a compact $T_2$ space, and the specialization ordering $\leq$
on $A$ has a closed graph in $A^\patch \times A^\patch$, making
$(A^\patch, \leq)$ a compact pospace \cite{Nachbin:toporder}.  We
shall be specially interested in $A \eqdef  {\creal}_\sigma$; then $A^\patch$
is merely $\creal$ with its ordinary $T_2$ topology generated by the
intervals $(a, b)$, $[0, b)$, and $(a, +\infty]$.  We shall also use
the fact that any product of stably compact spaces is stably compact,
and the $\_^\patch$ operation commutes with product
\cite[Proposition~14]{AMJK:scs:prob}.  So ${\creal}_\sigma^C$ is stably
compact, and $({\creal}_\sigma^C)^\patch$ is the compact $T_2$ space
$\creal^C$.

We agree to prefix with ``patch-'' any concept relative to patch
topologies.  E.g., a patch-closed subset of $A$ is a closed subset of
$A^\patch$, and a patch-continuous map from $A$ to $B$ is a continuous
map from $A^\patch$ to $B^\patch$.

\paragraph{\em Stone duality.}

There is a functor $\Open$ from the category of topological spaces to
the opposite of the category of frames (certain complete lattices),
defined by: for every topological space $X$, $\Open (X)$ is the frame
of its open subsets, and for every continuous map $f \colon X \to Y$,
$\Open (f)$ is the frame homomorphism that maps every $V \in \Open
(Y)$ to $f^{-1} (V) \in \Open (X)$.  This functor has a right adjoint
$\pt$, which maps every frame to its set of completely prime filters,
with the so-called hull-kernel topology.  For details, see
\cite[Section~7]{AJ:domains} or \cite[Section~V-5]{GHKLMS:contlatt}.

This adjunction establishes a correspondence between properties of
spaces $X$ and properties of frames $L$.  E.g., $X$ is core-compact if
and only if $\Open (X)$ is continuous, $X$ is core-coherent if and
only if the way-below relation on $\Open (X)$ is multiplicative, and
$X$ is compact if and only if the top element of $\Open (X)$ is
finite, i.e., way-below itself.

Going the other way around, $\pt (L)$ is always a sober space.  (In
fact, it is legitimate to call sober any topological space that is
homeomorphic to one of this form.)  For every topological space $X$,
the fact that $\pt$ is right adjoint to $\Open$ entails that $\pt
(\Open (X))$ obeys the following universal property: there is a
continuous map $\eta_X$ from $X$ to $\pt (\Open (X))$ (namely, the
unit of the adjunction), and every continuous map $f$ from $X$ to any
given sober space $Y$ extends to a unique continuous map $f^!$ from
$\pt (\Open (X))$ to $Y$, in the sense that $f^! \circ \eta_X = f$.  A
space obeying that universal property is called a \emph{sobrification}
$\Sober (X)$ of $X$.  Sobrification $\Sober$ is left adjoint to the
forgetful functor from sober spaces to topological spaces; in other
words, $\Sober (X)$ is a free sober space above $X$.  In particular,
all the sobrifications of $X$ are naturally isomorphic.

If $L$ is a continuous frame, then $\pt (L)$ is a locally compact,
sober space \cite[Theorem~7.2.16]{AJ:domains}.  Equivalently, it is
locally compact, $T_0$, and well-filtered
\cite[Theorem~II-1.21]{GHKLMS:contlatt}.  It follows that if $X$ is
core-compact, then its sobrification is locally compact, $T_0$ and
well-filtered.

If $L$ is arithmetic, i.e., is a continuous frame with a
multiplicative way-below relation, then $\pt (L)$ is stably locally
compact \cite[Theorem~7.2.19]{AJ:domains}.  So the sobrification of a
core-compact, core-coherent space is stably locally compact.
Similarly, the sobrification of a compact, core-compact, and
core-coherent space is stably compact.

For any topological space $X$, $X$ and its sobrification $\Sober (X)
\cong \pt (\Open (X))$ have isomorphic lattices of open subsets.  This
informally states that any construction, any property that can be
expressed in terms of opens will apply to both.

One easy consequence, which we shall use in
Section~\ref{sec:retr-errat-case}, is the following.
\begin{lemma}
  \label{lemma:Stone:h}
  The function $!$ that maps $h \in \Lform X$
  to $h^! \in [\Sober (X) \to {\creal}_\sigma]$ is an
  order-isomorphism, a homeomorphism when the function spaces are
  equipped with their Scott topologies, and is natural in $X$.
\end{lemma}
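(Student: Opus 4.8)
The plan is to lean on the universal property of the sobrification recalled just above, together with the fact that ${\creal}_\sigma$ is a continuous dcpo, hence sober in its Scott topology. Since ${\creal}_\sigma$ is sober, every $h \in \Lform X = [X \to {\creal}_\sigma]$ extends to a unique continuous map $h^! \colon \Sober (X) \to {\creal}_\sigma$ with $h^! \circ \eta_X = h$, so the map $!$ is well defined. I would first check that it is a bijection, directly from the universal property. For injectivity, if $h_1^! = h_2^!$ then $h_1 = h_1^! \circ \eta_X = h_2^! \circ \eta_X = h_2$. For surjectivity, given $g \in [\Sober (X) \to {\creal}_\sigma]$, set $h \eqdef g \circ \eta_X \in \Lform X$; then $g$ is a continuous extension of $h$ along $\eta_X$, so by the uniqueness clause $g = h^!$. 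In particular the inverse of $!$ is the precomposition map $g \mapsto g \circ \eta_X$.

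Next I would upgrade this to an order-isomorphism for the pointwise orderings. The inverse $g \mapsto g \circ \eta_X$ is obviously monotonic, so it suffices to show that $!$ itself is monotonic. Here I would use that $\Open (X)$ and $\Open (\Sober (X))$ are isomorphic frames, via $\Open (\eta_X) \colon V \mapsto \eta_X^{-1} (V)$, and that a map into ${\creal}_\sigma$ is determined by its level opens, since $k (y) = \sup \{r \in \Rp \mid y \in k^{-1} ((r, +\infty])\}$ for any $k$ and $y$. Concretely, for $h \in \Lform X$ one has $(h^!)^{-1} ((r, +\infty]) = \Open (\eta_X)^{-1} (h^{-1} ((r, +\infty]))$, since applying $\eta_X^{-1}$ to both sides and using $h^! \circ \eta_X = h$ recovers $h^{-1} ((r, +\infty])$. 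Now $h_1 \leq h_2$ pointwise iff $h_1^{-1} ((r, +\infty]) \subseteq h_2^{-1} ((r, +\infty])$ for every $r$, and transporting these inclusions across the frame isomorphism shows this is equivalent to $h_1^! \leq h_2^!$. The homeomorphism for the Scott topologies is then automatic: $\Lform X$ and $[\Sober (X) \to {\creal}_\sigma]$ are dcpos for the pointwise ordering, and any order-isomorphism between dcpos preserves directed suprema in both directions, hence is Scott-continuous with Scott-continuous inverse.

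Finally, for naturality in $X$, given a continuous map $\phi \colon X \to X'$, the relevant square relates the precomposition map $h \mapsto h \circ \phi$ on the $\Lform$ side to $g \mapsto g \circ \Sober (\phi)$ on the $[\Sober (-) \to {\creal}_\sigma]$ side, where $\Sober (\phi)$ is the functorial action of sobrification. I would prove $(h \circ \phi)^! = h^! \circ \Sober (\phi)$ by uniqueness of extensions: using the naturality square $\Sober (\phi) \circ \eta_X = \eta_{X'} \circ \phi$, one computes $(h^! \circ \Sober (\phi)) \circ \eta_X = h^! \circ \eta_{X'} \circ \phi = h \circ \phi$, so $h^! \circ \Sober (\phi)$ is the continuous extension of $h \circ \phi$ along $\eta_X$, namely $(h \circ \phi)^!$.

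I expect the only genuinely delicate point to be the forward monotonicity, i.e.\ passing from a bijection to an order-isomorphism: everything else is a direct application of the universal property. The cleanest way to secure it is the level-open argument above, which reduces the pointwise order to inclusions of opens and transports them across the frame isomorphism $\Open (X) \cong \Open (\Sober (X))$; one must only take care that this isomorphism sends $h^{-1} ((r, +\infty])$ to exactly $(h^!)^{-1} ((r, +\infty])$, which is where the identity $h^! \circ \eta_X = h$ is used.
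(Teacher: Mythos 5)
Your proof is correct and follows the same skeleton as the paper's: the bijection and its naturality come straight from the adjunction between $\Sober$ and the forgetful functor, the inverse $g \mapsto g \circ \eta_X$ is obviously monotonic, and the Scott homeomorphism is automatic once the order-isomorphism is established. The only place where you genuinely diverge is the one step you correctly single out as delicate, the forward monotonicity of $!$. You prove it by transporting level opens across the frame isomorphism $\Open (\eta_X)$, using $(h^!)^{-1} ((r, +\infty]) = \Open(\eta_X)^{-1} (h^{-1} ((r, +\infty]))$ and the fact that a map into ${\creal}_\sigma$ is recovered from its level opens; this is sound, and as a bonus it gives order-reflection directly rather than deducing it from monotonicity of the inverse. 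The paper instead uses a shorter trick: for $f \leq g$, the map $\sup (f^!, g^!)$ is continuous and extends $\sup (f, g) = g$ along $\eta_X$, hence equals $g^!$ by uniqueness of extensions, so $f^! \leq g^!$. Both arguments exploit specific structure of the target ${\creal}_\sigma$ (yours, that its topology is generated by the opens $(r, +\infty]$; the paper's, that binary suprema of continuous maps into it are continuous), and either is acceptable; the paper's is a one-liner, yours is more explicit about where the isomorphism $\Open (X) \cong \Open (\Sober (X))$ enters.
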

\begin{proof}
  ${\creal}_\sigma$ is sober, hence $h^!$ is well-defined.  Since
  $\Sober$ is left adjoint to the forgetful functor $U$ from sober
  spaces to topological spaces, the correspondence between morphisms
  $h \colon X \to U ({\creal}_\sigma)$ and morphisms $h^! \colon \Sober
  (X) \to {\creal}_\sigma$ is bijective, and is natural in $X$.  Both
  $!$ and its inverse $h' \mapsto h' \circ \eta_X$ are monotonic, and
  therefore define an order-isomorphism, hence an isomorphism of
  spaces with their Scott topologies.  (To show that $!$ is monotonic,
  consider $f \leq g$, and realize that $\sup (f^!, g^!)$ is a
  continuous map that extends $\sup (f, g)=g$, hence must coincide
  with $g^!$ by uniqueness of extensions.)
\end{proof}

\paragraph{\em Cones.}

A cone $C$ is an additive commutative monoid with a scalar
multiplication by elements of $\Rp$, satisfying laws similar to
those of vector spaces.  Precisely, a {\em cone\/} $C$ is endowed with
an addition $+ : C \times C \to C$, a zero element $0 \in C$, and a
scalar multiplication $\cdot : \Rp \times C \to C$ such that:
\[
\begin{array}{c}
  (x+y)+z = x+(y+z) \quad x+y=y+x \quad x+0=x \\
  (rs) \cdot x = r \cdot (s \cdot x) \quad
  1 \cdot x = x \quad 0 \cdot x = 0 \\
  r \cdot (x+y)= r\cdot x+ r \cdot y \quad
  (r+s) \cdot x = r \cdot x + s \cdot x \quad
\end{array}
\]
An \emph{ordered cone} is a cone with a partial ordering that makes
$+$ and $\cdot$ monotonic.  Similarly, a {\em topological cone\/} is a
cone equipped with a $T_0$ topology that makes $+$ and $\cdot$
continuous, where $\Rp$ is equipped with its Scott topology.  In a
\emph{semitopological cone}, we only require $+$ and $\cdot$ to be
separately continuous, not jointly continuous.

An important example of semitopological cone is given by the ordered
cones in which $+$ and $\cdot$ are Scott-continuous (the
\emph{s-cones}), in particular by Tix, Keimel and Plotkin's
\emph{d-cones} \cite{TKP:nondet:prob}, which are additionally required
to be dcpos.  As noticed by Keimel \cite[Remark before
Proposition~6.3]{Keimel:topcones2}, an s-cone $C$ may fail to be a
topological cone, unless $C$ is a \emph{continuous cone}, i.e., an
ordered cone that is continuous as a poset, and where $+$ and $\cdot$
are Scott-continuous.  In that case, the product of the Scott
topologies is the Scott topology of the product ordering, and separate
continuity implies joint continuity.

The most important cone we shall deal with is the ordered cone
$\Lform X$ 
of all continuous maps from $X$ to ${\creal}_\sigma$.  With its Scott
topology, it is a semitopological cone.  It is a topological cone if
$X$ is core-compact, since it is a continuous dcpo in that case, as a
special case of \cite[Theorem~II-4.7]{GHKLMS:contlatt}.  In previous
work, we were using the sub-cone $\langle X \to {\Rp}_\sigma\rangle$ of
all \emph{bounded} continuous maps from $X$ to ${\Rp}_\sigma$.  This is
again a semitopological cone that happens to be topological when $X$
is core-compact.

A subset $Z$ of a topological cone $C$ is {\em convex\/} iff $r \cdot
x + (1-r) \cdot y$ is in $Z$ whenever $x,y\in Z$ and $0\leq r \leq 1$.
$C$ is itself {\em locally convex\/} iff every point has a basis of
convex open neighborhoods, i.e., whenever $x \in U$, $U$ open in $C$,
then there is a convex open $V$ such that $x \in V \subseteq U$.

This is the notion of local convexity used by K. Keimel
\cite{Keimel:topcones2}.  Beware that there are others, such as
Heckmann's \cite{Heckmann:space:val}, which are inequivalent.

Every continuous cone $C$ is locally convex; this is a special case of
\cite[Lemma~6.12]{Keimel:topcones2}.  The argument, due to J. Lawson,
is as follows: let $x$ be a point in some open subset $U$ of $C$, then
there is a point $x_1 \ll x$ that is also in $U$, hence also an $x_2
\ll x_1$ that is again in $U$, and continuing this way we have a chain
$\ldots \ll x_n \ll \ldots \ll x_2 \ll x_1 \ll x$ of points of $U$.
Now let $V \eqdef \bigcup_{n \in \nat} \uuarrow x_n$: $V$ is clearly open,
$x \in V \subseteq U$, and $V$ is also convex because $V = \bigcup_{n
  \in \nat} \upc x_n$, as one can easily check.

We have already noticed that, given a core-compact space $X$,
$\Lform X$
is a continuous d-cone.  Therefore $\Lform X$
is a locally convex topological cone in that case.

A map $f : C \to \creal$ is {\em positively homogeneous\/} iff $f
(r\cdot x) = r f (x)$ for all $x \in C$ and $r \in \Rp$.  It is
{\em additive\/} (resp.\ {\em super-additive\/}, resp.\ {\em
  sub-additive\/}) iff $f (x+y)=f (x)+f (y)$ (resp.\ $f (x+y) \geq f
(x)+f (y)$, resp.\ $f (x+y) \leq f (x) + f (y)$) for all $x,y \in C$.
It is {\em linear\/} (resp.\ {\em superlinear\/}, resp.\ {\em
  sublinear\/}) iff its is both positively homogeneous and additive
(resp.\ super-additive, resp.\ sub-additive).  Every pointwise
supremum of sublinear maps is sublinear, and every pointwise infimum
of superlinear maps is superlinear.

One of the key results in the theory of cones is Keimel's Sandwich
Theorem \cite[Theorem~8.2]{Keimel:topcones2}, an analogue of the
Hahn-Banach Theorem: in a semitopological cone $C$, given a continuous
superlinear map $q \colon C \to \creal$ and a sublinear map $p \colon
C \to \creal$ such that $q \leq p$, there is a continuous linear map
$\Lambda \colon C \to \creal$ such that $q \leq \Lambda \leq p$.  Note
that $p$ need not be continuous or even monotonic for this to hold.
This is a feature we shall make use of in the proof of
Lemma~\ref{lemma:DN:s:compact}, allowing us to dispense with an
assumption of coherence.

Another construction we shall use is the \emph{lower Minkowski
  functional} $M_A$ of a non-empty closed subset $A$ of a topological
cone $C$ (this was called $F_A$ in \cite[Section~7]{Keimel:topcones2},
but this would conflict with some of our own notations).  This is
defined by:
\begin{eqnarray}
  M_A (x) & \eqdef & \inf \{b > 0 \mid (1/b) \cdot x \in A\}
  \label{eq:MA}
\end{eqnarray}
where we agree that the infimum is equal to $+\infty$ if no $b$ exists
such that $(1/b) \cdot x \in A$.  $M_A$ is continuous
\cite[Proposition~7.3~(a)]{Keimel:topcones2}, superlinear if and only
if $C \diff A$ is convex {\updated and proper (namely, if $A$ is
  non-empty)}, and sublinear if and only if $A$ is convex {\updated
  and non-empty} \cite[Lemma~7.5]{Keimel:topcones2}.

Using this, Keimel's Sandwich Theorem immediately implies the
following Separation Theorem \cite[Theorem~9.1]{Keimel:topcones2}: in
a semitopological cone $C$, for every convex non-empty subset $A$ and
every convex open subset $U$ such that $A \cap U = \emptyset$, there
is a continuous linear map $\Lambda \colon C \to {\creal}_\sigma$ such
that $\Lambda (x)\leq 1$ for every $x \in A$ and $\Lambda (x) > 1$ for
every $x \in U$.  We shall also use the following Strict Separation
Theorem \cite[Theorem~10.5]{Keimel:topcones2}: in a locally convex
semitopological cone $C$, for every compact convex subset $Q$ and
every non-empty closed convex subset $A$ such that $Q \cap A =
\emptyset$, there is a continuous linear map $\Lambda \colon C \to
{\creal}_\sigma$ and a real number $r > 1$ such that $\Lambda (x) \geq
r$ for every $x \in Q$, and $\Lambda (x) \leq 1$ for every $x \in A$.

\paragraph{\em Valuations, Previsions, Forks.}

A \emph{valuation} on a topological space $X$ is a map $\nu$ from the
lattice $\Open (X)$ of open subsets of $X$ to $\creal$ that is
\emph{strict} ($\nu (\emptyset)=0$), \emph{monotonic} (if $U \subseteq
V$ then $\nu (U) \leq \nu (V)$), and \emph{modular} ($\nu (U \cup V) +
\nu (U \cap V) = \nu (U) + \nu (V)$).  A valuation is
\emph{continuous} if and only if it is Scott-continuous, i.e., for
every directed family ${(U_i)}_{i \in I}$ of opens, $\nu (\bigcup_{i
  \in I} U_i) = \sup_{i \in I} \nu (U_i)$.  It is \emph{subnormalized}
iff $\nu (X) \leq 1$, and \emph{normalized} iff $\nu (X)=1$.  A
normalized continuous valuation is a continuous \emph{probability}
valuation.

Given any continuous map $h \colon X \to \creal$, one can define the
\emph{integral} $\int_{x \in X} h (x) d \nu$ of $h$ with respect to
$\nu$ in various equivalent ways.  One is by using a Choquet-type
formula \cite[Section~4.1]{Tix:bewertung}: $\int_{x \in X} h (x) d\nu$
is defined as $\int_0^{+\infty} \nu (h^{-1} (t, +\infty]) dt$, where
the latter is a Riemann integral, which is well-defined since the
integrated function is non-increasing.  Letting
$F (h) \eqdef \int_{x \in X} h (x) d\nu$, one realizes that $F$ is
linear and Scott-continuous on the cone $\Lform X$
\cite[Lemma~4.2]{Tix:bewertung}.

In particular, $F$ also defines a linear Scott-continuous map from the
cone $\langle X \to {\Rp}_\sigma\rangle$ to $\creal$.  Such
functionals were called (continuous) \emph{linear previsions} in
\cite{Gou-csl07}, except that they were not allowed to take the value
$+\infty$.  Conversely, any linear Scott-continuous map $F$ from
$\langle X \to {\Rp}_\sigma\rangle$ to $\creal$ extends to a unique
linear Scott-continuous map from $\Lform X$
to $\creal$, by $F (h) \eqdef \sup_{a \in \Rp} F (\min (h, a))$ for
example.

In general, call \emph{prevision} on $X$ any positively homogeneous,
Scott-continuous map $F$ from $\Lform X$
to $\creal$.  A prevision is \emph{Hoare} (or \emph{lower}) if and
only if it is sublinear, and is \emph{Smyth} (or \emph{upper}) if and
only if it is superlinear.  We say that $F$ is \emph{subnormalized}
(resp., \emph{normalized}) iff $F (\one+h) \leq 1+F (h)$ (resp., $=$) for
every $h \in \Lform X$, {\updated where $\one$ denotes the constant
  function with value $1$.}
These conditions simplify to $F (\one)\leq 1$ (resp., $F (\one)=1$) in
the case of linear previsions, but we shall need the more general form
when $F$ is not linear.

It is easy to see that the posets $\Val (X)$ of all continuous
valuations on $X$ (ordered by $\nu \leq \nu'$ iff $\nu (U) \leq \nu'
(U)$ for every open $U$) and $\Pred_{\Nature} (X)$ of all linear
previsions on $X$ (ordered by $F \leq G$ iff $F (h) \leq G (h)$ for
every $h \in \Lform X$)
are isomorphic.  This is the variant of the Riesz representation
theorem that we mentioned in the introduction.  In one direction, we
obtain $F$ from $\nu$ by $F (h) \eqdef \int_{x \in X} h (x) d\nu$, and
conversely we obtain $\nu$ from $F$ by letting $\nu (U) \eqdef F (\chi_U)$.
(E.g., this is a special case of
\cite[Satz~4.16]{Tix:bewertung}.)  
This isomorphism maps (sub)normalized continuous valuations to
(sub)normalized linear previsions and conversely.

To handle the mixture of probabilistic and \emph{erratic}
non-determinism, we rely on \emph{forks} \cite{Gou-csl07}.
A fork on $X$ is by definition a pair $(F^-, F^+)$ of a Smyth
prevision $F^-$ and a Hoare prevision $F^+$ satisfying \emph{Walley's
  condition}:
\[
F^- (h+h') \leq F^- (h) + F^+ (h') \leq F^+ (h+h')
\]
for all $h, h' \in \Lform X$.
(This condition was independently discovered by
\cite{KP:predtrans:pow}.)  By taking $h' \eqdef 0$, or $h \eqdef 0$,
this in particular implies that $F^- \leq F^+$.  A fork $(F^-, F^+)$
is (sub)normalized if and only if both $F^-$ and $F^+$ are.

Using some notation that we introduced in \cite{JGL:fullabstr:I},
write $\Pred_{\AN} (X)$ for the set of all Hoare previsions on $X$
($\Angel$ for angelic non-determinism, $\Nature$ for probabilistic
choice), $\Pred_{\DN} (X)$ for the set of all Smyth previsions on $X$.
Write $\Pred^1_{\AN} (X)$ for the set of normalized Hoare previsions,
$\Pred^{\leq 1}_{\AN} (X)$ for the set of subnormalized Hoare
previsions, and similarly for Smyth (subscript $\DN$) and linear
(subscript $\Nature$) previsions.  In any case, the \emph{weak
  topology} on any of these spaces $Y$ is generated by subbasic open
sets, which we write uniformly as $[h > r]$, and are defined as those
$F \in Y$ such that $F (h) > r$, where $h$ ranges over $\Lform X$
and $r$ over $\Rp$.  The specialization ordering of the weak topology
is pointwise: $F \leq F'$ iff $F (h) \leq F (h')$ for every
$h \in \Lform X$.
We use a $wk$ subscript, e.g., $\Pred^{\leq 1}_{\AN\;wk} (X)$, to
refer to a space in its weak topology.  Since $[h > r] = [1/r.h > 1]$
when $r \neq 0$, and $[h > 0] = \bigcup_{r > 0} [h > r]$, note that
the subsets of the form $[h > 1]$, $h \in \Lform X$
form a smaller subbase.

Similarly, write $\Pred_{\ADN} (X)$ for the set of all forks on $X$
($\Angel$ for angelic, $\Demon$ for demonic, and $\Nature$ for
probabilistic), and $\Pred^{\leq 1}_{\ADN} (X)$, $\Pred^1_{\ADN} (X)$
for their subsets of subnormalized, resp.\ normalized, forks.  On
each, define the \emph{weak topology} as the subspace topology induced
from the larger space $\Pred_{\DN\;wk} (X) \times \Pred_{\AN\;wk}
(X)$.  It is easy to see that a subbase of the weak topology is
composed of two kinds of open subsets: $[h > b]^-$, defined as
$\{(F^-, F^+) \mid F^- (h) > b\}$, and $[h > b]^+$, defined as
$\{(F^-, F^+) \mid F^+ (h) > b\}$, where $h \in \Lform X$,
$b \in \Rp$.  The specialization ordering of spaces of forks is the
product ordering $\leq \times \leq$, where $\leq$ denotes the
pointwise ordering on previsions.  As usual, we adjoin a subscript
``$wk$'' to denote spaces of forks with the weak topology.

Throughout, we shall use the convention of writing
$\Pred^\bullet_{\AN\;wk} (X)$, $\Pred^\bullet_{\DN\;wk} (X)$, etc.
$\bullet$ is either the empty superscript, ``$\leq 1$'', or ``$1$''.
This will allow us to factor some notation.  Later, we shall also use
this scheme for maps $s^\bullet_{\AN}$, $s^\bullet_{\DN}$, and so
forth.

\section{Retracting Powercones onto Previsions}
\label{sec:retr-powerc-onto}

Given a topological space $X$, let:
\begin{itemize}
\item the \emph{Hoare powerdomain} $\Hoare (X)$ be the set of all
  closed, non-empty subsets of $X$;
\item the \emph{Smyth powerdomain} $\Smyth (X)$ be the set of all
  compact saturated, non-empty subsets of $X$;
\item the \emph{Plotkin powerdomain} $\Plotkin (X)$ be the set of all
  lenses of $X$; a \emph{lens} $L$ is a non-empty subset of $X$ that
  can be written as the intersection of a compact saturated subset $Q$
  and a closed subset $F$ of $X$.  A canonical form is obtained by
  taking $Q \eqdef \upc L$ and $F \eqdef cl (L)$.
\end{itemize}
$\Hoare (X)$ is the traditional model for so-called \emph{angelic}
non-determinism.  We write $\HV (X)$ for $\Hoare (X)$ with its
\emph{lower Vietoris topology}, generated by subbasic open sets
$\Diamond V \eqdef \{C \in \Hoare (X) \mid C \cap V \neq \emptyset\}$,
where $V$ ranges over the open subsets of $X$.  Its specialization
ordering is inclusion $\subseteq$, and we shall also consider $\Hoare
(X)$ as a dcpo in this ordering.
The Scott topology is always finer than the lower Vietoris topology,
and coincides with it when $X$ is a continuous dcpo in its Scott
topology \cite[Section~6.3.3]{schalk:diss}.

$\Smyth (X)$ is the traditional model for so-called \emph{demonic}
non-determinism.  We write $\SV (X)$ for $\Smyth (X)$ with its
\emph{upper Vietoris topology}, generated by basic open sets
$\Box V \eqdef \{Q \in \Smyth (X) \mid Q \subseteq V\}$, where $V$
ranges over the open subsets of $X$.  Its specialization ordering is
reverse inclusion $\supseteq$.  When $X$ is $T_0$, well-filtered, and
locally compact, $\Smyth (X)$ is a continuous dcpo, and the Scott and
upper Vietoris topologies coincide \cite[Section~7.3.4]{schalk:diss}.

$\Plotkin (X)$ is the traditional model for \emph{erratic}
non-determinism.  Write $\PV (X)$ for $\Plotkin (X)$ with its
\emph{Vietoris topology}, generated by subbasic open sets
$\Box V \eqdef \{L \in \Plotkin (X) \mid L \subseteq V\}$ and
$\Diamond V \eqdef \{L \in \Plotkin (X) \mid L \cap V \neq \emptyset
\}$.  Its specialization ordering is the topological Egli-Milner
ordering $\sqsubseteq_{\mathrm{EM}}$, defined by
$L \sqsubseteq_{\mathrm{EM}} L'$ iff $\upc L \supseteq \upc L'$ and
$cl (L) \subseteq cl (L')$.

When $X$ is a semitopological cone $C$, it makes sense to consider the
subsets $\Hoare^{cvx} (C)$, $\Smyth^{cvx} (C)$, $\Plotkin^{cvx} (C)$
of those elements of $\Hoare (C)$, $\Smyth (C)$, $\Plotkin (C)$
respectively that are convex.  We again equip them with their
respective (lower, upper, plain) Vietoris topologies, yielding spaces
which we write with a $\mathcal V$ subscript, and which happen to be
subspaces of $\HV (C)$, $\SV (C)$, $\PV (C)$ respectively.  Their
specialization orderings are as for their non-convex variants, and
again give rise to Scott topologies.
But beware that the Scott topologies on the latter may fail to be
subspace topologies.

We now define formally the maps that we named $r$ and $s$ in the
introduction.  They come in three flavors, angelic, demonic, and
erratic; but we shall ignore the erratic forms for now.  Since
continuous valuations are isomorphic to spaces of linear of
previsions, we reason with the latter; this is what we shall really
need in proofs.
\begin{definition}
  \label{defn:rs}
  Let $X$ be a topological space.  For every non-empty set $E$ of
  linear previsions on $X$, let $r_{\AN} (E) \colon \Lform X \to
  \creal$
  (resp., $r_{\DN} (E)$) map $h$ to $\sup_{G \in E} G (h)$ (resp.,
  $\inf$).

  Conversely, for every Hoare prevision (resp., subnormalized,
  normalized) $F$ on $X$, let $s_{\AN} (F)$ (resp., $s^{\leq 1}_{\AN}
  (F)$, $s^1_{\AN} (F)$) be the set of all linear previsions (resp.,
  subnormalized, normalized) $G$ such that $G \leq F$.  For every
  Smyth prevision (resp., subnormalized, normalized) $F$ on $X$, let
  $s_{\DN} (F)$ (resp., $s^{\leq 1}_{\DN} (F)$, $s^1_{\DN} (F)$) be
  the set of all linear previsions (resp., subnormalized, normalized)
  $G$ such that $F \leq G$.
\end{definition}

Our aim in this section is to show that the various matching pairs of
maps $r$ and $s$ form retractions onto the adequate spaces.  A
\emph{retraction} of $X$ onto $Y$ is a pair of two continuous maps $r
\colon X \to Y$ (also called, somewhat ambiguously, a retraction) and
$s \colon Y \to X$ (called the associated section) such that $r \circ
s = \identity Y$.

Some of our retractions will have the extra property that $s \circ r
\leq \identity X$, meaning that $s$ is left-adjoint to $r$.  We shall
call such retraction \emph{embedding-projection pair}, following a
domain-theoretic tradition \cite[Definition~3.1.15]{AJ:domains}; $r$
is the projection, and $s$ is the embedding.  Accordingly we shall say
that $X$ \emph{projects onto} $Y$ through $r$ in that situation.
There is also a dual situation where $\identity X \leq s \circ r$
instead, meaning that $s$ is right-adjoint to $r$.
In that case, we shall say that $X$ \emph{coprojects onto} $Y$ through
$r$, that the latter is a \emph{coprojection}, and that $s$ is the
associated \emph{coembedding}.

We start with the angelic case.  We shall deal with the demonic case
in Section~\ref{sec:retr-demon-case}, and with the erratic case in
Section~\ref{sec:retr-errat-case}.

\subsection{The Retraction in the Angelic Case}
\label{sec:retr-angel-case}

{\updated%
  Our aim in this section is to prove that if $X$ is a topological space
  satisfying certain conditions, then $r_\AN$ and and
  $s^\bullet_{\AN}$ form a retraction.
}


We proceed through a series of lemmata.  The first one is clear.
\begin{lemma}
  \label{lemma:AN:r:def}
  Let $X$ be a topological space.  For every $C \in \HV
  (\Pred^\bullet_{\Nature\;wk} (X))$, $r_{\AN} (C) = (h \in \Lform X
  \mapsto \sup_{G \in C} G (h))$ is an element of
  $\Pred^\bullet_{\AN} (X)$.
\end{lemma}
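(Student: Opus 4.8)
The plan is to verify directly the three defining properties of a Hoare prevision for $r_{\AN}(C)$ — positive homogeneity, Scott-continuity, and sublinearity — and then to check the constraint attached to the superscript $\bullet$. Throughout, I use only that $C$ is non-empty (which holds because $\HV$ consists of non-empty closed subsets) and that every $G \in C$ is a linear Scott-continuous prevision; closedness of $C$ plays no role in this particular lemma. To begin, $r_{\AN}(C)$ is well-defined as a map $\Lform X \to \creal$: for each $h \in \Lform X$, the family $\{G(h) \mid G \in C\}$ is a non-empty subset of the complete lattice $\creal$, so its supremum exists.

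For positive homogeneity, fix $r \in \Rp$ and $h \in \Lform X$. Since each $G$ is positively homogeneous, $r_{\AN}(C)(r \cdot h) = \sup_{G \in C} r \cdot G(h)$, and because multiplication by a fixed $r$ is an order-isomorphism of $\creal$ when $r > 0$ (and collapses everything to $0$ when $r = 0$), it commutes with the supremum, giving $r \cdot r_{\AN}(C)(h)$. For Scott-continuity, let $(h_i)_{i \in I}$ be a directed family in $\Lform X$ with supremum $h$; using Scott-continuity of each $G$ and the fact that iterated suprema may always be interchanged, $r_{\AN}(C)(h) = \sup_{G \in C} \sup_i G(h_i) = \sup_i \sup_{G \in C} G(h_i) = \sup_i r_{\AN}(C)(h_i)$, while monotonicity is immediate from monotonicity of each $G$.

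Sublinearity follows from the remark in the preliminaries that a pointwise supremum of sublinear maps is sublinear, since every linear map is in particular sublinear; explicitly, $r_{\AN}(C)(h+h') = \sup_{G \in C} (G(h)+G(h')) \leq \sup_{G \in C} G(h) + \sup_{G \in C} G(h') = r_{\AN}(C)(h)+r_{\AN}(C)(h')$. Together with the previous paragraph this shows $r_{\AN}(C) \in \Pred_{\AN}(X)$. It remains to treat $\bullet$: if $\bullet$ is empty there is nothing more to prove; if the previsions in $C$ are subnormalized, then $G(\one+h) = G(\one)+G(h) \leq 1 + G(h)$ for each $G$, and taking suprema yields $r_{\AN}(C)(\one+h) \leq 1 + r_{\AN}(C)(h)$; if they are normalized, the same computation with equalities gives $r_{\AN}(C)(\one+h) = 1 + r_{\AN}(C)(h)$. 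Hence $r_{\AN}(C) \in \Pred^\bullet_{\AN}(X)$.

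There is no genuine obstacle here, which is why the statement is announced as clear. The only points deserving a moment's care are the interchange of the inner directed suprema (arising from Scott-continuity of each $G$) with the outer supremum over $C$ in the proof of Scott-continuity, and the edge case $r = 0$ in the proof of positive homogeneity.
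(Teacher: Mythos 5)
Your proof is correct and is exactly the direct verification the paper has in mind: the paper states this lemma without proof, declaring it clear, and the properties you check (positive homogeneity, Scott-continuity via interchange of suprema, sublinearity as a supremum of sublinear maps, and the (sub)normalization condition via linearity of each $G$) are precisely the routine facts being taken for granted. Nothing is missing.
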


For clarity, write $[h > b]_{\AN}$ or $[h > b]_\Nature$ for the
subbasic open subset $[h > b]$, depending whether in a space of Hoare
previsions or of linear previsions.
\begin{lemma}
  \label{lemma:AN:r:cont}
  Let $X$ be a topological space.  The map $r_{\AN}$ is continuous
  from $\HV (\Pred^\bullet_{\Nature\;wk} (X))$ to
  $\Pred^\bullet_{\AN\;wk} (X)$.
\end{lemma}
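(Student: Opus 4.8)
The plan is to verify continuity directly on a subbase, using the explicit formula for $r_{\AN}$. By Lemma~\ref{lemma:AN:r:def} the map is well defined, sending each $C$ to the Hoare prevision $h \mapsto \sup_{G \in C} G(h)$, so it lands in the target; it then remains only to check that the preimage of each subbasic open of $\Pred^\bullet_{\AN\;wk} (X)$ is open. The weak topology on $\Pred^\bullet_{\AN\;wk} (X)$ is generated by the subbasic sets $[h > b]_{\AN} = \{F \mid F (h) > b\}$, with $h \in \Lform X$ and $b \in \Rp$, so it suffices to show that each $r_{\AN}^{-1} ([h > b]_{\AN})$ is open in $\HV (\Pred^\bullet_{\Nature\;wk} (X))$.

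First I would compute this preimage. By definition, $C \in r_{\AN}^{-1} ([h > b]_{\AN})$ iff $r_{\AN} (C) (h) = \sup_{G \in C} G (h) > b$. The only point to observe is the elementary characterization of the supremum: $\sup_{G \in C} G (h) > b$ holds precisely when $b$ fails to be an upper bound of $\{G (h) \mid G \in C\}$, that is, when there is some $G \in C$ with $G (h) > b$. Equivalently, $C$ meets the subbasic open $[h > b]_\Nature = \{G \mid G (h) > b\}$ of $\Pred^\bullet_{\Nature\;wk} (X)$. Hence
\[
r_{\AN}^{-1} ([h > b]_{\AN}) = \Diamond\, [h > b]_\Nature .
\]

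Finally I would note that $\Diamond\, [h > b]_\Nature$ is, by definition, a subbasic open of the lower Vietoris topology on $\HV (\Pred^\bullet_{\Nature\;wk} (X))$, since $[h > b]_\Nature$ is open in $\Pred^\bullet_{\Nature\;wk} (X)$. Thus the preimage of every subbasic open is open, and $r_{\AN}$ is continuous. There is no genuine obstacle here: the entire content of the argument is the identity $r_{\AN}^{-1} ([h > b]_{\AN}) = \Diamond\, [h > b]_\Nature$, which is exactly the match between ``the sup exceeds $b$'' on the prevision side and ``the set meets $[h > b]_\Nature$'' on the lower Vietoris side. The $\bullet$-superscript plays no role in this computation, so the same proof covers the unrestricted, subnormalized, and normalized cases uniformly.
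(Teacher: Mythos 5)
Your proposal is correct and is essentially identical to the paper's proof, which likewise reduces continuity to the single identity $r_{\AN}^{-1}([h>b]_{\AN}) = \Diamond\,[h>b]_\Nature$ on subbasic opens. You merely spell out the elementary sup-versus-strict-inequality step that the paper leaves implicit.
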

\begin{proof}
  The inverse image of the subbasic open $[h > b]_{\AN} \subseteq
  \Pred^\bullet_{\AN\;wk} (X)$ by $r_{\AN}$ is $\Diamond [h >
  b]_\Nature$, hence is open in $\HV (\Pred^\bullet_{\Nature\;wk} (X))$.
\end{proof}

{\updated%
  
We will now need to consider spaces $X$ of a certain kind, which we
call $\AN_\bullet$-friendly in Definition~\ref{defn:LX:Hfriendly}
below.  A \emph{locally convex-compact} cone is a semitopological cone
in which every element has a base of convex compact neighborhoods
\cite[Definition~4.9]{Keimel:topcones2}.  An \emph{LCS-complete} space
is a space that is homeomorphic to a $G_\delta$ subspace of a locally
compact sober space; a $G_\delta$ subset is an intersection of
countably many open sets.  The notion was introduced in
\cite{dBGLJL:LCS}.

\begin{newdefinition}[$\AN_\bullet$-friendly]
  \label{defn:LX:Hfriendly}
  An \emph{$\AN$-friendly space} is a
  topological space $X$ such that $\Lform X$ is a locally convex
  semitopological cone when given its Scott topology.  A space is
  \emph{$\AN_{\leq 1}$-friendly} if
  and only if it is $\AN$-friendly.

  An \emph{$\AN_1$-friendly space} is either:
  \begin{enumerate}
  \item a compact $\AN$-friendly space $X$,
  \item or a space $X$ such that $\Lform X$ is a locally
    convex-compact, locally convex sober topological cone,
  \item or an LCS-complete space.
  \end{enumerate}
\end{newdefinition}
We will use the adjective $\AN_\bullet$-friendly to denote any of the
notions, where $\bullet$ is nothing, ``$\leq 1$'', or ``$1$''.  It is
immediate that $\AN_1$-friendliness implies $\AN$-friendliness.

In the 2017 version of this paper \cite{JGL-mscs16}, of which this is
a revised version, the conditions we imposed on $X$ was simply that
$\Lform X$ was locally convex.  We do not have any counterexample or
result showing that the extra condition of $\AN_1$-friendliness is
needed in case $\bullet$ is ``$1$'', but the proof of Lemma~3.4 of
\cite{JGL-mscs16} is wrong, and the proofs below need
$\AN_\bullet$-friendliness.  We have arranged for definitions,
lemmata, propositions and theorems of \cite{JGL-mscs16} to keep their
numbers; for example, this is the reason for the funny name
``Definition~\ref{defn:LX:Hfriendly}''.

We discuss standard classes of $\AN_\bullet$-friendly spaces.

Given a space $X$, and a compact subset $Q$ of $X$, the set
$\blacksquare Q$ of open neighborhoods of $Q$ is Scott-open in
$\Open X$, the lattice of open subsets of $X$.  A space $X$ is
\emph{consonant} if and only those sets $\blacksquare Q$ generate the
Scott topology on $\Open X$.  Every locally compact space is consonant
(see Exercise~5.4.12 in \cite{JGL-topology}), every regular
\v{C}ech-complete space is consonant \cite[Theorem~4.1 and
footnote~8]{DGL:consonant}, and every LCS-complete space is consonant
\cite[Proposition~12.1]{dBGLJL:LCS}.  The class of LCS-complete spaces
contains M. de Brecht's quasi-Polish spaces \cite{deBrecht:qPolish},
which themselves contain all Polish spaces and all $\omega$-continuous
dcpos in their Scott topology; the LCS-complete spaces also contain
all the continuous dcpos.

A space $X$ is \emph{$\odot$-consonant} if and only if every finite
copower of $X$ is consonant \cite[Definition~13.1]{dBGLJL:LCS}.  Since
locally compact spaces and LCS-complete spaces are closed under finite
coproducts (in the category of topological spaces), those spaces are
$\odot$-consonant; see \cite[Lemma~13.2]{dBGLJL:LCS} for the case of
LCS-complete spaces.

By \cite[Lemma~13.6]{dBGLJL:LCS}, for every $\odot$-consonant space
$X$, $\Lform X$ is locally convex (in its Scott topology).  We have
just illustrated how large the class of $\odot$-consonant spaces is,
and they are all $\AN$-friendly.

We turn to $\AN_1$-friendliness, and specifically to conditions~2
and~3.

\begin{newremark}
  \label{rem:Hfriendly:1}
  For every core-compact space $X$, $\Lform X$ is a locally convex,
  locally convex-compact, sober topological cone.  In order to see
  this, we observe that $\Lform X$ is a continuous complete lattice.
  This was mentioned at the end of Section~16 of \cite{dBGLJL:LCS},
  and can be deduced from \cite[Proposition~II-4.6]{GHKLMS:contlatt},
  which says that if $X$ is core-compact and $L$ (here, $\creal$) is
  an injective $T_0$ space (i.e., a continuous complete lattice by
  \cite[Theorem~II-3.8]{GHKLMS:contlatt}), then the poset $[X \to L]$
  of continuous functions from $X$ to $L$ is a continuous complete
  lattice.

  A continuous dcpo, in particular a continuous complete lattice, is
  always sober in its Scott topology \cite[Proposition
  8.2.12]{JGL-topology}.  It is also a c-space in its Scott topology
  \cite[Proposition 5.1.37]{JGL-topology}.  Together with its cone
  operations, $\Lform X$ is then a c-cone in the sense of Keimel,
  which is then locally convex, locally convex-compact and topological
  \cite[Lemma~6.12]{Keimel:topcones2}.  Hence every core-compact space
  $X$ is $\AN_1$-friendly, in that it satisfies condition~2 of
  Definition~\ref{defn:LX:Hfriendly}.  This applies in particular to
  all locally compact spaces $X$, and therefore to all continuous
  dcpos in their Scott topology.

  Hence every core-compact space $X$ satisfies condition~2 of
  Definition~\ref{defn:LX:Hfriendly}.  There is a partial converse:
  the two conditions are equivalent if $X$ is a poset with its Scott
  topology, or a consonant space, or a second-countable space.
  Indeed, in these cases, $\Lform X$ is locally compact if and only if
  $X$ is core-compact \cite[Theorem~G]{JGL:LX:blog}, and certainly
  local convex-compactness implies local compactness.
\end{newremark}

\begin{newremark}
  \label{rem:Hfriendly:1:LCS}
  As for condition~3, for every LCS-complete space $X$, we recall that
  $X$ is $\odot$-consonant, hence that $\Lform X$ is locally convex in
  its Scott topology \cite[Lemma~13.6]{dBGLJL:LCS}.  Therefore every
  LCS-complete space $X$ is $\AN$-friendly.  As a consequence, every
  $\AN_1$-friendly space is $\AN$-friendly (and
  $\AN_{\leq 1}$-friendly).
\end{newremark}

\begin{figure}
  \updated
  \centering
  \[
    \xymatrix@C-10pt@R-10pt{ & \txt{cont.\\dcpo}
      \ar@{=>}[rd] \ar@/_2pc/@{=>}[rrddd] \\
      \txt{$\omega$-cont.\\dcpo} \ar@{=>}[ur] \ar@/_2pc/@{=>}[rrdd] &&
      \txt{locally\\compact} \ar@{=>}[rd]
      \ar@{=>}[rdd]|(0.4){\scriptsize\txt{if also\\
          sober} }
      \\
      &&& \txt{core-\\compact} \ar@{=>}[rd]
      & \\
      \text{Polish} \ar@{=>}[rr] && \txt{quasi-\\Polish} \ar@{=>}[r] &
      \txt{LCS-\\complete} \ar@{=>}[r] \ar@{=>}[rd] & \txt{\bf
        $\AN_1$-friendly} \ar@/_1pc/@{=>}[rd]
      & \\
      &&&& \odot\text{-consonant} \ar@{=>}[r] & \txt{\bf
        $\AN$-friendly} \ar@/_2pc/@{=>}[ul]_(0.3){\text{if also
          compact}} \ar@{=}[d]
      \\
      &&&&& {
        \begin{array}{c}
          \Lform X \text{ locally}\\
          \text{convex}
        \end{array}
      } }
  \]
  \caption{Implications between properties of spaces and $\AN_\bullet$-friendliness}
  \label{fig:APfriendly}
\end{figure}
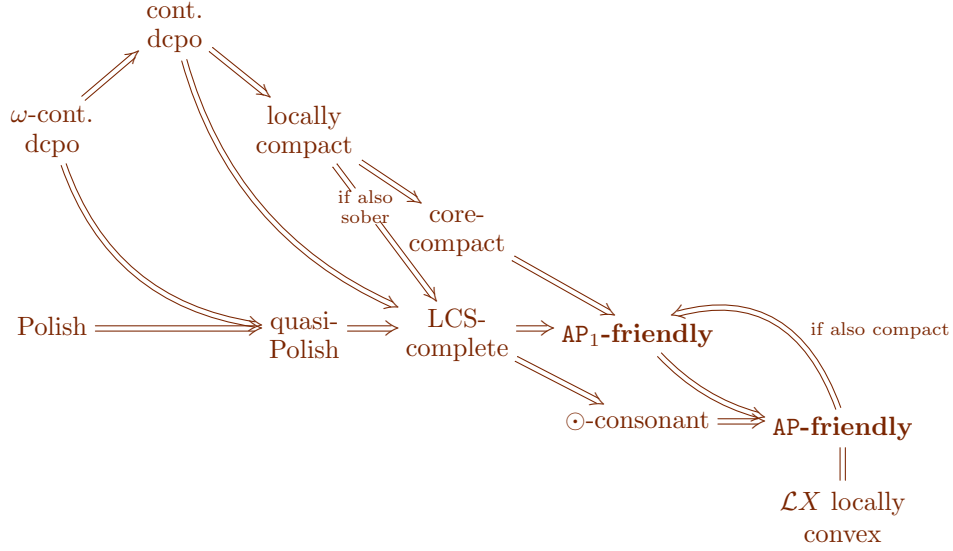

We sum up this discussion (and perhaps a bit more) by the 
diagram of implications of Figure~\ref{fig:APfriendly}.

The following is Lemma~3.4 of \cite{JGL-mscs16} in the case that
$\bullet$ is nothing or ``$\leq 1$''.
\begin{newlemma}
  \label{lemma:3.4:easy}
  Let $\bullet$ be nothing or ``$\leq 1$'', and $X$ be an
  $\AN$-friendly space.  For every $F \in \Pred^\bullet_{\AN} X$, for
  every $h_0 \in \Lform X$, and every real number $r > 0$ such that
  $F (h_0) > r$, there is a $G \in \Pred^\bullet_{\Nature} X$ with
  $G \leq F$ and $G (h_0) > r$.
\end{newlemma}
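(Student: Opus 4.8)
The plan is to work directly in the cone $C \eqdef \Lform X$, which is a locally convex semitopological cone because $X$ is $\AN$-friendly, and to apply Keimel's Separation Theorem recalled in the preliminaries. First I would use positive homogeneity to normalize the threshold to $1$: since $r > 0$ and $F$ is positively homogeneous, the element $h_1 \eqdef (1/r) \cdot h_0$ satisfies $F (h_1) = F (h_0)/r > 1$. It then suffices to produce a linear prevision $G \leq F$ with $G (h_1) > 1$, since homogeneity of $G$ will give $G (h_0) = r \cdot G (h_1) > r$.

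Next I would introduce the sublevel set $A \eqdef \{h \in C \mid F (h) \leq 1\}$. It is convex precisely because $F$ is sublinear, and it is non-empty since $F (0) = 0 \leq 1$. The set $[F > 1] \eqdef F^{-1} ((1, +\infty])$ is Scott-open because $F$ is Scott-continuous, it contains $h_1$, and it is disjoint from $A$ by construction. This open set need not be convex, and here is where $\AN$-friendliness enters: by local convexity of $C$ there is a convex open $U$ with $h_1 \in U \subseteq [F > 1]$, whence $U \cap A = \emptyset$. Now $A$ is convex non-empty, $U$ is convex open, and they are disjoint, so the Separation Theorem produces a continuous linear map $G \colon C \to \creal$ with $G (x) \leq 1$ for every $x \in A$ and $G (x) > 1$ for every $x \in U$; in particular $G (h_1) > 1$.

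It remains to check that $G$ is the prevision we want. The bound $G \leq 1$ on $A$ upgrades to $G \leq F$ pointwise by a homogeneity argument: if $0 < F (h) < +\infty$ then $h / F (h) \in A$, so $G (h) \leq F (h)$; if $F (h) = 0$ then $\lambda \cdot h \in A$ for all $\lambda > 0$, forcing $\lambda \, G (h) \leq 1$ for all $\lambda$ and hence $G (h) = 0 = F (h)$; and $F (h) = +\infty$ is trivial. Thus $G$ is a linear prevision with $G \leq F$ and $G (h_0) = r \cdot G (h_1) > r$, so $G \in \Pred_{\Nature} X$ in the case $\bullet$ is nothing. In the subnormalized case, since $F$ is subnormalized we have $F (\one) \leq 1$ (take $h \eqdef 0$ in $F (\one + h) \leq 1 + F (h)$), and then $G (\one) \leq F (\one) \leq 1$, so $G$ is subnormalized as well, i.e.\ $G \in \Pred^{\leq 1}_{\Nature} X$.

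The main obstacle is the convexification step: the natural open witness $[F > 1]$ is only the complement of the convex set $A$ and so is generally non-convex, and the Separation Theorem genuinely requires a convex open neighborhood of $h_1$. Local convexity of $\Lform X$ — exactly the content of $\AN$-friendliness — is what supplies it, and this is the sole point at which the topological hypothesis on $X$ is used. Everything else (convexity and non-emptiness of $A$, the passage from $G \leq 1$ on $A$ to $G \leq F$, and the subnormalization bound) is routine. I should also note that this argument never needs the constraint to be an equality $F (\one) = 1$, which explains why it covers only the cases where $\bullet$ is nothing or ``$\leq 1$'': for normalized previsions $G \leq F$ yields merely $G (\one) \leq 1$, not $G (\one) = 1$, so the constructed $G$ need not be normalized and a separate, more delicate argument is required.
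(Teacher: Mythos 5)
Your proof is correct, and it is essentially a self-contained derivation of what the paper obtains by citation. The paper's own proof is a one-liner: it invokes Corollary~9.5 of Keimel's paper on topological cones, which states that on a locally convex semitopological cone every lower semi-continuous sublinear map is the pointwise supremum of the lower semi-continuous linear maps below it; applied to $F$ on $\Lform X$ (locally convex by $\AN$-friendliness) this immediately yields the desired $G$. What you do instead is unpack the relevant instance of that corollary from the Separation Theorem (Theorem~9.1 of Keimel, recalled in the preliminaries): normalize the threshold to $1$ by homogeneity, separate the convex sublevel set $A = F^{-1}([0,1])$ from a convex open neighborhood $U$ of $(1/r)\cdot h_0$ inside $F^{-1}((1,+\infty])$ — this is exactly where local convexity, i.e.\ $\AN$-friendliness, is consumed, just as in the paper — and then upgrade $G \leq 1$ on $A$ to $G \leq F$ by a homogeneity/scaling argument. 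All the steps check out: convexity of $A$ from sublinearity, applicability of the Separation Theorem (which only needs $A$ convex non-empty, not closed), the three-case scaling argument, and the subnormalization bound $G(\one) \leq F(\one) \leq 1$ (which the paper's one-line proof leaves implicit). Your closing observation about why the argument stops short of the normalized case $\bullet = $ ``$1$'' is exactly the issue the paper's Lemma~\ref{lemma:AN:rs=id:aux} has to work much harder to resolve. The only thing your route costs is length; what it buys is that the reader does not have to open Keimel's paper to see why the lemma is true.
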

\begin{proof}
  By Corollary~9.5 of \cite{Keimel:topcones2}, every lower
  semi-continuous sublinear map from a locally convex semitopological
  cone to $\creal$ is pointwise the supremum of lower semi-continuous
  linear maps.
\end{proof}

The situation is more complex when $\bullet$ is ``$1$''.  We need a
few technical lemmas.  The first one deals with condition~1 of the
definition of $\AN_1$-friendliness, namely when $X$ is compact.

\begin{newlemma}
  \label{lemma:chi+U:compact}
  For every compact topological space $X$, for every $a \in \Rp$, the
  set $\{h \in \Lform X \mid \exists b > a, h \geq b \cdot \one\}$ is
  Scott-open in $\Lform X$.  For every Scott-open subset $U$ of
  $\Lform X$, the set
  $\{h \in \Lform X \mid \exists b > a, k \in U, h \geq b \cdot \one +
  k\}$ is open in $\Lform X$.
\end{newlemma}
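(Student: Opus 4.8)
The plan is to show in each case that the set is upward closed and inaccessible by directed suprema, which together say precisely that it is Scott-open. Upward closure is immediate: enlarging $h$ preserves an inequality of the form $h \geq b\cdot\one$ or $h \geq b\cdot\one + k$ (with the same witnesses $b,k$). So the whole content lies in the inaccessibility condition, and since $X$ is merely compact, $\Lform X$ need not be continuous; hence the only tool is compactness of $X$, used through directed open covers. Recall that directed suprema in $\Lform X$ are computed pointwise.

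For the first set, observe that it equals $\{h \in \Lform X \mid \inf_{x\in X} h(x) > a\}$, since $h \geq b\cdot\one$ for some $b > a$ iff $\inf_x h(x) > a$. Take a directed family ${(h_i)}_{i\in I}$ with supremum $h$ in the set, fix $b > a$ with $h \geq b\cdot\one$, and choose $b'$ with $a < b' < b$. The sets $U_i \eqdef h_i^{-1}((b',+\infty])$ are open in $X$; they form a directed family under inclusion, because ${(h_i)}_i$ is directed and $h \mapsto h^{-1}((b',+\infty])$ is monotone, and they cover $X$, because $h(x) \geq b > b'$ and suprema are pointwise. A directed open cover of a compact space must contain $X$ itself as a member, so $h_j(x) > b'$ for all $x$ and some $j$; thus $h_j \geq b'\cdot\one$ with $b' > a$, and $h_j$ lies in the set. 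Equivalently, this is the assertion that $h \mapsto \inf_x h(x)$ is Scott-continuous on $\Lform X$ for compact $X$, an interchange of infimum and supremum that is exactly what compactness buys.

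For the second set $B$, the main obstacle is that descending within $U$ alone is not enough. I would introduce truncated subtraction $\ominus b'$ on $\creal$, sending $t$ to $t-b'$ when $t\geq b'$ is finite and to $0$ when $t<b'$; one checks it is Scott-continuous, so $g_i \eqdef h_i \ominus b'\cdot\one$ (pointwise) lies in $\Lform X$, the family ${(g_i)}_i$ is directed, and $\sup_i g_i = h \ominus b'\cdot\one =: g$. From $h \geq b\cdot\one + k \geq b'\cdot\one + k$ one gets $g \geq k$, so $g \in U$ by upward closure of $U$; since $U$ is Scott-open and $g=\sup_i g_i$, some $g_{j_2}\in U$. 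Separately, since $h \geq b\cdot\one$, the first-part argument yields a $j_1$ with $h_{j_1} > b'\cdot\one$.

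The step where compactness is indispensable is the combination: by directedness choose $j$ with $h_j \geq h_{j_1}$ and $h_j \geq h_{j_2}$, so that simultaneously $h_j > b'\cdot\one$ and $g_j \geq g_{j_2}\in U$, whence $g_j\in U$. Because $h_j(x) > b'$ at every $x$, truncated subtraction is exact there, giving $h_j = b'\cdot\one + g_j$ pointwise; hence $h_j \geq b'\cdot\one + g_j$ with $b' > a$ and $g_j\in U$, so $h_j\in B$. The single difficulty is precisely that $g_j\in U$ does not on its own give $h_j \geq b'\cdot\one + g_j$, since that inequality fails wherever $h_j < b'$; forcing $h_j > b'\cdot\one$ through the compactness argument of the first part repairs this and makes the exact reconstruction $h_j = b'\cdot\one + g_j$ available.
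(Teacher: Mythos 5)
Your proof is correct and follows essentially the same route as the paper's: the directed-open-cover compactness argument to force some $h_{j_1} \geq b'\cdot\one$, the truncated subtraction $h \mapsto \max(h - b'\cdot\one, 0)$ (the paper's $h'$) to land in $U$, and the final combination of the two indices by directedness with exact reconstruction $h_j = b'\cdot\one + g_j$. The only cosmetic difference is that you prove the first claim separately and then invoke it, whereas the paper proves the second claim directly and obtains the first by taking $U \eqdef \Lform X$.
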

\begin{proof}
  We prove the second claim.  The first claim will follow by taking
  $U \eqdef \Lform X$.  Let
  $V \eqdef \{h \in \Lform X \mid \exists b > a, k \in U, h \geq b
  \cdot \one + k\}$.  This is upwards-closed.  Let ${(h_i)}_{i \in I}$
  be a directed family with supremum $h \in V$.  Then
  $h \geq b\cdot \one + k$ for some $b > a$ and some $k \in U$.  In
  particular, $h \geq b\cdot \one$.  Let us pick a $c \in {]a, b[}$.
  
  We claim that there is an $i \in I$ such that
  $h_i \geq c\cdot \one$.  For every $x \in X$, $h (x) \geq b > c$, so
  $h_i (x) > c$ for some $i \in X$.  It follows that the sets
  ${(h_i^{-1} (]c, \infty]))}_{i \in I}$ form an open cover of $X$.
  Also, if $h_i \leq h_j$ then
  $h_i^{-1} (]c, \infty]) \subseteq h_j^{-1} (]c, \infty])$, so the
  family ${(h_i^{-1} (]c, \infty]))}_{i \in I}$ is directed.  Since
  $X$ is compact, it is therefore included in $h_i^{-1} (]c, \infty])$
  for some $i \in I$.  Hence, for that choice of $i$,
  $h_i \geq c\cdot \one$.

  We have $h \geq c\cdot \one + k$.  Let $h'$ map every $x \in X$ to
  $\max (h (x) - c, 0)$, and for every $j \in I$, let $h'_j$ map every
  $x \in X$ to $\max (h_j (x) - c, 0)$.  Then
  $h' = \sup_{j \in I} h'_j$, and $h' \geq k \in U$.  Since $U$ is
  Scott-open, some $h'_j$ is in $U$.  Let $\ell \in I$ be such that
  $h_\ell \geq h_i, h_j$.  Then $h'_\ell \geq h'_j$, so $h'_\ell$ is in $U$.
  Also, $h_\ell \geq h_i \geq c\cdot \one$, so for every $x \in X$,
  $h'_\ell (x) = \max (h_\ell (x) - c, 0) = h_\ell (x) - c$.  It follows that
  $h_\ell = c \cdot \one + h'_\ell$, so $h_\ell \in V$.
\end{proof}

Our second technical lemma will be used when conditions~2 or~3 of the
definition of $\AN_1$-friendliness are met.  Given two elements $k$
and $h$ of $\Lform X$, we write $[k, h]$ for the convex hull of
$\{k, h\}$, namely the collection $\{a \cdot k + (1-a) \cdot h \mid a
\in [0, 1]\}$. 
\begin{newlemma}
  \label{lemma:Hfriendly:1:V}
  Let $X$ be a topological space such that $\Lform X$ is a locally
  convex, locally convex-compact, sober topological cone.  Let
  $g_0 \geq g_1 \geq \cdots \geq g_n \geq \cdots$ be a descending
  sequence of elements of $\Lform X$, and $h_0 \in \Lform X$.  Let
  also $W$ be a Scott-open neighborhood of $h_0$ in $\Lform X$ such
  that $[b \cdot g_n, h]$ is included in $W$ for every $n \in \nat$,
  every $b > 1$ and every $h \in W$.  There is a convex Scott-open
  neighborhood $V$ of $h_0$ in $\Lform X$ that contains every function
  $b \cdot g_n$ with $n \in \nat$ and $b > 1$ and that is included in
  $W$.
\end{newlemma}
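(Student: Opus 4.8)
The plan is to settle the (easy) set-theoretic containment first, and then to manufacture the \emph{Scott-open} convex set, which is where all the difficulty lies. First I would record that every $b\cdot g_n$ with $b>1$ already lies in $W$: taking $h=h_0$ in the hypothesis, $b\cdot g_n$ is the endpoint of the segment $[b\cdot g_n,h_0]\subseteq W$. More generally, any finite convex combination $\sum_{j} a_j\,(b_j\cdot g_{n_j})+a_0\,h_0$ with $b_j>1$ lies in $W$: this follows by induction on the number of summands, peeling off one point $b_j\cdot g_{n_j}$ at a time and using that the remaining convex combination is, by induction, a point $h\in W$, so that the full summand sits on $[b_j\cdot g_{n_j},h]\subseteq W$. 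Hence the convex hull $C$ of $\{h_0\}\cup\{b\cdot g_n\mid n\in\nat,\ b>1\}$ is contained in $W$. The target $V$ will be squeezed between $C$ and $W$; the only real problem is to make it open.

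Since an increasing union of convex Scott-open sets is again convex and Scott-open, I would build $V=\bigcup_{k\in\nat}V_k$ as such a union. Choose reals $\alpha_k>1$ decreasing to $1$ and $\beta_k$ increasing to $+\infty$, and set
\[
  C_k \eqdef \mathrm{conv}\Big(\{h_0\}\cup\bigcup_{i\le k}\{b\cdot g_i\mid \alpha_k\le b\le\beta_k\}\Big).
\]
Each set $\{b\cdot g_i\mid \alpha_k\le b\le\beta_k\}$ is compact and convex, being the continuous image of the compact interval $[\alpha_k,\beta_k]$ under $b\mapsto b\cdot g_i$; hence $C_k$, the convex hull of finitely many compact convex sets together with a point, is itself compact and convex, as the continuous image of a product of a simplex with these compacta (here I use that $\Lform X$ is a \emph{topological} cone, so addition and scalar multiplication are jointly continuous). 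By the first paragraph $C_k\subseteq W$, and since every $b>1$ lies in some $[\alpha_k,\beta_k]$ with $n\le k$, the union $\bigcup_k C_k$ contains $h_0$ and every $b\cdot g_n$ with $b>1$. Thus it suffices to produce, for each $k$, a convex Scott-open $V_k$ with $C_k\subseteq V_k\subseteq W$, after which replacing $V_k$ by $V_1\cup\cdots\cup V_k$ makes the chain increasing.

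The heart of the proof is then the claim that \emph{a compact convex subset $K$ of the open set $W$ admits a convex open neighborhood $V'$ with $K\subseteq V'\subseteq W$}, applied to $K=C_k$. To prove it I would invoke local convex-compactness: every point of $K$ has a base of convex compact neighborhoods contained in $W$, so $K$ is covered by the interiors of finitely many convex compact sets $Q_1,\dots,Q_m\subseteq W$, and $V'\eqdef \mathrm{int}\bigl(\mathrm{conv}(Q_1\cup\cdots\cup Q_m)\bigr)$ is convex, open, and contains $K$ (each point of $K$ lies in some $\mathrm{int}(Q_i)$, hence in the interior of the convex hull). The one delicate point — and the main obstacle — is to guarantee $\mathrm{conv}(Q_1\cup\cdots\cup Q_m)\subseteq W$. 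Since $W$ is only star-shaped about the points $b\cdot g_n$ and is \emph{not convex}, this finite convex hull can a priori escape $W$; and because a cone has no subtraction, the usual topological--vector--space remedies (``$K+\varepsilon B\subseteq W$'', or ``a Minkowski sum with an open set is open'') are unavailable — this is precisely the gap in the original Lemma~3.4. The fix is to choose the $Q_i$ small enough that their convex hull stays inside $W$: as the $Q_i$ shrink toward the points of $K$, $\mathrm{conv}(\bigcup_i Q_i)$ shrinks toward $\mathrm{conv}(K)=K\subseteq W$, and one extracts an admissible finite cover by a compactness argument, where soberness (equivalently, well-filteredness) of $\Lform X$ is what lets one control the compact convex hull against the open set $W$. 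I expect the careful execution of this shrinking/compactness step — reconciling convexification with the mere star-shapedness of $W$ — to be the only genuinely hard part; everything else is bookkeeping.
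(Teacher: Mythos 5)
You have correctly isolated the crux, but the proposal has two genuine gaps, one of which is exactly the step you defer. The key claim — that a compact convex $K\subseteq W$ admits a convex open $V'$ with $K\subseteq V'\subseteq W$ — is not proved; your ``shrink the $Q_i$ until their convex hull stays in $W$'' heuristic is not an argument in a $T_0$ cone with no subtraction and no metric (there is no sense in which $\mathrm{conv}(\bigcup_i Q_i)$ ``converges'' to $K$, and a naive compactness/tube-lemma fix runs into a circularity: the number $m$ of sets in the finite subcover is only known after the cover is chosen, while the uniformity you need depends on $m$). The paper does not attempt this at all: it invokes the first half of Keimel's Proposition~10.6, which states precisely that in a locally convex, locally convex-compact, \emph{sober} topological cone, every non-empty convex compact \emph{saturated} set has a base of convex compact saturated neighborhoods. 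That proposition is the entire content of the ``thickening'' step, and it is where sobriety enters; citing it (or reproving it) is not optional bookkeeping.

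The second gap is in your reduction to an increasing union: replacing $V_k$ by $V_1\cup\cdots\cup V_k$ does make the chain increasing, but a finite union of convex sets is not convex, so the resulting $V=\bigcup_k V_k$ need not be convex and the reduction collapses. The paper's induction is designed precisely to avoid this: it builds a genuinely nested chain of \emph{convex compact saturated} sets $Q_0\subseteq \interior{Q_1}\subseteq Q_1\subseteq\cdots\subseteq W$, where $Q_{n+1}$ is obtained by applying Proposition~10.6 to the upward-closed convex hull of $Q_n\cup\{(1+1/2^{n+1})\cdot g_{n+1}\}$ (compact by Keimel's Lemma~4.10~(c)), and then takes $V\eqdef\bigcup_n Q_n=\bigcup_n\interior{Q_{n+1}}$, which is open and is a \emph{directed} union of convex sets, hence convex; the missing scalars $b>1$ are recovered from the points $(1+1/2^m)\cdot g_m$ using the upward closure of $V$ and the hypothesis $g_m\leq g_n$. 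Your first paragraph (finite convex combinations of $h_0$ and the $b\cdot g_n$ lie in $W$) and the compactness of your sets $C_k$ are fine, but without Proposition~10.6 and without a construction that nests the convex open sets by design, the proof does not go through.
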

\begin{proof}
  Lemma~4.10~(c) of \cite{Keimel:topcones2} states that in a
  topological cone, the convex hull of a finite union of convex
  compact sets is compact; and therefore the upward closure of such a
  convex hull is compact saturated, and still convex. Since $\Lform X$
  is topological, every set $\upc [h, k]$ is therefore convex and
  compact saturated.

  We build convex compact saturated sets $Q_n$ containing
  $\upc [(1+1/2^n) \cdot g_n, h_0]$ and included in $W$, so that $Q_n$
  is included in the interior $\interior {Q_{n+1}}$ of ${Q_{n+1}}$ for
  every $n \in \nat$, by induction on $n$, as follows.

  We let $Q_0 \eqdef \upc [2 \cdot g_0, h_0]$.  Since $h_0 \in W$, by
  assumption on $W$, $[2 \cdot g_0, h_0]$ is included in $W$.  Since
  $W$ is upwards-closed, $Q_0 \subseteq W$.

  Having built $Q_n \subseteq W$, we let $Q'_n$ be the upward closure
  of the convex hull of the union of $\{(1+1/2^{n+1}) \cdot g_{n+1}\}$
  and $Q_n$.  By Lemma~4.10~(c) of \cite{Keimel:topcones2}, which we
  have already used above, $Q'_n$ is convex and compact saturated.
  Additionally, every element $h$ of $Q'_n$ is above some element in
  $[(1+1/2^{n+1}) \cdot g_{n+1}, k]$ for some $k \in Q_n$ (hence
  $k \in W$).  Indeed, every element of the convex hull of the union
  of two convex sets $A$ and $B$ can be written as
  $\alpha \cdot a + (1-\alpha) \cdot b$ for some $\alpha \in [0, 1]$,
  $a \in A$ and $b \in B$.

  By assumption on $W$, $[(1+1/2^{n+1}) \cdot g_{n+1}, k]$ is included
  in $W$, and since $W$ is upwards closed, it follows that every
  element $h \in Q'_n$ is in $W$.  Therefore $Q'_n \subseteq W$.  The
  first half of Proposition~10.6 of \cite{Keimel:topcones2} states
  that in a locally convex, locally convex-compact, sober topological
  cone, every non-empty convex compact saturated set has a base of
  convex compact saturated neighborhoods.  Hence there is a convex
  compact saturated set $Q_{n+1}$ included in $W$ whose interior
  contains $Q_n$.  The induction is complete.

  Let $V \eqdef \bigcup_{n \in \nat} Q_n$.  $V$ contains $h_0$, since
  $h_0 \in Q_0$. $V$ is convex since it is a directed union of convex
  sets. Then,
  $V \subseteq \bigcup_{n \in \nat} \interior {Q_{n+1}} \subseteq
  \bigcup_{n \in \nat} Q_{n+1} \subseteq V$, so
  $V = \bigcup_{n \in \nat} \interior {Q_{n+1}}$, and therefore $V$ is
  open.  Since $Q_n \subseteq W$ for every $n \in \nat$, $V$ is also
  included in $W$.  Finally, $(1+1/2^n) \cdot g_n$ is in $Q_n$ for
  every $n \in \nat$, hence in $V$.  For $n \in \nat$, $b > 1$, let us
  pick $m \in \nat$ such that $m \geq n$ and $1+1/2^m \leq b$.  Then
  $(1+1/2^m) \cdot g_m$ is in $V$, and
  $(1+1/2^m) \cdot g_m \leq b \cdot g_m \leq b \cdot g_n$, and since
  $V$ is upwards-closed, $b \cdot g_n \in V$.
\end{proof}

\begin{newlemma}
  \label{lemma:Hfriendly:1:V:aux}
  Let $X$ be a space satisfying condition~1 or~2 of the definition of
  $\AN_1$-friendliness.  Let $h_0 \in \Lform X$ and $W$ be a
  Scott-open neighborhood of $h_0$ in $\Lform X$ such that
  $[b \cdot \one, h]$ is included in $W$ for every $b > 1$ and every
  $h \in W$.  There is a convex Scott-open neighborhood $V$ of $h_0$
  in $\Lform X$ that contains every function $b \cdot \one$ with
  $b > 1$ and that is included in $W$.
\end{newlemma}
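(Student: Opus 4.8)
This lemma is the analogue of Lemma~\ref{lemma:Hfriendly:1:V} but with the constant function $\one$ in place of a descending sequence $g_n$. The conclusion is that from a Scott-open neighborhood $W$ of $h_0$ closed under the operation $[b\cdot\one, h]\subseteq W$ (for $b>1$, $h\in W$), I can extract a \emph{convex} Scott-open neighborhood $V$ of $h_0$ with $V\subseteq W$ that additionally contains all $b\cdot\one$ for $b>1$.

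The plan is to handle the two cases separately, since the hypotheses differ fundamentally.

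For condition~2, I would invoke Lemma~\ref{lemma:Hfriendly:1:V} directly. Under condition~2, $\Lform X$ is a locally convex, locally convex-compact, sober topological cone, which is exactly the hypothesis of Lemma~\ref{lemma:Hfriendly:1:V}. I take the constant descending sequence $g_n \eqdef \one$ for all $n\in\nat$; this is trivially descending. The hypothesis of Lemma~\ref{lemma:Hfriendly:1:V}, that $[b\cdot g_n, h]\subseteq W$ for all $n$, $b>1$, $h\in W$, reduces precisely to our hypothesis $[b\cdot\one, h]\subseteq W$. The conclusion then gives a convex Scott-open $V$ containing $h_0$, included in $W$, and containing every $b\cdot g_n = b\cdot\one$ with $b>1$, which is exactly what is wanted.

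For condition~1, where $X$ is merely compact (and $\AN$-friendly), I cannot use Lemma~\ref{lemma:Hfriendly:1:V} because $\Lform X$ need not be locally convex-compact or sober. Instead I would use Lemma~\ref{lemma:chi+U:compact}, which is tailored to the compact case. The plan is to build $V$ as a union of convex pieces. Concretely, for each $b>1$, using $\AN$-friendliness (local convexity of $\Lform X$) I can find a convex Scott-open neighborhood of $h_0$ inside $W$; then I enlarge it by the sets of the form $\{h\mid \exists b>1, k\in(\text{convex open}), h\geq b\cdot\one + k\}$ supplied by Lemma~\ref{lemma:chi+U:compact}, which are open and tuned to absorb the rays $b\cdot\one$. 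The key point is that Lemma~\ref{lemma:chi+U:compact} guarantees that adding a multiple of $\one$ above $a$ keeps us inside an open set; taking $U$ to be a convex open neighborhood of $h_0$ and using the assumed closure property of $W$ under $[b\cdot\one, h]$, I can verify the resulting set is convex, open, contained in $W$, and contains all $b\cdot\one$. I would set $V$ to be the convex open set $\{h \in \Lform X \mid \exists b>1,\ k\in U_0,\ h \geq b\cdot\one + \dc k\}$-style construction where $U_0$ is a convex Scott-open neighborhood of $h_0$ inside $W$ obtained from local convexity, checking convexity by hand and openness via Lemma~\ref{lemma:chi+U:compact}.

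**The main obstacle:** The difficult part is the compact case (condition~1), specifically verifying that the constructed set $V$ is simultaneously convex and contained in $W$ while still absorbing all the rays $b\cdot\one$. Convexity must be checked directly from the defining inequality $h\geq b\cdot\one+k$ with $k$ ranging over a convex set, and containment in $W$ relies crucially on the hypothesis that $W$ is closed under taking $[b\cdot\one, h]$ together with $W$ being upward-closed. I expect the bookkeeping of the parameters $b$ and the convex open witness $U_0$ to be the delicate step; the condition~2 case is essentially immediate once the constant sequence trick is spotted.
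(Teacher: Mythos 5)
Your treatment of condition~2 is exactly the paper's: apply Lemma~\ref{lemma:Hfriendly:1:V} with the constant sequence $g_n \eqdef \one$, and the hypotheses match verbatim. That part is correct and complete.

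For condition~1 there is a genuine gap. The single set you write down, $\{h \in \Lform X \mid \exists b>1,\ k\in U_0,\ h \geq b\cdot\one + k\}$ with $U_0$ a convex Scott-open neighborhood of $h_0$ inside $W$, contains neither $h_0$ nor the rays $b\cdot\one$ in general: $h_0 \geq b\cdot\one + k$ forces $h_0 \geq b\cdot\one$, which fails for most $h_0$, and $b\cdot\one \geq b'\cdot\one + k$ forces $k \leq (b-b')\cdot\one$, which need not hold for any $k \in U_0$. The missing idea is an \emph{interpolating one-parameter family}: the paper sets, for each $a \in [0,1]$, $V_a \eqdef \{h \mid \exists b>1,\ k \in U_0,\ h \geq ba\cdot\one + (1-a)\cdot k\}$ (so $V_0 = U_0$ recaptures $h_0$ and $V_1$ recaptures the rays), shows each $V_a$ is Scott-open via Lemma~\ref{lemma:chi+U:compact} applied to the rescaled open set $(1-a)\cdot U_0$, and takes $V \eqdef \bigcup_{a\in[0,1]} V_a$. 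Convexity of this union is then a nontrivial computation: a convex combination $\alpha\cdot h + (1-\alpha)\cdot h'$ with $h \in V_a$, $h' \in V_{a'}$ must be shown to land in $V_{a''}$ for $a'' = \alpha a + (1-\alpha)a'$, which requires massaging the coefficients to produce a new $b''>1$ and a new $k'' \in U_0$ as a convex combination of $k$ and $k'$. Your sketch names the right ingredients (local convexity, Lemma~\ref{lemma:chi+U:compact}, the closure hypothesis on $W$) but does not supply the interpolation that makes the set simultaneously convex, open, contained in $W$, and containing both $h_0$ and all the $b\cdot\one$; as written, the construction does not go through.
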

\begin{proof}
  When $X$ satisfies condition~2, namely when $\Lform X$ is a locally
  convex-compact, locally convex sober topological cone, the result is
  simply Lemma~\ref{lemma:Hfriendly:1:V} with $g_n \eqdef \one$ for
  every $n \in \nat$.

  Hence we concentrate on the case where $X$ satisfies condition~1,
  namely when $\Lform X$ is locally convex and $X$ is compact.  Since
  $\Lform X$ is locally convex and $h_0 \in W$, there is a convex
  Scott-open neighborhood $U$ of $h_0$ included in $W$.

  For every $a \in {]0, 1[}$, the set
  $(1-a) \cdot U \eqdef \{(1-a) \cdot h \mid h \in U\}$ is Scott-open,
  because it is equal to the inverse image of $U$ by the map
  $h \mapsto 1/(1-a) \cdot h$.  Hence the set
  $V_a \eqdef \{h \in \Lform X \mid \exists b > a, k \in (1-a) \cdot
  U, h \geq b\cdot \one + k\}$ is Scott-open by
  Lemma~\ref{lemma:chi+U:compact}, which applies since $X$ is compact.
  When $a=1$, we define $V_1$ as
  $\{h \in \Lform X \mid \exists b > 1, h \geq b\cdot \one\}$, and
  that is also Scott-open by the same lemma.  When $a=0$, we define
  $V_0$ as $U$.  For uniformity of treatment, we note that $V_a$ is
  equal to
  $\{h \in \Lform X \mid \exists b > 1, k \in U, h \geq ba \cdot \one
  + (1-a) \cdot k\}$ for every $a \in [0, 1]$, including when $a=0$ or
  $a=1$.  Let $V \eqdef \bigcup_{a \in [0, 1]} V_a$.  This is
  Scott-open, and we claim that $V$ is convex.  Let $h$ and $h'$ be
  two elements of $V$, say $h \geq ba\cdot \one + (1-a) \cdot k$ and
  $h' \geq b'a' \cdot \one + (1-a') \cdot k'$, with $b, b' > 1$,
  $k, k' \in U$, and $a, a' \in [0, 1]$.  Let $\alpha \in {]0, 1[}$.
  Then
  $\alpha \cdot h + (1-\alpha) \cdot h' \geq (\alpha ba + (1-\alpha)
  b'a') \cdot \one + (\alpha (1-a) \cdot k + (1-\alpha) (1-a') \cdot
  k')$.  Let $a'' \eqdef \alpha a + (1-\alpha) a'$.  Then
  $a'' \in [0, 1]$, and $\alpha ba + (1-\alpha) b'a' = b'' a''$ for
  some $b'' > 1$.  Indeed,
  $\alpha ba + (1-\alpha) b'a' \geq \alpha a + (1-\alpha) a' = a''$.
  If $a'' \neq 0$, then $a \neq 0$ or $a' \neq 0$, and since
  $\alpha \neq 0, 1$, the inequality is strict, so we can take
  $b'' \eqdef (\alpha ba + (1-\alpha) b'a') / a''$; if $a''=0$, then
  $a=a'=0$, and any $b'' > 1$ fits.  Also,
  $\alpha (1-a) \cdot k + (1-\alpha) (1-a') \cdot k' = (1-a'') \cdot
  k''$, where
  $k'' \eqdef \frac {\alpha (1-a)} {1-a''} k + \frac {(1-\alpha)
    (1-a')} {1-a''} k'$ if $a'' \neq 1$ (otherwise, we take for $k''$
  an arbitrary element of $U$, for example $h_0$).  Noting that
  $\alpha (1-a) + (1-\alpha) (1-a')
  = 
  1-a''$, $k''$ is a convex combination of $k$ and $k'$ (if
  $a'' \neq 1$), hence is in the convex set $U$.  To sum up,
  $\alpha \cdot h + (1-\alpha) \cdot h' \geq b''a'' \cdot \one +
  (1-a'') \cdot k''$, where $b'' > 1$ and $k'' \in U$ (even if
  $a''=1$).  It follows that $\alpha \cdot h + (1-\alpha) \cdot h'$ is
  in $V_{a''}$, hence in $V$.  Therefore $V$ is convex.

  Every function $b \cdot \one$ with $b>1$ is in $V_1$, hence in $V$,
  and $h_0$ is in $V_0=U$ hence in $V$.  We claim that
  $V \subseteq W$.  For every $h \in V$, there is an $a \in [0, 1]$, a
  number $b > 1$, and a $k \in U$ such that
  $h \geq ba \cdot \one + (1-a) \cdot k$.  The term
  $ba \cdot \one + (1-a) \cdot k$ is in $[b \cdot \one, k]$, which is
  included in $W$ since $k \in U \subseteq W$, and by our assumption
  on $W$.  Since $W$ is open hence upwards-closed, $h \in W$.
\end{proof}

We now deal with the case of LCS-complete spaces.

For every continuous map $f \colon X \to Y$, for every continuous
valuation $\nu$ on $X$, there is an \emph{image} continuous valuation
$f [\nu]$, which maps every open subset $V$ of $Y$ to
$\nu (f^{-1} (V))$.  There is a \emph{change-of-variable formula}: for
every $h \in \Lform Y$,
$\int_{y \in Y} h (y) \,df[\nu] = \int_{x \in X} h (f (x)) \,d\nu$.
This is clear considering our definition of the integral by a
Choquet-type formula.

A continuous valuation $\nu$ on a space $Y$ is \emph{supported} on a
subset $X$ if and only if for all open subsets $U$ and $U'$ of $Y$
such that $U \cap X = U' \cap X$, $\nu (U) = \nu (U')$
\cite[Definition~4.5]{JGL:distval:I}.  Giving $X$ the subspace
topology, and writing $i$ for the inclusion map $X \to Y$, Lemma~4.6
of \cite{JGL:distval:I} states that a continuous valuation $\nu$ on
$Y$ is supported on $X$ if and only if there is a (unique) continuous
valuation $\mu$ on $X$ such that $\nu = i [\mu]$.

The following lemma was already implicit in the proof of
\cite[Theorem~1.1]{dBGLJL:LCS}.  A continuous valuation $\nu$ on $Y$
is \emph{bounded} if and only if $\nu (Y) < \infty$.
\begin{newlemma}
  \label{lemma:supp:Gdelta}
  Let $Y$ be an LCS-complete space, $V_n$ ($n \in \nat$) be countably
  many open subsets of $Y$, and $\nu$ be a bounded continuous
  valuation on $Y$.  If $\nu$ is supported on every $V_n$, then $\nu$
  is supported on $\bigcap_{n \in \nat} V_n$.
\end{newlemma}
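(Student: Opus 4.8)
The plan is to realize $\nu$ as the restriction of a genuine finite Borel measure on an ambient locally compact sober space, where countable additivity is available, and then read off support on the $G_\delta$ set $X \eqdef \bigcap_{n \in \nat} V_n$ by a standard null-set argument. Two preliminary observations make this work. First, since $\nu$ is bounded, support on an \emph{open} set has a mass characterization: applying the definition of support to the pair $U \eqdef Y$, $U' \eqdef V_n$ (for which $U \cap V_n = V_n = U' \cap V_n$) yields $\nu(V_n) = \nu(Y)$ for every $n \in \nat$. Second, $X$ is a countable intersection of opens, hence a Borel ($G_\delta$) subset of $Y$.

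Next I would exploit LCS-completeness. Write $Y$, up to homeomorphism, as a $G_\delta$ subspace $\bigcap_{k \in \nat} O_k$ of a locally compact sober space $Z$, with each $O_k$ open in $Z$; fix opens $V_n'$ of $Z$ restricting to $V_n$ (i.e.\ $V_n' \cap Y = V_n$), and, for arbitrary opens $U, U'$ of $Y$, opens $\tilde U, \tilde U'$ of $Z$ with $\tilde U \cap Y = U$ and $\tilde U' \cap Y = U'$. Pushing $\nu$ forward along the inclusion $j \colon Y \to Z$ gives a bounded continuous valuation $\bar\nu \eqdef j[\nu]$ on $Z$ with $\bar\nu(W) = \nu(W \cap Y)$ for every open $W \subseteq Z$. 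By the measure-extension theorem for bounded continuous valuations on locally compact sober spaces --- the base case underlying the proof of \cite[Theorem~1.1]{dBGLJL:LCS} --- $\bar\nu$ extends to a finite Borel measure $m$ on $Z$, agreeing with $\bar\nu$ on opens.

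The heart of the argument is then a null-set computation. Since $Y \subseteq O_k$ we have $m(O_k) = \bar\nu(O_k) = \nu(Y)$, so each $Z \setminus O_k$ is $m$-null; likewise $m(V_n') = \nu(V_n) = \nu(Y)$ by the first observation, so each $Z \setminus V_n'$ is $m$-null. Countable subadditivity of $m$ then gives $m(Z \setminus Y) = 0$ and $m\bigl(Z \setminus \bigcap_n V_n'\bigr) = 0$, whence $m(Z \setminus X) = 0$ because $X = \bigl(\bigcap_n V_n'\bigr) \cap Y$ inside $Z$. Finally, for opens $U = \tilde U \cap Y$ and $U' = \tilde U' \cap Y$ of $Y$ with $U \cap X = U' \cap X$, I would compute, using $X \subseteq Y$ and that $m$ is concentrated on $X$,
\[
\nu(U) = m(\tilde U) = m(\tilde U \cap X) = m(U \cap X) = m(U' \cap X) = m(\tilde U') = \nu(U'),
\]
which is exactly support of $\nu$ on $X$.

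The main obstacle is the measure-extension input on $Z$: one must be sure that bounded continuous valuations on locally compact sober spaces really do extend to Borel measures, and that invoking this here is not circular. It is not, since $Z$ is genuinely locally compact and sober (the easy case), whereas \cite[Theorem~1.1]{dBGLJL:LCS} concerns the harder $G_\delta$-subspace situation; the present lemma is precisely the measure-theoretic continuity-from-above step bridging the two. A secondary point to check is boundedness, which is used both to obtain $\nu(V_n) = \nu(Y)$ (cancelling the finite quantity $\nu(Y)$) and to guarantee that $m$ is finite, so that the complement and additivity manipulations above are legitimate.
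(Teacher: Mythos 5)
Your proof is correct, and at its core it follows the same strategy as the paper's: extend the valuation to a Borel measure via Alvarez-Manilla's theorem and then exploit countable additivity on the $G_\delta$ set $X \eqdef \bigcap_{n \in \nat} V_n$. Two technical divergences are worth noting. First, the paper applies Alvarez-Manilla's extension theorem directly to $\nu$ on $Y$, even though that theorem is stated for locally compact sober spaces and $Y$ is only assumed LCS-complete; your detour through the ambient locally compact sober space $Z$ (pushing $\nu$ forward along the inclusion and extending there, then reading the result back through traces of opens) is the more scrupulous way to legitimize that step, and it also correctly defuses the circularity worry with \cite[Theorem~1.1]{dBGLJL:LCS}. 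Second, where the paper first arranges the $V_n$ to be descending and computes $\mu (U \cap X) = \inf_{n} \mu (U \cap V_n) = \inf_n \nu (U \cap V_n) = \nu (U)$ by continuity from above for bounded measures, using the identity $\nu (U \cap V_n) = \nu (U)$ extracted from support on $V_n$ via modularity, you instead derive $\nu (V_n) = \nu (Y)$, show that $Z \diff X$ is $m$-null by countable subadditivity, and conclude by a null-set computation. For a finite measure these two bookkeeping schemes are interchangeable, and both lean on boundedness exactly where you say they do.
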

\begin{proof}
  Without loss of generality, we may assume that $V_0 \supseteq V_1
  \supseteq \cdots \supseteq V_n \cdots$, otherwise we replace each
  $V_n$ by $\bigcap_{i=0}^n V_i$.
  
  We will use a precursor to \cite[Theorem~1.1]{dBGLJL:LCS}: a theorem
  due to Mauricio Alvarez-Manilla \cite[Theorem~3.27]{alvarez00}
  states that on a locally compact sober space, every locally finite
  continuous valuation extends to a (unique $\tau$-smooth) measure on
  the Borel $\sigma$-algebra.  A continuous valuation is \emph{locally
    finite} if and only if every point has an open neighborhood mapped
  to some finite number by the valuation.  This applies to $\nu$,
  which is locally finite, since bounded.  We write the unique
  extension of $\nu$ we have just obtained as $\mu$.
  
  In $Y$, $X$ is a countable intersection of open sets, hence lies in
  the Borel $\sigma$-algebra, so $\mu (U \cap X)$ makes sense for
  every open subset $U$ of $Y$.  We then have
  $\mu (U \cap X) = \mu (U \cap \bigcap_{n \in \nat} V_n) = \mu
  (\bigcap_{n \in \nat} (U \cap V_n)) = \inf_{n \in \nat} \mu (U \cap
  V_n) = \inf_{n \in \nat} \nu (U \cap V_n)$, since bounded measures
  map intersections of descending sequences of measurable sets to the
  infimum of their measures.  Since $\nu$ is supported on $V_n$, and
  $(U \cap V_n) \cap V_n = U \cap V_n$, we have
  $\nu (U \cap V_n) = \nu (U)$ for every $n \in \nat$.  Hence
  $\mu (U \cap X) = \nu (U)$.  This entails that for all open subsets
  $U$ and $U'$ of $Y$ such that $U \cap X = U' \cap X$, $\nu (U)$ and
  $\nu (U')$ are both equal to $\mu (U \cap X) = \mu (U' \cap X)$, so
  $\nu$ is supported on $X$.
\end{proof}

Finally, here is Lemma~3.4 of \cite{JGL-mscs16} in repaired form.  The
conclusion is the same as originally published.  The assumption that
$X$ is locally convex is replaced by $\AN_\bullet$-friendliness
(Definition~\ref{defn:LX:Hfriendly}).

\begin{lemma}
  \label{lemma:AN:rs=id:aux}
  Let $\bullet$ be nothing, ``$\leq 1$'' or ``$1$'', let $X$ be an
  $\AN_\bullet$-friendly space, and let $F$ be an element of
  $\Pred^\bullet_{\AN} (X)$.  For every $h_0 \in \Lform X$, and every
  real number $r > 0$ such that $F (h_0) > r$, there is a
  $G \in \Pred^\bullet_{\Nature} (X)$ with $G \leq F$ and
  $G (h_0) > r$.
\end{lemma}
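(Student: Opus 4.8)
The plan is to split on the shape of $\bullet$. When $\bullet$ is nothing or ``$\leq 1$'', the statement is exactly Lemma~\ref{lemma:3.4:easy}, which needs only $\AN$-friendliness (and $\AN_\bullet$-friendliness implies $\AN$-friendliness), so there is nothing to add. I therefore concentrate on the case $\bullet = {}$``$1$''. First I would normalise the threshold: replacing $h_0$ by $(1/r)\cdot h_0$ and using positive homogeneity of $F$ and of the $G$ we seek, it suffices to treat $r = 1$. So assume $F (h_0) > 1$ and look for a normalized linear prevision $G \leq F$ with $G (h_0) > 1$. Throughout I will use that, since $F$ is normalized and positively homogeneous, $F (c \cdot \one + g) = c + F (g)$ for every $c \geq 0$ and every $g \in \Lform X$; in particular $F (\one) = 1$.

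For conditions~1 and~2 of $\AN_1$-friendliness I would run the separation machinery directly in the locally convex semitopological cone $\Lform X$. Let $W \eqdef \{h \in \Lform X \mid F (h) > 1\}$, which is Scott-open (as $F$ is Scott-continuous) and contains $h_0$. The displayed identity shows that $W$ satisfies the hypothesis of Lemma~\ref{lemma:Hfriendly:1:V:aux}: for $b > 1$, $h \in W$ and $a \in [0,1]$ we compute $F (ab \cdot \one + (1-a) \cdot h) = ab + (1-a) F (h) \geq ab + (1-a) = 1 + a (b-1) \geq 1$, the relevant inequality being strict in every case, so $[b \cdot \one, h] \subseteq W$. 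Lemma~\ref{lemma:Hfriendly:1:V:aux} then yields a convex Scott-open neighborhood $V$ of $h_0$ that contains every $b \cdot \one$ with $b > 1$ and is contained in $W$. Setting $A \eqdef \{h \in \Lform X \mid F (h) \leq 1\}$, which is convex (a sublevel set of the sublinear $F$) and non-empty (it contains $\one$), we have $A \cap V = \emptyset$ because $V \subseteq W$. Keimel's Separation Theorem \cite[Theorem~9.1]{Keimel:topcones2} then produces a continuous linear map $G \colon \Lform X \to {\creal}_\sigma$ with $G \leq 1$ on $A$ and $G > 1$ on $V$. Homogeneity turns $G \leq 1$ on $A$ into $G \leq F$ (apply it to $h / F (h) \in A$ when $F (h) > 0$, and scale when $F (h) = 0$), whence $G (\one) \leq F (\one) = 1$. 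On the other hand $b \cdot \one \in V$ gives $b\, G (\one) = G (b \cdot \one) > 1$ for every $b > 1$, so $G (\one) \geq 1$; thus $G (\one) = 1$ and $G$ is a normalized linear prevision. Finally $h_0 \in V$ gives $G (h_0) > 1$, as required.

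The remaining case is condition~3, where $X$ is LCS-complete, and I expect this to be the main obstacle. Write $X$ as a $G_\delta$ subspace $\bigcap_{n \in \nat} V_n$ of a locally compact sober space $Y$, with inclusion $i \colon X \to Y$; such a $Y$ is core-compact, hence satisfies condition~2 by Remark~\ref{rem:Hfriendly:1}, so the case already treated applies on $Y$. I would transport $F$ to the normalized Hoare prevision $\hat F$ on $Y$ given by $\hat F (h) \eqdef F (h \circ i)$, fix an extension $\hat h_0 \in \Lform Y$ with $\hat h_0 \circ i = h_0$, and apply the conditions~1/2 argument on $Y$ to obtain a normalized linear prevision on $Y$, i.e.\ a continuous probability valuation $\nu$ with $\nu \leq \hat F$ and $\int_Y \hat h_0\, d\nu > 1$. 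One then descends back to $X$: by \cite[Lemma~4.6]{JGL:distval:I}, if $\nu$ is supported on $X$ it factors as $\nu = i [\mu]$ for a unique continuous probability valuation $\mu$ on $X$, and the change-of-variable formula gives that the linear prevision $G$ associated with $\mu$ satisfies $G \leq F$ (since $\hat F$ only sees $h \circ i$) and $G (h_0) > 1$.

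The delicate point, and the crux of this case, is that $\nu$ must be supported on $X$. Concretely one needs $\nu (V_n) = \nu (Y) = 1$ for every $n$ (equivalently, $\nu$ supported on each open set $V_n \supseteq X$); granted this, Lemma~\ref{lemma:supp:Gdelta} upgrades support on every $V_n$ to support on $\bigcap_{n \in \nat} V_n = X$. Since $\hat F (\chi_U) = 0$ for every open $U$ disjoint from $X$, the bound $\nu \leq \hat F$ already forces $\nu$ to charge no open set missing $X$; but securing the full equalities $\nu (V_n) = 1$ does \emph{not} follow from $\nu \leq \hat F$ alone and must be built into the choice of $\nu$ on $Y$. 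This is where the hypotheses specific to LCS-complete spaces — via the extension theorem of Alvarez-Manilla to a $\tau$-smooth Borel measure underlying Lemma~\ref{lemma:supp:Gdelta}, and the $\odot$-consonance of $X$ guaranteeing $\AN$-friendliness — are genuinely used, and is the step I would spend the most care on.
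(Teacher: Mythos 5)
Your treatment of the cases where $\bullet$ is nothing or ``$\leq 1$'', and of conditions~1 and~2 of $\AN_1$-friendliness when $\bullet$ is ``$1$'', is correct and close to the paper's. The only divergence there is cosmetic: where you invoke Keimel's Separation Theorem with the convex set $A \eqdef \{h \mid F(h) \leq 1\}$ against the convex open $V$ produced by Lemma~\ref{lemma:Hfriendly:1:V:aux}, the paper takes the upper Minkowski functional $q$ of $V$ and sandwiches a continuous linear $G$ between $q$ and $F$; both routes rest on the same Hahn--Banach machinery. Your derivation of $G \leq F$ from ``$G \leq 1$ on the sublevel set'' by positive homogeneity is sound, and the normalization argument ($b \cdot \one \in V$ for all $b > 1$ forcing $G(\one) \geq 1$) is the same in both versions.

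The gap is in condition~3, and it is exactly the step you defer. You reduce to a continuous probability valuation $\nu \leq \hat F$ on $Y$ with $\int_Y \hat h_0\,d\nu > 1$, correctly observe that everything hinges on $\nu(V_n)=1$ for every $n$, note that this does not follow from $\nu \leq \hat F$, and then stop, saying it ``must be built into the choice of $\nu$''. That is the entire difficulty --- indeed it is the content of the repair to the originally faulty Lemma~3.4 --- and the proposal does not supply it. The paper's mechanism is to apply Lemma~\ref{lemma:Hfriendly:1:V} on $\Lform Y$ not with the constant sequence $g_n \eqdef \one$ but with the descending sequence $g_n \eqdef \chi_{V_n}$: since $\chi_{V_n} \circ i = \one$ (as $X \subseteq V_n$), one checks that $\hat F(ba\cdot\chi_{V_n} + (1-a)\cdot h) = ba + (1-a)\,\hat F(h) > 1$ whenever $b>1$ and $\hat F(h)>1$, so the hypothesis of that lemma is met, and the resulting convex Scott-open set $V$ contains every $b\cdot\chi_{V_n}$ with $b>1$ in addition to $(1/r)\cdot\hat h_0$. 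Sandwiching a continuous linear $\hat G$ between the upper Minkowski functional $q$ of $V$ and $\hat F$ then yields $\hat G(b\cdot\chi_{V_n}) \geq q(b\cdot\chi_{V_n}) > 1$ because $b\cdot\chi_{V_n} \in V = q^{-1}(]1,\infty])$, so $\nu(V_n) = \hat G(\chi_{V_n}) > 1/b$ for every $b>1$, hence $\nu(V_n) \geq 1$; the reverse inequality comes from $\hat G \leq \hat F$. Only then do Lemma~\ref{lemma:supp:Gdelta} and \cite[Lemma~4.6]{JGL:distval:I} let you descend to a normalized continuous valuation on $X$. Without arranging for the functions $b\cdot\chi_{V_n}$ to lie in the separating open set from the outset, there is no reason why the valuation you obtain should be supported on $X$, so your argument for the LCS-complete case is incomplete.
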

\begin{proof}
  The cases where $\bullet$ is nothing or ``$\leq 1$'' were dealt with
  in Lemma~\ref{lemma:3.4:easy}.
  
  We now deal with the case where $\bullet$ is ``$1$''.  This decomposes
  into subcases, according to whether $X$ satisfies condition~1, 2
  or~3 of Definition~\ref{defn:LX:Hfriendly}.

  \emph{If condition~1 or~2 is satisfied.}  We let
  $W \eqdef F^{-1} (]1, \infty])$.  We verify the assumptions of
  Lemma~\ref{lemma:Hfriendly:1:V:aux}.  For every $b > 1$ and every
  $h \in W$, we consider an arbitrary element
  $ba \cdot \one + (1-a) h$ of $[b \cdot \one, h]$, with
  $a \in [0, 1]$.  Since $F$ is normalized and positively homogeneous,
  $F (ba \cdot \one + (1-a) \cdot h) = ba + (1-a) F (h)$.  If
  $a \neq 1$, using the fact that $h \in W = F^{-1} (]1, \infty])$,
  this is strictly larger than $ba + (1-a) \geq a + (1-a)=1$;
  otherwise, this is equal to $b$, which is also strictly larger than
  $1$.  In any case, it follows that $ba \cdot \one + (1-a) h$ is in
  $W$.  Hence we may apply Lemma~\ref{lemma:Hfriendly:1:V:aux} (with
  $(1/r) \cdot h_0$ taking the place of $h_0$, so that
  $(1/r) \cdot h_0 \in W = F^{-1} (]1, \infty])$), and we obtain that
  there is a convex Scott-open neighborhood $V$ of $(1/r) \cdot h_0$
  in $\Lform X$ that contains every function $b \cdot \one$ with
  $b > 1$ and that is included in $W$.

  We note that $V$ is proper (namely, not the whole of $\Lform X$),
  since $W$ is itself proper.  Indeed, the constant $0$ map
  $0 \cdot \one$ is not in $V$, because $F (0 \cdot \one) = 0$.  We
  form the upper Minkowski functional $q$ of $V$, which is by
  definition the lower Minkowski function $M_{\Lform X \diff V}$ of
  its complement: by \cite[Proposition 7.6~(b)]{Keimel:topcones2}, $q$
  is a superlinear lower semi-continuous map from $\Lform X$ to
  $\creal$, such that $q^{-1} (]1, \infty]) = V$.  In fact, taking
  upper Minkowski functionals defines an order-isomorphism between the
  collection of proper open convex sets and superlinear lower
  semi-continuous maps.  Since $V \subseteq W = F^{-1} (]1, \infty])$,
  $q \leq F$.

  Keimel's sandwich theorem \cite[Sandwich
  Theorem~8.2]{Keimel:topcones2} states that given any superlinear
  lower semi-continuous map $q \colon C \to \creal$ on a
  semitopological cone $C$, given any sublinear map (not necessarily
  lower semi-continuous) $p \colon C \to \real$ pointwise above $q$,
  there is a linear lower semi-continuous map $G \colon C \to \creal$
  such that $q \leq G \leq p$.  We apply this to $q$ defined above and
  to $p \eqdef F$, and we obtained the corresponding $G$.

  The map $(1/r) \cdot h_0$ is in $V$, so $q ((1/r) \cdot h_0) > 1$;
  then $q (h_0) > r$, and hence $G (h_0) > r$.  We have
  $G (\one) \leq F (\one) = 1$ because $F$ is normalized, so $G$ is
  subnormalized.  Finally, for every $b > 1$, $b\cdot \one$ is in $V$,
  so $q (b \cdot \one) \geq 1$, namely $q (\one) \geq 1/b$ for every
  $b > 1$.  It follows that $q (\one) \geq 1$, so $G (\one) \geq 1$,
  hence $G$ is normalized.  Therefore $G$ is in $\Pred^1_{\Nature} X$.

  \emph{If condition~3 is satisfied.}  Then $X$ embeds as a $G_\delta$
  subspace of a locally compact sober space $Y$.  In particular, $Y$
  is core-compact, so $\Lform Y$ is a locally convex, locally
  convex-compact, sober topological cone, by
  Remark~\ref{rem:Hfriendly:1}.  We equate $X$ with the intersection
  $\bigcap_{n \in \nat} V_n$ of a descending sequence of open subsets
  $V_n$ of $Y$, $n \in \nat$, and we write $i \colon X \to Y$ for the
  inclusion map.

  Let $F' \colon \Lform Y \to \creal$ map every $h' \in \Lform Y$ to
  $F (h' \circ i)$.  
  It is easy to see that $F'$ is a normalized sublinear prevision on
  $Y$.

  Every $h \in \Lform X$ extends to some $h^* \in \Lform Y$, in the
  sense that $h^* \circ i = h$, namely that $h^* (x) = h (x)$ for
  every $x \in X$.  This is due to the fact that $\creal$ is a
  continuous lattice, and to Scott's theorem that the continuous
  lattices are exactly the injective spaces \cite[Theorem
  II-3.8]{GHKLMS:contlatt}; namely, the spaces $R$ such that every
  continuous map from a subspace (here, $X$) of a space (here $Y$) to
  $R$ has a continuous extension to the whole of $Y$.

  We have $F' (h_0^*) = F (h_0^* \circ i) = F (h_0) > r$.  Let also
  $W \eqdef {F'}^{-1} (]1, \infty])$.  We will apply
  Lemma~\ref{lemma:Hfriendly:1:V}, this time with
  $g_n \eqdef \chi_{V_n}$ for every $n \in \nat$.  We note that
  $g_n \circ i = \one$ for every $n \in \nat$, since
  $X \subseteq V_n$.  We check that every function
  $ba \cdot g_n + (1-a) \cdot h$ where $n \in \nat$, $b > 1$,
  $a \in [0, 1]$ and $h \in W$ is in $W$:
  $F' (ba \cdot g_n + (1-a) \cdot h) = F (ba \cdot (g_n \circ i) +
  (1-a) \cdot (h \cdot i)) = F (ba \cdot \one + (1-a) \cdot (h \cdot
  i)) = ba + (1-a) F' (h) > 1$, since $b > 1$ and $F' (h) > 1$.  By
  Lemma~\ref{lemma:Hfriendly:1:V}, therefore, there is a convex
  Scott-open neighborhood $V$ of $(1/r) \cdot h_0^*$ in $\Lform Y$
  that contains every function $b \cdot \chi_{V_n}$ for every
  $n \in \nat$ and $b > 1$, and such that $V \subseteq W$.

  $V$ is proper, since $W$ is proper, because the constant $0$ map is
  not in $W$.  As before, there is a unique superlinear lower
  semi-continuous map $q \colon \Lform Y \to \creal$ such that
  $q^{-1} (]1, \infty]) = V$.  By Keimel's sandwich theorem on
  $\Lform Y$, there is a lower semi-continuous linear map
  $G' \colon \Lform Y \to \creal$ such that $q \leq G' \leq F'$.

  Hence $G'$ is the integration functional of a unique continuous
  valuation $\nu'$ on $Y$; by integration functional we mean that
  $G' (h) = \int_{y \in Y} h (y) \,d\nu'$ for every $h \in \Lform Y$.
  For every $n \in \nat$ and every $b > 1$, we have
  $G' (b \cdot \chi_{V_n}) \geq q (b \cdot \chi_{V_n}) > 1$, since
  $G' \geq q$ and $b \cdot \chi_{V_n} \in V = q^{-1} (]1, \infty])$.
  Hence $\nu' (V_n) = G' (\chi_{V_n}) > 1/b$ for every $b > 1$.  Since
  $G' \leq F'$,
  $G' (\chi_{V_n}) \leq F' (\chi_{V_n}) = F (\chi_{V_n} \circ i) = F
  (\one)=1$, so $\nu' (V_n) \leq 1$.  It follows that $\nu' (V_n)=1$,
  for every $n \in \nat$.  We also note that $\nu' (Y)=1$: by a
  similar argument involving $G' \leq F'$, $\nu' (Y) \leq 1$, and
  $\nu' (Y) \geq \nu' (V_0)=1$.

  For every $n \in \nat$, since $\nu' (V_n)=1$, $\nu$ is supported on
  $V_n$.  We argue as follows.  We start with
  $1 = \nu' (V_n) \leq \nu' (U \cup V_n) \leq \nu' (Y)=1$, so all
  terms in the sequence of inequalities are equal.  In particular,
  $\nu' (U \cup V_n) = \nu' (V_n) = 1$.  We use this with
  $\nu' (U \cup V_n) + \nu' (U \cap V_n) = \nu' (U) + \nu' (V_n)$ and
  we obtain that $\nu' (U \cap V_n) = \nu' (U)$, for every open subset
  $U$ of $Y$.  In particular, for all open subsets $U$ and $U'$ of $Y$
  such that $U \cap V_n = U' \cap V_n$, $\nu' (U) = \nu' (U')$.  Using
  Lemma~\ref{lemma:supp:Gdelta}, then, $\nu'$ is supported on $X$, and is
  therefore equal to the image continuous valuation $i [\nu]$ of a
  unique continuous valuation $\nu$ on $X$.

  Additionally, $\nu (X) = i [\nu] (Y) = \nu' (Y)=1$.  Let $G$ be the
  integration functional of $\nu$ on $X$, namely
  $G (h) \eqdef \int_{x \in X} h (x) \,d\nu$ for every
  $h \in \Lform X$.  For all $a \in \Rp$ and $h \in \Lform X$,
  $G (a \cdot \one + h) = a \,\nu (X) + G (h) = a + G (h)$ for every
  $h \in \Lform X$, so $G$ is normalized.

  Next, $G' (h) = G (h \circ i)$ for every $h \in \Lform Y$.  Indeed,
  $G' (h) = \int_{y \in Y} h (y) \,d\nu' = \int_{y \in Y} h (y) \,d i
  [\nu] = \int_{x \in X} h (i (x)) \,d\nu$ (by the change-of-variable
  formula) $= G (h \circ i)$.  In particular,
  $G' (h_0^*) = G (h_0^* \circ i) = G (h_0)$.  But $G' \geq q$, and
  since $(1/r) \cdot h_0^*$ is in $V = q^{-1} (]1, \infty])$,
  $q ((1/r) \cdot h_0^*) > 1$, so $G ' (h_0^*) \geq q (h_0^*) > r$.
  It follows that $G (h_0) > r$.  Finally, $G' \leq F'$, so for every
  $h \in \Lform X$, $G' (h^* \circ i) \leq F' (h^* \circ i)$, meaning
  that $G (h) \leq F (h)$; whence $G \leq F$.
\end{proof}



}

\begin{lemma}
  \label{lemma:AN:rs=id}
  Let {\updated $\bullet$ be nothing, ``$\leq 1$" or ``$1$", and $X$
    be an $\AN_\bullet$-friendly space}.
  Then
  $r_{\AN} \circ s^\bullet_{\AN}$ is the identity map on
  $\Pred^\bullet_{\AN\;wk} (X)$.
\end{lemma}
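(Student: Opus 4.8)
The plan is to establish the pointwise equality $r_{\AN}(s^\bullet_{\AN}(F))(h) = F(h)$ for every $h \in \Lform X$, by proving the two inequalities separately. Before doing so, I would first record that $s^\bullet_{\AN}(F)$ is non-empty, so that $r_{\AN}$ actually applies to it in the sense of Definition~\ref{defn:rs} and produces an element of $\Pred^\bullet_{\AN}(X)$ via Lemma~\ref{lemma:AN:r:def}. When $\bullet$ is nothing or ``$\leq 1$'', the zero functional is a (sub)normalized linear prevision lying below $F$, hence belongs to the set. In the normalized case, I would instead invoke Lemma~\ref{lemma:AN:rs=id:aux} with $h_0 \eqdef \one$ and $r \eqdef 1/2$: this is legitimate since $F$ normalized forces $F(\one)=1 > 1/2$, and it yields a normalized linear $G \leq F$, so $s^1_{\AN}(F) \neq \emptyset$.

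For the inequality $r_{\AN}(s^\bullet_{\AN}(F)) \leq F$, the argument is immediate from the definitions: every $G \in s^\bullet_{\AN}(F)$ satisfies $G \leq F$ by construction, so $\sup_{G \in s^\bullet_{\AN}(F)} G(h) \leq F(h)$ for all $h \in \Lform X$. The reverse inequality is where the real content sits, but it is delivered directly by the auxiliary lemma. Fixing $h_0 \in \Lform X$, I would distinguish two cases. If $F(h_0)=0$, then $r_{\AN}(s^\bullet_{\AN}(F))(h_0) \geq 0 = F(h_0)$ holds trivially, using only that the set is non-empty and that previsions take values in $\creal$. If $F(h_0) > 0$, then for every real $r$ with $0 < r < F(h_0)$, Lemma~\ref{lemma:AN:rs=id:aux} supplies a $G \in \Pred^\bullet_{\Nature}(X)$ with $G \leq F$ and $G(h_0) > r$; such a $G$ lies in $s^\bullet_{\AN}(F)$ by definition, whence $r_{\AN}(s^\bullet_{\AN}(F))(h_0) \geq G(h_0) > r$. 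Letting $r$ tend to $F(h_0)$ gives $r_{\AN}(s^\bullet_{\AN}(F))(h_0) \geq F(h_0)$, and combining the two inequalities yields equality at every $h_0$.

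The heart of the matter, namely producing linear previsions that stay below $F$ yet remain large at a prescribed $h_0$, has been entirely absorbed into Lemma~\ref{lemma:AN:rs=id:aux}; it is there that $\AN_\bullet$-friendliness (local convexity and the Keimel separation/sandwich machinery) is genuinely used. I therefore do not expect any serious obstacle in the present statement: once the auxiliary lemma is granted, the proof is a routine assembly of a supremum lower bound against the obvious upper bound. The only points requiring a moment's care are the non-emptiness of $s^1_{\AN}(F)$ in the normalized case and the degenerate case $F(h_0)=0$, both of which are handled as above.
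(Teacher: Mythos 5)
Your proposal is correct and follows essentially the same route as the paper: the upper bound $r_{\AN}(s^\bullet_{\AN}(F)) \leq F$ is immediate from the definition, and the lower bound is exactly what Lemma~\ref{lemma:AN:rs=id:aux} delivers. The extra care you take over non-emptiness of $s^1_{\AN}(F)$ and the degenerate case $F(h_0)=0$ merely fills in details the paper leaves implicit (non-emptiness is recorded separately in Lemma~\ref{lemma:AN:s:def}).
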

\begin{proof}
  We must show that for every $F \in \Pred_{\AN} (X)$, for every
  $h \in \Lform X$,
  $F (h)$ is equal to
  $\sup_{G \in \Pred^\bullet_{\Nature} (X), G \leq F} G (h)$.  The
  inequality
  $F (h) \leq \sup_{G \in \Pred^\bullet_{\Nature\;wk} (X), G \leq F} G
  (h)$ follows directly from Lemma~\ref{lemma:AN:rs=id:aux}, while the
  converse inequality is obvious.
\end{proof}

{\updated The following lemma assumed $\Lform X$ to be locally convex
  in \cite{JGL-mscs16}, but this is not needed.}
\begin{lemma}
  \label{lemma:AN:s:closed}
  Let {\updated $\bullet$ be nothing, ``$\leq 1$" or ``$1$", and $X$
    be a topological space}.
  For every $F \in \Pred^\bullet_{\AN\;wk} (X)$,
  $s^\bullet_{\AN} (F) = \{G \in \Pred^\bullet_{\Nature} (X) \mid G
  \leq F\}$ is a closed subset of $\Pred_{\Nature\;wk}^\bullet (X)$.
\end{lemma}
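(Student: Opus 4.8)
The plan is to show that the complement of $s^\bullet_{\AN}(F)$ in $\Pred^\bullet_{\Nature\;wk}(X)$ is open, or equivalently to exhibit $s^\bullet_{\AN}(F)$ directly as an intersection of closed sets. The starting observation is simply that, by definition of the pointwise specialization ordering on linear previsions, a linear prevision $G$ lies in $s^\bullet_{\AN}(F)$ precisely when $G(h) \le F(h)$ for every $h \in \Lform X$. Thus
\[
  s^\bullet_{\AN}(F) = \bigcap_{h \in \Lform X} \{G \in \Pred^\bullet_{\Nature}(X) \mid G(h) \le F(h)\},
\]
and it suffices to prove that each set in this intersection is closed in $\Pred^\bullet_{\Nature\;wk}(X)$.

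For a fixed $h$, I would distinguish two cases according to the value $F(h) \in \creal$. If $F(h) < +\infty$, then $F(h)$ is a nonnegative real, so $[h > F(h)]_\Nature$ is one of the subbasic open sets generating the weak topology, and $\{G \mid G(h) \le F(h)\}$ is exactly its relative complement, hence closed in $\Pred^\bullet_{\Nature\;wk}(X)$. If $F(h) = +\infty$, then $G(h) \le F(h)$ holds for every $G$ (all values lie in $\creal$), so the set in question is the whole of $\Pred^\bullet_{\Nature}(X)$, which is trivially closed. In either case the set is closed.

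Finally, an arbitrary intersection of closed sets is closed, so $s^\bullet_{\AN}(F)$ is closed, as claimed. No separation or convexity property of $\Lform X$ enters the argument; the only ingredients are the definition of $s^\bullet_{\AN}$ and the fact that the weak topology is generated by the sets $[h > r]_\Nature$ with $r \in \Rp$. This confirms the remark preceding the statement, that the local convexity of $\Lform X$ assumed in the earlier version is unnecessary here. There is no genuine obstacle; the only point requiring a moment's care is the degenerate case $F(h) = +\infty$, where the naive set $[h > F(h)]$ would fall outside the given subbase (whose indices are required to be real), and one instead observes directly that the constraint $G(h) \le F(h)$ is then vacuous.
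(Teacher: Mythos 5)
Your proof is correct and is essentially the paper's own argument, just phrased as "an intersection of complements of subbasic opens" rather than "the complement is open": the paper picks $G\notin s^\bullet_{\AN}(F)$, finds $h$ with $G(h)>F(h)$ (which forces $F(h)<+\infty$, so $r\eqdef F(h)\in\Rp$ is a legal index), and notes that $[h>r]$ is an open neighborhood of $G$ missing $s^\bullet_{\AN}(F)$. Your explicit treatment of the case $F(h)=+\infty$ is the same point the paper handles implicitly.
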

\begin{proof}
  Consider any $G \not\in s^\bullet_{\AN} (F)$.   There is an $h \in
  \Lform X$
  such that $G (h) > F (h)$.  Then, $[h > r]$ (where $r \eqdef F (h)$)
  is an open neighborhood of $G$ that does not meet
  $s^\bullet_{\AN} (F)$.  So the complement of $s^\bullet_{\AN} (F)$
  is open.
\end{proof}

\begin{lemma}
  \label{lemma:AN:s:def}
  Let {\updated $\bullet$ be nothing, ``$\leq 1$" or ``$1$", and $X$
    be an $\AN_\bullet$-friendly space}.
  For every $F \in \Pred^\bullet_{\AN\;wk} (X)$,
  $s^\bullet_{\AN} (F) = \{G \in \Pred^\bullet_{\Nature} (X) \mid G
  \leq F\}$ is an element of $\HV (\Pred_{\Nature\;wk}^\bullet (X))$.
\end{lemma}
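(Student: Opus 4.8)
The plan is to unwind the definition: an element of $\HV (\Pred_{\Nature\;wk}^\bullet (X))$ is by construction a \emph{closed, non-empty} subset of $\Pred_{\Nature\;wk}^\bullet (X)$. So there are exactly two things to check about $s^\bullet_{\AN} (F)$, namely that it is closed and that it is non-empty. Closedness is precisely Lemma~\ref{lemma:AN:s:closed}, which holds for an arbitrary topological space $X$ and requires nothing beyond $F$ being a Hoare prevision; so all the real work lies in exhibiting at least one linear prevision $G \in \Pred^\bullet_{\Nature} (X)$ with $G \leq F$.

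When $\bullet$ is nothing or ``$\leq 1$'', I would simply take $G$ to be the zero prevision $h \mapsto 0$. This map is plainly linear and Scott-continuous, and it is subnormalized since $G (\one + h) = 0 \leq 1 = 1 + G (h)$. Because $F$ takes its values in $\creal$, we have $F (h) \geq 0 = G (h)$ for every $h \in \Lform X$, so $G \leq F$. Hence $0 \in s^\bullet_{\AN} (F)$, and the set is non-empty. No $\AN_\bullet$-friendliness assumption is used in this case.

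The substantive case is $\bullet = {}$``$1$'', where the zero prevision is no longer normalized and must be replaced. Here I would invoke the repaired Lemma~\ref{lemma:AN:rs=id:aux}, which is exactly the point at which $\AN_1$-friendliness is needed, and which carries the entire difficulty of the development. Since $F$ is normalized and positively homogeneous, $F (\one) = 1 + F (0) = 1$ (using $F (0) = 0$); so, choosing $h_0 \eqdef \one$ and any $r$ with $0 < r < 1$, the hypothesis $F (h_0) > r$ of Lemma~\ref{lemma:AN:rs=id:aux} is met. That lemma then produces a $G \in \Pred^1_{\Nature} (X)$ with $G \leq F$ (and even $G (h_0) > r$), so $s^1_{\AN} (F)$ contains $G$ and is non-empty.

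Combining the non-emptiness just obtained with the closedness supplied by Lemma~\ref{lemma:AN:s:closed}, I would conclude that $s^\bullet_{\AN} (F)$ is a closed, non-empty subset of $\Pred_{\Nature\;wk}^\bullet (X)$, that is, an element of $\HV (\Pred_{\Nature\;wk}^\bullet (X))$. The only real obstacle is thus the non-emptiness in the normalized case, and it is already dispatched by Lemma~\ref{lemma:AN:rs=id:aux}; everything else is immediate.
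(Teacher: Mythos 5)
Your proof is correct and follows the paper's own argument exactly: closedness from Lemma~\ref{lemma:AN:s:closed}, the zero prevision for $\bullet$ nothing or ``$\leq 1$'', and Lemma~\ref{lemma:AN:rs=id:aux} applied to $h_0 \eqdef \one$ in the normalized case. (Your choice of $0 < r < 1$ is in fact slightly more careful than the paper's ``taking $r=0$'', since that lemma requires $r > 0$.)
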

\begin{proof}
  It is closed by Lemma~\ref{lemma:AN:s:closed}.  When $\bullet$ is
  the empty superscript or ``$\leq 1$'', $s^\bullet_{\AN} (F)$ is
  non-empty since it contains the zero prevision.  When $\bullet$ is
  ``$1$'', non-emptiness follows from Lemma~\ref{lemma:AN:rs=id:aux}
  with $h \eqdef \one$,
  and taking $r=0$ for example. 
\end{proof}

The fact that $s^\bullet_{\AN}$ is continuous is the most complicated
result of this section.  One of the needed ingredients is von
Neumann's original minimax theorem \cite{vN:minimax}.  That theorem
was vastly generalized since then, and in numerous ways, but the
original form will be sufficient to us.
\begin{lemma}[Von Neumann's Minimax]
  \label{lemma:minimax}
  For each $n \in \nat$, let $\Delta_n$ be the set of $n$-tuples of
  non-negative real numbers $(a_1, a_2, \cdots, a_n)$ such that
  $\sum_{i=1}^n a_i=1$.  For every $n \times m$ matrix $M$ with real
  entries,
  \[
  \min_{\vec\alpha \in \Delta_m} \max_{\vec\beta \in \Delta_n}
  {\vec\beta}^t M \vec\alpha
  = \max_{\vec\beta \in \Delta_n} \min_{\vec\alpha \in \Delta_m} {\vec\beta}^t M \vec\alpha.
  \]
\end{lemma}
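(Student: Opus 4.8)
The plan is to prove the two inequalities separately, the direction $\max_{\vec\beta}\min_{\vec\alpha} \leq \min_{\vec\alpha}\max_{\vec\beta}$ being the trivial one. First I would record the elementary bound: for any fixed $\vec\alpha_0 \in \Delta_m$ and $\vec\beta_0 \in \Delta_n$, one has $\min_{\vec\alpha} {\vec\beta_0}^t M \vec\alpha \leq {\vec\beta_0}^t M \vec\alpha_0 \leq \max_{\vec\beta} {\vec\beta}^t M \vec\alpha_0$; taking the maximum over $\vec\beta_0$ on the left and the minimum over $\vec\alpha_0$ on the right immediately gives $\max_{\vec\beta}\min_{\vec\alpha} {\vec\beta}^t M\vec\alpha \leq \min_{\vec\alpha}\max_{\vec\beta} {\vec\beta}^t M\vec\alpha$. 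All the content is therefore in the reverse inequality.

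Next I would simplify both sides using the fact that a linear functional on a simplex attains its extrema at a vertex: for fixed $\vec\alpha$, $\max_{\vec\beta} {\vec\beta}^t M\vec\alpha = \max_{1 \leq i \leq n} (M\vec\alpha)_i$, and for fixed $\vec\beta$, $\min_{\vec\alpha} {\vec\beta}^t M\vec\alpha = \min_{1 \leq j \leq m} (M^t\vec\beta)_j$. Writing $v \eqdef \min_{\vec\alpha \in \Delta_m} \max_i (M\vec\alpha)_i$ for the min-max value, the goal reduces to producing a single $\vec\beta^* \in \Delta_n$ with $(M^t\vec\beta^*)_j \geq v$ for every $j$: this yields $\min_{\vec\alpha}{\vec\beta^*}^t M\vec\alpha \geq v$, hence $\max_{\vec\beta}\min_{\vec\alpha}{\vec\beta}^t M \vec\alpha \geq v$, which is exactly the missing inequality.

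The device I would use is a finite-dimensional separating-hyperplane argument. Let $K \eqdef \{M\vec\alpha \mid \vec\alpha \in \Delta_m\} \subseteq \real^n$ be the (compact, convex) convex hull of the columns of $M$, and let $U \eqdef \{y \in \real^n \mid y_i < v \text{ for all } i\}$, an open convex set. By the very definition of $v$, no point $M\vec\alpha$ can have all of its coordinates strictly below $v$, so $K \cap U = \emptyset$. Since $K$ is compact convex and $U$ is open convex and disjoint from it, the geometric Hahn--Banach (separating hyperplane) theorem in $\real^n$ supplies a nonzero $\vec\beta \in \real^n$ and a scalar $c$ with ${\vec\beta}^t y \geq c$ for all $y \in K$ and ${\vec\beta}^t z \leq c$ for all $z \in U$ (after possibly negating $\vec\beta$ so that $U$ lies on the side where ${\vec\beta}^t z$ is bounded above).

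The hard part, and the step that makes the whole argument work, is extracting the sign and normalization information from this separator. Because $U$ contains points with arbitrarily negative $i$-th coordinate for each $i$, boundedness of ${\vec\beta}^t z$ above on $U$ forces every component $\beta_i \geq 0$; as $\vec\beta \neq 0$, I may rescale so that $\vec\beta \in \Delta_n$. For such a $\vec\beta$ one computes $\sup_{z \in U}{\vec\beta}^t z = v$, so $c \geq v$, and therefore ${\vec\beta}^t M\vec\alpha \geq c \geq v$ for every $\vec\alpha \in \Delta_m$. Specializing $\vec\alpha$ to each standard basis vector $e_j$ gives $(M^t\vec\beta)_j \geq v$ for all $j$, which is precisely the $\vec\beta^*$ sought above. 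This closes the reverse inequality and hence the theorem. (Should one wish to avoid separation entirely, Nash's Brouwer-fixed-point argument or an appeal to linear-programming duality would serve equally well, but the separation argument is the most economical here and uses only standard convex geometry.)
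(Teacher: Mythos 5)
Your argument is correct, but note that the paper does not actually prove this lemma: it states it as von Neumann's classical minimax theorem and cites the original 1928 paper, adding only the remark that the $\min$ and $\max$ are attained by compactness of $\Delta_m$, $\Delta_n$ and continuity of $\vec\alpha \mapsto M\vec\alpha$ (which is why real-valued entries matter). So there is no in-paper proof to compare against; what you have supplied is a self-contained substitute. Your route --- reduce both sides to coordinatewise extrema over the vertices of the simplices, set $v \eqdef \min_{\vec\alpha}\max_i (M\vec\alpha)_i$, and separate the compact convex set $K = M(\Delta_m)$ from the open convex orthant-like set $U = \{y \mid y_i < v \text{ for all } i\}$ --- is a standard and fully rigorous proof of the finite minimax theorem; the key steps (nonnegativity of the separating functional from the downward-unboundedness of $U$, normalization to $\Delta_n$, and $\sup_{z \in U}\vec\beta^t z = v$) are all carried out correctly. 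It is also in the spirit of the rest of the paper, which leans heavily on Hahn--Banach-type separation in cones, so it blends well; the only thing it costs is length where a citation suffices. One cosmetic point: the displayed identity in the lemma is stated with $\min_{\vec\alpha}\max_{\vec\beta}$ on the left, and your ``trivial'' inequality is $\max\min \leq \min\max$, so your hard direction does indeed close the loop --- just make sure the final sentence states explicitly that the two chains of inequalities combine to give equality.
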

Note that an implicit fact in that lemma, hidden in the notation
$\min$, $\max$, is that the suprema over $\vec\beta$ and the infima
over $\vec\alpha$ are attained.  This is a consequence of the fact
that $\Delta_m$ and $\Delta_n$ are compact, and that multiplication by
$M$ is continuous.  For that, it is important that $M$ has real-valued
entries, and in particular, not $+\infty$.

\begin{lemma}
  \label{lemma:AN:s:cont}
  Let {\updated $\bullet$ be nothing, ``$\leq 1$" or ``$1$", and $X$
    be an $\AN_\bullet$-friendly space}.
  Then $s^\bullet_{\AN}$ is continuous from
  $\Pred^\bullet_{\AN\;wk} (X)$ to
  $\HV (\Pred^\bullet_{\Nature\;wk} (X))$.
\end{lemma}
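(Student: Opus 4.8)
The plan is to show that the inverse image under $s^\bullet_{\AN}$ of each subbasic lower-Vietoris open $\Diamond V$ is weakly open. Since $\Diamond$ commutes with unions and every open $V$ of $\Pred^\bullet_{\Nature\;wk}(X)$ is a union of finite intersections of subbasic opens, it suffices to treat $V \eqdef \bigcap_{j=1}^n [h_j > r_j]_\Nature$. For such $V$,
\[
(s^\bullet_{\AN})^{-1}(\Diamond V) = \{F \in \Pred^\bullet_{\AN} (X) \mid \exists G \in \Pred^\bullet_{\Nature} (X),\ G \le F \text{ and } G(h_j) > r_j\ (1 \le j \le n)\},
\]
which I call $\mathcal O$; the goal is that $\mathcal O$ is weakly open. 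Writing $k_{\vec\alpha} \eqdef \sum_{j=1}^n \alpha_j h_j$ for $\vec\alpha \in \Delta_n$, the heart of the argument is the identity $\mathcal O = \{F \mid T(F) > 0\}$, where $T(F) \eqdef \inf_{\vec\alpha \in \Delta_n}\big(F(k_{\vec\alpha}) - \sum_{j} \alpha_j r_j\big)$.

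One inclusion is routine: a witness $G \le F$ with $\delta \eqdef \min_j(G(h_j) - r_j) > 0$ gives, by linearity of $G$ and $G \le F$, that $F(k_{\vec\alpha}) - \sum_j\alpha_j r_j \ge \sum_j\alpha_j(G(h_j) - r_j) \ge \delta$ for every $\vec\alpha$, so $T(F) \ge \delta > 0$. The converse inclusion is the crux, and is where von Neumann's Minimax Theorem (Lemma~\ref{lemma:minimax}) enters. Assume $T(F) = c > 0$. Using $\sup_{G \le F} G(k) = F(k)$ (this is exactly $r_{\AN}\circ s^\bullet_{\AN} = \mathrm{id}$, Lemma~\ref{lemma:AN:rs=id}, which is where $\AN_\bullet$-friendliness is used), for each $\vec\alpha$ I pick $G_{\vec\alpha} \in \Pred^\bullet_{\Nature} (X)$ with $G_{\vec\alpha} \le F$ and $G_{\vec\alpha}(k_{\vec\alpha}) > \sum_j\alpha_j r_j + c/2$. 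The map $\vec\alpha' \mapsto G_{\vec\alpha}(k_{\vec\alpha'}) - \sum_j\alpha'_j r_j$ is affine and continuous and positive at $\vec\alpha$, hence on an open ball around $\vec\alpha$; by compactness of $\Delta_n$ finitely many previsions $G_1,\dots,G_m \le F$ suffice, so that for every $\vec\alpha$ some $G_i$ satisfies $G_i(k_{\vec\alpha}) > \sum_j\alpha_j r_j$. Forming the matrix $M_{ij} \eqdef G_i(h_j) - r_j$, this reads $\min_{\vec\alpha \in \Delta_n}\max_{\vec\beta \in \Delta_m} \vec\beta^{t} M \vec\alpha > 0$; applying Lemma~\ref{lemma:minimax} produces $\vec\beta \in \Delta_m$ with $(\vec\beta^t M)_j > 0$ for all $j$, so $G \eqdef \sum_i \beta_i G_i$ is a convex combination of previsions below $F$ (hence itself linear, below $F$, and in the normalization class selected by $\bullet$) satisfying $G(h_j) > r_j$ for all $j$; thus $F \in \mathcal O$.

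It remains to show that $\{F \mid T(F) > 0\}$ is open, i.e. that $T$ is weakly lower semicontinuous. For this I prove that $(\vec\alpha, F) \mapsto F(k_{\vec\alpha})$ is jointly lower semicontinuous on $\Delta_n \times \Pred^\bullet_{\AN\;wk}(X)$ and conclude by the tube lemma, $\Delta_n$ being compact. Joint lower semicontinuity follows from monotonicity and positive homogeneity: if $F_0(k_{\vec\alpha_0}) > t$, choose $\epsilon > 0$ with $(1-\epsilon)F_0(k_{\vec\alpha_0}) > t$; for $\vec\alpha$ close to $\vec\alpha_0$ one has $\alpha_j \ge (1-\epsilon)\alpha_{0j}$ for all $j$ (trivially where $\alpha_{0j}=0$, and by continuity where $\alpha_{0j}>0$), whence $k_{\vec\alpha} \ge (1-\epsilon)k_{\vec\alpha_0}$ pointwise, so $F(k_{\vec\alpha}) \ge (1-\epsilon)F(k_{\vec\alpha_0})$ for every $F$; thus the weak-open set $[k_{\vec\alpha_0} > t/(1-\epsilon)]_{\AN}$ times a Euclidean neighborhood of $\vec\alpha_0$ lies inside $\{(\vec\alpha,F) \mid F(k_{\vec\alpha}) > t\}$, and since $\vec\alpha \mapsto \sum_j\alpha_j r_j$ is continuous the set $\{(\vec\alpha,F)\mid F(k_{\vec\alpha}) - \sum_j\alpha_j r_j > t\}$ is open as well.

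The main obstacle is the converse inclusion: the passage from the pointwise condition (for each $\vec\alpha$ there is a dominated linear prevision beating $\sum_j\alpha_j r_j$ on $k_{\vec\alpha}$) to a single linear prevision $G \le F$ beating all the $r_j$ simultaneously. This is precisely a minimax phenomenon, and handling it requires both the compactness of $\Delta_n$ (to replace the continuum of $\vec\alpha$'s by a finite game) and, critically, that the matrix $M$ have finite real entries so that Lemma~\ref{lemma:minimax} applies. The latter forces a preliminary reduction to bounded $h_j$: since $[h_j > r_j]_\Nature = \bigcup_N [\min(h_j,N) > r_j]_\Nature$ is a directed union, $V = \bigcup_N \bigcap_{j}[\min(h_j,N) > r_j]_\Nature$ and hence $\Diamond V = \bigcup_N \Diamond\big(\bigcap_{j}[\min(h_j,N) > r_j]_\Nature\big)$, so one may replace each $h_j$ by a truncation $\min(h_j,N)$, after which $G_i(h_j) < +\infty$ and the entries $M_{ij}$ are genuine reals. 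Care with normalization is also needed, but convex combinations preserve each of the three classes indexed by $\bullet$.
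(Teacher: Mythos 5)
Your argument is, at its core, the same as the paper's: reduce to finitely many bounded test functions, use Lemma~\ref{lemma:AN:rs=id} (really Lemma~\ref{lemma:AN:rs=id:aux}) to produce linear previsions dominated by $F'$, extract a finite family by a compactness argument, and glue them into a single witness $G \leq F'$ by von Neumann's minimax theorem. Your packaging of the openness part is different but workable: instead of exhibiting, as the paper does, an explicit finite intersection $\bigcap_{\vec a \in A} [\sum_{i} a_i h_i > (1+\epsilon) \sum_{i} a_i]_{\AN}$ indexed by a $1/N$-discretization of the simplex (together with its ``Fact~A''), you characterize the inverse image as $\{F \mid T (F) > 0\}$ with $T (F) \eqdef \inf_{\vec\alpha \in \Delta_n} (F (k_{\vec\alpha}) - \sum_j \alpha_j r_j)$ and prove lower semicontinuity of $T$ via joint lower semicontinuity and the tube lemma. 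Both versions rest on the same pillars (compactness of the simplex, positive homogeneity and monotonicity of $F$), so this is essentially the same proof in different clothes rather than a genuinely new route.

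There is, however, one concrete gap, in the case where $\bullet$ is the empty superscript. You claim that after replacing each $h_j$ by $\min (h_j, N)$ the entries $M_{ij} = G_i (h_j) - r_j$ are genuine reals. That is not true: a linear prevision $G \in \Pred_{\Nature} (X)$ may satisfy $G (\one) = +\infty$ (it corresponds to a continuous valuation $\nu$ with $\nu (X) = +\infty$), and then $G (h) = +\infty$ already for the bounded function $h \eqdef \one$. Truncating the test functions therefore does not make the matrix real-valued; one must also truncate the previsions. The paper does this by invoking Heckmann's theorem that every linear prevision is a directed supremum of \emph{bounded} linear previsions, which allows one to replace each $G_i$ by a bounded linear prevision below it whose values on the finitely many $h_j$ are close enough to preserve the strict inequalities. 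Without this step, Lemma~\ref{lemma:minimax} --- which requires a real matrix, so that the extrema are attained --- does not apply; your compactness covering argument also tacitly uses that $\vec\alpha' \mapsto G_{\vec\alpha} (k_{\vec\alpha'})$ is real-valued and continuous, which fails for the same reason (lower semicontinuity would rescue the covering, but not the minimax step). For $\bullet$ equal to ``$\leq 1$'' or ``$1$'' your proof is fine as written, since then $G (h_j) \leq N\, G (\one) \leq N$.
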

\begin{proof}
  Among the subbases of the weak topology on $\Pred^\bullet_{\Nature}
  (X)$, one is given by subsets of the form $[h > b]_\Nature$, $h \in
  \Lform X$,
  $b \in \Rp$.  Since
  $[h > 0]_\Nature = \bigcup_{r > 0} [h > r]_\Nature$, we may restrict
  to $b \neq 0$.  When $b \neq 0$,
  $[h > b]_\Nature = [h/b > 1]_\Nature$, so another subbase is given
  by the subsets of the form $[h > 1]_\Nature$.  Finally, since $h$ is
  the directed supremum of the maps $\min (h, r)$, $r \in \Rp$, we can
  even restrict $h$ to be bounded.

  Since $\Diamond$ commutes with unions, a subbase of the topology of
  $\HV (\Pred^\bullet_{\Nature\;wk} (X))$ is then given by the subsets
  of the form $\Diamond W$, where $W$ is a finite intersection
  $\bigcap_{i=1}^m [h_i > 1]_{\Nature}$, where each $h_i$ is
  continuous and bounded.  Moreover, we may require $m > 0$.

  We must show that the inverse image of $\Diamond W$ by
  $s^\bullet_{\AN}$, where $W$ is as above, is open in
  $\Pred^\bullet_{\AN\;wk} (X)$.  Let $F$ be an arbitrary element of
  ${s^\bullet_{\AN}}^{-1} (\Diamond W)$.  By definition, there is an
  element $G \in \Pred^\bullet_{\Nature\;wk} (X)$ such that $G \leq F$
  and $G \in \bigcap_{i=1}^m [h_i > 1]_{\Nature}$.

  Fix a positive real number $\epsilon$ such that
  $G (h_i) > 1+\epsilon$ for every $i$, $1\leq i \leq m$.  We claim
  that we can find a finite set $A$ of $m$-tuples of non-negative real
  numbers $\vec a \eqdef (a_1, a_2, \cdots, a_m)$ such that
  $\bigcap_{\vec a \in A} [\sum_{i=1}^m a_i h_i > (1+\epsilon)
  \sum_{i=1}^m a_i]_\AN$ is an open neighborhood of $F$ included in
  ${s^\bullet_{\AN}}^{-1} (\Diamond W)$.

  To this end, we shall define $A$ as the set of $m$-tuples of
  non-negative real numbers $\vec a \eqdef (a_1, a_2, \cdots, a_m)$ such
  that $0 < \sum_{i=1}^m a_i \leq 1$ and each $a_i$ is an integer
  multiple of $1/N$, for some fixed, large enough natural number $N$.
  Taking $N$ so that $\frac m N < \frac \epsilon {1+\epsilon}$ will be
  enough for our purposes.

  The fact that $F$ is in $\bigcap_{\vec a \in A} [\sum_{i=1}^m a_i
  h_i > (1+\epsilon) \sum_{i=1}^m a_i]_\AN$ is obvious.  For every
  $\vec a \in A$, $F (\sum_{i=1}^m a_i h_i) \geq G (\sum_{i=1}^m a_i
  h_i) = \sum_{i=1}^m a_i G (h_i)$, and this is larger than or equal
  to $(1+\epsilon) \sum_{i=1}^m a_i$.  To show that it is strictly
  larger, recall that some $a_i$ is non-zero, since $0 < \sum_{i=1}^m
  a_i$.

  To show that $\bigcap_{\vec a \in A} [\sum_{i=1}^m a_i h_i >
  (1+\epsilon) \sum_{i=1}^m a_i]_\AN$ is included in
  ${s^\bullet_{\AN}}^{-1} (\Diamond W)$ is more technical.  The main
  observation is the following:

  \textbf{Fact A.} For every positively homogeneous map $\Phi \colon
  \Lform X \to {\creal}_\sigma$
  such that
  $\Phi (\sum_{i=1}^m a_i h_i) > (1+\epsilon) \sum_{i=1}^m a_i$ for
  every $\vec a \in A$, it holds that
  $\Phi (\sum_{i=1}^m \alpha_i h_i) > 1$ for every
  $\vec\alpha \in \Delta_m$.

  This fact is proved by elementary computation.  Fix $\vec\alpha \in
  \Delta_m$, and let $a_i \eqdef \frac 1 N \lfloor N \alpha_i \rfloor$ be
  the largest multiple of $1/N$ below $\alpha_i$, for each $i$.
  Notice that $\frac 1 {1+\epsilon} < \sum_{i=1}^m a_i \leq 1$; the
  inequality on the left follows from the consideration that $a_i >
  \alpha_i-1/N$, and $\frac m N < 1 - \frac 1 {1+\epsilon} = \frac
  \epsilon {1+\epsilon}$, remembering that $\sum_{i=1}^m \alpha_i=1$.
  In particular, $\vec a = (a_1, a_2, \cdots, a_m)$ is in $A$, so
  $\Phi (\sum_{i=1}^m a_i h_i) > (1+\epsilon) \sum_{i=1}^m a_i$.
  Since $\alpha_i \geq a_i$ for each $i$, $\Phi (\sum_{i=1}^m \alpha_i
  h_i) \geq \Phi (\sum_{i=1}^m a_i h_i) > (1+\epsilon) \sum_{i=1}^m
  a_i$, and we have just seen that the latter is strictly greater than
  $1$.

  Now consider any element $F'$ of $\bigcap_{\vec a \in A}
  [\sum_{i=1}^m a_i h_i > (1+\epsilon) \sum_{i=1}^m a_i]_\AN$.  By
  Lemma~\ref{lemma:AN:rs=id}, $F'$ is a pointwise supremum of elements
  of $\Pred^\bullet_{\Nature\;wk} (X)$.  Write this supremum as a
  directed supremum of finite suprema, and observe that $\bigcap_{\vec
    a \in A} [\sum_{i=1}^m a_i h_i > (1+\epsilon) \sum_{i=1}^m
  a_i]_\AN$ is Scott open.  As a consequence, it contains one of the
  finite suprema, viz., there are finitely many elements $G'_1, G'_2,
  \cdots, G'_n$ of $\Pred^\bullet_{\Nature\;wk} (X)$ below $F'$ such
  that, for every $\vec a \in A$, there is a $j$, $1\leq j\leq n$,
  such that $\sum_{i=1}^m a_i G'_j (h_i) > (1+\epsilon) \sum_{i=1}^m
  a_i$.  We can even take $G'_j$ bounded in the sense that $G'_j (\one) <
  +\infty$:
  this is clear when $\bullet$ is ``$\leq 1$'' or ``$1$'', otherwise
  we use \cite[Theorem~4.2]{Heckmann:space:val}, which says that every
  linear prevision on $X$ is a directed supremum of bounded linear
  previsions, allowing us to replace each $G'_j$ by a bounded linear
  prevision whose value on $h_i$ is close enough to $G'_j (h_i)$.

  For every $\vec a \in A$, there is a $j$ such that
  $\sum_{i=1}^m a_i G'_j (h_i) > (1+\epsilon) \sum_{i=1}^m a_i$, so
  trivially, there is a $\vec\beta \in \Delta_n$ such that
  $\sum_{\substack{1\leq i\leq m\\1\leq j\leq n}} a_i \beta_j G'_j
  (h_i) > (1+\epsilon) \sum_{i=1}^m a_i$.  Applying Fact~A to
  $\Phi \eqdef \sup_{\vec\beta \in \Delta_n} \sum_{j=1}^n \beta_j
  G'_j$, we obtain that for every $\vec\alpha \in \Delta_m$,
  $\sup_{\vec\beta \in \Delta_n} \sum_{j=1}^n \beta_j G'_j
  (\sum_{i=1}^m \alpha_i h_i) > 1$, so
  $\sum_{\substack{1\leq i\leq m\\1\leq j\leq n}} \alpha_i \beta_j
  G'_j (h_i) > 1$ for some $\vec\beta \in \Delta_n$.  Since $G'_j$ is
  bounded, and $h_i$ is bounded too, $G'_j (h_i) < +\infty$, so
  $M \eqdef {(G'_j (h_i))}_{\substack{1\leq j\leq n\\1\leq i\leq m}}$
  is a matrix of real numbers.  Rephrasing what we have just obtained,
  for every $\vec \alpha \in \Delta_m$, there is a
  $\vec\beta \in \Delta_n$ such that ${\vec\beta}^t M \vec\alpha > 1$.
  In particular (and using the fact that infima are attained)
  $\min_{\vec\alpha \in \Delta_m} \max_{\vec\beta \in \Delta_n}
  {\vec\beta}^t M \vec\alpha > 1$.  By von Neumann's minimax theorem
  (Lemma~\ref{lemma:minimax}),
  $\max_{\vec\beta \in \Delta_n} \min_{\vec\alpha \in \Delta_m}
  {\vec\beta}^t M \vec\alpha > 1$.  Therefore, there is a tuple
  $\vec\beta \in \Delta_n$ such that, for every
  $\vec\alpha \in \Delta_m$,
  $\sum_{\substack{1\leq i\leq m\\1\leq j\leq n}} \alpha_i \beta_j
  G'_j (h_i) > 1$.  Take $G' \eqdef \sum_{j=1}^n \beta_j G'_j$.  Since
  $\sum_{j=1}^n \beta_j=1$, $G'$ is in $\Pred^\bullet_{\Nature} (X)$,
  and $G' \leq F'$.  Also, we have just proved that
  $\sum_{i=1}^m \alpha_i G' (h_i) > 1$ for every
  $\vec\alpha \in \Delta_m$, in particular, $G'$ is in
  $\bigcap_{i=1}^m [h_i > 1]_\Nature$.  Therefore $F'$ is, indeed, in
  ${s^\bullet_{\AN}}^{-1} (\Diamond W)$.
\end{proof}

\begin{lemma}
  \label{lemma:AN:sr}
  Let $X$ be a topological space.  For every $C \in \HV
  (\Pred^\bullet_{\Nature\;wk} (X))$, $C \subseteq s^\bullet_{\AN}
  (r_\AN (C))$.
\end{lemma}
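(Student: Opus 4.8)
The plan is to verify the inclusion directly, element by element. I would fix an arbitrary $G_0 \in C$ and show that it belongs to $s^\bullet_{\AN} (r_\AN (C))$, which by Definition~\ref{defn:rs} requires establishing two things: that $G_0$ lies in $\Pred^\bullet_{\Nature} (X)$, and that $G_0 \leq r_\AN (C)$ in the pointwise ordering.

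The first requirement is immediate. Since $C$ is an element of $\HV (\Pred^\bullet_{\Nature\;wk} (X))$, it is by definition a (closed, non-empty) subset of $\Pred^\bullet_{\Nature\;wk} (X)$, so every $G_0 \in C$ is already a linear prevision of the prescribed normalization type $\bullet$.

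For the second requirement, I would recall that $r_\AN (C)$ is defined as the pointwise supremum $h \mapsto \sup_{G \in C} G (h)$, which is a well-defined element of $\Pred^\bullet_{\AN} (X)$ by Lemma~\ref{lemma:AN:r:def}. Since $G_0$ is one of the previsions $G$ over which this supremum is taken, for every $h \in \Lform X$ we have $G_0 (h) \leq \sup_{G \in C} G (h) = r_\AN (C) (h)$, whence $G_0 \leq r_\AN (C)$ pointwise. Combining the two points, $G_0 \in s^\bullet_{\AN} (r_\AN (C))$, and since $G_0 \in C$ was arbitrary the inclusion follows.

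There is essentially no obstacle here: the claim reduces to the elementary observation that a pointwise supremum dominates each of the maps it is taken over, so the containment holds purely by construction of $r_\AN$. The genuinely nontrivial content lies elsewhere, namely in the reverse direction $r_\AN \circ s^\bullet_\AN = \mathrm{id}$ already established in Lemma~\ref{lemma:AN:rs=id}; the present lemma is merely the easy half showing that $s^\bullet_\AN$ overshoots $C$, and will be used to pin down the behaviour of $s^\bullet_\AN \circ r_\AN$.
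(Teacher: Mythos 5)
Your proof is correct and is exactly the paper's argument: the paper's own proof simply notes that for every $G \in C$, $G(h) \leq \sup_{G' \in C} G'(h)$, which is the same pointwise-supremum observation you make. No difference in approach; yours just spells out the (trivial) membership of $G_0$ in $\Pred^\bullet_{\Nature}(X)$ a bit more explicitly.
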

\begin{proof}
  That amounts to checking that for every $G \in C$, $G (h) \leq
  \sup_{G' \in C} G' (h)$, which is obvious.
\end{proof}

We sum up these results in the following proposition.
\begin{proposition}
  \label{prop:AN:weak}
  Let $\bullet$ be the empty superscript, ``$\leq 1$'', or ``$1$''.

  Let $X$ be a topological space.  Then $r_{\AN}$ is a continuous map
  from $\HV (\Pred^\bullet_{\Nature\;wk} (X))$ to
  $\Pred^\bullet_{\AN\;wk} (X)$, and $s^\bullet_{\AN} \circ r_\AN$
  is above the identity.

  {\updated If $X$ is $\AN_\bullet$-friendly,}
  then $s^\bullet_{\AN}$ is a continuous map from
  $\Pred^\bullet_{\AN\;wk} (X)$ to
  $\HV (\Pred^\bullet_{\Nature\;wk} (X))$ such that
  $r_{\AN} \circ s^\bullet_{\AN}$ equals the identity.
\end{proposition}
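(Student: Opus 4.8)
The plan is to assemble the lemmas proved above, matching each claim to one of the two paragraphs of the statement; the proposition itself introduces no new difficulty.

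For the first paragraph, which assumes only that $X$ is a topological space, I would first invoke Lemma~\ref{lemma:AN:r:def} to see that $r_{\AN}(C)$ genuinely lands in $\Pred^\bullet_{\AN}(X)$ for every $C \in \HV(\Pred^\bullet_{\Nature\;wk}(X))$, so that $r_{\AN}$ is a well-defined map into $\Pred^\bullet_{\AN\;wk}(X)$, and then appeal to Lemma~\ref{lemma:AN:r:cont} for its continuity. To see that $s^\bullet_{\AN} \circ r_{\AN}$ is above the identity, I would use Lemma~\ref{lemma:AN:sr}, which gives $C \subseteq s^\bullet_{\AN}(r_{\AN}(C))$; since the specialization ordering of the lower Vietoris topology is inclusion, this containment is exactly the desired inequality. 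Here $s^\bullet_{\AN}(r_{\AN}(C))$ is needed only as a subset of $\Pred^\bullet_{\Nature\;wk}(X)$, so no friendliness hypothesis is required.

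For the second paragraph I would assume $X$ to be $\AN_\bullet$-friendly. First, Lemma~\ref{lemma:AN:s:def} shows that $s^\bullet_{\AN}(F)$ is a non-empty closed set, hence an element of $\HV(\Pred^\bullet_{\Nature\;wk}(X))$, so that $s^\bullet_{\AN}$ is a genuine map into that space; friendliness enters precisely to secure non-emptiness in the normalized case ($\bullet$ being ``$1$''), through Lemma~\ref{lemma:AN:rs=id:aux} applied with $h \eqdef \one$ and $r = 0$. Continuity of $s^\bullet_{\AN}$ is then Lemma~\ref{lemma:AN:s:cont}, and the equality $r_{\AN} \circ s^\bullet_{\AN} = \identity{\Pred^\bullet_{\AN\;wk}(X)}$ is Lemma~\ref{lemma:AN:rs=id}. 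Combined with the first paragraph, this exhibits $(r_{\AN}, s^\bullet_{\AN})$ as a retraction, in fact as a coprojection in the earlier terminology.

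I expect no genuine obstacle in this proposition, which is pure bookkeeping. The hard part lies entirely upstream, in two places: the repaired Lemma~\ref{lemma:AN:rs=id:aux} (the corrected form of the old Lemma~3.4), which drives both the equality $r_{\AN} \circ s^\bullet_{\AN} = \mathrm{id}$ and the non-emptiness of $s^\bullet_{\AN}(F)$; and Lemma~\ref{lemma:AN:s:cont}, whose continuity argument hinges on von Neumann's minimax theorem (Lemma~\ref{lemma:minimax}). Once those are granted, stitching the retraction together is immediate.
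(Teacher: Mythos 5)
Your proposal is correct and matches the paper exactly: the paper offers no separate proof of this proposition, stating only that it "sums up" the preceding lemmata, and your assembly (Lemmas~\ref{lemma:AN:r:def}, \ref{lemma:AN:r:cont} and \ref{lemma:AN:sr} for the first paragraph; Lemmas~\ref{lemma:AN:s:def}, \ref{lemma:AN:s:cont} and \ref{lemma:AN:rs=id} for the second) is precisely the intended argument, including the observation that $\AN_\bullet$-friendliness is needed only where Lemma~\ref{lemma:AN:rs=id:aux} is invoked.
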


Recall that a coembedding-coprojection pair is a pair of continuous
maps $s$, $r$, such that $r \circ s = \identity \relax$ and $\identity
\relax \leq s \circ r$.
\begin{corollary}[$\HV (\Pred^\bullet_{\Nature\;wk} (X))$ coprojects onto $\Pred^\bullet_{\AN\;wk} (X)$]
  \label{corl:AN:weak}
  Let $\bullet$ be the empty superscript, ``$\leq 1$'', or ``$1$''.
  Let {\updated $X$ be an $\AN_\bullet$-friendly space}, for example,
  a locally compact space (or, more generally, a core-compact space),
  {\updated or an LCS-complete space}.

  Then $s^\bullet_{\AN}$ and $r_{\AN}$ together define an
  coembedding-coprojection pair of $\HV (\Pred^\bullet_{\Nature\;wk}
  (X))$ onto $\Pred^\bullet_{\AN\;wk} (X)$.
\end{corollary}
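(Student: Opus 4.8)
The plan is to read off all the required properties directly from Proposition~\ref{prop:AN:weak}; no new argument is needed beyond matching its conclusions to the definition of a coembedding-coprojection pair. Recall that such a pair consists of two continuous maps $s$ and $r$ with $r \circ s = \identity{}$ and $\identity{} \leq s \circ r$, where the inequality is read in the specialization ordering of the domain of $s$. Here $s^\bullet_{\AN}$ maps into $\HV (\Pred^\bullet_{\Nature\;wk} (X))$, and the specialization ordering of a lower Vietoris space is inclusion $\subseteq$, as recorded in Section~\ref{sec:retr-powerc-onto}. Hence the inequality $\identity{} \leq s^\bullet_{\AN} \circ r_\AN$ amounts exactly to the inclusion $C \subseteq s^\bullet_{\AN} (r_\AN (C))$ for every $C \in \HV (\Pred^\bullet_{\Nature\;wk} (X))$.

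First I would invoke Proposition~\ref{prop:AN:weak}, which already bundles together everything needed. It gives the continuity of $r_\AN$ from $\HV (\Pred^\bullet_{\Nature\;wk} (X))$ to $\Pred^\bullet_{\AN\;wk} (X)$ unconditionally, as well as the inequality $\identity{} \leq s^\bullet_{\AN} \circ r_\AN$ (equivalently the inclusion above, which is Lemma~\ref{lemma:AN:sr}). Under the standing hypothesis that $X$ is $\AN_\bullet$-friendly, the same proposition supplies the continuity of $s^\bullet_{\AN}$ in the reverse direction, together with the identity $r_\AN \circ s^\bullet_{\AN} = \identity{}$ on $\Pred^\bullet_{\AN\;wk} (X)$. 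These three facts are precisely the defining clauses of a coembedding-coprojection pair, so the main statement follows.

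Finally I would justify the illustrative examples. A locally compact space is core-compact, and every core-compact space satisfies condition~2 of Definition~\ref{defn:LX:Hfriendly}, hence is $\AN_1$-friendly, by Remark~\ref{rem:Hfriendly:1}; since $\AN_1$-friendliness implies $\AN_\bullet$-friendliness for every choice of $\bullet$, such spaces qualify. Likewise, an LCS-complete space is $\AN_1$-friendly by condition~3 of the same definition. There is, in truth, no real obstacle in this corollary itself: all the difficulty resides in the already-proved Lemma~\ref{lemma:AN:s:cont} (continuity of $s^\bullet_{\AN}$, via von Neumann's minimax theorem) and in the repaired Lemma~\ref{lemma:AN:rs=id:aux}, on which $r_\AN \circ s^\bullet_{\AN} = \identity{}$ rests.
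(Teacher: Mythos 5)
Your proposal is correct and follows exactly the route the paper takes: the corollary is stated without a separate proof precisely because it is just Proposition~\ref{prop:AN:weak} (whose second half packages Lemmas~\ref{lemma:AN:rs=id}, \ref{lemma:AN:s:def}, \ref{lemma:AN:s:cont} and \ref{lemma:AN:sr}) read against the definition of a coembedding-coprojection pair, with the examples justified by Remark~\ref{rem:Hfriendly:1} and condition~3 of Definition~\ref{defn:LX:Hfriendly}. Nothing is missing.
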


\subsection{The Retraction in the Demonic Case}
\label{sec:retr-demon-case}

In the demonic cases, we have defined $r_{\DN} (Q) (h)$ as $\inf_{G
  \in Q} G (h)$.  We shall be interested in cases where $Q$ is in a
Smyth powerdomain.  We can then say more:
\begin{lemma}
  \label{lemma:rDN}
  The map $r_{\DN}$, once restricted to the Smyth powerdomain of some
  space of linear previsions, is defined by
  $r_{\DN} (Q) (h) \eqdef \min_{G \in Q} G (h)$.
\end{lemma}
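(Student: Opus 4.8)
The plan is to show that the infimum $\inf_{G \in Q} G (h)$ appearing in the original definition of $r_{\DN} (Q) (h)$ is always attained, given that $Q$ is a non-empty compact saturated subset of a space of linear previsions equipped with its weak topology. The key observation is that, for each fixed $h \in \Lform X$, the evaluation map $\mathrm{ev}_h \colon G \mapsto G (h)$ is continuous from the weak topology to ${\creal}_\sigma$: indeed, the inverse image of the Scott-open set $(r, +\infty]$ under $\mathrm{ev}_h$ is exactly the subbasic open $[h > r]_\Nature$, and the inverse images of $\emptyset$ and $\creal$ are trivial, so $\mathrm{ev}_h$ is continuous.

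Since $Q$ is compact and $\mathrm{ev}_h$ is continuous, the image $K \eqdef \mathrm{ev}_h [Q] = \{G (h) \mid G \in Q\}$ is a compact subset of ${\creal}_\sigma$, and it is non-empty because $Q$ is. It then remains to check that every non-empty compact subset $K$ of ${\creal}_\sigma$ attains its infimum, i.e., that $a \eqdef \inf K$ belongs to $K$. This is the only step with genuine content, and it is where the specific form of the Scott topology on $\creal$ is used.

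To prove this I would argue by contradiction. Suppose $a \notin K$. Then $a$ is finite (otherwise $K = \{+\infty\}$ and $a \in K$), and since $a$ is a lower bound not belonging to $K$, every $k \in K$ satisfies $k > a$. The family of Scott-open sets $(r, +\infty]$, indexed by the reals $r > a$, is then an open cover of $K$: for each $k \in K$ one may pick $r$ with $a < r < k$ (any $r > a$ if $k = +\infty$), so that $k \in (r, +\infty]$. This family is directed under inclusion, so by compactness $K$ is already contained in a single $(r_0, +\infty]$ with $r_0 > a$. But then $r_0$ is a lower bound of $K$ strictly larger than $a$, contradicting $a = \inf K$. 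Hence $a \in K$, which means precisely that $\inf_{G \in Q} G (h) = \min_{G \in Q} G (h)$, as claimed.

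The argument is short, and the only obstacle worth flagging is the last paragraph: it relies essentially on the fact that the non-trivial Scott-open subsets of $\creal$ are the upper intervals $(r, +\infty]$, which is exactly what forces compact subsets of ${\creal}_\sigma$ to attain their infima, even though such subsets need not be closed in the ordinary Hausdorff sense.
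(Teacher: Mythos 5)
Your proof is correct and follows the same route as the paper: continuity of the evaluation map $G \mapsto G(h)$ for the weak topology, plus the standard fact that a lower semi-continuous map (i.e., a continuous map into ${\creal}_\sigma$) attains its infimum on a non-empty compact set. The paper simply cites this last fact as standard, whereas you spell out the directed-cover argument explicitly; both are fine.
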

\begin{proof}
  The evaluation map $G \mapsto G (h)$ is continuous from the given
  space of linear previsions to ${\creal}_\sigma$, by the very
  definition of the weak topology.  Such continuous maps are called
  lower semi-continuous real maps in the mathematical literature, and
  it is standard that lower semi-continuous real maps attain their
  infimum on every compact set.
\end{proof}

We wish to show that $r_{\DN}$, $s^\bullet_{\DN}$
forms a retraction.  We again progress through a series of lemmata.
In each, $\bullet$ may be the empty subscript, ``$\leq 1$'', or
``$1$''.

We use a similar proof as sketched in \cite{Gou-fossacs08a} to show
that $r_{\DN} \circ s^\bullet_{\DN}$ is the identity map (for
whichever superscripts $\bullet$).  An additional trick allows us to
dispense with an assumption of stable compactness, in
Lemma~\ref{lemma:DN:s:compact} below.

\begin{lemma}
  \label{lemma:DN:r:def}
  Let $X$ be a topological space.  Then $r_{\DN}$ is a map from $\SV
  (\Pred^\bullet_{\Nature\;wk} (X))$ to $\Pred^\bullet_{\DN\;wk} (X)$.
\end{lemma}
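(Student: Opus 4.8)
The plan is to check directly that for every $Q \in \SV (\Pred^\bullet_{\Nature\;wk} (X))$ the map $r_{\DN} (Q) \colon h \mapsto \inf_{G \in Q} G (h)$ is an element of $\Pred^\bullet_{\DN} (X)$, i.e.\ satisfies the four defining properties: positive homogeneity, Scott-continuity, super-additivity (hence, with homogeneity, superlinearity), and the normalization condition corresponding to $\bullet$. Since $Q$ is non-empty, the infimum ranges over a non-empty set, so $r_{\DN} (Q) (h)$ is a well-defined element of $\creal$; by Lemma~\ref{lemma:rDN} it is in fact attained as a minimum. The algebraic properties are immediate from the linearity of each $G \in Q$: positive homogeneity follows because scalar multiplication by $r \in \Rp$ commutes with infima in $\creal$, and super-additivity follows from the additivity of each $G$ together with the elementary inequality $\inf_{G \in Q} (G (h) + G (h')) \geq \inf_{G \in Q} G (h) + \inf_{G' \in Q} G' (h')$. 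Monotonicity is clear, as an infimum of monotone maps.

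The only delicate point is Scott-continuity, and this is where compactness of $Q$ enters. Given a directed family ${(h_i)}_{i \in I}$ in $\Lform X$ with supremum $h$, I must show $\inf_{G \in Q} G (h) = \sup_{i} \inf_{G \in Q} G (h_i)$; equivalently, since each $G$ is Scott-continuous, $\inf_{G \in Q} \sup_i G (h_i) = \sup_i \inf_{G \in Q} G (h_i)$. The inequality $\geq$ holds always (by monotonicity, each $r_{\DN} (Q) (h_i) \leq r_{\DN} (Q) (h)$). For the reverse inequality, write $c \eqdef \sup_i \inf_{G \in Q} G (h_i)$ and suppose, toward a contradiction, that $\inf_{G \in Q} G (h) > c$; then $c < +\infty$ and I may pick a real number $b$ with $c < b < \inf_{G \in Q} G (h)$. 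For every $G \in Q$ we then have $G (h) > b$, i.e.\ $\sup_i G (h_i) > b$, so $G \in [h_i > b]_\Nature$ for some $i$. Hence the subbasic weak-open sets $[h_i > b]_\Nature$ cover $Q$; since ${(h_i)}_{i \in I}$ is directed and each $G$ is monotone, these opens form a directed family, and compactness of $Q$ yields a single index $i_0$ with $Q \subseteq [h_{i_0} > b]_\Nature$. This gives $\inf_{G \in Q} G (h_{i_0}) \geq b > c$, contradicting the definition of $c$.

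It remains to verify the normalization condition indexed by $\bullet$, using that for \emph{linear} previsions the conditions simplify to $G (\one) \leq 1$ (subnormalized) and $G (\one) = 1$ (normalized). Computing $r_{\DN} (Q) (\one + h) = \inf_{G \in Q} (G (\one) + G (h))$, the subnormalized case gives $G (\one) + G (h) \leq 1 + G (h)$ for each $G$, whence $r_{\DN} (Q) (\one + h) \leq 1 + r_{\DN} (Q) (h)$; and in the normalized case $G (\one) = 1$ for every $G$ turns this into the equality $r_{\DN} (Q) (\one + h) = 1 + r_{\DN} (Q) (h)$. Thus $r_{\DN} (Q)$ is a Smyth prevision of the required normalization type, so $r_{\DN} (Q) \in \Pred^\bullet_{\DN} (X)$. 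I expect the Scott-continuity step to be the main obstacle; once the directed-cover/compactness argument is in place, everything else is routine.
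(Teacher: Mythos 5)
Your proposal is correct and follows essentially the same route as the paper: the algebraic and normalization properties are routine, and Scott-continuity is obtained by turning the failure of $\inf_{G\in Q}G(h)\leq\sup_i\inf_{G\in Q}G(h_i)$ into a directed open cover of $Q$ by the sets $[h_i>b]_\Nature$ and invoking compactness. The only cosmetic difference is that you interpose a real number $b$ strictly between the two quantities, whereas the paper takes $b$ to be the supremum itself; both variants work.
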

\begin{proof}
  Let $Q \in \SV (\Pred^\bullet_{\Nature\;wk} (X))$.  Writing $F (h)$
  for $r_{\DN} (Q) (h) = \min_{G \in Q} G (h)$, we must show that $F$
  is a Smyth prevision.  Clearly, $F$ is positively homogeneous,
  monotonic and superlinear.  It is also subnormalized in case
  $\bullet$ is ``$\leq 1$'' and normalized when $\bullet$ is ``$1$''.
  It remains to show that $F$ is Scott-continuous.  This follows from
  the fact that the pointwise infimum of a compact family of lower
  semi-continuous functions is lower semi-continuous
  \cite{Keimel:inf}.  For completeness, here is a short argument.  For
  any directed family ${(h_i)}_{i \in I}$ with supremum $h$, we must
  show that $F (h) \leq \sup_{i \in I} F (h_i)$, since the other
  inequality stems from monotonicity.  If that were not the case, let
  $b \eqdef \sup_{i \in I} F (h_i)$, so that $F (h) > b$.  The open
  subsets $[h_i > b]$, $i \in I$, form a directed open cover of $Q$,
  since for every $G \in Q$,
  $\sup_{i \in I} G (h_i) = G (h) \geq F (h) > b$.  By compactness,
  $Q \subseteq [h_i > b]$ for some $i$, whence
  $F (h_i) = \min_{G \in Q} G (h_i) > b$, contradiction.
\end{proof}

\begin{lemma}
  \label{lemma:DN:r:cont}
  Let $X$ be a topological space.  Then $r_{\DN}$ is a continuous map
  from $\SV (\Pred^\bullet_{\Nature\;wk} (X))$ to
  $\Pred^\bullet_{\DN\;wk} (X)$.
\end{lemma}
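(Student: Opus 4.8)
The plan is to proceed exactly as in the proof of Lemma~\ref{lemma:AN:r:cont}, with the Hoare construction $\Diamond$ replaced by the Smyth construction $\Box$. Recall that a subbase of the weak topology on $\Pred^\bullet_{\DN\;wk}(X)$ is given by the sets $[h > b]_{\DN} = \{F \mid F(h) > b\}$, with $h \in \Lform X$ and $b \in \Rp$. It therefore suffices to show that the inverse image ${r_{\DN}}^{-1}([h > b]_{\DN})$ is open in $\SV(\Pred^\bullet_{\Nature\;wk}(X))$ for each such subbasic open.

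First I would compute this inverse image explicitly. For $Q \in \SV(\Pred^\bullet_{\Nature\;wk}(X))$, Lemma~\ref{lemma:rDN} gives $r_{\DN}(Q)(h) = \min_{G \in Q} G(h)$, with the minimum attained. I claim that $r_{\DN}(Q)(h) > b$ if and only if $G(h) > b$ for every $G \in Q$. One direction is immediate: if the minimum exceeds $b$, then every $G \in Q$ satisfies $G(h) \geq \min_{G' \in Q} G'(h) > b$. For the converse, since the minimum is attained at some $G_0 \in Q$, the hypothesis gives $\min_{G \in Q} G(h) = G_0(h) > b$. This equivalence---turning a strict inequality on the infimum into a \emph{uniform} strict bound over all of $Q$---is the only step that genuinely uses the compactness of $Q$, through the attainment of the minimum, and is where I would be most careful: without attainment one would obtain merely $\inf_{G \in Q} G(h) \geq b$ from a uniform strict bound, which is weaker and would break the argument.

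The uniform condition ``$G(h) > b$ for all $G \in Q$'' says precisely that $Q \subseteq [h > b]_\Nature$, i.e.\ that $Q \in \Box\,[h > b]_\Nature$. Hence ${r_{\DN}}^{-1}([h > b]_{\DN}) = \Box\,[h > b]_\Nature$, and since $[h > b]_\Nature$ is open in $\Pred^\bullet_{\Nature\;wk}(X)$, this is a basic open of the upper Vietoris topology on $\SV(\Pred^\bullet_{\Nature\;wk}(X))$. This establishes continuity of $r_{\DN}$. Given that all the substantive work was already carried out in Lemma~\ref{lemma:rDN} (attainment of the minimum) and Lemma~\ref{lemma:DN:r:def} (well-definedness), I expect no real obstacle here; the statement is the Smyth-side mirror of the one-line Lemma~\ref{lemma:AN:r:cont}, the switch from $\Diamond$ to $\Box$ reflecting the passage from $\sup$ to $\inf$.
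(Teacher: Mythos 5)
Your proof is correct and is essentially identical to the paper's: both reduce to showing ${r_{\DN}}^{-1}([h>b]_{\DN}) = \Box\,[h>b]_\Nature$, and both rest on the attainment of the minimum from Lemma~\ref{lemma:rDN} to pass from a strict bound on the infimum to a uniform strict bound over $Q$ (the paper makes the same parenthetical remark that it is a $\min$ and not just an $\inf$ that matters). No discrepancy to report.
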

\begin{proof}
  For clarity, write $[h > b]_{\DN}$ or $[h > b]_\Nature$ for the
  subbasic open subset $[h > b]$, depending whether in a space of
  Smyth previsions or of linear previsions.

  The inverse image of the subbasic open $[h > b]_{\DN} \subseteq
  \Pred^\bullet_{\DN\;wk} (X)$ by $r_{\DN}$ is $\{Q \in \SV
  (\Pred^\bullet_{\Nature\;wk} (X)) \mid \min_{G \in Q} G (h) > b\} =
  \{Q \in \SV (\Pred^\bullet_{\Nature\;wk} (X)) \mid \forall G \in Q
  \cdot G (h) > b\}$ (that this is a $\min$, and not just an $\inf$,
  is important here) $= \Box [h > b]_\Nature$.  This is open in $\SV
  (\Pred^\bullet_{\Nature} (X))$.  Therefore $r_{\DN}$ is continuous.
\end{proof}

To show that $s^\bullet_{\DN}$ maps Smyth previsions to compact
saturated subsets of linear previsions, we shall use the following
piece of logic.

Let $A$ be a fixed stably compact space.  (This will be
${\creal}_\sigma$ in our case.)  Let $T$ be a set.  A
\emph{patch-continuous inequality} on $T, A$ is any formula of the
form:
\[ 
f (\_ (t_1), \ldots, \_ (t_m)) \mathrel{\dot\leq} g (\_ (t'_1), \ldots, \_
(t'_n)),
\]
where $f$ and $g$ are patch-continuous maps from $A^m$ to $A$ and from
$A^n$ to $A$ respectively, and $t_1$, \ldots, $t_m$, $t'_1$, \ldots,
$t'_n$ are $m+n$ fixed elements of $T$.  $E$ {\em holds\/} at $\alpha
\colon T \to A$ iff $f (\alpha (t_1), \ldots, \alpha (t_m)) \leq g
(\alpha (t'_1), \ldots, \alpha (t'_n))$, where $\leq$ is the
specialization quasi-ordering of $A$.  A \emph{patch-continuous
  system} $\Sigma$ on $T, A$ is a set of patch-continuous inequalities
on $T, A$, and $\Sigma$ holds at $\alpha \colon T \to A$ if every
element of $\Sigma$ holds at $\alpha$.  By convention, we shall allow
equations $a \doteq b$ in such systems, and agree that they stand for
the pair of inequalities $a \dot\leq b$ and $b \dot\leq a$.  Then the
set $[\Sigma]$ of maps $\alpha \colon T \to A$ such that $\Sigma$
holds at $\alpha$ is patch-closed in $A^T$, and in particular stably
compact. This is \cite[Proposition~5.5]{JGL-mscs09}, but also a fairly
simple exercise.

\begin{lemma}
  \label{lemma:DN:s:compact}
  Let $X$ be a topological space.  For every $F \in
  \Pred^\bullet_{\DN\;wk} (X)$, $s^\bullet_{\DN} (F)$ is a compact
  saturated subset of $\Pred^\bullet_{\Nature\;wk} (X)$.
\end{lemma}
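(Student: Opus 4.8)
The plan is to embed the space of linear previsions into a stably compact function space and realize $s^\bullet_{\DN}(F)$ as a compact subset there. Set $A \eqdef {\creal}_\sigma$ and $T \eqdef \Lform X$, and consider $A^T = {\creal}_\sigma^{\Lform X}$, the space of \emph{all} maps $\alpha \colon \Lform X \to \creal$ with the product of the Scott topologies. First I would observe that the weak topology on $\Pred^\bullet_{\Nature} (X)$ is exactly the subspace topology inherited from $A^T$: the subbasic weak open $[h > r]_\Nature$ is precisely the trace of the product subbasic open $\{\alpha \mid \alpha (h) > r\}$. Since ${\creal}_\sigma$ is stably compact and any product of stably compact spaces is stably compact, $A^T$ is stably compact; moreover compactness of a subset is the same whether computed in $\Pred^\bullet_{\Nature\;wk} (X)$ or in $A^T$. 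The specialization order of $A^T$ is the pointwise order, which also settles saturation: if $G \in s^\bullet_{\DN} (F)$ and $G \leq G'$ with $G'$ a linear prevision, then $G' \geq G \geq F$, so $G' \in s^\bullet_{\DN} (F)$.

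For compactness I would introduce the patch-continuous system $\Sigma$ on $T, A$ whose solutions $\alpha$ are required to be additive ($\_ (h) + \_ (h') \doteq \_ (h+h')$), positively homogeneous ($r \cdot \_ (h) \doteq \_ (r \cdot h)$ for $r \in \Rp$), above $F$ ($F (h) \mathrel{\dot\leq} \_ (h)$), and, when $\bullet$ is ``$1$'' (resp.\ ``$\leq 1$''), normalized by $\_ (\one) \doteq 1$ (resp.\ $\_ (\one) \mathrel{\dot\leq} 1$). Each inequality is patch-continuous because $+$, scalar multiplication, and the constants $F (h)$ are patch-continuous on ${\creal}_\sigma$; hence by \cite[Proposition~5.5]{JGL-mscs09} the solution set $[\Sigma]$ is patch-closed in $A^T$, therefore stably compact, and in particular compact in the topology of $A^T$. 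The point to stress is that the elements of $[\Sigma]$ need not be Scott-continuous, nor even monotonic (additivity does not force monotonicity, as $h \leq h'$ cannot in general be written as $h' = h + k$ with $k \in \Lform X$), so $[\Sigma]$ is strictly larger than $s^\bullet_{\DN} (F)$ in general.

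The key step bridges this gap, and is exactly where the Sandwich Theorem's tolerance of a non-monotonic, non-continuous $p$ is exploited. For every $\alpha \in [\Sigma]$, additivity and positive homogeneity make $\alpha$ sublinear, while $F$ is a continuous superlinear map with $F \leq \alpha$; so Keimel's Sandwich Theorem \cite[Theorem~8.2]{Keimel:topcones2} produces a continuous linear map $G \colon \Lform X \to \creal$ with $F \leq G \leq \alpha$. Such a $G$ is a linear prevision, and evaluating at $\one$, using $F (\one) = 1$ (resp.\ $F (\one) \leq 1$) together with the normalization bound $\alpha (\one) = 1$ (resp.\ $\leq 1$), forces $G (\one) = 1$ (resp.\ $\leq 1$); thus $G \in \Pred^\bullet_{\Nature} (X)$ with $G \geq F$, i.e.\ $G \in s^\bullet_{\DN} (F)$, and $G \leq \alpha$. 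Hence $[\Sigma] \subseteq \upc s^\bullet_{\DN} (F)$, while trivially $s^\bullet_{\DN} (F) \subseteq [\Sigma]$; taking saturations in $A^T$ gives $\upc s^\bullet_{\DN} (F) = \upc [\Sigma]$.

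Finally I would invoke the standard fact that a subset is compact if and only if its saturation is (open sets are upward closed, so both have the same open covers). Since $[\Sigma]$ is compact, so is $\upc [\Sigma] = \upc s^\bullet_{\DN} (F)$, and therefore $s^\bullet_{\DN} (F)$ itself is compact. I expect the main obstacle to be precisely the Scott-continuity condition: it is infinitary and cannot be encoded by any patch-continuous inequality, so one cannot simply write $s^\bullet_{\DN} (F) = [\Sigma]$. The Sandwich Theorem with an arbitrary $p \eqdef \alpha$ is exactly the device that trades this uncapturable condition for the ``compact iff saturation compact'' manoeuvre, and it is also what lets the argument go through for an arbitrary topological space $X$, with no coherence or stable-compactness hypothesis on $X$ or on $\Pred^\bullet_{\Nature\;wk} (X)$.
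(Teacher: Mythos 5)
Your proof is correct, and it follows the paper's overall strategy: embed the linear previsions into the stably compact product ${\creal}_\sigma^{\Lform X}$, cut out a patch-closed (hence compact) set of not-necessarily-Scott-continuous linear functionals above $F$ via a patch-continuous system, and use Keimel's Sandwich Theorem --- crucially tolerant of a non-continuous, non-monotonic upper bound $p \eqdef \alpha$ --- to relate that set back to $s^\bullet_{\DN}(F)$. Where you genuinely diverge is in the concluding step. The paper keeps the two sets distinct and invokes Alexander's Subbase Lemma: it supposes a subbasic cover of $s^\bullet_{\DN}(F)$ with no finite subcover, uses compactness of the patch-closed set to produce a single $\alpha$ violating all the bounds at once, and then sandwiches to get a contradiction. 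You instead observe that the Sandwich Theorem yields $[\Sigma] \subseteq \upc s^\bullet_{\DN}(F)$, hence $\upc [\Sigma] = \upc s^\bullet_{\DN}(F)$, and conclude by the fact that a set is compact if and only if its saturation is (same open covers, since opens are upward closed). This is cleaner: it dispenses with Alexander's lemma and with the cofinality argument over finite index sets, at no cost in generality --- $X$ remains an arbitrary topological space in both versions. (Your omission of the monotonicity inequalities from $\Sigma$ is harmless, since the Sandwich Theorem does not need $\alpha$ to be monotonic; the paper includes them but never uses them for anything beyond making $Y^\bullet$ a space of monotonic maps.)
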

\begin{proof}
  Recall that $s^\bullet_{\DN}$ maps every $F \in \Pred^\bullet_{\DN}
  (X)$ to the set of all $G \in \Pred^\bullet_{\Nature} (X)$ such that
  $G \geq F$.

  Let $Y^\bullet$ be the space of all linear, monotonic maps from
  $C \eqdef \Lform X$
  to $\creal$ that are subnormalized if $\bullet$ is ``$\leq 1$'', and
  normalized if $\bullet$ is ``$1$''.  Compared to
  $\Pred^\bullet_{\Nature} (X)$, we are no longer requiring
  Scott-continuity.  Equip $Y^\bullet$ with the weak topology, which
  is again generated by subbasic open sets $[h > b]_Y$, defined as
  $\{\alpha \in Y \mid \alpha (h) > b\}$.  $Y^\bullet$ is then a
  subspace of the space ${\creal}_\sigma^C$ with its product topology,
  and $\Pred^\bullet_{\Nature} (X)$ is a subspace of $Y^\bullet$.

  Now $Y^\bullet$ is $[\Sigma^\bullet]$, where $\Sigma^\bullet$ is the
  set of (polynomial, hence patch-continuous) inequalities:
  \begin{itemize}
  \item $\_ (a h) \doteq a \; \_ (h)$, for all $a \in \Rp$, $h \in
    C$ (positive homogeneity);
  \item $\_ (h+h') \doteq \_ (h) + \_ (h')$ for all $h, h' \in C$ (additivity);
  \item $\_ (h) \dot\leq \_ (h')$ for all $h, h' \in C$ with $h \leq h'$
    (monotonicity);
  \item if $\bullet$ is ``$\leq 1$'', $\_ (a+h) \mathrel{\dot\leq} a +
    \_ (h)$ for all $a \in \Rp$, $h \in C$;
  \item if $\bullet$ is ``$1$'', $\_ (a+h) \doteq a + \_ (h)$ for all $a
    \in \Rp$, $h \in C$.
  \end{itemize}
  So $Y^\bullet$ is a patch-closed subset of ${\creal}_\sigma^C$, and a
  stably compact space.

  Consider the set $s (F)$ of all $\alpha \in Y^\bullet$ such that
  $\alpha \geq F$, meaning $\alpha (h) \geq F (h)$ for every $h \in
  C$.  This is again patch-closed in ${\creal}_\sigma^C$ (consider
  $\Sigma^\bullet$ plus the inequalities $F (h) \dot\leq \_ (h)$),
  hence also in $Y^\bullet$.  Note that $s (F)$ is almost
  $s^\bullet_{\DN} (F)$: the latter is the subset of those elements of
  $s (F)$ that are Scott-continuous maps.

  At this point, the natural next move would be to show that
  $s^\bullet_{\DN} (F)$ arises as a retract of $s (F)$, and conclude
  that $s (F)$ is (stably) compact, using the fact that every retract
  of a stably compact space is stably compact.  This idea was
  pioneered by \cite{Jung:scs:prob}, and taken again in
  \cite{JGL-mscs09}.  But this requires $X$ to be stably compact, and
  does not use the fact that $F$ is superlinear and Scott-continuous.
  We use a different argument.

  Note that $s (F)$ is upward closed.  But the patch-closed upward
  closed subsets of a stably compact space are exactly its compact
  saturated subsets \cite[Theorem~VI.6.18~(3)]{GHKLMS:contlatt}, so $s
  (F)$ is compact saturated in $Y^\bullet$.

  To show that $s^\bullet_{\DN} (F)$ is compact saturated in
  $\Pred^\bullet_{\DN} (X)$, we shall appeal to Alexander's Subbase
  Lemma \cite[Theorem~5.6]{Kelley:topology}, which states that in a
  space $X$ with subbase $\mathcal A$, a subset $K$ is compact if and
  only if one can extract a finite subcover from every cover of $K$
  consisting of elements of $\mathcal A$.  In our case, assume
  $s^\bullet_{\DN} (F)$ is included in a union of open subsets
  $\bigcup_{i \in I} [h_i > b_i]$.  We wish to show that there is a
  finite subset $J$ of $I$ such that $s^\bullet_{\DN} (F) \subseteq
  \bigcup_{i \in J} [h_i > b_i]$.  We do it by contraposition: we
  assume that $s^\bullet_{\DN} (F) \not\subseteq \bigcup_{i \in J}
  [h_i > b_i]$ for any $J$, namely, we assume that for every finite
  subset $J$ of $I$, there is a $G_J \in s^\bullet_{\DN} (F)$ such
  that $G_J (h_i) \leq b_i$ for every $i \in J$; and we build an
  element $G$ of $s^\bullet_{\DN} (F)$ such that $G (h_i) \leq b_i$
  for every $i \in I$.  Note that, for every finite $J$, $F \leq G_J$,
  so $G_J$ is in $s (F)$.

  We claim that there is an $\alpha \in s (F)$ such that $\alpha (h_i)
  \leq b_i$ for every $i \in I$.  Otherwise, every $\alpha \in s (F)$
  would be in some $[h_i > b_i]_Y$, $i \in I$.  Since $s (F)$ is
  compact in $Y^\bullet$, there would be a finite subset $J$ of $I$
  such that $s (F) \subseteq \bigcup_{i \in J} [h_i > b_i]_Y$.  In
  particular, $G_J \in \bigcup_{i \in J} [h_i > b_i]_Y$, contradicting
  the fact that $G_J (h_i) \leq b_i$ for every $i \in J$.

  Since $\alpha$ is linear, it is in particular sublinear.  Since
  $\alpha \in s (F)$, we have $F \leq \alpha$.  So we can apply
  Keimel's Sandwich Theorem on the semitopological cone
  $C \eqdef \Lform X$, with the Scott topology:
  there is a \emph{continuous} linear map $G$ such that
  $F \leq G \leq \alpha$.  (Yes, we are using Keimel's Sandwich
  Theorem only to buy continuity.  It is crucial that the larger map,
  $\alpha$ here, need not be continuous to apply this theorem.)  When
  $\bullet$ is ``$\leq 1$'', $\alpha (\one) \leq 1$
  implies that $G$ is subnormalized, and when $\bullet$ is ``$1$'',
  this together with $F (\one)=1$
  implies that $G$ is normalized.  So $G$ is in
  $\Pred^\bullet_{\Nature\;wk} (X)$.  Also, since $F \leq G$, $G$ is
  in $s^\bullet_{\DN} (F)$.  Finally, $G (h_i) \leq b_i$ for every
  $i \in I$, because $G \leq \alpha$.
\end{proof}

The most difficult part now is the following lemma, which will be used
in particular to show that $s^\bullet_{\DN}$ is well defined.
\begin{lemma}
  \label{lemma:DN:s:*}
  Let $X$ be a topological space, $F \in \Pred^\bullet_{\DN} (X)$, and
  $h_0 \in \Lform X$.
  Then there is a $G \in \Pred^\bullet_{\Nature\;wk} (X)$ such that
  $F \leq G$ and $F (h_0) = G (h_0)$.
\end{lemma}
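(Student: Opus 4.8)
The plan is to realise $G$ as a continuous linear functional squeezed between $F$ and a carefully chosen sublinear upper bound that already touches $F$ at $h_0$, and to invoke Keimel's Sandwich Theorem. Recall that a Smyth prevision $F$ is superlinear and Scott-continuous, hence continuous from the semitopological cone $\Lform X$ (with its Scott topology) to ${\creal}_\sigma$, and in particular monotone. So if I can produce a \emph{sublinear} map $p \colon \Lform X \to \creal$ with $F \leq p$ everywhere and $p (h_0) \leq F (h_0)$, then Keimel's Sandwich Theorem \cite[Theorem~8.2]{Keimel:topcones2}, applied to the continuous superlinear map $q \eqdef F$ and to $p$, yields a continuous linear map $G$ with $F \leq G \leq p$. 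The crucial point, explicitly allowed by that theorem, is that $p$ need not be continuous or monotone; this is what lets me work with an arbitrary topological space $X$, with no friendliness hypothesis. From $F \leq G \leq p$ I get $F (h_0) \leq G (h_0) \leq p (h_0) \leq F (h_0)$, whence $G (h_0) = F (h_0)$, and $F \leq G$ is the defining inequality of $s^\bullet_{\DN} (F)$.

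For the concrete $p$ I would take a Minkowski-style gauge built from $h_0$, augmented by $\one$ so as to control normalisation. When $\bullet$ is nothing, set
\[
  p (h) \eqdef \inf \{\lambda F (h_0) \mid \lambda \geq 0,\ h \leq \lambda h_0\},
\]
with the convention $\inf \emptyset = +\infty$; when $\bullet$ is ``$\leq 1$'' or ``$1$'', set instead
\[
  p (h) \eqdef \inf \{\lambda F (h_0) + \mu \mid \lambda, \mu \geq 0,\ h \leq \lambda h_0 + \mu \one\}.
\]
Positive homogeneity follows by rescaling the witnesses $(\lambda, \mu)$, and subadditivity by adding witnessing inequalities, so $p$ is sublinear. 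That $F \leq p$ is where monotonicity and homogeneity of $F$ enter: from $h \leq \lambda h_0 + \mu \one$ I get $F (h) \leq F (\lambda h_0 + \mu \one)$, and I then bound $F (\lambda h_0 + \mu \one) \leq \lambda F (h_0) + \mu$. The term $\lambda h_0$ is handled by homogeneity; the $\mu \one$ term uses the inequality $F (\mu \one + k) \leq \mu + F (k)$, valid for every $\mu \geq 0$ and $k \in \Lform X$ whenever $F$ is subnormalized, which I would derive from $F (\one + k) \leq 1 + F (k)$ by induction on integers followed by the homogeneity limit $\tfrac 1n F (n\mu \one + nk) \leq \tfrac{\lceil n\mu \rceil}{n} + F (k)$, letting $n \to \infty$. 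Finally $p (h_0) \leq F (h_0)$ (take $\lambda = 1$, $\mu = 0$), and in the normalised cases $p (\one) \leq 1$ (take $\lambda = 0$, $\mu = 1$).

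With $G$ obtained from the Sandwich Theorem, it remains only to check the normalisation superscript. The map $G$ is a continuous linear prevision, so when $\bullet$ is nothing we are done. When $\bullet$ is ``$\leq 1$'', $G \leq p$ gives $G (\one) \leq p (\one) \leq 1$, so $G$ is subnormalized. When $\bullet$ is ``$1$'', the same bound gives $G (\one) \leq 1$, while $F \leq G$ together with $F (\one) = 1$ (which follows from normalisation of $F$) gives $G (\one) \geq 1$, so $G (\one) = 1$ and $G$ is normalized; hence $G \in \Pred^1_{\Nature\;wk} (X)$. I expect the main obstacle to be precisely this bookkeeping around normalisation: one must engineer $p$ so that it simultaneously touches $F$ at $h_0$ \emph{and} pins down $G (\one)$, which is why $\one$ is folded into the gauge, and one must first establish the auxiliary inequality $F (\mu \one + k) \leq \mu + F (k)$. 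No separation or local-convexity machinery is needed here; the whole weight of the argument rests on the fact that Keimel's Sandwich Theorem tolerates a discontinuous, non-monotone majorant $p$.
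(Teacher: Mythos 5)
Your proof is correct, and it rests on the same pivot as the paper's: build a sublinear majorant $p$ of $F$ --- not necessarily continuous or monotone --- with $p (h_0) \leq F (h_0)$ and $p (\one) \leq 1$ in the (sub)normalized cases, then squeeze a continuous linear $G$ between $F$ and $p$ by Keimel's Sandwich Theorem and read off $F (h_0) \leq G (h_0) \leq p (h_0) \leq F (h_0)$. Where you genuinely diverge is in the construction of $p$, and your version is simpler. The paper first isolates the optimal constant $c$ with $F (a \cdot \one + h) \leq ac + F (h)$, must treat $F = 0$ separately and prove $c > 0$ (because it divides by $c$), then forms, for each $\lambda > F (h_0)$, the scaled lower Minkowski functional $p_\lambda$ of the closed convex hull of $\{\lambda/c \cdot \one, h_0\}$, and finally has to verify that the $p_\lambda$ form a chain so that their filtered infimum remains sublinear. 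Your explicit gauge $p (h) \eqdef \inf \{\lambda F (h_0) + \mu \mid h \leq \lambda h_0 + \mu \one\}$ (with the $\mu \one$ term dropped in the unbounded case, where subnormalization is unavailable) collapses all of this into one formula: sublinearity is immediate by rescaling and adding witnesses, no closure or Minkowski machinery is needed, there is no chain argument, and the $F = 0$ and $c = +\infty$ degeneracies evaporate. The one auxiliary inequality you rely on, $F (\mu \cdot \one + k) \leq \mu + F (k)$ for subnormalized $F$ and arbitrary real $\mu \geq 0$, is precisely what the paper uses tacitly when it asserts $c \leq 1$, and your derivation of it (integer induction on $F (\one + k) \leq 1 + F (k)$, then positive homogeneity and a limit in $n$) is sound; the normalization endgame via $p (\one) \leq 1$ and $F (\one) = 1$ matches the paper's. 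What the paper's heavier route buys is essentially nothing extra here; what yours buys is a shorter, case-free argument.
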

\begin{proof}
  If $F$ is the constant $0$ prevision, this is clear.  Indeed, first,
  $\bullet$ cannot be ``$1$'' in this case (since we must have $F
  (\one)=1$
  for example).  Then we can take $G$ to be the constant $0$ prevision
  again.  So assume $F$ is not constantly $0$.

  Write $C$ for $\Lform X$.
  Let $c$ be the smallest non-negative real number such that
  $F (a \cdot \one+h) \leq ac + F (h)$
  for every $a \in \Rp$ and $h \in C$, if one exists, and $+\infty$
  otherwise.  Notice that if $\bullet$ is ``$\leq 1$'', then
  $c \leq 1$.  If $\bullet$ is ``$1$'', then $c=1$, as one sees by
  taking $a \eqdef 1$, $h \eqdef 0$.  We also note that $c > 0$, in
  any case.  Indeed, if $c=0$, then we would have
  $F (a \cdot \one +h) \leq F (h)$, hence
  $F (a \cdot \one +h) = F (h)$ for all $a$, $h$, in particular $F (a)
  = 0$ for
  every $a \in \Rp$ (taking $h \eqdef 0$).  Since $F$ is Scott-continuous,
  for every $h \in \Lform X$,
  $F (h) = \sup_{a \in \Rp} F (\min (h, a)) \leq \sup_{a \in \Rp} F
  (a) = 0$, contradicting the fact that $F$ is not constantly $0$.

  For every real $\lambda > 0$, let $A_\lambda$ be the closed convex
  hull of the pair of points $\{\lambda/c \cdot \one, h_0\}$ in $C$;
  $\lambda/c \cdot \one$ is the constant function with value
  $\lambda/c$.
  By \cite[Lemma~4.10~$(a)$]{Keimel:topcones2}, $A_\lambda$ is the
  closure of the convex hull
  $H_\lambda \eqdef \{a \lambda/c \cdot \one + (1-a) h_0 \mid a \in
  [0,
  1]\}$ of $\{\lambda/c \cdot \one, h_0\}$.  Define
  $p_\lambda \colon C \to \creal$ by
  $p_\lambda (h) \eqdef \lambda M_{A_\lambda} (h)$.  (Notice that this
  makes sense only for $\lambda > 0$: this is undefined when
  $\lambda=0$ if $M_{A_\lambda} (h)=+\infty$.  As before,
  $M_{A_\lambda}$ is the lower Minkowski functional, see (\ref{eq:MA})
  in the Cones subsection of Section~\ref{sec:preliminaries}.)  Since
  $A_\lambda$ is non-empty, closed and convex, $p_\lambda$ is
  continuous and sublinear for every $\lambda > 0$.  Let
  $p (h) \eqdef \inf_{\lambda > F (h_0)} p_\lambda (h)$.  It may fail to be
  continuous, but we don't need $p$ to be continuous to apply Keimel's
  Sandwich Theorem.

  We shall see below that $p$ is sublinear.  (Beware that an infimum
  of sublinear maps is not in general sublinear.)  Before we prove
  this, we notice that whenever $0 < \lambda \leq \mu$: $(*)$ for
  every $h \in C$, for every $r > 0$ such that $(1/r) \cdot h \in
  H_\mu$, there is an $r' > 0$ such that $(1/r') \cdot h \in
  H_\lambda$ and $r' \lambda \leq r \mu$.  Indeed, by assumption
  $(1/r) \cdot h = a \mu / c + (1-a) h_0$ for some $a \in [0, 1]$,
  i.e., $h = ra \mu/c + r (1-a) h_0 $.  Let $r' \eqdef r a \frac \mu
  \lambda + r (1-a)$.  We have $r' \lambda = r a \mu + r (1-a) \lambda
  \leq r \mu$.  Moreover, letting $a' \eqdef \frac {ra \mu} {r' \lambda}$,
  we check that $1-a' = \frac {r (1-a)} {r'}$, so that $h = r' a'
  \lambda/c \cdot \one + r' (1-a') h_0$,
  and in particular
  $(1/r') \cdot h = a' \lambda/c \cdot \one + (1-a') h_0$ is in
  $H_\lambda$.  This finishes the proof of $(*)$.  In turn, $(*)$
  implies that for every $r > 0$, for every $h \in C$ such that
  $(1/r) \cdot h \in H_\mu$,
  \begin{eqnarray*}
    p_\lambda (h)  = \lambda M_{A_\lambda} (h)
    & = & \lambda \inf \{r' > 0 \mid (1/r') \cdot h \in A_\lambda\} \\
    & \leq & \lambda \inf \{r' > 0 \mid (1/r') \cdot h \in H_\lambda\}
    \quad \text{(since $H_\lambda \subseteq A_\lambda$)} \\
    & \leq & r \mu \quad \text{(since we can pick $r'$ so that
      $r'\lambda \leq r\mu$).}
  \end{eqnarray*}
  So $H_\mu$ is included in the set of those maps $(1/r) \cdot h$ such
  that $p_\lambda (h) \leq r \mu$, i.e., in $r \cdot p_\lambda^{-1}
  ([0, r\mu]) = p_\lambda^{-1} [0, \mu]$.  Since the latter is closed
  ($p_\lambda$ is continuous), $A_\mu$ is also included in this set.
  So, working backwards, we obtain that for every $h \in C$, for every
  $r > 0$ such that $(1/r) \cdot h \in A_\mu$, $p_\lambda (h) \leq r
  \mu$.  Taking infima over $r$, $p_\lambda (h) \leq p_\mu (h)$.  We
  have proved that $p_\lambda$ was a monotonic function of $\lambda$.

  It follows that the family ${(p_\lambda)}_{\lambda > F (h_0)}$ is a
  chain.  In particular, ${(p_\lambda)}_{\lambda > F (h_0)}$ is
  filtered, i.e., directed in the opposite ordering $\geq$.  Since
  addition commutes with filtered infima, $p (h) + p (h') =
  \inf_{\lambda > F (h_0)} [p_\lambda (h) + p_\lambda (h')] \geq
  \inf_{\lambda > F (h_0)} p_\lambda (h+h') = p (h+h')$, using the
  fact that $p_\lambda$ is sublinear.  So $p$ is sub-additive, and it
  follows easily that it is sublinear.

  For every $\lambda > 0$, since $\lambda/c \cdot \one$ is in $A_\lambda$,
  $M_{A_\lambda} (\lambda/c \cdot \one) \leq 1$,
  so $p_\lambda (\lambda/c \cdot \one) \leq \lambda$.
  Since $p_\lambda$ is positively homogeneous and $\lambda > 0$,
  $p_\lambda (\one) \leq c$.
  It follows that $p (\one) \leq c$,
  a fact we shall need later.

  Since $h_0$ is in $A_\lambda$, $M_{A_\lambda} (h_0) \leq 1$, so
  $p_\lambda (h_0) \leq \lambda$.  Taking infs over $\lambda > F
  (h_0)$, it follows that $p (h_0) \leq F (h_0)$, another fact we
  shall need later.

  For every $\lambda > F (h_0)$, for all $r >0$ and $h \in C$ such
  that $(1/r) \cdot h \in H_\lambda$, write $(1/r) \cdot h$ as $a
  \lambda/c \cdot \one + (1-a) h_0$, $a \in [0, 1]$.
  If $c \neq +\infty$, then
  $F (h) = r F (a\lambda/c \cdot \one + (1-a) h_0)
  \leq r (a \lambda + (1-a) F (h_0))$ (by definition of $c$)
  $\leq r (a \lambda + (1-a) \lambda) = \lambda r$.  If $c = +\infty$,
  then $F (h) = r F ((1-a) h_0) \leq r F (h_0) \leq \lambda r$.
  Taking infs over $r$,
  $F (h) \leq \lambda M_{A_\lambda} (h) = p_\lambda (h)$.  So, taking
  infs over $\lambda > F (h_0)$, $F (h) \leq p (h)$.  So $F \leq p$.

  So we can apply Keimel's Sandwich Theorem: there is a continuous
  linear map $G$ such that $F \leq G \leq p$.  Also, $G (\one) \leq p
  (\one)$, and we have seen that $p (\one) \leq c$, and that $c \leq 1$
  whenever $\bullet$ is ``$\leq 1$'' or ``$1$'': so, in these cases,
  $G (\one) \leq 1$, implying that $G$ is subnormalized.  If $\bullet$ is
  ``$1$'', additionally, $F (\one)=1$, so $G (\one)=1$ and $G$ is
  normalized.  In any case, $G$ is in $\Pred^\bullet_{\Nature\;wk}
  (X)$.

  We have also seen that $p (h_0) \leq F (h_0)$.  Since also $F (h_0)
  \leq G (h_0) \leq p (h_0)$, $G (h_0) = F (h_0)$, and we are done.
\end{proof}

\begin{lemma}
  \label{lemma:DN:s:def}
  Let $X$ be a topological space.  Then $s^\bullet_{\DN}$ is a map
  from $\Pred^\bullet_{\DN} (X)$ to $\SV (\Pred^\bullet_{\Nature\;wk}
  (X))$.
\end{lemma}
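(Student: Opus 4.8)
The plan is to verify the two membership conditions for $\SV (\Pred^\bullet_{\Nature\;wk} (X))$ directly. Recall that $\Smyth (Y)$ consists of the compact saturated \emph{non-empty} subsets of $Y$, so for every $F \in \Pred^\bullet_{\DN} (X)$ I must check that $s^\bullet_{\DN} (F)$, which by Definition~\ref{defn:rs} is the set of all $G \in \Pred^\bullet_{\Nature} (X)$ with $F \leq G$, is both compact saturated and non-empty as a subset of $\Pred^\bullet_{\Nature\;wk} (X)$. Both facts have essentially already been established.

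First, compactness and saturation are exactly the content of Lemma~\ref{lemma:DN:s:compact}, which states that $s^\bullet_{\DN} (F)$ is compact saturated in $\Pred^\bullet_{\Nature\;wk} (X)$. So nothing further is needed on that front; this is where the genuine work lives, namely the patch-topology / stably compact machinery combined with Keimel's Sandwich Theorem used to buy Scott-continuity of the separating linear map.

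For non-emptiness, I would invoke Lemma~\ref{lemma:DN:s:*} with an arbitrary test function, for instance $h_0 \eqdef 0$ (or $h_0 \eqdef \one$): that lemma furnishes a $G \in \Pred^\bullet_{\Nature\;wk} (X)$ with $F \leq G$ (indeed even with $G (h_0) = F (h_0)$, though only $F \leq G$ is needed here), witnessing that $s^\bullet_{\DN} (F)$ contains at least one element. Since $s^\bullet_{\DN} (F)$ is therefore compact saturated and non-empty, it is an element of $\SV (\Pred^\bullet_{\Nature\;wk} (X))$, and hence $s^\bullet_{\DN}$ is a well-defined map into that space for each of the three choices of $\bullet$.

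I expect no real obstacle at this stage: the difficulty was entirely discharged in the two preceding lemmas, so this statement is simply the bookkeeping step that packages Lemma~\ref{lemma:DN:s:compact} and Lemma~\ref{lemma:DN:s:*} together to conclude that $s^\bullet_{\DN}$ targets the Smyth powerdomain.
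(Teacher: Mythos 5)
Your proof is correct and matches the paper's own argument exactly: compactness and saturation come from Lemma~\ref{lemma:DN:s:compact}, and non-emptiness from Lemma~\ref{lemma:DN:s:*} applied to an arbitrary $h_0$. Nothing further is needed.
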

\begin{proof}
  For every $F \in \Pred^\bullet_{\DN} (X)$, $s^\bullet_{\DN} (F)$ is
  compact saturated by Lemma~\ref{lemma:DN:s:compact}.
  Lemma~\ref{lemma:DN:s:*} implies that it is non-empty.
\end{proof}

\begin{lemma}
  \label{lemma:DN:rs=id}
  Let $X$ be a topological space.  Then $r_{\DN} \circ
  s^\bullet_{\DN}$ is the identity map on $\Pred^\bullet_{\DN\;wk}
  (X)$.
\end{lemma}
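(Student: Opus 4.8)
The plan is to show that for every $F \in \Pred^\bullet_{\DN\;wk} (X)$ and every $h \in \Lform X$, the value $r_{\DN} (s^\bullet_{\DN} (F)) (h)$ coincides with $F (h)$; since two Smyth previsions are functions on $\Lform X$ and are therefore equal exactly when they agree at every $h$, this suffices to conclude. Unfolding the definitions, $s^\bullet_{\DN} (F)$ is the set of all linear previsions $G \in \Pred^\bullet_{\Nature} (X)$ with $F \leq G$, and by Lemma~\ref{lemma:rDN} the infimum defining $r_{\DN}$ is attained on compact sets, so that
\[
  r_{\DN} (s^\bullet_{\DN} (F)) (h) = \min \{G (h) \mid G \in \Pred^\bullet_{\Nature} (X),\ F \leq G\}.
\]

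First I would dispatch the easy inequality. Every $G$ in $s^\bullet_{\DN} (F)$ satisfies $F \leq G$, hence $G (h) \geq F (h)$; taking the infimum over all such $G$ gives $r_{\DN} (s^\bullet_{\DN} (F)) (h) \geq F (h)$. This direction uses nothing beyond the pointwise ordering and requires no hypotheses on $X$.

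For the reverse inequality, the real work has already been isolated into Lemma~\ref{lemma:DN:s:*}: applied with $h_0 \eqdef h$, it produces a linear prevision $G \in \Pred^\bullet_{\Nature\;wk} (X)$ with $F \leq G$ and $G (h) = F (h)$. Such a $G$ belongs to $s^\bullet_{\DN} (F)$, so the minimum above is at most $G (h) = F (h)$, yielding $r_{\DN} (s^\bullet_{\DN} (F)) (h) \leq F (h)$. Combining the two inequalities gives equality for every $h$, whence $r_{\DN} (s^\bullet_{\DN} (F)) = F$. Since $F$ was arbitrary, $r_{\DN} \circ s^\bullet_{\DN}$ is the identity on $\Pred^\bullet_{\DN\;wk} (X)$.

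The main obstacle is entirely concentrated in the reverse inequality, and it is precisely the content of Lemma~\ref{lemma:DN:s:*}, where Keimel's Sandwich Theorem is applied to the sublinear majorant $p$ built from the lower Minkowski functionals $M_{A_\lambda}$; this is also where the superlinearity and Scott-continuity of $F$ are genuinely used to control the constant $c$ and to bound $F$ from above by $p$. Once that existence result is granted, the present lemma is a purely formal consequence, and no further topological or cone-theoretic input is needed here.
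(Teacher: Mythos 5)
Your proof is correct and follows exactly the paper's own argument: the inequality $r_{\DN}(s^\bullet_{\DN}(F))(h) \geq F(h)$ is immediate from $F \leq G$, and the reverse inequality is delegated to Lemma~\ref{lemma:DN:s:*} with $h_0 \eqdef h$. No differences worth noting.
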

\begin{proof}
  We must show that for every $F \in \Pred^\bullet_{\DN} (X)$, for
  every $h \in \Lform X$,
  $F (h) = \min_{G \in \Pred^\bullet_{\Nature\;wk} (X), F \leq G} G
  (h)$.  The difficult part is to show that
  $F (h) \geq \min_{G \in \Pred^\bullet_{\Nature\;wk} (X), F \leq G} G
  (h)$ for every $h$, and this is a direct consequence of
  Lemma~\ref{lemma:DN:s:*}, taking $h_0 \eqdef h$.
\end{proof}

\begin{lemma}
  \label{lemma:DN:s:cont}
  Let $X$ be a topological space.  Then $s^\bullet_{\DN}$ is a
  continuous map from $\Pred^\bullet_{\DN\;wk} (X)$ to
  $\SV (\Pred^\bullet_{\Nature\;wk} (X))$.
\end{lemma}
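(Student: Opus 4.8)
The plan is to prove continuity by showing that the preimage under $s^\bullet_{\DN}$ of every basic open of $\SV(\Pred^\bullet_{\Nature\;wk}(X))$ is open. Since $\Box$ preserves finite intersections, and since by compactness of the elements of $\SV$ the sets $\Box V_0$ with $V_0$ a finite union of basic weak-opens form a base, it suffices to treat $V_0=\bigcup_{k=1}^p W_k$ with each $W_k=\bigcap_{i\in I_k}[h_{k,i}>1]_\Nature$, the reductions to the bound $1$ and to finitely many bounded $h_{k,i}$ being exactly as in Lemma~\ref{lemma:AN:s:cont}. First I would record the membership test: by finite distributivity of $\max$ over $\min$, $G\in V_0$ iff $\max_k\min_{i\in I_k}G(h_{k,i})>1$ iff $\min_{\sigma}\max_k G(h_{k,\sigma(k)})>1$, where $\sigma$ ranges over the \emph{finite} set of selections $\prod_{k=1}^p I_k$. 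Using that $s^\bullet_{\DN}(F)$ is compact (Lemma~\ref{lemma:DN:s:compact}) and that $G\mapsto\max_k G(h_{k,\sigma(k)})$ is lower semicontinuous, the relevant infima over $s^\bullet_{\DN}(F)$ are attained; swapping the two infima then yields
\[
  s^\bullet_{\DN}(F)\subseteq V_0 \iff \forall\sigma\in{\textstyle\prod_k} I_k:\ s^\bullet_{\DN}(F)\subseteq{\textstyle\bigcup_{k=1}^p}[h_{k,\sigma(k)}>1]_\Nature.
\]

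This reduces everything to the following single-selection claim: for finitely many $f_1,\dots,f_p\in\Lform X$, writing $\Delta_p$ for the set of weight vectors $\vec\beta=(\beta_1,\dots,\beta_p)$ with $\beta_k\ge 0$ and $\sum_k\beta_k=1$, one has $s^\bullet_{\DN}(F)\subseteq\bigcup_{k=1}^p[f_k>1]_\Nature$ iff $F(\sum_{k=1}^p\beta_k f_k)>1$ for some $\vec\beta\in\Delta_p$. The ``if'' direction is immediate: any $G\in s^\bullet_{\DN}(F)$ satisfies $\max_k G(f_k)\ge\sum_k\beta_k G(f_k)=G(\sum_k\beta_k f_k)\ge F(\sum_k\beta_k f_k)>1$, so $G$ lies in some $[f_k>1]_\Nature$.

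The ``only if'' direction is the crux, and I would prove it by contraposition using Keimel's Sandwich Theorem rather than the minimax theorem. Assume $F(\sum_k\beta_k f_k)\le 1$ for every $\vec\beta\in\Delta_p$; the goal is to exhibit a single $G\in s^\bullet_{\DN}(F)$ with $G(f_k)\le 1$ for all $k$, witnessing $s^\bullet_{\DN}(F)\not\subseteq\bigcup_k[f_k>1]_\Nature$. Define the sublinear gauge $p(h)\eqdef\inf\{\mu+\sum_{k=1}^p\lambda_k\mid \mu,\lambda_k\in\Rp,\ h\le\mu\cdot\one+\sum_{k=1}^p\lambda_k f_k\}$, where the term $\mu\cdot\one$ is retained when $\bullet$ is ``$\le 1$'' or ``$1$'' (to control $G(\one)$) and may be dropped otherwise. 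Then $p(f_k)\le 1$ and $p(\one)\le 1$ via the obvious witnesses, and $p$ is sublinear. The key verification is $F\le p$: whenever $h\le\mu\cdot\one+\sum_k\lambda_k f_k$, monotonicity gives $F(h)\le F(\mu\cdot\one+\sum_k\lambda_k f_k)$, and writing $\Lambda\eqdef\sum_k\lambda_k$ and $\beta_k\eqdef\lambda_k/\Lambda$ when $\Lambda>0$, positive homogeneity together with the (sub)normalization inequality $F(\mu\cdot\one+h')\le\mu+F(h')$ (an equality in the normalized case) and the hypothesis give $F(\mu\cdot\one+\sum_k\lambda_k f_k)\le\mu+\Lambda\,F(\sum_k\beta_k f_k)\le\mu+\Lambda$, the case $\Lambda=0$ being trivial. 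Since $F$ is superlinear and Scott-continuous and $p$ is sublinear with $F\le p$, Keimel's Sandwich Theorem provides a continuous linear $G$ with $F\le G\le p$. Then $G(f_k)\le p(f_k)\le 1$ for all $k$; moreover $G(\one)\le p(\one)\le 1$ makes $G$ subnormalized, and when $\bullet$ is ``$1$'' the inequality $F(\one)=1\le G(\one)$ makes $G$ normalized. Hence $G\in\Pred^\bullet_{\Nature}(X)$ and $G\in s^\bullet_{\DN}(F)$, as required.

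Combining the displayed equivalence with the single-selection claim gives $(s^\bullet_{\DN})^{-1}(\Box V_0)=\bigcap_{\sigma\in\prod_k I_k}\bigcup_{\vec\beta\in\Delta_p}[\sum_{k=1}^p\beta_k h_{k,\sigma(k)}>1]_{\DN}$, a finite intersection of unions of subbasic weak-opens, hence open, which establishes continuity. The main obstacle is exactly the sandwich step in the single-selection claim: one must manufacture a sublinear functional $p$ that simultaneously dominates $F$ and is bounded by $1$ on each $f_k$ and on $\one$, the latter being what keeps the produced $G$ (sub)normalized. The distributivity rewriting of $V_0$ into a conjunction of single-selection conditions is the device that linearizes the payoff and makes such a $p$ available; it is also what lets this argument dispense with any appeal to von Neumann's minimax theorem, in contrast with the angelic case, and it works for an arbitrary topological space $X$ precisely because it relies only on compactness (Lemma~\ref{lemma:DN:s:compact}) and the Sandwich Theorem.
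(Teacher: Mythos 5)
Your proof is correct and follows essentially the same route as the paper's: the same reduction (via distributivity and the fact that $\Box$ commutes with directed unions and finite intersections) to sets $\Box W$ with $W$ a finite union of subbasic opens $[f_k>1]_\Nature$, the same characterization of the preimage as $\bigcap\bigcup[\sum_k\beta_k f_k>1]_{\DN}$ over convex weights, and the same application of Keimel's Sandwich Theorem to the gauge $p(h)=\inf\{\mu+\sum_k\lambda_k\mid h\le\mu\cdot\one+\sum_k\lambda_k f_k\}$ (with the $\mu$-term handling the subnormalized and normalized cases exactly as in the paper's Case~2). The only cosmetic difference is that your appeal to compactness and attained infima in the first reduction is superfluous, since the identity $V_0=\bigcap_\sigma\bigcup_k[h_{k,\sigma(k)}>1]_\Nature$ is a purely set-theoretic consequence of distributivity.
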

\begin{proof}
  As we already mentioned in the proof of Lemma~\ref{lemma:AN:s:cont},
  every open subset $V$ of $\Pred^\bullet_{\Nature\;wk} (X)$ can be
  written $\bigcup_{i \in I} \bigcap_{j \in J_i} V_{ij}$, where each
  $V_{ij}$ is of the form $[h > 1]_\Nature$, $h \in \Lform X$,
  and each $J_i$ is finite.  We can also write this as the directed
  union, over all finite subsets $I'$ of $I$, of
  $\bigcup_{i \in I'} \bigcap_{j \in J_i} V_{ij}$.  Now we can
  distribute unions over intersections, and write $V$ as a directed
  union of finite intersections of finite unions of subsets of the
  form $[h > b]_\Nature$.  It is easy to check that $\Box$ distributes
  over directed unions (using the fact that the elements of a Smyth
  powerdomain are compact, and that from every open cover of a compact
  subset $K$ by a directed family, one can extract a single element of
  the family containing $K$), and over finite intersections.  It
  follows that a subbase of the topology on
  $\SV (\Pred^\bullet_{\Nature\;wk} (X))$ is given by the subsets of
  the form $\Box W$ with $W \eqdef {\bigcup_{i=1}^m [h_i > 1]_\Nature}$,
  $U_i$ open in $X$.

  To show that $s^\bullet_{\DN}$ is continuous, we fix such an open
  $W$, and we claim that ${s^\bullet_{\DN}}^{-1} (\Box W)$ is open in
  the weak topology.  It will be easier to show that its complement
  $D$ is closed.  Observe that $F$ is in $D$ if and only if there is a
  linear prevision $G$ such that $F \leq G$, and for every $i$, $1\leq
  i \leq m$, $G (h_i) \leq 1$.

  \paragraph{\em Case 1: $\bullet$ is neither ``$\leq 1$'' nor ``$1$''.}

  We claim that $D$ is the intersection of the complements of the
  subbasic weak opens $[\sum_{i=1}^m a_i h_i > 1]_{\DN}$ over all
  $m$-tuples of non-negative real numbers 
  $a_1$, \ldots, $a_m$, such that $\sum_{i=1}^m a_i=1$, which will
  prove the claim.  Clearly $D$ is included in this intersection: if
  $F \leq G$, $G$ is linear, and $G (h_i) \leq 1$ for every $i$, then
  $F (\sum_{i=1}^m a_i h_i) \leq G (\sum_{i=1}^m a_i h_i) \leq
  \sum_{i=1}^m a_i = 1$.  Conversely, let $F \in \Pred_{\DN} (X)$ be
  such that $F (\sum_{i=1}^m a_i h_i) \leq 1$ for all $a_1, \ldots,
  a_m \in \Rp$ such that $\sum_{i=1}^m a_i=1$, and let us show
  that $F$ is in $D$.  Define $p (h)$ as $\inf \{\sum_{i=1}^m a_i \mid
  a_1, \ldots, a_m \in \Rp, \sum_{i=1}^m a_i h_i \geq h\}$.  This
  is easily seen to be a sublinear map such that $F \leq p$.  By
  Keimel's Sandwich Theorem, there is a continuous linear map $G$
  between $F$ and $p$.  By taking $a_i=1$ and $a_j=0$ for all $j \neq
  i$, we see that $p (h_i) \leq 1$, whence $G (h_i) \leq 1$ for every
  $i$, $1\leq i\leq m$.  So $F$ is in $D$.  This is all we need to
  show that $D$ is closed in the weak topology, hence that
  $s^\bullet_{\DN}$ is weakly continuous.

  \paragraph{\em Case 2: $\bullet$ is ``$\leq 1$'' or ``$1$''.}

  Underlining the changes, we claim that $D$ is the intersection of
  the complements of the subbasic weak opens $[\underline{a_0} +
  \sum_{i=1}^m a_i h_i > 1]_{\DN}$ over all \underline{$(m+1)$}-tuples
  of non-negative real numbers \underline{$a_0$,} $a_1$, \ldots, $a_m$
  such that $\sum_{i=\underline{0}}^m a_i=1$, which will prove the
  claim.  Clearly $D$ is included in this intersection: if $F \leq G$,
  $G$ is linear \underline{and subnormalized}, and $G (h_i) \leq 1$
  for every $i$, then $F (\underline{a_0 +} \sum_{i=1}^m a_i h_i) \leq
  G (\underline{a_0 +} \sum_{i=1}^m a_i h_i) \underline{\leq a_0 +}
  \sum_{i=1}^m a_i = 1$.  Conversely, let $F \in \Pred_{\DN} (X)$ be
  such that $F (\underline{a_0+} \allowbreak \sum_{i=1}^m a_i h_i)
  \leq \underline{a_0+} \sum_{i=1}^m a_i$ for all $a_1, \ldots, a_m
  \in \Rp$ such that $\sum_{i=\underline{0}}^m a_i=1$, and let us
  show that $F$ is in $D$.  Define $p (h)$ as $\inf \{\underline{a_0+}
  \sum_{i=1}^m a_i \mid \underline{a_0,} a_1, \ldots, a_m \in \Rp,
  \underline{a_0+} \allowbreak \sum_{i=1}^m a_i h_i \geq h\}$.  This
  is a sublinear map such that $F \leq p$.  By Keimel's Sandwich
  Theorem, there is a continuous linear map $G$ between $F$ and $p$.
  \underline{Additionally,} since $G (\one) \leq p (\one)$ and $p(\one) \leq 1$
  (take $a_0 \eqdef 1$, $a_i \eqdef 0$ for $i\neq 0$), $G$ is
  subnormalized; if in addition $\bullet$ is ``$1$'', then
  $F (\one)=1$, so $F (\one) \leq G (\one)$,
  so $G$ is normalized.  In any case, \underline{$G$ is in
    $\Pred^\bullet_{\Nature\;wk} (X)$}.  By taking $a_i=1$ and $a_j=0$
  for all $j \neq i$, we see that $p (h_i) \leq 1$, whence
  $G (h_i) \leq 1$ for every $i$, $1\leq i\leq m$.  So $F$ is in $D$.
  This is all we need to show that $D$ is closed in the weak topology,
  hence that $s^\bullet_{\DN}$ is continuous.
\end{proof}

\begin{lemma}
  \label{lemma:DN:sr}
  Let $X$ be a topological space.  For every $Q \in \SV
  (\Pred^\bullet_{\Nature\;wk} (X))$, $Q \subseteq s^\bullet_{\DN}
  (r_\DN (Q))$.
\end{lemma}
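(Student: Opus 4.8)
The plan is to show directly that every element $G$ of the compact saturated set $Q$ belongs to $s^\bullet_{\DN}(r_\DN(Q))$. Unfolding the definition of $s^\bullet_{\DN}$ from Definition~\ref{defn:rs}, this amounts to verifying two things for each $G \in Q$: first, that $G$ is a genuine element of $\Pred^\bullet_{\Nature}(X)$, which is immediate since $Q$ is by hypothesis a subset of $\Pred^\bullet_{\Nature\;wk}(X)$; and second, that $r_\DN(Q) \leq G$ in the pointwise ordering on previsions.

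For the second and only substantive point, I would first note that $r_\DN(Q)$ is a bona fide Smyth prevision by Lemma~\ref{lemma:DN:r:def}, so that $s^\bullet_{\DN}(r_\DN(Q))$ is well defined. Then, invoking Lemma~\ref{lemma:rDN}, I would write $r_\DN(Q)(h) = \min_{G' \in Q} G'(h)$ for every $h \in \Lform X$. Since the fixed $G$ is one of the previsions $G'$ ranging over $Q$, the defining property of the infimum gives $r_\DN(Q)(h) = \min_{G' \in Q} G'(h) \leq G(h)$ for every $h$, which is precisely $r_\DN(Q) \leq G$, as required.

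This argument is the exact dual of Lemma~\ref{lemma:AN:sr}: in the angelic case one checks $G(h) \leq \sup_{G' \in C} G'(h)$, whereas here the inequality runs the opposite way because $s^\bullet_{\DN}$ collects the linear previsions \emph{above} $F$ rather than below it. I expect no genuine obstacle; the statement reduces to the trivial observation that the (pointwise) infimum of a family lies below each of its members, and the only care needed is to record, via Lemmas~\ref{lemma:DN:r:def} and~\ref{lemma:rDN}, that the objects $r_\DN(Q)$ and $\min_{G'\in Q}G'(h)$ coincide and are well defined.
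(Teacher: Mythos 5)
Your argument is correct and is essentially the paper's own one-line proof: for each $G \in Q$, $r_\DN(Q)(h) = \inf_{G' \in Q} G'(h) \leq G(h)$ for every $h$, so $G \in s^\bullet_{\DN}(r_\DN(Q))$. The extra bookkeeping you add (Lemmas~\ref{lemma:DN:r:def} and~\ref{lemma:rDN}) is harmless but not needed for this inclusion.
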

\begin{proof}
  For every $G \in Q$, $G (h) \geq \inf_{G' \in Q} G' (h)$.
\end{proof}

We sum up the above results as follows.  More than a retraction, we
now have a projection, by Lemma~\ref{lemma:DN:sr}: remember that the
ordering on Smyth powerdomains is reverse inclusion $\supseteq$, so
$s^\bullet_{\DN} \circ r_\DN$ is below the identity.
\begin{proposition}[$\SV (\Pred^\bullet_{\Nature\;wk} (X))$ projects
  onto $\Pred^\bullet_{\DN\;wk} (X)$]
  \label{prop:DN:weak}
  Let $\bullet$ be the empty superscript, ``$\leq 1$'', or ``$1$''.
  Let $X$ be a topological space.  Then $r_{\DN}$ defines a projection
  of $\SV (\Pred^\bullet_{\Nature\;wk} (X))$ onto
  $\Pred^\bullet_{\DN\;wk} (X)$, with associated embedding
  $s^\bullet_{\DN}$.
\end{proposition}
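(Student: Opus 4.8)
The plan is to assemble the lemmata established in this subsection into the claimed projection. First I would record continuity of $r_{\DN}$: by Lemma~\ref{lemma:DN:r:def} it genuinely maps $\SV (\Pred^\bullet_{\Nature\;wk} (X))$ into $\Pred^\bullet_{\DN\;wk} (X)$, and by Lemma~\ref{lemma:DN:r:cont} this map is continuous. Symmetrically, Lemma~\ref{lemma:DN:s:def} shows that $s^\bullet_{\DN}$ is a well-defined map from $\Pred^\bullet_{\DN} (X)$ to $\SV (\Pred^\bullet_{\Nature\;wk} (X))$ (its well-definedness resting on Lemma~\ref{lemma:DN:s:compact} for compact saturation and on Lemma~\ref{lemma:DN:s:*} for non-emptiness), and Lemma~\ref{lemma:DN:s:cont} shows it is continuous. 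This settles the two maps and their types.

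Next I would verify the two order-theoretic identities characterising a projection. The equality $r_{\DN} \circ s^\bullet_{\DN} = \identity\relax$ on $\Pred^\bullet_{\DN\;wk} (X)$ is exactly the content of Lemma~\ref{lemma:DN:rs=id}; together with the continuity statements this already exhibits $(r_{\DN}, s^\bullet_{\DN})$ as a retraction. To upgrade it to a projection I would invoke Lemma~\ref{lemma:DN:sr}, which gives $Q \subseteq s^\bullet_{\DN} (r_{\DN} (Q))$ for every $Q \in \SV (\Pred^\bullet_{\Nature\;wk} (X))$.

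The one point requiring care is the direction of this last inequality. Because the specialization ordering on the Smyth powerdomain is reverse inclusion $\supseteq$, the inclusion $Q \subseteq s^\bullet_{\DN} (r_{\DN} (Q))$ reads as $s^\bullet_{\DN} (r_{\DN} (Q)) \leq Q$ in the specialization order, that is, $s^\bullet_{\DN} \circ r_{\DN} \leq \identity\relax$. This is precisely the condition making $s^\bullet_{\DN}$ left-adjoint to $r_{\DN}$, so $r_{\DN}$ is a projection with associated embedding $s^\bullet_{\DN}$, as required. I do not expect any substantive obstacle at the level of the proposition itself: all the difficulty has already been discharged in the preceding lemmata, the genuinely hard ones being Lemma~\ref{lemma:DN:s:*} (existence of a linear prevision tangent to $F$ at $h_0$, via the lower Minkowski functionals $p_\lambda$ and Keimel's Sandwich Theorem) and Lemma~\ref{lemma:DN:s:compact} (compactness of $s^\bullet_{\DN} (F)$, obtained through the patch-continuous-system machinery of \cite{JGL-mscs09} together with a further application of the Sandwich Theorem to buy continuity). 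The proposition is their immediate corollary.
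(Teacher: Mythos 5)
Your proposal is correct and matches the paper's own argument, which likewise assembles Lemmas~\ref{lemma:DN:r:def}, \ref{lemma:DN:r:cont}, \ref{lemma:DN:s:def}, \ref{lemma:DN:s:cont}, \ref{lemma:DN:rs=id} and~\ref{lemma:DN:sr}, and makes the same observation that reverse inclusion on the Smyth powerdomain turns $Q \subseteq s^\bullet_{\DN} (r_{\DN} (Q))$ into $s^\bullet_{\DN} \circ r_{\DN} \leq \identity\relax$. Your reading of that inequality as exhibiting $s^\bullet_{\DN}$ as the (left-adjoint) embedding and $r_{\DN}$ as the projection is exactly the paper's convention.
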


\subsection{The Retraction in the Erratic Case}
\label{sec:retr-errat-case}

In the erratic cases, recall that a fork $(F^-, F^+)$ is a pair of a
Smyth prevision $F^-$ and of a Hoare prevision $F^+$ satisfying
Walley's condition.

\begin{definition}
  \label{defn:rs:ADN}
  Let $X$ be a topological space.  For every non-empty set $E$ of
  linear previsions on $X$, let
  $r_{\ADN} (E) \in
  (\Lform X \to \creal)^2$ be
  $(r_{\DN} (E), r_{\AN} (E))$.

  Conversely, for every (subnormalized, normalized) fork $(F^-,
  F^+)$ on $X$, let $s_{\ADN} (F^-, F^+)$ (resp., $s^{\leq 1}_{\ADN}
  (F^-, F^+)$, $s^1_{\ADN} (F^-, F^+)$) be the set of all
  (subnormalized, normalized) linear previsions $G$ such that $F^-
  \leq G \leq F^+$.
\end{definition}
So $s^\bullet_{\ADN} (F^-, F^+) = s^\bullet_{\DN} (F^-) \cap
s^\bullet_{\AN} (F^+)$, whatever the superscript $\bullet$.
We shall now prove that $r_{\ADN}$, $s^\bullet_{\ADN}$ form a
retraction, once again.

This will require not only that $C \eqdef \Lform X$ (with its Scott
topology)
be locally convex, as in the Hoare cases, but also that that addition
on $C$ be almost open.  Following
\cite[Definition~4.6]{Keimel:topcones2}, we say that addition is
\emph{almost open} on a semitopological cone if and only if
$\upc (U+V)$ is open for every pair of open subsets $U$, $V$ of $C$.

For every monotonic map $q$ from $C$ to $\creal$, there is a
(pointwise) largest continuous map $\check q$ less than or equal to
$q$: $\check q (h)$ is the least upper bound of all real numbers $r$
such that $h$ is in the interior of $q^{-1} (r, +\infty]$.  When $C$
is almost open, and $q$ is superlinear, $\check q$ is again
superlinear \cite[Lemma~5.7]{Keimel:topcones2}.  This will be our main
new ingredient.

There is a canonical situation in which all the above assumptions are
satisfied:
\begin{lemma}
  \label{lemma:funspace:almostopen}
  Let $X$ be a core-compact, core-coherent space (for example, a
  stably locally compact space).  Then $\Lform X$ 
  is a locally convex topological cone in which addition is almost open.
\end{lemma}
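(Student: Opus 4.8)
The plan is to separate the statement into two parts. The claim that $\Lform X$ is a locally convex topological cone requires only core-compactness: as recalled in the Cones subsection, for core-compact $X$ the cone $\Lform X$ is a continuous d-cone, hence a continuous cone, and every continuous cone is a locally convex topological cone. So the only real work is to prove that addition is almost open, and this is where core-coherence will be used. By definition I must show that $\upc (U+V)$ is Scott-open for all Scott-open $U,V \subseteq \Lform X$. Since $X$ is core-compact, $\Lform X$ is a continuous lattice, so $\upc (U+V)$ is automatically upward closed, and to see it is Scott-open I will reduce to the following \emph{key claim}: if $f' \ll f$ and $g' \ll g$ in $\Lform X$, then $f'+g' \ll f+g$. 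Indeed, given a directed family with supremum $h \in \upc (U+V)$, write $h \geq f_0+g_0$ with $f_0 \in U$, $g_0\in V$, use continuity to pick $f' \ll f_0$ in $U$ and $g' \ll g_0$ in $V$, and the key claim yields $f'+g' \ll f_0+g_0 \leq h$, so $f'+g'$ lies below some member of the family, which is therefore in $\upc (U+V)$.

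To prove the key claim I will first record a step-function description of the way-below relation. For $a \in \Rp$ and an open $V$ of $X$ with $V \Subset f^{-1}((a,+\infty])$, the elementary function $a\chi_V$ satisfies $a\chi_V \ll f$: if $f \leq \sup_i h_i$ for a directed family, then $f^{-1}((a,+\infty])$ is contained in the directed union $\bigcup_i h_i^{-1}((a,+\infty])$, so $V \Subset f^{-1}((a,+\infty])$ forces $V \subseteq h_i^{-1}((a,+\infty])$ for some $i$, i.e.\ $a\chi_V \leq h_i$. Using core-compactness of $X$ (so that each $f^{-1}((a,+\infty])$ is the directed union of the opens way-below it) one checks that the finite suprema $\bigvee_{k=1}^n a_k\chi_{V_k}$ with $V_k \Subset f^{-1}((a_k,+\infty])$ form a directed family with supremum $f$, all of whose members are way-below $f$. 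Hence $f' \ll f$ gives $f' \leq \bigvee_{k} a_k\chi_{V_k}$ for one such finite supremum, and symmetrically $g' \leq \bigvee_l b_l\chi_{W_l}$.

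Since addition distributes over finite suprema pointwise in $\creal$, one has $f'+g' \leq \bigvee_{k,l}(a_k\chi_{V_k}+b_l\chi_{W_l})$, so it suffices to show that each $a\chi_V + b\chi_W \ll f+g$ (a finite supremum of way-below elements is way-below). I will decompose $a\chi_V + b\chi_W = (a\chi_V)\vee(b\chi_W)\vee((a+b)\chi_{V\cap W})$ and show each of the three summands is way-below $f+g$ via the elementary criterion. For $a\chi_V$ this is immediate, since $V \Subset f^{-1}((a,+\infty]) \subseteq (f+g)^{-1}((a,+\infty])$; similarly for $b\chi_W$. The cross term $(a+b)\chi_{V\cap W}$ is the crux and the one place where core-coherence is indispensable: from $V\cap W \subseteq V \Subset f^{-1}((a,+\infty])$ and $V\cap W \subseteq W \Subset g^{-1}((b,+\infty])$ I get $V\cap W \Subset f^{-1}((a,+\infty])$ and $V\cap W \Subset g^{-1}((b,+\infty])$, and multiplicativity of $\Subset$ (i.e.\ core-coherence) yields $V\cap W \Subset f^{-1}((a,+\infty])\cap g^{-1}((b,+\infty]) \subseteq (f+g)^{-1}((a+b,+\infty])$, whence $(a+b)\chi_{V\cap W} \ll f+g$. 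This establishes the key claim, and hence almost openness. The main obstacle is exactly this cross term: without multiplicativity of the way-below relation on $\Open X$ one cannot control $(f+g)^{-1}((a+b,+\infty])$ from the separate data on $f$ and $g$, which is why coherence cannot be dropped.
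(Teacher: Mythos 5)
Your proof is correct, but it takes a more self-contained route than the paper. The paper also reduces almost openness to additivity of the way-below relation on the continuous d-cone $\Lform X$ (citing Lemma~6.14 of Keimel's paper for the equivalence), but then disposes of additivity by a citation-plus-transfer argument: the sobrification $\Sober(X)$ is stably locally compact, Proposition~2.28 of Tix, Keimel and Plotkin gives additivity of $\ll$ on $[\Sober(X) \to {\creal}_\sigma]$, and Lemma~\ref{lemma:Stone:h} carries this back to the homeomorphic cone $\Lform X$. You instead prove the implication from additivity to almost openness directly (which is the only direction needed, and is the easy half of Keimel's equivalence), and you re-derive additivity from first principles via the step-function basis $a\chi_V$ with $V \Subset f^{-1}((a,+\infty])$, isolating the cross term $(a+b)\chi_{V\cap W}$ as the exact point where multiplicativity of $\Subset$ (core-coherence) enters. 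In effect you have reconstructed the content of TKP's Proposition~2.28 for core-compact, core-coherent spaces without the sobrification detour. What the paper's route buys is brevity and reuse of published results; what yours buys is a proof that works directly on $X$ without invoking the sobrification or the unused converse direction of Keimel's Lemma~6.14, and it makes visible why coherence cannot be dropped. All the steps check out: the decomposition $a\chi_V + b\chi_W = a\chi_V \vee b\chi_W \vee (a+b)\chi_{V\cap W}$ is a correct pointwise identity for $a,b \geq 0$, finite suprema of elements way-below $f+g$ are way-below $f+g$, and the passage from $\ll$-additivity to openness of $\upc(U+V)$ correctly uses continuity of $\Lform X$ (hence core-compactness of $X$) to interpolate $f' \ll f_0$ inside $U$. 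The only cosmetic slip is the remark that $\upc(U+V)$ is upward closed ``since $\Lform X$ is a continuous lattice''---it is upward closed by definition of $\upc$.
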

\begin{proof}
  Since $X$ is core-compact, we already know that $\Lform X$ is a
  locally convex topological cone.  $\Lform X$
  is a continuous d-cone, too, and in that case, the fact that
  addition is almost open is equivalent to the property that the
  way-below relation $\ll$ on $C$ is \emph{additive}, namely,
  $f \ll f'$ and $g \ll g'$ together imply $f+g \ll f'+g'$
  \cite[Lemma~6.14]{Keimel:topcones2}.

  The space $Y \eqdef \Sober (X)$ is stably locally compact, and
  Proposition~2.28 of \cite{TKP:nondet:prob} then states that $\ll$ is
  additive on $[Y \to {\creal}_\sigma]$.  By Lemma~\ref{lemma:Stone:h},
  $\ll$ is also additive on the homeomorphic space $\Lform X$,
  which allows us to conclude.  (Note that
  core-coherence is required here.  Using the same sobrification
  trick, Proposition~2.29 of \cite{TKP:nondet:prob} says that if $X$
  is core-compact, and $\ll$ is additive on $\Lform X$,
  then $X$ must in fact be core-coherent.)
\end{proof}

\begin{lemma}
  \label{lemma:<<:achiX}
  Let $X$ be a compact space.  For every $a \in [0, 1)$, $a \cdot \one \ll
  \one$
  in $\Lform X$,
  hence $\uuarrow (a \cdot \one)$ is an open neighborhood of $\one$ in
  $\Lform X$.
\end{lemma}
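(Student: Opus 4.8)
The plan is to establish both claims at once by identifying $\uuarrow(a \cdot \one)$ with the Scott-open set produced by Lemma~\ref{lemma:chi+U:compact}. Write $O_a \eqdef \{h \in \Lform X \mid \exists b > a,\ h \geq b \cdot \one\}$; by the first assertion of Lemma~\ref{lemma:chi+U:compact} this set is Scott-open (this is where compactness of $X$ is used), and one checks immediately that $O_a = \{h \in \Lform X \mid \inf_{x \in X} h(x) > a\}$. Since $a < 1$, the constant function $\one$ lies in $O_a$ (take $b = 1$).

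First I would prove $O_a \subseteq \uuarrow(a \cdot \one)$, which in particular yields $a \cdot \one \ll \one$ as the special case $h = \one$. Let $h \in O_a$ and let ${(h_i)}_{i \in I}$ be a directed family with $\sup_{i} h_i = g \geq h$. Because suprema in $\Lform X$ are computed pointwise, $g \in O_a$, so $g(x) > a$ for every $x$. The sets $U_i \eqdef h_i^{-1}((a, +\infty])$ are open, increase with $i$, and cover $X$ (for each $x$, $\sup_i h_i(x) = g(x) > a$). Compactness of $X$ forces this directed open cover to contain $X$ itself, i.e.\ $X = U_{i_0}$ for some $i_0$; then $h_{i_0}(x) > a$ for all $x$, that is $a \cdot \one \leq h_{i_0}$. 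Hence $a \cdot \one \ll h$, and in particular $a \cdot \one \ll \one$. (Alternatively, one feeds $g = \sup_i h_i \in O_a$ directly into the Scott-open set of Lemma~\ref{lemma:chi+U:compact} to extract a member of the family lying in $O_a$.)

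The reverse inclusion $\uuarrow(a \cdot \one) \subseteq O_a$ is what makes $\uuarrow(a \cdot \one)$ genuinely \emph{open}, and I expect it to be the main obstacle: since $\Lform X$ need not be a continuous dcpo for an arbitrary compact $X$, one cannot simply invoke that $\uuarrow$ of a point is Scott-open. For $a = 0$ the set $\uuarrow(0 \cdot \one) = \uuarrow \bot$ is all of $\Lform X$, hence trivially an open neighborhood of $\one$, so I would assume $a > 0$ and argue contrapositively. If $h \notin O_a$ then $\inf_{x} h(x) \leq a$, and I would witness $a \cdot \one \not\ll h$ using the chain $g_\epsilon \eqdef \max(h - \epsilon \cdot \one, 0)$ for $\epsilon > 0$, each of which is an element of $\Lform X$: the family is directed (increasing as $\epsilon$ decreases) with supremum $h \geq h$, yet for each $\epsilon > 0$ there is a point $x_\epsilon$ with $h(x_\epsilon) < a + \epsilon$ (as $\inf_x h(x) \le a < a+\epsilon$, the value $a+\epsilon$ is not a lower bound), whence $g_\epsilon(x_\epsilon) < a$ and $g_\epsilon \not\geq a \cdot \one$. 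Thus no member of the family dominates $a \cdot \one$, so $a \cdot \one \not\ll h$.

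Combining the two inclusions gives $\uuarrow(a \cdot \one) = O_a$ for $a \in (0,1)$, which is Scott-open by Lemma~\ref{lemma:chi+U:compact} and contains $\one$; together with the trivial case $a = 0$, this shows that $\uuarrow(a \cdot \one)$ is an open neighborhood of $\one$ in $\Lform X$, as required.
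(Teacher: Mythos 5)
Your proof is correct, and it establishes strictly more than the paper's own proof does. The paper's argument is just your first inclusion specialized to $h=\one$: for a directed family ${(f_i)}_{i\in I}$ with $\sup_i f_i\geq\one$, the sets $f_i^{-1}((a,+\infty])$ form a directed open cover of the compact space $X$, so some $f_i$ dominates $a\cdot\one$; this gives $a\cdot\one\ll\one$, and the openness of $\uuarrow(a\cdot\one)$ is then dispatched with a bare ``hence''. What you add is genuinely useful: by running the cover argument for an arbitrary $h\in O_a$ you get $O_a\subseteq\uuarrow(a\cdot\one)$, and your truncation family $g_\epsilon=\max(h-\epsilon\cdot\one,0)$ gives the reverse inclusion, so that $\uuarrow(a\cdot\one)=O_a$ for $a\in(0,1)$ and openness follows from Lemma~\ref{lemma:chi+U:compact}. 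This matters because $\Lform X$ need not be a continuous dcpo for a merely compact $X$, so Scott-openness of $\uuarrow$ of a point is not automatic; your reverse inclusion is the step that actually justifies the second assertion of the lemma as stated. (For the sole application, in Lemma~\ref{lemma:ADN:intermezzo:2}, the inclusion $O_a\subseteq\uuarrow(a\cdot\one)\subseteq\upc(a\cdot\one)$ together with openness of $O_a$ would already suffice to place $\one$ in the interior of $q_1^{-1}((r,+\infty])$, but your argument proves the lemma literally.) All the individual steps check out, including the continuity of $g_\epsilon$, the pointwise computation of directed suprema in $\Lform X$, and the separate treatment of $a=0$.
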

\begin{proof}
  Let ${(f_i)}_{i \in I}$ be a directed family of continuous maps such
  that $\sup_{i \in I} f_i \geq \one$.
  So
  ${(\sup_{i \in I} f_i)}^{-1} (a, +\infty] = \bigcup_{i \in I}
  f_i^{-1} (a, +\infty]$ contains the compact set $X$.  Since the
  union is directed, $X \subseteq f_i^{-1} (a, +\infty]$ for some
  $i \in I$, proving the claim.
\end{proof}

\begin{lemma}
  \label{lemma:ADN:r:cont}
  Let $X$ be a topological space.  Then $r_{\ADN}$ is a continuous map
  from $\PV (\Pred^\bullet_{\Nature\;wk} (X))$ to
  $\Pred^\bullet_{\ADN\;wk} (X)$.
\end{lemma}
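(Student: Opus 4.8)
The plan is to verify continuity directly on a subbase of the weak topology on $\Pred^\bullet_{\ADN\;wk} (X)$, which by definition consists of the two families of opens $[h > b]^- = \{(F^-, F^+) \mid F^- (h) > b\}$ and $[h > b]^+ = \{(F^-, F^+) \mid F^+ (h) > b\}$, with $h \in \Lform X$ and $b \in \Rp$. Since $r_{\ADN} (L) = (r_{\DN} (L), r_{\AN} (L))$, it suffices to compute the two preimages $r_{\ADN}^{-1} ([h > b]^-)$ and $r_{\ADN}^{-1} ([h > b]^+)$ and to exhibit them as the subbasic opens $\Box [h > b]_\Nature$ and $\Diamond [h > b]_\Nature$ of the Vietoris topology on $\PV (\Pred^\bullet_{\Nature\;wk} (X))$.

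The angelic component is the easy one, and mirrors Lemma~\ref{lemma:AN:r:cont}: a lens $L$ lies in $r_{\ADN}^{-1} ([h > b]^+)$ iff $r_{\AN} (L) (h) = \sup_{G \in L} G (h) > b$, i.e.\ iff some $G \in L$ satisfies $G (h) > b$, i.e.\ iff $L \cap [h > b]_\Nature \neq \emptyset$. Hence $r_{\ADN}^{-1} ([h > b]^+) = \Diamond [h > b]_\Nature$, which is open. The demonic component is where the work lies, and it recycles the mechanism of Lemma~\ref{lemma:DN:r:cont}. Here a lens $L$ lies in $r_{\ADN}^{-1} ([h > b]^-)$ iff $\inf_{G \in L} G (h) > b$. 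The crucial observations are, first, that since the evaluation $G \mapsto G (h)$ is monotonic for the pointwise (specialization) ordering, the infimum over $L$ coincides with the infimum over its saturation $\upc L$; second, that $\upc L$ is compact saturated, so that --- as in Lemma~\ref{lemma:rDN} --- this infimum is attained and equals $\min_{G \in \upc L} G (h)$; and third, that since $[h > b]_\Nature$ is open hence saturated, $L \subseteq [h > b]_\Nature$ iff $\upc L \subseteq [h > b]_\Nature$. Putting these together, $\inf_{G \in L} G (h) > b$ iff every $G \in \upc L$ satisfies $G (h) > b$ iff $\upc L \subseteq [h > b]_\Nature$ iff $L \subseteq [h > b]_\Nature$, so that $r_{\ADN}^{-1} ([h > b]^-) = \Box [h > b]_\Nature$, again open.

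I expect this last equivalence --- and specifically the passage from $\inf_{G \in L} G (h) > b$ to the statement that $G (h) > b$ holds at \emph{every} $G \in \upc L$ --- to be the one point requiring genuine care, exactly as in the purely demonic case: without the compactness of $\upc L$ (which forces the infimum to be an attained minimum) the implication from ``$G (h) > b$ for all $G$'' back to ``$\inf > b$'' could fail. All the other manipulations are routine.

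Finally, I would record that $r_{\ADN}$ genuinely lands in $\Pred^\bullet_{\ADN\;wk} (X)$. The identity $\sup_{G \in L} G (h) = \sup_{G \in cl (L)} G (h)$, which holds because an open set meets $cl (L)$ iff it meets $L$, reduces $r_{\AN} (L)$ to the Hoare case of Lemma~\ref{lemma:AN:r:def} applied to $cl (L)$, while $r_{\DN} (L) = r_{\DN} (\upc L)$ reduces the Smyth component to Lemma~\ref{lemma:DN:r:def} applied to $\upc L$. Walley's condition $r_{\DN} (L) (h + h') \leq r_{\DN} (L) (h) + r_{\AN} (L) (h') \leq r_{\AN} (L) (h + h')$ then follows from an elementary $\inf/\sup$ estimate using the linearity of each $G \in L$, so that $r_{\ADN} (L)$ is indeed a fork of the required kind.
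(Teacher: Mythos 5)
Your proposal is correct and follows essentially the same route as the paper: Walley's condition is verified by the same $\inf/\sup$ estimate using linearity and non-emptiness of $L$, and continuity is obtained componentwise by identifying the preimages of $[h>b]^+$ and $[h>b]^-$ with $\Diamond [h>b]_\Nature$ and $\Box [h>b]_\Nature$ — the paper simply delegates these computations to Lemmas~\ref{lemma:AN:r:cont} and~\ref{lemma:DN:r:cont} (and the reduction to $cl(L)$ and $\upc L$ that you spell out is recorded later as Lemma~\ref{lemma:ADN:r}). Your emphasis on the attained minimum over the compact set $\upc L$ is exactly the point the paper flags in the proof of Lemma~\ref{lemma:DN:r:cont}.
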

\begin{proof}
  We check that $r_{\ADN} (L)$ satisfies Walley's condition for every
  lens $L$.  Let $(F^-, F^+) \eqdef r_{\ADN} (L)$.  Then
  $F^- (h+h') = \inf_{G \in L} G (h+h') = \inf_{G \in L} (G (h) + G
  (h')) \leq \inf_{G \in L} (G (h) + \sup_{G \in L} G (h')) = \inf_{G
    \in L} G (h) + \sup_{G \in L} G (h')$ (since $L$ is non-empty)
  $= F^- (h) + F^+ (h')$.  We prove
  $F^- (h) + F^+ (h') \leq F^+ (h+h')$ similarly.  That $r_{\ADN} (L)$
  is then a fork, and that $r_{\ADN}$ is continuous, then follows from
  Lemma~\ref{lemma:AN:r:cont} and Lemma~\ref{lemma:DN:r:cont}.
\end{proof}

\begin{lemma}
  \label{lemma:ADN:s:def}
  Let $X$ be a topological space.  Then $s^\bullet_{\ADN}$ is a map
  from $\Pred^\bullet_{\ADN} (X)$ to $\PV (\Pred^\bullet_{\Nature\;wk}
  (X))$.
\end{lemma}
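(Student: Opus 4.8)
The plan is to read off the answer from the decomposition $s^\bullet_{\ADN} (F^-, F^+) = s^\bullet_{\DN} (F^-) \cap s^\bullet_{\AN} (F^+)$ noted just after Definition~\ref{defn:rs:ADN}. Recall that an element of $\PV (\Pred^\bullet_{\Nature\;wk} (X))$ is a lens, namely a non-empty subset that is the intersection of a compact saturated set and a closed set. By Lemma~\ref{lemma:DN:s:compact}, $s^\bullet_{\DN} (F^-)$ is compact saturated in $\Pred^\bullet_{\Nature\;wk} (X)$, and by Lemma~\ref{lemma:AN:s:closed}, $s^\bullet_{\AN} (F^+)$ is closed there. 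So $s^\bullet_{\ADN} (F^-, F^+)$ already comes presented in the required intersection form, and the only thing I would actually have to establish is that it is non-empty.

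For non-emptiness I would produce a single linear prevision $G$ with $F^- \leq G \leq F^+$ by means of Keimel's Sandwich Theorem. First I would note that Walley's condition yields $F^- \leq F^+$: setting $h' \eqdef 0$ in $F^- (h) + F^+ (h') \leq F^+ (h + h')$ and using $F^+ (0) = 0$ gives $F^- (h) \leq F^+ (h)$ for every $h$. Then, viewing $C \eqdef \Lform X$ with its Scott topology as a semitopological cone, I would take the continuous superlinear map $F^-$ as lower bound and the sublinear map $F^+$ as upper bound; the Sandwich Theorem supplies a continuous linear $G \colon C \to \creal$ with $F^- \leq G \leq F^+$, which is a linear prevision.

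It then remains to check that $G$ has the right normalization, which is a short computation at $\one$. If $\bullet$ is ``$\leq 1$'', then $F^+$ subnormalized gives $F^+ (\one) \leq 1$, so $G (\one) \leq F^+ (\one) \leq 1$; if $\bullet$ is ``$1$'', then $F^+ (\one) = 1$ and $F^- (\one) = 1$, so $F^- (\one) \leq G (\one) \leq F^+ (\one)$ forces $G (\one) = 1$. In each case $G \in s^\bullet_{\ADN} (F^-, F^+)$, so this set is a non-empty lens and hence lies in $\PV (\Pred^\bullet_{\Nature\;wk} (X))$. The one step with genuine content is the non-emptiness argument, and the point I expect to matter most is that it is precisely Walley's condition (through $F^- \leq F^+$) that makes $F^-$ and $F^+$ compatible enough for the Sandwich Theorem; note also that no local convexity or friendliness hypothesis on $X$ enters, since $\Lform X$ is a semitopological cone for every $X$.
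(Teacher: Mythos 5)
Your proposal is correct and follows essentially the same route as the paper: non-emptiness via Keimel's Sandwich Theorem applied to $F^- \leq F^+$ (a consequence of Walley's condition), and the lens structure via the decomposition $s^\bullet_{\ADN} (F^-, F^+) = s^\bullet_{\DN} (F^-) \cap s^\bullet_{\AN} (F^+)$ together with Lemma~\ref{lemma:DN:s:compact} and Lemma~\ref{lemma:AN:s:closed}. The only difference is that you spell out the normalization check at $\one$, which the paper leaves implicit; your observation that no friendliness or local convexity hypothesis is needed is also accurate.
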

\begin{proof}
  We check that $s^\bullet_{\ADN} (F^-, F^+)$ is non-empty for every
  (subnormalized, normalized) fork $(F^-, F^+)$, i.e., that there is a
  (subnormalized, normalized) linear prevision $G$ such that $F^- \leq
  G \leq F^+$: this is a trivial consequence of Keimel's Sandwich
  Theorem on the semitopological cone $C \eqdef \Lform X$.
  Then $s^\bullet_{\ADN} (F^-, F^+) = s^\bullet_{\DN} (F^-) \cap
  s^\bullet_{\AN} (F^+)$ is a lens, by Lemma~\ref{lemma:AN:s:closed}
  and Lemma~\ref{lemma:DN:s:compact}.
\end{proof}

Before we continue, we establish two consequences of Walley's
condition.  Those are akin to the Main Lemma (Lemma~5.1) of
\cite{KP:predtrans:pow}.
\begin{lemma}
  \label{lemma:ADN:intermezzo:1}
  Let $X$ be a topological space, and $(F^-, F^+) \in
  \Pred^\bullet_{\ADN} (X)$.  For every $G' \in
  \Pred^\bullet_{\Nature} (X)$ such that $F^- \leq G'$, there is a $G
  \in \Pred^\bullet_{\Nature} (X)$ such that $F^- \leq G \leq F^+$ and
  $G \leq G'$.
\end{lemma}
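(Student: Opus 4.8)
The plan is to obtain $G$ from Keimel's Sandwich Theorem applied on the semitopological cone $C \eqdef \Lform X$ (with its Scott topology), using $F^-$ itself as the continuous superlinear lower bound and an appropriate \emph{infimal convolution} of $F^+$ and $G'$ as the sublinear upper bound. Concretely, I would set
\[
  p (h) \eqdef \inf \{F^+ (h_1) + G' (h_2) \mid h_1, h_2 \in \Lform X,\ h = h_1 + h_2\}.
\]
The family of decompositions $h = h_1 + h_2$ is always non-empty, so $p$ is a well-defined map from $C$ to $\creal$. Taking $h_2 \eqdef 0$ shows $p \leq F^+$, and taking $h_1 \eqdef 0$ shows $p \leq G'$ (using $F^+ (0) = G' (0) = 0$); hence any linear prevision $G$ pointwise below $p$ will automatically satisfy both $G \leq F^+$ and $G \leq G'$, which are two of the three conditions we seek.

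First I would check that $p$ is sublinear. Positive homogeneity is routine: for $r > 0$ the decompositions of $r \cdot h$ are exactly the $r$-scalings of the decompositions of $h$, and for $r = 0$ one uses that $h_1 + h_2 = 0$ forces $h_1 = h_2 = 0$ in $\Lform X$, giving $p (0) = 0$. Subadditivity follows because, given decompositions $h = h_1 + h_2$ and $h' = h'_1 + h'_2$, the pair $(h_1 + h'_1, h_2 + h'_2)$ is a decomposition of $h + h'$, so that $p (h + h') \leq F^+ (h_1 + h'_1) + G' (h_2 + h'_2) \leq [F^+ (h_1) + G' (h_2)] + [F^+ (h'_1) + G' (h'_2)]$ by sub-additivity of $F^+$ and additivity of $G'$; taking the two infima independently yields $p (h + h') \leq p (h) + p (h')$.

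The key step is to verify $F^- \leq p$, which is where Walley's condition enters. For any decomposition $h = h_1 + h_2$, Walley's condition applied to the pair $(h_2, h_1)$ gives $F^- (h) = F^- (h_2 + h_1) \leq F^- (h_2) + F^+ (h_1)$, and since $F^- \leq G'$ we have $F^- (h_2) \leq G' (h_2)$, whence $F^- (h) \leq F^+ (h_1) + G' (h_2)$. Taking the infimum over all decompositions gives $F^- (h) \leq p (h)$. As $F^-$ is a Smyth prevision, it is continuous and superlinear, so Keimel's Sandwich Theorem now supplies a continuous linear map $G \colon C \to \creal$ with $F^- \leq G \leq p$; being positively homogeneous, additive and Scott-continuous, $G$ is a linear prevision. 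Combining with $p \leq F^+$ and $p \leq G'$ yields $F^- \leq G \leq F^+$ and $G \leq G'$ at once.

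It then remains to place $G$ in the right normalization class. When $\bullet$ is empty there is nothing to do. When $\bullet$ is ``$\leq 1$'', $G \leq G'$ gives $G (\one) \leq G' (\one) \leq 1$, so $G$ is subnormalized. When $\bullet$ is ``$1$'', the fork is normalized, so $F^- (\one) = 1 = G' (\one)$; then $F^- \leq G \leq G'$ forces $G (\one) = 1$, i.e.\ $G$ is normalized. In every case $G \in \Pred^\bullet_{\Nature} (X)$, as required. The only delicate point I anticipate is the correct bookkeeping of Walley's condition---its asymmetry means one must apply it with the summands in the right order---together with the monotone arithmetic in $\creal$ inside the infimal convolution; everything else is routine, and notably no local convexity or friendliness hypothesis on $X$ is needed, since the Sandwich Theorem holds on any semitopological cone.
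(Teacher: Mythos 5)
Your proof is correct and follows essentially the same route as the paper's: form the infimal convolution $p$ of $F^+$ and $G'$, check $F^-\leq p\leq F^+, G'$ via Walley's condition, and apply Keimel's Sandwich Theorem (which, as you note, tolerates a non-monotone, non-continuous upper bound). The only cosmetic difference is that the paper takes the infimum over decompositions with $h\leq h_1+h_2$ rather than $h=h_1+h_2$, which makes its $p$ monotone but changes nothing essential; your normalization bookkeeping via $G\leq G'$ is equally valid.
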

\begin{proof}
  Write $C$ for $\Lform X$.
  Let
  $p (h) \eqdef \inf_{\substack{h_1, h_2 \in C\\h \leq h_1+h_2}} (F^+
  (h_1) + G' (h_2))$.  This is a sublinear map, notably, subadditivity
  is proved as follows:
  \begin{eqnarray*}
    p (h) + p (h') & = & \inf_{\substack{h_1, h_2, h'_1, h'_2 \in C\\h
        \leq h_1+h_2\\ h' \leq h'_1+h'_2}} (F^+ (h_1) + F^+ (h'_1) +
    G' (h_2) + G' (h'_2)) \\
    & \geq & \inf_{\substack{h_1, h_2, h'_1, h'_2 \in C\\h
        \leq h_1+h_2\\ h' \leq h'_1+h'_2}} (F^+ (h_1 + h'_1) +
    G' (h_2 + h'_2)) \\
    & \geq & \inf_{\substack{h''_1, h''_2 \in C\\ h+h' \leq
        h''+h''_2}} (F^+ (h''_1) + G' (h''_2)) = p (h+h').
  \end{eqnarray*}
  Note also that $p \leq F^+$ (take $h_1 \eqdef h$, $h_2 \eqdef 0$)
  and $p \leq G'$ (take $h_1 \eqdef 0$, $h_2 \eqdef h$).  We check
  that $F^- \leq p$, i.e., that for all $h_1, h_2 \in C$ such that
  $h \leq h_1+h_2$, $F^- (h) \leq F^+ (h_1) + G' (h_2)$.  This is by
  Walley's condition, since
  $F^- (h) \leq F^- (h_1+h_2) \leq F^+ (h_1) + F^- (h_2)$, and
  $F^- \leq G'$.  By Keimel's Sandwich Theorem, there is a continuous
  linear map $G$ such that $F^- \leq G \leq p$.  When $\bullet$ is
  ``$\leq 1$'', $G (\one) \leq p (\one) \leq F^+ (\one) \leq 1$, so
  $G$ is subnormalized, and
  when $\bullet$ is ``$1$'', $p (\one) \geq F^- (\one) = 1$, so $G$ is
  normalized.  In any case, $G$ is in $\Pred^\bullet_{\Nature} (X)$.
  Moreover, $F^- \leq G \leq p \leq F^+$, and $G \leq p \leq G'$.
\end{proof}

\begin{lemma}
  \label{lemma:ADN:intermezzo:2}
  Let $X$ be a topological space such that $\Lform X$
  is locally convex and has an almost open addition map.  Assume that
  $X$ is also compact in case $\bullet$ is ``$1$''.  Let
  $(F^-, F^+) \in \Pred^\bullet_{\ADN} (X)$.  For every
  $G' \in \Pred^\bullet_{\Nature} (X)$ such that $G' \leq F^+$, there
  is a $G \in \Pred^\bullet_{\Nature} (X)$ such that
  $F^- \leq G \leq F^+$ and $G' \leq G$.
\end{lemma}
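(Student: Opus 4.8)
The plan is to mirror the construction of Lemma~\ref{lemma:ADN:intermezzo:1}, but dualized: there the continuous map $F^-$ served as the superlinear lower bound and the work went into building a sublinear upper envelope, whereas here the continuous map $F^+$ will serve as the sublinear upper bound and the work goes into building a superlinear lower envelope sitting above both $F^-$ and $G'$. Writing $C$ for $\Lform X$, I would set
\[
q (h) \eqdef \sup_{\substack{h_1, h_2 \in C\\ h_1 + h_2 \leq h}} (F^- (h_1) + G' (h_2)).
\]
Taking $h_1 \eqdef h$, $h_2 \eqdef 0$ shows $F^- \leq q$, and taking $h_1 \eqdef 0$, $h_2 \eqdef h$ shows $G' \leq q$.

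Next I would check that $q$ is superlinear. Positive homogeneity is immediate from rescaling the decompositions (with $q (0) = 0$ since $C$ consists of nonnegative functions), and superadditivity follows by combining two witnesses $h_1 + h_2 \leq h$ and $h'_1 + h'_2 \leq h'$ into $(h_1 + h'_1) + (h_2 + h'_2) \leq h + h'$, using that $F^-$ is superadditive, so $F^- (h_1) + F^- (h'_1) \leq F^- (h_1 + h'_1)$, and that $G'$ is additive. The crucial inequality is $q \leq F^+$: for any decomposition $h_1 + h_2 \leq h$ we have $F^- (h_1) + G' (h_2) \leq F^- (h_1) + F^+ (h_2)$ since $G' \leq F^+$, and then Walley's condition gives $F^- (h_1) + F^+ (h_2) \leq F^+ (h_1 + h_2) \leq F^+ (h)$ by monotonicity of $F^+$. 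Since $q$ need not be continuous, I would then pass to $\check q$, the largest continuous map below $q$; because addition on $C$ is almost open, \cite[Lemma~5.7]{Keimel:topcones2} guarantees that $\check q$ is again superlinear. As $F^-$ and $G'$ are Scott-continuous and lie below $q$, they also lie below $\check q$, while $\check q \leq q \leq F^+$.

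At this point Keimel's Sandwich Theorem applied to the continuous superlinear map $\check q$ and the sublinear map $F^+$ (with $\check q \leq F^+$) yields a continuous linear map $G$ with $\check q \leq G \leq F^+$. Hence $F^- \leq \check q \leq G$, $G' \leq \check q \leq G$, and $G \leq F^+$, which are exactly the required inequalities, so it only remains to place $G$ in $\Pred^\bullet_{\Nature} (X)$. When $\bullet$ is nothing there is nothing to check; when $\bullet$ is ``$\leq 1$'', $G (\one) \leq F^+ (\one) \leq 1$, so $G$ is subnormalized. I expect the normalized case $\bullet = {}$``$1$'' to be the delicate one, and it is exactly there that the extra compactness hypothesis on $X$ is reserved: one has $F^- (\one) = 1$ and $F^+ (\one) = 1$, whence $G (\one) \geq \check q (\one) \geq F^- (\one) = 1$ and $G (\one) \leq F^+ (\one) = 1$ force $G (\one) = 1$, so that $G$ is genuinely normalized rather than merely subnormalized. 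Compactness of $X$, through Lemma~\ref{lemma:<<:achiX} (which supplies $a \cdot \one \ll \one$ for $a \in [0,1)$ and hence good control of the Scott-open neighbourhoods of $\one$ entering the definition of $\check q$), is the tool available to secure that the value of the superlinear envelope at $\one$ does not drop below $1$. The main obstacle is thus less the algebra of $q$ than the two-step continuity/normalization argument: obtaining a \emph{continuous} linear $G$ despite the non-continuity of $q$ (forcing the detour through $\check q$ and almost-openness), and pinning down the value at $\one$ in the normalized subcase.
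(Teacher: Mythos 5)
Your proof is correct and follows essentially the same route as the paper's: the same superlinear envelope $q (h) = \sup_{h_1 + h_2 \leq h} (F^- (h_1) + G' (h_2))$, the same verification that $F^-, G' \leq q \leq F^+$ via Walley's condition, the same passage to $\check q$ using almost-openness of addition and \cite[Lemma~5.7]{Keimel:topcones2}, and the same application of Keimel's Sandwich Theorem. The one place where you diverge is the normalized case, and there your argument is actually \emph{cleaner} than the paper's: you deduce $\check q (\one) \geq F^- (\one) = 1$ directly from the fact that $F^-$ is a continuous map below $q$, hence below $\check q$ (a fact you had already established and the paper establishes too). The paper instead invokes compactness of $X$ through Lemma~\ref{lemma:<<:achiX} to exhibit $\uuarrow (a \cdot \one)$ as an open neighbourhood of $\one$ inside $q^{-1} (]r, +\infty])$ for each $r < a < 1$; but the Scott-open set $(F^-)^{-1} (]r, +\infty])$ already does that job without any compactness. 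Your closing remark that compactness is ``the tool available to secure that the value of the superlinear envelope at $\one$ does not drop below $1$'' is therefore inconsistent with your own (correct) computation, which never uses it; the compactness hypothesis plays no role in the argument you actually wrote down.
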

\begin{proof}
  Write again $C$ for $\Lform X$.
  Let $q_1 (h)$ be defined as
  $\sup_{\substack{h_1, h_2 \in C\\ h_1+h_2 \leq h}} (F^- (h_1) + G'
  (h_2))$.  By an argument similar to the one we used in the proof of
  Lemma~\ref{lemma:ADN:intermezzo:1} on $p$, $q_1$ is superlinear, and
  $F^- \leq q_1$, $G' \leq q_1$.  Moreover, we check that
  $q_1 \leq F^+$, i.e., that for all $h_1, h_2 \in C$ such that
  $h_1+h_2 \leq h$, $F^- (h_1) + G' (h_2) \leq F^+ (h)$: this is by
  Walley's condition again, since
  $F^+ (h) \geq F^+ (h_1+h_2) \geq F^- (h_1) + F^+ (h_2) \geq F^-
  (h_1) + G' (h_2)$.

  However, to apply Keimel's Sandwich Theorem to $q_1$ and $F^+$, we
  would need $q_1$ to be continuous.  Instead, consider $\check q_1$.
  We have already noted that, since addition on $\Lform X$
  is almost open, $\check q_1$ is not only
  continuous, but also superlinear \cite[Lemma~5.7]{Keimel:topcones2}.
  Since $\check q_1$ is the largest Scott-continuous map below $q_1$,
  and $F^-$, $G'$ are Scott-continuous and below $q_1$, we also have
  $F^- \leq \check q_1$, $G' \leq \check q_1$.  Since $\check q_1 \leq
  q_1 \leq F^+$, we can now apply Keimel's Sandwich Theorem, and
  obtain a linear prevision $G$ such that $\check q_1 \leq G \leq
  F^+$.

  If $\bullet$ is ``$\leq 1$'', then
  $G (\one) \leq F^+ (\one) \leq 1$, so
  $G$ is subnormalized.  If $\bullet$ is ``$1$'', remember that we
  have assumed that $X$ was also compact.  By
  Lemma~\ref{lemma:<<:achiX}, $\uuarrow (a \cdot \one)$
  is an open neighborhood of $\one$
  for every $a \in [0, 1)$.  We check that is is included in
  $q_1^{-1} (r, +\infty]$ for every $r < a$: for every
  $h \in \uuarrow (a \cdot \one)$,
  $q_1 (h) \geq q_1 (a \cdot \one) \geq F^- (a \cdot \one) = a > r$.  In
  particular, for every $r \in [0, 1)$, we have just shown that
  $\one$
  is in the interior of $q_1^{-1} (r, +\infty]$ (pick any
  $a \in (r, 1)$).  Since $\check q_1 (\one))$ is
  the least upper bound of all real numbers $r$ with that property,
  $\check q_1 (\one) \geq 1$.
  Since $G (\one) \geq \check q_1 (\one)$,
  $G$ is normalized in the ``$1$'' case.  To sum up, whatever
  $\bullet$ is, $G$ is in $\Pred^\bullet_{\Nature} (X)$.

  Finally, $F^- \leq \check q_1 \leq G \leq F^+$, and $G' \leq \check
  q_1 \leq G$.
\end{proof}

\begin{lemma}
  \label{lemma:ADN:s:cont}
  Let $X$ be a topological space such that $\Lform X$ 
  is locally convex and has an almost open addition map.  Assume that
  $X$ is also compact in case $\bullet$ is ``$1$''.  Then
  $s^\bullet_{\ADN}$ is a continuous map from
  $\Pred^\bullet_{\ADN\;wk} (X)$ to $\PV (\Pred^\bullet_{\Nature\;wk}
  (X))$.
\end{lemma}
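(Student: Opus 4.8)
The plan is to verify continuity by checking that the inverse image under $s^\bullet_{\ADN}$ of each subbasic open of the Vietoris topology on $\PV (\Pred^\bullet_{\Nature\;wk} (X))$ is open. Recall that $s^\bullet_{\ADN} (F^-, F^+) = s^\bullet_{\DN} (F^-) \cap s^\bullet_{\AN} (F^+)$. As in the Smyth case (Lemma~\ref{lemma:DN:s:cont}), $\Box$ distributes over directed unions and finite intersections but not over arbitrary unions, so it suffices to treat subbasic opens $\Box W$ with $W \eqdef \bigcup_{i=1}^m [h_i > 1]_\Nature$; and since $\Diamond$ commutes with unions, it suffices to treat $\Diamond [h > b]_\Nature$, where I may assume $b > 0$ because $[h > 0]_\Nature = \bigcup_{r > 0} [h > r]_\Nature$. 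I will handle the two kinds separately.

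For the $\Box$ part, I claim that $s^\bullet_{\ADN} (F^-, F^+) \subseteq W$ if and only if $s^\bullet_{\DN} (F^-) \subseteq W$. The right-to-left implication is immediate, since $s^\bullet_{\ADN} (F^-, F^+) \subseteq s^\bullet_{\DN} (F^-)$. For the converse, note that $W$ is upward closed; so given any $G' \in s^\bullet_{\DN} (F^-)$, Lemma~\ref{lemma:ADN:intermezzo:1} produces a $G \in \Pred^\bullet_{\Nature} (X)$ with $F^- \leq G \leq F^+$ and $G \leq G'$. If $s^\bullet_{\ADN} (F^-, F^+) \subseteq W$, then $G \in W$, and upward closure gives $G' \in W$. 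Consequently ${s^\bullet_{\ADN}}^{-1} (\Box W) = \pi^{-1} ({s^\bullet_{\DN}}^{-1} (\Box W))$, where $\pi \colon \Pred^\bullet_{\ADN\;wk} (X) \to \Pred^\bullet_{\DN\;wk} (X)$ is the continuous projection $(F^-, F^+) \mapsto F^-$. This set is open by Lemma~\ref{lemma:DN:s:cont} and continuity of $\pi$.

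For the $\Diamond$ part, I claim that ${s^\bullet_{\ADN}}^{-1} (\Diamond [h > b]_\Nature) = [h > b]^+$, i.e., that the lens $s^\bullet_{\ADN} (F^-, F^+)$ meets $[h > b]_\Nature$ exactly when $F^+ (h) > b$. If some $G$ in the lens satisfies $G (h) > b$, then $F^+ (h) \geq G (h) > b$. Conversely, suppose $F^+ (h) > b$. First observe that the hypotheses make $X$ an $\AN_\bullet$-friendly space: local convexity of $\Lform X$ gives $\AN$-friendliness, and in the ``$1$'' case the added compactness of $X$ yields $\AN_1$-friendliness through condition~1 of Definition~\ref{defn:LX:Hfriendly}. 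Hence Lemma~\ref{lemma:AN:rs=id:aux}, applied to the Hoare prevision $F^+$ with $r \eqdef b > 0$, yields a $G' \in \Pred^\bullet_{\Nature} (X)$ with $G' \leq F^+$ and $G' (h) > b$. Since the hypotheses of the present lemma are precisely those of Lemma~\ref{lemma:ADN:intermezzo:2}, that lemma applied to $G' \leq F^+$ produces a $G \in \Pred^\bullet_{\Nature} (X)$ with $F^- \leq G \leq F^+$ and $G' \leq G$; then $G$ lies in $s^\bullet_{\ADN} (F^-, F^+)$ and $G (h) \geq G' (h) > b$, so the lens meets $[h > b]_\Nature$. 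The set $[h > b]^+$ is open in the weak topology on forks.

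The main obstacle is the $\Diamond$ part, which is the only place where the full strength of the hypotheses is used. The crux is combining two different movements inside the lens: first approximating $F^+$ from below by a linear prevision (Lemma~\ref{lemma:AN:rs=id:aux}, which is why $\AN_\bullet$-friendliness, and hence compactness in the ``$1$'' case, is required), and then pushing that linear prevision upward into the lens while respecting the lower bound $F^-$ (Lemma~\ref{lemma:ADN:intermezzo:2}, which is where almost-openness of addition enters). By contrast the $\Box$ part is comparatively soft, relying only on the dual movement of Lemma~\ref{lemma:ADN:intermezzo:1} together with the already-established Smyth continuity.
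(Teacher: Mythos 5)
Your $\Box$ half is correct and is essentially the paper's argument (reduce to $\Box W$ with $W$ a finite union of subbasic opens, then use Lemma~\ref{lemma:ADN:intermezzo:1} and upward closure to show ${s^\bullet_{\ADN}}^{-1}(\Box W) = \pi_1^{-1}({s^\bullet_{\DN}}^{-1}(\Box W))$). The $\Diamond$ half, however, has a genuine gap in the very first reduction. The subbasic opens of the Vietoris topology are the sets $\Diamond V$ for \emph{arbitrary} open subsets $V$ of $\Pred^\bullet_{\Nature\;wk}(X)$, and such a $V$ is a union of \emph{finite intersections} of sets $[h > 1]_\Nature$. Since $\Diamond$ commutes with unions but \emph{not} with finite intersections, the irreducible cases are $\Diamond W$ with $W \eqdef \bigcap_{i=1}^m [h_i > 1]_\Nature$, not $\Diamond [h>b]_\Nature$ for a single $h$ — exactly the same reduction the paper performs in the proof of Lemma~\ref{lemma:AN:s:cont}. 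Your sets $\Diamond[h>b]_\Nature$, together with the $\Box W$'s, do not generate the Vietoris topology, so checking their preimages does not establish continuity.

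The missing case is genuinely harder, and your single-$h$ computation does not extend to it: whether the lens $s^\bullet_{\ADN}(F^-,F^+)$ meets $\bigcap_{i=1}^m [h_i>1]_\Nature$ is \emph{not} determined by the values $F^+(h_1),\dots,F^+(h_m)$. From $F^+(h_i)>1$ for each $i$, Lemma~\ref{lemma:AN:rs=id:aux} only yields, for each $i$ separately, some $G_i \leq F^+$ with $G_i(h_i)>1$; producing a single $G \leq F^+$ satisfying all $m$ constraints simultaneously is precisely the content of the continuity of $s^\bullet_{\AN}$ (Lemma~\ref{lemma:AN:s:cont}), whose proof needs von Neumann's minimax theorem. (For a concrete failure: on a two-point discrete space with $F^+(h) = \max(h(a),h(b))$, $h_1 \eqdef 2\chi_{\{a\}}$ and $h_2 \eqdef 2\chi_{\{b\}}$, one has $F^+(h_1), F^+(h_2) > 1$, yet no subnormalized linear $G \leq F^+$ has both $G(h_1)>1$ and $G(h_2)>1$.) The paper's route avoids redoing that work: it proves ${s^\bullet_{\ADN}}^{-1}(\Diamond V) = \pi_2^{-1}({s^\bullet_{\AN}}^{-1}(\Diamond V))$ for an \emph{arbitrary} open $V$ — the inclusion $\supseteq$ being exactly your application of Lemma~\ref{lemma:ADN:intermezzo:2}, using only that $V$ is upward closed — and then quotes Lemma~\ref{lemma:AN:s:cont} for the openness of ${s^\bullet_{\AN}}^{-1}(\Diamond V)$. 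Your intermezzo-2 step is the right move; you should apply it to arbitrary open $V$ and then invoke Lemma~\ref{lemma:AN:s:cont}, rather than trying to evaluate the preimage explicitly on single subbasic opens.
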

\begin{proof}
  For every open subset $V$ of $\Pred^\bullet_{\Nature\;wk} (X)$,
  ${s^\bullet_{\ADN}}^{-1} (\Box V)$ is the set of (subnormalized,
  normalized) forks $(F^-, F^+)$ such that every $G \in
  \Pred^\bullet_{\Nature} (X)$ such that $F^- \leq G \leq F^+$ is in
  $V$.  We claim that this is exactly $\pi_1^{-1}
  ({s^\bullet_{\DN}}^{-1} (\Box V))$, where $\pi_1$ is first
  projection.  The inclusion $\pi_1^{-1} ({s^\bullet_{\DN}}^{-1} (\Box
  V)) \subseteq {s^\bullet_{\ADN}}^{-1} (\Box V)$ is trivial.
  Conversely, for every element $(F^-, F^+)$ of
  ${s^\bullet_{\ADN}}^{-1} (\Box V)$, we must show that every $G' \in
  \Pred^\bullet_{\Nature} (X)$ such that $F^- \leq G'$ is in $V$.  By
  Lemma~\ref{lemma:ADN:intermezzo:1}, one can find $G \in
  \Pred^\bullet_{\Nature} (X)$ such that $F^- \leq G \leq F^+$ and $G
  \leq G'$.  In particular, since $(F^-, F^+)$ is in
  ${s^\bullet_{\ADN}}^{-1} (\Box V)$, $G$ is in $V$.  Since $V$ is
  upward closed, $G'$ is also in $V$.

  For every open subset $V$ of $\Pred^\bullet_{\Nature\;wk} (X)$,
  ${s^\bullet_{\ADN}}^{-1} (\Diamond V)$ is the set of (subnormalized,
  normalized) forks $(F^-, F^+)$ such that there is a $G \in
  \Pred^\bullet_{\Nature} (X)$ such that $F^- \leq G \leq F^+$ that is
  also in $V$.  We claim that this is exactly $\pi_2^{-1}
  ({s^\bullet_{\DN}}^{-1} (\Diamond V))$, where $\pi_2$ is second
  projection.  The inclusion ${s^\bullet_{\ADN}}^{-1} (\Diamond V)
  \subseteq \pi_2^{-1} ({s^\bullet_{\DN}}^{-1} (\Diamond V))$ is
  trivial.  Conversely, for every element $(F^-, F^+)$ of $\pi_2^{-1}
  ({s^\bullet_{\DN}}^{-1} (\Diamond V))$, i.e., such that there is a
  $G' \in V$ with $G' \leq F^+$, we must show that there is a $G \in
  V$ such that $F^- \leq G \leq F^+$.  We use
  Lemma~\ref{lemma:ADN:intermezzo:2} to this end: this yields such a
  $G$, since $V$ is upward closed.

  To sum up, we have shown that both ${s^\bullet_{\ADN}}^{-1} (\Box V)
  = \pi_1^{-1} ({s^\bullet_{\DN}}^{-1} (\Box V))$ and
  ${s^\bullet_{\ADN}}^{-1} (\Diamond V) = \pi_2^{-1}
  ({s^\bullet_{\DN}}^{-1} (\Diamond V))$ are open, so
  $s^\bullet_{\ADN}$ is continuous.
\end{proof}

\begin{lemma}
  \label{lemma:ADN:rs=id}
  Let $X$ be topological space such that $[X \to
  {\creal}_\sigma]_\sigma$ is locally convex and has an almost open
  addition map.  Assume that $X$ is also compact in case $\bullet$ is
  ``$1$''.  Then $r_{\ADN} \circ s^\bullet_{\ADN}$ is the identity map
  on $\Pred^\bullet_{\ADN\;wk} (X)$.
\end{lemma}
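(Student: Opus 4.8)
The plan is to verify the two coordinates of $r_{\ADN}\circ s^\bullet_{\ADN}$ separately, using the decomposition $s^\bullet_{\ADN}(F^-,F^+)=s^\bullet_{\DN}(F^-)\cap s^\bullet_{\AN}(F^+)$ recorded after Definition~\ref{defn:rs:ADN}. Fix $(F^-,F^+)\in\Pred^\bullet_{\ADN}(X)$ and set $E\eqdef s^\bullet_{\ADN}(F^-,F^+)$, a non-empty lens by Lemma~\ref{lemma:ADN:s:def}. Since $r_{\ADN}(E)=(r_{\DN}(E),r_{\AN}(E))$, I must show $\inf_{G\in E}G(h_0)=F^-(h_0)$ and $\sup_{G\in E}G(h_0)=F^+(h_0)$ for every $h_0\in\Lform X$. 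In each case one inequality is immediate: every $G\in E$ satisfies $F^-\le G\le F^+$, so $\inf_{G\in E}G(h_0)\ge F^-(h_0)$ and $\sup_{G\in E}G(h_0)\le F^+(h_0)$. The content is in the two reverse inequalities, which is precisely the purpose of the intermezzo lemmas.

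For the demonic reverse inequality, I would apply Lemma~\ref{lemma:DN:s:*} to the Smyth prevision $F^-$ and the function $h_0$ to obtain a linear prevision $G'$ with $F^-\le G'$ and $G'(h_0)=F^-(h_0)$. Since $G'$ may fail to lie below $F^+$, I would then invoke Lemma~\ref{lemma:ADN:intermezzo:1} (its hypothesis $F^-\le G'$ holds) to get $G\in\Pred^\bullet_{\Nature}(X)$ with $F^-\le G\le F^+$ and $G\le G'$. Then $G\in E$ and $F^-(h_0)\le G(h_0)\le G'(h_0)=F^-(h_0)$, so $G(h_0)=F^-(h_0)$ and the infimum is attained.

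For the angelic reverse inequality I would argue by approximation. The case $F^+(h_0)=0$ is trivial, since then $F^-\le F^+$ forces every $G\in E$ to vanish at $h_0$; so assume $F^+(h_0)>0$ and fix a real $r$ with $0<r<F^+(h_0)$. To apply Lemma~\ref{lemma:AN:rs=id:aux} to the Hoare prevision $F^+$ I first note that $X$ is $\AN_\bullet$-friendly: local convexity of $\Lform X$ yields $\AN$-friendliness (hence $\AN_{\le 1}$-friendliness), and in the ``$1$'' case the extra compactness hypothesis places $X$ under condition~1 of Definition~\ref{defn:LX:Hfriendly}. That lemma then supplies $G'\le F^+$ with $G'(h_0)>r$, and Lemma~\ref{lemma:ADN:intermezzo:2} (applicable because $\Lform X$ is locally convex with almost open addition, and $X$ is compact in the ``$1$'' case) upgrades $G'$ to a $G$ with $F^-\le G\le F^+$ and $G'\le G$. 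Thus $G\in E$ and $G(h_0)\ge G'(h_0)>r$; letting $r\uparrow F^+(h_0)$ gives $\sup_{G\in E}G(h_0)\ge F^+(h_0)$.

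The hard part is not any fresh computation but confirming that the three hypotheses of the statement align exactly with what the cited lemmas consume: Lemma~\ref{lemma:ADN:intermezzo:1} needs nothing beyond a topological space, Lemma~\ref{lemma:ADN:intermezzo:2} absorbs the almost open addition (through the superlinearity of $\check q_1$) and the compactness of the ``$1$'' case, and the reduction to $\AN_\bullet$-friendliness is what licenses Lemma~\ref{lemma:AN:rs=id:aux}. All the genuinely analytic work — squeezing a continuous linear prevision between $F^-$ and $F^+$ while pinning its value at $h_0$, via Keimel's Sandwich Theorem — has already been carried out in those lemmas, so here it remains only to chain them together.
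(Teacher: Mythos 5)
Your proof is correct and follows essentially the same route as the paper's: the same decomposition $s^\bullet_{\ADN}(F^-,F^+)=s^\bullet_{\DN}(F^-)\cap s^\bullet_{\AN}(F^+)$, with Lemma~\ref{lemma:ADN:intermezzo:1} and Lemma~\ref{lemma:ADN:intermezzo:2} doing the work of pushing the witnesses from $Q$ and $F$ into the lens. The only cosmetic difference is that you invoke Lemma~\ref{lemma:DN:s:*} and Lemma~\ref{lemma:AN:rs=id:aux} directly (and check $\AN_\bullet$-friendliness explicitly), where the paper routes through Propositions~\ref{prop:DN:weak} and~\ref{prop:AN:weak}, which package exactly those lemmas.
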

\begin{proof}
  For every (subnormalized, normalized) fork $(F^-, F^+)$, let
  $L \eqdef s^\bullet_{\ADN} (F^-, F^+)$.  One may write $L$ as
  $Q \cap F$, where $Q \eqdef s^\bullet_{\DN} (F^-)$, and
  $F \eqdef s^\bullet_{\AN} (F^+)$, and by
  Proposition~\ref{prop:DN:weak} and Proposition~\ref{prop:AN:weak},
  $r_{\DN} (Q) = F^-$, $r_{\AN} (F) = F^+$.  To show that
  $r_{\ADN} (s^\bullet_{\ADN} (F^-, F^+)) = (F^-, F^+)$, we must show
  that $r_{\DN} (L) = F^-$ and $r_{\AN} (L) = F^+$.

  Since $L \subseteq Q$, using the definition of $r_{\DN}$
  (Definition~\ref{defn:rs}), $r_{\DN} (L) \geq F^-$.  Since $r_{\DN}
  (Q) = F^-$ (and using Lemma~\ref{lemma:rDN}), for every $h \in C$,
  there is a $G' \in Q$ such that $G' (h) = F^- (h)$.  By definition
  of $Q$, $F^- \leq G'$.  By Lemma~\ref{lemma:ADN:intermezzo:1}, there
  is a $G \in \Pred^\bullet_{\Nature} (X)$ such that $F^- \leq G \leq
  F^+$ (i.e., such that $G \in L$) and $G \leq G'$.  So $r_{\DN} (L)
  (h) \leq G (h) \leq G' (h) = F^- (h)$.  It follows that $r_{\DN} (L)
  \leq F^-$, hence $r_{\DN} (L) = F^-$.

  Since $L \subseteq F$, using the definition of $r_{\AN}$, $r_{\AN}
  (L) \leq F^+$.  Since $r_{\AN} (F) = F^+$, for every $h \in C$, and
  for every real number $r < F^+ (h)$, there is a $G' \in F$ such that
  $G' (h) \geq r$.  By definition of $F$, $G' \leq F^+$.  By
  Lemma~\ref{lemma:ADN:intermezzo:2}, there is a $G \in
  \Pred^\bullet_{\Nature} (X)$ such that $F^- \leq G \leq F^+$ (i.e.,
  such that $G \in L$) and $G' \leq G$.  So $r_{\AN} (L) (h) \geq G
  (h) \geq G' (h) \geq r$.  Taking sups over $r$, $r_{\AN} (L) (h)
  \geq F^+ (h)$.  So $r_{\AN} (L) \geq F^+$, whence $r_{\AN} (L) =
  F^+$, and we are done.
\end{proof}

We sum up these results as follows.
\begin{proposition}[$\PV (\Pred^\bullet_{\Nature\;wk} (X))$ retracts
  onto $\Pred^\bullet_{\ADN\;wk} (X)$]
  \label{prop:ADN:weak}
  Let $\bullet$ be the empty superscript, ``$\leq 1$'', or ``$1$''.
  Let $X$ be a topological space such that $[X \to
  {\creal}_\sigma]_\sigma$ is locally convex and has an almost open
  addition map, for example a stably locally compact space, or more
  generally, a core-compact, core-coherent space.  Assume also that
  $X$ is compact in case $\bullet$ is ``$1$''.

  Then $r_{\ADN}$ defines a retraction of $\PV
  (\Pred^\bullet_{\Nature\;wk} (X))$ onto $\Pred^\bullet_{\ADN\;wk}
  (X)$, with associated section $s^\bullet_{\ADN}$.
\end{proposition}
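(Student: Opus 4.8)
The plan is simply to assemble the lemmas that immediately precede the statement, since each ingredient of a retraction has already been established separately, under exactly the hypotheses in force here. First I would recall that $r_{\ADN}$ is continuous from $\PV (\Pred^\bullet_{\Nature\;wk} (X))$ to $\Pred^\bullet_{\ADN\;wk} (X)$ by Lemma~\ref{lemma:ADN:r:cont}, which needs nothing of $X$ beyond being a topological space. Dually, $s^\bullet_{\ADN}$ takes its values in the Plotkin powerdomain $\PV (\Pred^\bullet_{\Nature\;wk} (X))$ by Lemma~\ref{lemma:ADN:s:def}: writing $s^\bullet_{\ADN} (F^-, F^+) = s^\bullet_{\DN} (F^-) \cap s^\bullet_{\AN} (F^+)$, this set is non-empty by Keimel's Sandwich Theorem and is a lens, being the intersection of a compact saturated set with a closed set.

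Next I would invoke the two substantive lemmas. Continuity of $s^\bullet_{\ADN}$ from $\Pred^\bullet_{\ADN\;wk} (X)$ to $\PV (\Pred^\bullet_{\Nature\;wk} (X))$ is Lemma~\ref{lemma:ADN:s:cont}, and the equation $r_{\ADN} \circ s^\bullet_{\ADN} = \mathrm{id}$ on $\Pred^\bullet_{\ADN\;wk} (X)$ is Lemma~\ref{lemma:ADN:rs=id}. Both carry precisely the standing hypotheses of the proposition --- namely that $\Lform X$, with its Scott topology, is locally convex and has an almost open addition map, together with compactness of $X$ when $\bullet$ is ``$1$'' --- so they apply verbatim. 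Combining these four facts shows that $r_{\ADN}$ and $s^\bullet_{\ADN}$ form a retraction of $\PV (\Pred^\bullet_{\Nature\;wk} (X))$ onto $\Pred^\bullet_{\ADN\;wk} (X)$, as required.

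To justify the ``for example'' clause, I would appeal to Lemma~\ref{lemma:funspace:almostopen}: when $X$ is core-compact and core-coherent --- in particular when it is stably locally compact --- $\Lform X$ is a locally convex topological cone whose addition is almost open. This is exactly the hypothesis invoked above, so all such spaces lie within the scope of the proposition, with the extra compactness requirement on $X$ needed only in the normalized case $\bullet$ is ``$1$''.

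I do not expect a genuine obstacle at this level of assembly: the real difficulty is confined to Lemma~\ref{lemma:ADN:s:cont}, whose proof rewrites the $\Box$- and $\Diamond$-preimages under $s^\bullet_{\ADN}$ as $\pi_1$- and $\pi_2$-preimages of $s^\bullet_{\DN}$-preimages by means of the two intermezzo Lemmas~\ref{lemma:ADN:intermezzo:1} and~\ref{lemma:ADN:intermezzo:2}. Those in turn rely on Walley's condition and on passing from the superlinear map $q_1$ to its largest continuous minorant $\check q_1$, which remains superlinear exactly because addition on $\Lform X$ is almost open; here I merely cite them.
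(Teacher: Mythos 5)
Your proof is correct and matches the paper's own treatment: the paper offers no separate argument for this proposition, merely stating ``we sum up these results as follows'' after Lemmas~\ref{lemma:ADN:r:cont}, \ref{lemma:ADN:s:def}, \ref{lemma:ADN:s:cont} and~\ref{lemma:ADN:rs=id}, with Lemma~\ref{lemma:funspace:almostopen} covering the ``for example'' clause exactly as you do. Nothing to add.
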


\subsection{Domain-Theoretic Consequences}
\label{sec:doma-theor-cons}

The focus in domain theory is on Scott topologies rather than on weak
topologies.  But Scott-continuity quickly follows provided we use the
following result \cite[Lemma~3.8]{JGL:qrb}: for every quasi monotone
convergence space $Z$, and every $T_0$ topological space $Z'$, every
continuous map from $Z$ to $Z'$ is continuous from $Z_\sigma$ to
$Z'_\sigma$.  Here $Z_\sigma$ is $Z$ with the Scott topology of its
specialization preorder, and a quasi monotone convergence space is a
space where this topology is finer than the original one on $Z$.  The
specialization ordering of any space of previsions with the weak
topology is the pointwise ordering $\leq$, and each of these spaces is
quasi monotone convergence, because $[h > r]$ is Scott open.  So,
given any space of previsions $Z$, $Z_\sigma$ is just the same space
with the Scott topology of its ordering.  Similarly for spaces of
forks.  Also, $\SV (Y)_\sigma = \Smyth (Y)$ and $\HV (Y)_\sigma =
\Hoare (Y)$.  To stress that we are using Scott-continuity, call a
\emph{Scott retraction} any retraction between posets equipped with
their Scott topologies.

Applying this reasoning to Proposition~\ref{prop:AN:weak}, we obtain:
\begin{proposition}
  \label{prop:AN:Scott}
  Let $\bullet$ be the empty superscript, ``$\leq 1$'', or ``$1$''.
  Let $X$ be a core-compact space.  Then $r_{\AN}$ defines a Scott
  retraction of $\Hoare (\Pred^\bullet_{\Nature\;wk} (X))$ onto
  $\Pred^\bullet_{\AN} (X)$, with associated section
  $s^\bullet_{\AN}$.  Moreover
  $r_{\AN} \circ s^\bullet_{\AN}
  \geq \identity\relax$.
\end{proposition}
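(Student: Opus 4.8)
The plan is to deduce this from Proposition~\ref{prop:AN:weak}, which already establishes the analogous statement for the weak topologies, and then to transfer the two continuity claims to the Scott topologies by the general principle recalled in the paragraph immediately preceding the statement. First I would observe that, since $X$ is core-compact, Remark~\ref{rem:Hfriendly:1} shows that $\Lform X$ is a locally convex, locally convex-compact, sober topological cone, so $X$ satisfies condition~2 of Definition~\ref{defn:LX:Hfriendly} and is therefore $\AN_1$-friendly. Since $\AN_1$-friendliness implies $\AN$-friendliness (and $\AN_{\leq 1}$-friendliness), $X$ is $\AN_\bullet$-friendly for every choice of $\bullet$. This licenses the use of Proposition~\ref{prop:AN:weak} in full: $r_{\AN}$ is continuous from $\HV (\Pred^\bullet_{\Nature\;wk} (X))$ to $\Pred^\bullet_{\AN\;wk} (X)$, $s^\bullet_{\AN}$ is continuous in the reverse direction, $r_{\AN} \circ s^\bullet_{\AN}$ is the identity, and $s^\bullet_{\AN} \circ r_{\AN} \geq \identity\relax$.

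Next I would upgrade both continuity statements from the weak to the Scott topologies using \cite[Lemma~3.8]{JGL:qrb}: every continuous map from a quasi monotone convergence space $Z$ to a $T_0$ space $Z'$ is continuous from $Z_\sigma$ to $Z'_\sigma$. For $r_{\AN}$, the domain $\HV (\Pred^\bullet_{\Nature\;wk} (X))$ is a quasi monotone convergence space, because the Scott topology of inclusion is always finer than the lower Vietoris topology (each $\Diamond V$ is Scott open), while the codomain $\Pred^\bullet_{\AN\;wk} (X)$ is $T_0$; the lemma then yields continuity of $r_{\AN}$ from $\HV (\Pred^\bullet_{\Nature\;wk} (X))_\sigma$ to $\Pred^\bullet_{\AN\;wk} (X)_\sigma$. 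For $s^\bullet_{\AN}$, the domain $\Pred^\bullet_{\AN\;wk} (X)$ is a quasi monotone convergence space since each subbasic open $[h > r]$ is Scott open, and the codomain $\HV (\Pred^\bullet_{\Nature\;wk} (X))$ is $T_0$ (its specialization preorder, inclusion, is a partial order), so $s^\bullet_{\AN}$ is continuous from $\Pred^\bullet_{\AN\;wk} (X)_\sigma$ to $\HV (\Pred^\bullet_{\Nature\;wk} (X))_\sigma$.

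Finally I would identify the relevant $Z_\sigma$ spaces using the identifications recorded just before the statement, namely $\HV (\Pred^\bullet_{\Nature\;wk} (X))_\sigma = \Hoare (\Pred^\bullet_{\Nature\;wk} (X))$ and $\Pred^\bullet_{\AN\;wk} (X)_\sigma = \Pred^\bullet_{\AN} (X)$. This turns the two continuity statements into exactly the Scott-continuity of $r_{\AN}$ and $s^\bullet_{\AN}$ between $\Hoare (\Pred^\bullet_{\Nature\;wk} (X))$ and $\Pred^\bullet_{\AN} (X)$. Since the relations $r_{\AN} \circ s^\bullet_{\AN} = \identity\relax$ and $s^\bullet_{\AN} \circ r_{\AN} \geq \identity\relax$ are purely order-theoretic, they are unaffected by the change of topology and carry over verbatim; in particular $r_{\AN} \circ s^\bullet_{\AN} \geq \identity\relax$ holds, completing the Scott retraction. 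I expect no serious obstacle here, since all the substantive content was already isolated in Proposition~\ref{prop:AN:weak}; the only things to verify are the (routine) quasi monotone convergence and $T_0$ hypotheses needed to invoke \cite[Lemma~3.8]{JGL:qrb}, and the one point deserving genuine care is confirming that core-compactness alone secures $\AN_\bullet$-friendliness simultaneously for all three values of $\bullet$.
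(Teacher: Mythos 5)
Your proposal is correct and is essentially the paper's own argument: the paper proves this proposition by exactly the route you describe, namely invoking Proposition~\ref{prop:AN:weak} (after noting, via Remark~\ref{rem:Hfriendly:1}, that core-compact spaces are $\AN_\bullet$-friendly for all three choices of $\bullet$) and transferring continuity to the Scott topologies with \cite[Lemma~3.8]{JGL:qrb}, using the identifications $\HV(Y)_\sigma = \Hoare(Y)$ and $Z_\sigma = Z$ with the Scott topology of the pointwise ordering for spaces of previsions. Your verification of the quasi monotone convergence and $T_0$ hypotheses, and your observation that the order-theoretic identities carry over unchanged, fill in precisely the details the paper leaves implicit.
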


{\updated Indeed, we recall from Remark~\ref{rem:Hfriendly:1} that
  every core-compact space is $\AN_\bullet$-friendly.}
 
Applying this reasoning to Proposition~\ref{prop:DN:weak}, we obtain:
\begin{proposition}
  \label{prop:DN:Scott}
  Let $\bullet$ be the empty superscript, ``$\leq 1$'', or ``$1$''.
  Let $X$ be a topological space.  Then $r_{\DN}$ defines a Scott
  retraction of $\Smyth (\Pred^\bullet_{\Nature\;wk} (X))$ onto
  $\Pred^\bullet_{\DN} (X)$, with associated section
  $s^\bullet_{\DN}$.  Moreover
  $r_{\DN} \circ s^\bullet_{\DN}
  \leq \identity\relax$.
\end{proposition}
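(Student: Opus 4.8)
The plan is to deduce this Scott-topological statement from its weak-topological counterpart, Proposition~\ref{prop:DN:weak}, by invoking the quasi monotone convergence principle \cite[Lemma~3.8]{JGL:qrb} recalled at the opening of this section. From Proposition~\ref{prop:DN:weak} I already have that $r_{\DN}$ is continuous from $\SV (\Pred^\bullet_{\Nature\;wk} (X))$ to $\Pred^\bullet_{\DN\;wk} (X)$ and that $s^\bullet_{\DN}$ is continuous in the reverse direction, with $r_{\DN} \circ s^\bullet_{\DN} = \identity\relax$ and $s^\bullet_{\DN} \circ r_{\DN} \leq \identity\relax$ (the latter being an order relation, valid because the specialization order of $\SV (\Pred^\bullet_{\Nature\;wk} (X))$ is reverse inclusion $\supseteq$, cf.\ Lemma~\ref{lemma:DN:sr}).

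The steps are then as follows. First I would record that all spaces in sight are $T_0$ and that those serving as \emph{sources} are quasi monotone convergence spaces. For the prevision spaces this is noted at the start of the section, since each subbasic weak open $[h>r]$ is Scott open and the specialization order is the pointwise order $\leq$; hence $\Pred^\bullet_{\DN\;wk} (X)_\sigma$ is just $\Pred^\bullet_{\DN} (X)$ with its Scott topology, and likewise for the $\Nature$ spaces. For the Smyth space I would use the identity $\SV (Y)_\sigma = \Smyth (Y)$ recorded there, with $Y \eqdef \Pred^\bullet_{\Nature\;wk} (X)$. With these identifications in hand I would apply \cite[Lemma~3.8]{JGL:qrb} twice: to $r_{\DN}$ (source $\SV (\ldots)$, target the $T_0$ space $\Pred^\bullet_{\DN\;wk} (X)$), yielding continuity of $r_{\DN}$ from $\Smyth (\Pred^\bullet_{\Nature\;wk} (X))$ to $\Pred^\bullet_{\DN} (X)$; and to $s^\bullet_{\DN}$ (source $\Pred^\bullet_{\DN\;wk} (X)$, target the $T_0$ space $\SV (\ldots)$), yielding continuity of $s^\bullet_{\DN}$ from $\Pred^\bullet_{\DN} (X)$ to $\Smyth (\Pred^\bullet_{\Nature\;wk} (X))$. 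Finally, since the composites $r_{\DN} \circ s^\bullet_{\DN}$ and $s^\bullet_{\DN} \circ r_{\DN}$, as well as the ordering $\leq$, depend only on the underlying maps and on the (unchanged) specialization orders, the equality $r_{\DN} \circ s^\bullet_{\DN} = \identity\relax$ and the inequality $s^\bullet_{\DN} \circ r_{\DN} \leq \identity\relax$ of Proposition~\ref{prop:DN:weak} carry over verbatim; in particular the ``moreover'' clause holds. This gives the asserted Scott retraction, and in fact a Scott projection.

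The hard part will be the passage for $r_{\DN}$, which requires the \emph{source} $\SV (\Pred^\bullet_{\Nature\;wk} (X))$ to be a quasi monotone convergence space, equivalently that every subbasic upper-Vietoris open $\Box V$ be Scott open in $(\Smyth (\Pred^\bullet_{\Nature\;wk} (X)), \supseteq)$, equivalently that $\SV (\ldots)_\sigma = \Smyth (\ldots)$. This is precisely the content recorded at the start of Section~\ref{sec:doma-theor-cons}, and it is the only point with real substance: by contrast, the quasi-monotone-convergence property of the prevision spaces, the $T_0$ separation of all spaces, and the transfer of the order relations are routine bookkeeping on top of Proposition~\ref{prop:DN:weak}. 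I would therefore concentrate any remaining verification effort on this single identification.
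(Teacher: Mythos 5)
Your proposal is correct and follows essentially the same route as the paper: the paper's entire proof of Proposition~\ref{prop:DN:Scott} consists of applying the quasi monotone convergence lemma \cite[Lemma~3.8]{JGL:qrb} to Proposition~\ref{prop:DN:weak}, using the identifications $\SV(Y)_\sigma = \Smyth(Y)$ and $\Pred^\bullet_{\DN\;wk}(X)_\sigma = \Pred^\bullet_{\DN}(X)$ recorded at the start of Section~\ref{sec:doma-theor-cons}, exactly as you do. You merely spell out the bookkeeping (which spaces must be quasi monotone convergence, which must be $T_0$, and why the order relations transfer) that the paper leaves implicit.
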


Applying it to Proposition~\ref{prop:ADN:weak}, finally, we obtain:
\begin{proposition}
  \label{prop:ADN:Scott}
  Let $\bullet$ be the empty superscript, ``$\leq 1$'', or ``$1$''.
  Let $X$ be a core-compact, core-coherent space (and compact if
  $\bullet$ is ``$1$'').  Then $r_{\ADN}$ defines a Scott retraction
  of $\Plotkin (\Pred^\bullet_{\Nature\;wk} (X))$ onto
  $\Pred^\bullet_{\ADN} (X)$, with associated section $s^\bullet_{\ADN}$.
\end{proposition}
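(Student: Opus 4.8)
The plan is to derive this statement from its weak-topology counterpart, Proposition~\ref{prop:ADN:weak}, by transporting continuity along the conversion principle recalled at the start of this section (\cite[Lemma~3.8]{JGL:qrb}). First I observe that the hypotheses match: by Lemma~\ref{lemma:funspace:almostopen}, a core-compact, core-coherent $X$ makes $\Lform X$ a locally convex topological cone with almost open addition, which, together with compactness when $\bullet$ is ``$1$'', is exactly what Proposition~\ref{prop:ADN:weak} requires. Hence $r_{\ADN}$ is weakly continuous from $\PV(\Pred^\bullet_{\Nature\;wk}(X))$ to $\Pred^\bullet_{\ADN\;wk}(X)$, $s^\bullet_{\ADN}$ is weakly continuous in the reverse direction, and $r_{\ADN} \circ s^\bullet_{\ADN} = \identity\relax$. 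The retraction equation concerns only the underlying sets and maps, so it transfers verbatim to the Scott-topology versions; the work is entirely in upgrading the two continuity statements.

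To upgrade $r_{\ADN}$, I apply the conversion principle with source $\PV(\Pred^\bullet_{\Nature\;wk}(X))$ and target $\Pred^\bullet_{\ADN\;wk}(X)$. The target is $T_0$, since its specialization order is the pointwise product order $\leq \times \leq$ on pairs of previsions. For the source to be a quasi monotone convergence space I must check that every Vietoris-open is Scott-open for the topological Egli-Milner order $\sqsubseteq_{\mathrm{EM}}$, and it suffices to treat the subbasic opens $\Box V$ and $\Diamond V$. Upward-closedness is immediate from the definition of $\sqsubseteq_{\mathrm{EM}}$ (for $\Box V$ use $\upc L \supseteq \upc L'$; for $\Diamond V$ use $cl(L) \subseteq cl(L')$ together with the fact that an open set meeting a closure already meets the set), and Scott-openness then follows from the standard inaccessibility-by-directed-suprema property of the Plotkin powerdomain, which amounts to $\PV(Y)_\sigma = \Plotkin(Y)$. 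With both conditions in hand, $r_{\ADN}$ becomes continuous from $\Plotkin(\Pred^\bullet_{\Nature\;wk}(X))$ to $\Pred^\bullet_{\ADN}(X)$. Symmetrically, for $s^\bullet_{\ADN}$ the source $\Pred^\bullet_{\ADN\;wk}(X)$ is quasi monotone convergence because each subbasic weak open $[h>b]^-$, $[h>b]^+$ is Scott-open in the pointwise order, while the target $\PV(\Pred^\bullet_{\Nature\;wk}(X))$ is $T_0$ since $\sqsubseteq_{\mathrm{EM}}$ is antisymmetric on lenses via their canonical form; hence $s^\bullet_{\ADN}$ is continuous from $\Pred^\bullet_{\ADN}(X)$ to $\Plotkin(\Pred^\bullet_{\Nature\;wk}(X))$.

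I expect the main obstacle to be the source-side quasi monotone convergence condition for the Plotkin powerdomain, i.e.\ the identity $\PV(Y)_\sigma = \Plotkin(Y)$. For the Smyth and Hoare powerdomains the analogous identities $\SV(Y)_\sigma = \Smyth(Y)$ and $\HV(Y)_\sigma = \Hoare(Y)$ were simply quoted, but the Egli-Milner order and the Vietoris topology interact less transparently, so I would isolate the Scott-openness of $\Box V$ and $\Diamond V$ as an explicit preliminary before assembling the two upgraded maps, together with the inherited equation $r_{\ADN} \circ s^\bullet_{\ADN} = \identity\relax$, into the desired Scott retraction.
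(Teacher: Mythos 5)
Your proposal is correct and follows exactly the paper's own route: the paper derives Proposition~\ref{prop:ADN:Scott} by applying \cite[Lemma~3.8]{JGL:qrb} to Proposition~\ref{prop:ADN:weak}, using Lemma~\ref{lemma:funspace:almostopen} to see that the hypotheses match, the Scott-openness of the weak subbasic opens for the fork side, and the (asserted) identities $\SV(Y)_\sigma=\Smyth(Y)$, $\HV(Y)_\sigma=\Hoare(Y)$ and their Plotkin analogue for the powerspace side. The only difference is that you make explicit the verification that $\Box V$ and $\Diamond V$ are Scott-open for $\sqsubseteq_{\mathrm{EM}}$ and that $\PV$ of the prevision space is $T_0$, details the paper leaves implicit.
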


There is still a bit of the weak topology lying around in the latter
three propositions, in the various spaces $\Pred^\bullet_{\Nature\;wk}
(X)$ of linear previsions involved.

The Scott topology is finer than the weak topology on any space of
previsions.  Up to the canonical isomorphism between
$\Pred^\bullet_{\Nature} (X)$ and $\Val^\bullet (X)$ (for whichever
superscript $\bullet$), the \emph{Kirch-Tix Theorem} states that,
whenever $X$ is a continuous dcpo, the Scott and weak topologies
coincide on $\Pred_{\Nature} (X)$ \cite[Satz~4.10]{Tix:bewertung}, and
on $\Pred^{\leq 1}_{\Nature} (X)$ \cite[Satz~8.6]{Kirch:bewertung}.

Given a poset $X$, let $X_\bot$ be $X$ plus a fresh bottom element
$\bot$, below all points of $X$.  By a trick due to Edalat
\cite[Section~3]{Edalat:int}, the spaces $\Val^1 (X_\bot)$ of all
normalized continuous valuations on $X_\bot$ and $\Val^{\leq 1} (X)$
of all subnormalized continuous valuations on $X$ are
order-isomorphic, and also homeomorphic in their weak topologies.
(For $\nu \in \Val^1 (X_\bot)$, define a subnormalized valuation on
$X$ by considering $\nu$ restricted to the opens contained in $X$.
Conversely, for $\nu \in \Val^{\leq 1} (X)$, define
$\nu' \in \Val^1 (X_\bot)$ by $\nu' (U) \eqdef \nu (U)$ if $U$ is open
in $X$, and $\nu' (X_\bot) \eqdef 1$.)  It follows that, on pointed
continuous dcpos, that is, on continuous dcpos that we can write as
$X_\bot$ for some, necessarily continuous, dcpo $X$, the Scott and
weak topologies also coincide on $\Pred^1_{\Nature} (X)$.

Every continuous dcpo is locally compact, and every pointed poset is
compact in its Scott topology.  We therefore obtain the following
purely domain-theoretic statements (all spaces come with their Scott
topologies) from Proposition~\ref{prop:AN:Scott},
Proposition~\ref{prop:DN:Scott}, and Proposition~\ref{prop:ADN:Scott}
respectively.
\begin{proposition}
  \label{prop:AN:Scott2}
  Let $\bullet$ be the empty superscript, ``$\leq 1$'', or ``$1$''.
  Let $X$ be a continuous dcpo (and pointed if $\bullet$ is ``$1$'').
  Then $r_{\AN}$ defines a Scott retraction (even a
  coembedding-coprojection pair) of $\Hoare (\Pred^\bullet_{\Nature}
  (X))$ onto $\Pred^\bullet_{\AN} (X)$, with associated section
  $s^\bullet_{\AN}$.
\end{proposition}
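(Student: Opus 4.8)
The plan is to deduce this from the weak-topology statement, Proposition~\ref{prop:AN:weak} as repackaged through the quasi monotone convergence argument into Proposition~\ref{prop:AN:Scott}, by arguing that the only discrepancy between the two statements --- namely the topology placed on the inner space of linear previsions --- disappears when $X$ is a (pointed) continuous dcpo.

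First I would observe that every continuous dcpo $X$ is locally compact, hence core-compact. By Remark~\ref{rem:Hfriendly:1}, $X$ is then $\AN_\bullet$-friendly for each admissible $\bullet$, so Proposition~\ref{prop:AN:Scott} applies verbatim: $r_{\AN}$ and $s^\bullet_{\AN}$ form a Scott retraction, indeed a coembedding-coprojection pair, of $\Hoare(\Pred^\bullet_{\Nature\;wk}(X))$ onto $\Pred^\bullet_{\AN}(X)$. Note that no compactness hypothesis is required for this angelic statement, even when $\bullet$ is ``$1$'', in contrast with the erratic case; the pointedness assumption will only enter below.

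The remaining step is to replace the weak topology on the inner prevision space by its Scott topology. Here I would invoke the Kirch-Tix Theorem: when $X$ is a continuous dcpo, the Scott and weak topologies coincide on $\Pred_{\Nature}(X)$ and on $\Pred^{\leq 1}_{\Nature}(X)$, which handles the cases where $\bullet$ is nothing or ``$\leq 1$''. For $\bullet$ equal to ``$1$'', I would use the assumed pointedness of $X$ together with Edalat's trick, which identifies normalized valuations on the pointed continuous dcpo with subnormalized valuations and thereby transports the Kirch-Tix coincidence to $\Pred^1_{\Nature}(X)$; so the two topologies agree there as well.

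Finally, since $\Pred^\bullet_{\Nature\;wk}(X)$ and $\Pred^\bullet_{\Nature}(X)$ are then one and the same topological space, they have the same non-empty closed subsets, so $\Hoare(\Pred^\bullet_{\Nature\;wk}(X))$ and $\Hoare(\Pred^\bullet_{\Nature}(X))$ coincide as posets under inclusion, hence as spaces in their common Scott topology. Transporting the conclusion of Proposition~\ref{prop:AN:Scott} along this identification yields exactly the asserted Scott retraction (coembedding-coprojection pair) onto $\Pred^\bullet_{\AN}(X)$. The argument is almost entirely bookkeeping; the only genuine content is imported, namely the Kirch-Tix coincidence (and Edalat's reduction in the normalized case), so I expect no serious obstacle beyond correctly matching the ``$1$'' case to its pointedness hypothesis.
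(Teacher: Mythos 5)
Your proposal is correct and follows exactly the paper's own route: Proposition~\ref{prop:AN:Scott} applies because a continuous dcpo is locally compact hence core-compact (hence $\AN_\bullet$-friendly), and the residual weak topology on the inner space of linear previsions is then removed via the Kirch--Tix Theorem, with Edalat's trick handling the normalized case under the pointedness hypothesis. No discrepancies to report.
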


\begin{proposition}
  \label{prop:DN:Scott2}
  Let $\bullet$ be the empty superscript, ``$\leq 1$'', or ``$1$''.
  Let $X$ be a continuous dcpo.  Then $r_{\DN}$ defines a Scott
  retraction (even an embedding-projection pair) of $\Smyth
  (\Pred^\bullet_{\Nature} (X))$ onto $\Pred^\bullet_{\DN} (X)$, with
  associated section $s^\bullet_{\DN}$.
\end{proposition}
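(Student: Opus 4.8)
The plan is to deduce this directly from Proposition~\ref{prop:DN:Scott}, which already asserts that $r_{\DN}$ and $s^\bullet_{\DN}$ form a Scott retraction (indeed an embedding-projection pair) of $\Smyth (\Pred^\bullet_{\Nature\;wk} (X))$ onto $\Pred^\bullet_{\DN} (X)$ for \emph{any} topological space $X$; the embedding-projection inequality $s^\bullet_{\DN} \circ r_{\DN} \leq \identity{}$ comes from Lemma~\ref{lemma:DN:sr} and depends only on the specialization order, so it transfers with no extra work. The sole discrepancy between Proposition~\ref{prop:DN:Scott} and the present statement is that the inner space of linear previsions there carries the weak topology, $\Pred^\bullet_{\Nature\;wk} (X)$, whereas here it should carry its Scott topology, $\Pred^\bullet_{\Nature} (X)$. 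Hence the whole task reduces to showing that, for a continuous dcpo $X$, the weak and Scott topologies \emph{coincide} on $\Pred^\bullet_{\Nature} (X)$. Once this is established, the two spaces are literally equal, so their upper Vietoris spaces $\SV (\cdot)$ coincide, and since $\SV (Y)_\sigma = \Smyth (Y)$, so do the Smyth powerdomains in their Scott topologies; Proposition~\ref{prop:DN:Scott} then reads verbatim as the desired conclusion.

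To establish the coincidence I would use the Riesz isomorphism between $\Pred^\bullet_{\Nature} (X)$ and $\Val^\bullet (X)$, which respects both the weak and the Scott topologies. When $\bullet$ is the empty superscript or ``$\leq 1$'', the Kirch-Tix Theorem gives exactly the required equality of the Scott and weak topologies on $\Val (X)$, respectively $\Val^{\leq 1} (X)$, for every continuous dcpo $X$; these two cases are therefore immediate, with no further hypothesis on $X$.

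The normalized case, where $\bullet$ is ``$1$'', is the crux, and I expect it to be the main obstacle, because the Kirch-Tix statement does not apply directly to $\Val^1 (X)$. The route I would take is Edalat's order-isomorphism, which is also a homeomorphism for the weak topologies, between $\Val^1 (X_\bot)$ and $\Val^{\leq 1} (X)$: since $X_\bot$ is a \emph{pointed} continuous dcpo, the Scott and weak topologies already agree on $\Val^1 (X_\bot)$, and I would try to transport this coincidence to $\Val^1 (X)$ by realizing the latter inside $\Val^1 (X_\bot)$ while tracking the normalization constraint $\nu (X) = 1$. The delicate point is precisely whether this transport survives the absence of a least element; if it cannot be made to work unconditionally, one must add the hypothesis that $X$ is pointed when $\bullet$ is ``$1$'', exactly as in Proposition~\ref{prop:AN:Scott2}. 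In either case, all the genuinely hard analysis is confined to Proposition~\ref{prop:DN:Scott} (through Lemma~\ref{lemma:DN:s:compact} and Keimel's Sandwich Theorem) and to the cited Kirch-Tix and Edalat coincidence theorems, the present proposition being only a topological bookkeeping step layered on top of them.
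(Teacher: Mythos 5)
Your proposal follows essentially the same route as the paper: Proposition~\ref{prop:DN:Scott} supplies the retraction for the weak topology, and the passage to $\Pred^\bullet_{\Nature} (X)$ with its Scott topology is exactly the paper's appeal to the Riesz isomorphism plus the Kirch--Tix theorem when $\bullet$ is empty or ``$\leq 1$'', and to Edalat's trick via $X_\bot$ when $\bullet$ is ``$1$''. Your caution about pointedness in the normalized case is well taken: the paper only establishes the weak--Scott coincidence on $\Pred^1_{\Nature} (X)$ for \emph{pointed} continuous dcpos, yet states Proposition~\ref{prop:DN:Scott2} without that hypothesis, so your hedged version is, if anything, the more defensible one.
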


\begin{proposition}
  \label{prop:ADN:Scott2}
  Let $\bullet$ be the empty superscript, ``$\leq 1$'', or ``$1$''.
  Let $X$ be a coherent, continuous dcpo (and pointed if $\bullet$ is
  ``$1$'').  Then $r_{\ADN}$ defines a Scott retraction of $\Plotkin
  (\Pred^\bullet_{\Nature} (X))$ onto $\Pred^\bullet_{\ADN} (X)$, with
  associated section $s^\bullet_{\ADN}$.
\end{proposition}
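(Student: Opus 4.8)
The plan is to deduce this statement from Proposition~\ref{prop:ADN:Scott} in exactly the way announced in the text, namely as the domain-theoretic shadow of the weak-topology retraction, mirroring how Proposition~\ref{prop:AN:Scott2} and Proposition~\ref{prop:DN:Scott2} are obtained from Proposition~\ref{prop:AN:Scott} and Proposition~\ref{prop:DN:Scott}. Two things have to be done: verify that the hypotheses of Proposition~\ref{prop:ADN:Scott} are satisfied, and show that the inner space $\Pred^\bullet_{\Nature\;wk} (X)$ may be replaced by $\Pred^\bullet_{\Nature} (X)$ with its Scott topology.

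First I would check the hypotheses of Proposition~\ref{prop:ADN:Scott}. Every continuous dcpo is locally compact in its Scott topology, hence core-compact; and a locally compact coherent space is core-coherent, as recalled in the Coherence paragraph of Section~\ref{sec:preliminaries}. So a coherent continuous dcpo is core-compact and core-coherent, as required. In the case where $\bullet$ is ``$1$'', Proposition~\ref{prop:ADN:Scott} additionally demands compactness; but a pointed dcpo has a least element $\bot$, which lies in every non-empty Scott-open set, so $X$ is compact in its Scott topology. Thus Proposition~\ref{prop:ADN:Scott} applies and gives a Scott retraction of $\Plotkin (\Pred^\bullet_{\Nature\;wk} (X))$ onto $\Pred^\bullet_{\ADN} (X)$, with section $s^\bullet_{\ADN}$.

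Next I would eliminate the residual weak topology in the inner space. On a continuous dcpo the Scott and weak topologies coincide on the space of linear previsions: for $\bullet$ empty or ``$\leq 1$'' this is the Kirch--Tix Theorem, read through the Riesz isomorphism between $\Pred^\bullet_{\Nature} (X)$ and $\Val^\bullet (X)$; for $\bullet$ equal to ``$1$'' it follows by writing the pointed continuous dcpo as $X = X'_\bot$ and invoking Edalat's trick to transfer the subnormalized result to normalized valuations. Consequently $\Pred^\bullet_{\Nature\;wk} (X)$ and $\Pred^\bullet_{\Nature} (X)$ are the same topological space, whence $\Plotkin (\Pred^\bullet_{\Nature\;wk} (X))$ with its Scott topology coincides with $\Plotkin (\Pred^\bullet_{\Nature} (X))$. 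Combining this identification with the previous step yields the claimed retraction.

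The delicate point is the last step in the normalized case. The Kirch--Tix Theorem is stated for unbounded and for subnormalized previsions, so the passage to $\Pred^1_{\Nature} (X)$ is not direct; the Edalat trick supplies an order-isomorphism and weak-topology homeomorphism between $\Val^1 (X'_\bot)$ and $\Val^{\leq 1} (X')$, and what must be verified is that this correspondence is also compatible with the Scott topologies, so that coincidence of the Scott and weak topologies propagates from the subnormalized to the normalized setting. Once that compatibility is in hand, the two steps assemble with no further work, $r_{\ADN}$ serving as the retraction and $s^\bullet_{\ADN}$ as its section.
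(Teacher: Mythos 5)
Your proposal is correct and follows essentially the same route as the paper: verify that a coherent continuous dcpo (pointed in the normalized case) is core-compact, core-coherent and compact so that Proposition~\ref{prop:ADN:Scott} applies, then identify $\Pred^\bullet_{\Nature\;wk}(X)$ with $\Pred^\bullet_{\Nature}(X)$ via Kirch--Tix and Edalat's trick. The ``delicate point'' you flag is harmless: an order-isomorphism is automatically a homeomorphism for the Scott topologies of the respective orderings, so the coincidence of Scott and weak topologies transfers from $\Val^{\leq 1}(X')$ to $\Val^1(X'_\bot)$ with no further work.
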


Let us draw a few domain-theoretic consequences of the above results.
All of these will state that, under suitable conditions, spaces of
Smyth, resp.\ Hoare previsions, and of forks are continuous dcpos,
with natural bases; and that the Scott and the weak topologies will
coincide on such spaces.

When $X$ is a continuous dcpo, $\Val (X)$ is a continuous dcpo, with a
basis of \emph{simple valuations}, i.e., valuations of the form
$\sum_{i=1}^n a_i \delta_{x_i}$, where $a_i \in \Rp$, $x_i \in X$
\cite[Theorem~IV.9.16]{GHKLMS:contlatt}.  This is an extension of
Jones' original theorem, that $\Val^{\leq 1} (X)$ is a continuous
dcpo, with a basis of subnormalized simple valuations
\cite[Chapter~5]{Jones:proba}.  Using Edalat's trick, we obtain a
similar result for $\Val^1 (X)$ and normalized simple valuations,
provided $X$ is also pointed.  Notice that the isomorphism with linear
previsions yields bases of $\Pred_{\Nature} (X)$ (resp., $\Pred^{\leq
  1}_{\Nature} (X)$, $\Pred^1_{\Nature} (X)$) consisting of
\emph{simple linear previsions} (resp., subnormalized, normalized) of
the form $h \mapsto \sum_{i=1}^n a_i h (x_i)$.

In turn, if $Y$ is a continuous dcpo, with basis $B$, then $\Hoare
(Y)$ is a continuous dcpo, too, and a basis is given by the subsets of
the form $\dc E$, where $E$ is a finite non-empty subset of $B$.  (See
\cite[Corollary~IV.8.7]{GHKLMS:contlatt}, which does not mention the
basis explicitly, or \cite[Theorem~6.2.10, Item~1]{AJ:domains}, which
does, but in less explicit a form; in the latter case, one should also
note that our version of the Hoare powerdomain coincides with theirs,
by their own Theorem~6.2.13.)  Now, for any retraction $r \colon D \to
E$, where $D$ is a continuous dcpo, $E$ is also a continuous dcpo, and
a basis of $E$ is given by the image under $r$ of a basis of $D$
\cite[Lemma~3.1.3]{AJ:domains}.  Using
Proposition~\ref{prop:AN:Scott2}, we obtain:
\begin{proposition}
  \label{prop:AN:base}
  Let $X$ be a continuous dcpo.  Then $\Pred_{\AN} (X)$ (resp.,
  $\Pred^{\leq 1}_{\AN} (X)$) is a continuous dcpo, with basis given
  by the finite non-empty sups of simple (resp., and subnormalized)
  linear previsions:
  \[
  h \mapsto \max_{i=1}^m \sum_{j=1}^{n_i} a_{ij} h (x_{ij})
  \]
  where $m\geq 1$ (resp., and $\sum_{j=1}^{n_i} a_{ij} \leq 1$ for
  every $i$).

  If $X$ is a pointed continuous dcpo, then $\Pred^1_{\AN} (X)$ is a
  pointed continuous dcpo, with basis given by the finite sups of
  simple normalized previsions (i.e., $\sum_{j=1}^{n_i} a_{ij} = 1$
  for every $i$).  The least element is $h \mapsto h (\bot)$, where
  $\bot$ is the least element of $X$.
\end{proposition}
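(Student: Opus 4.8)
The plan is to assemble the statement from three ingredients that are already in place: the description of a basis of $\Pred^\bullet_{\Nature} (X)$ by simple linear previsions (recalled in the paragraph preceding the proposition), the standard description of a basis of a Hoare powerdomain, and the fact that $r_{\AN}$ is a Scott retraction (Proposition~\ref{prop:AN:Scott2}), together with the general principle that a basis of the target of a retraction is the image of a basis of the source \cite[Lemma~3.1.3]{AJ:domains}.

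First I would recall that, for $X$ a continuous dcpo (pointed when $\bullet$ is ``$1$''), $\Pred^\bullet_{\Nature} (X)$ is a continuous dcpo with basis $B$ consisting of the simple linear previsions $h \mapsto \sum_{j} a_j h (x_j)$, where the coefficients are unconstrained in the plain case, satisfy $\sum_j a_j \leq 1$ in the subnormalized case, and $\sum_j a_j = 1$ in the normalized case. Since $\Pred^\bullet_{\Nature} (X)$ is then a continuous dcpo with basis $B$, the cited Hoare-powerdomain result yields that $\Hoare (\Pred^\bullet_{\Nature} (X))$ is a continuous dcpo whose basis consists of the sets $\dc E$, with $E = \{G_1, \ldots, G_m\}$ ranging over the finite non-empty subsets of $B$.

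The core step is then to invoke Proposition~\ref{prop:AN:Scott2}, by which $r_{\AN}$ is a Scott retraction of $\Hoare (\Pred^\bullet_{\Nature} (X))$ onto $\Pred^\bullet_{\AN} (X)$. By \cite[Lemma~3.1.3]{AJ:domains}, $\Pred^\bullet_{\AN} (X)$ is a continuous dcpo and a basis is obtained as the image under $r_{\AN}$ of the basis just described. It remains to compute this image. For $E = \{G_1, \ldots, G_m\}$ I would observe that $r_{\AN} (\dc E) (h) = \sup_{G \in \dc E} G (h) = \max_{i=1}^m G_i (h)$, the supremum over the downward closure being already attained on the finite set $E$, since every $G \in \dc E$ is dominated by some $G_i$ while each $G_i$ itself lies in $\dc E$. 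Writing $G_i = (h \mapsto \sum_{j=1}^{n_i} a_{ij} h (x_{ij}))$ produces exactly the announced form $h \mapsto \max_{i=1}^m \sum_{j=1}^{n_i} a_{ij} h (x_{ij})$, the coefficient constraints being inherited verbatim from $B$.

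Finally, for the least element in the normalized pointed case, I would note that when $X$ is pointed the space $\Pred^1_{\Nature} (X)$ has least element $h \mapsto h (\bot)$ (the linear prevision associated with $\delta_\bot$), so the least element of $\Hoare (\Pred^1_{\Nature} (X))$ is its downward closure $\dc (h \mapsto h (\bot))$; since $r_{\AN}$ is a monotone retraction it carries the least element to the least element, and evaluating gives $h \mapsto h (\bot)$. No step here is a genuine obstacle: the only points requiring care are the collapse of the supremum over $\dc E$ to a finite maximum, and the bookkeeping of the subnormalization and normalization constraints as they pass from $B$ through $r_{\AN}$; everything else is a direct assembly of the quoted results.
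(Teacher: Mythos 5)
Your proposal is correct and follows exactly the route the paper takes: bases of $\Pred^\bullet_{\Nature} (X)$ by simple linear previsions, the $\dc E$ basis of the Hoare powerdomain, and transport of bases along the Scott retraction $r_{\AN}$ of Proposition~\ref{prop:AN:Scott2} via \cite[Lemma~3.1.3]{AJ:domains}, followed by the computation of $r_{\AN} (\dc E)$ as a finite max. The collapse of the supremum over $\dc E$ to a maximum over $E$ and the treatment of the least element in the pointed normalized case are both handled correctly.
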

Recall that,
when $Y$ is a continuous dcpo, then the Scott and the lower Vietoris
topologies coincide on $\Hoare (Y)$ \cite[Section~6.3.3]{schalk:diss}.
In particular, under the assumptions of
Proposition~\ref{prop:AN:base}, $\Hoare (\Pred_{\Nature} (X)) = \HV
(\Pred_{\Nature} (X)) = \HV (\Pred_{\Nature\;wk} (X))$ (by the
Kirch-Tix Theorem), and similarly in the subnormalized and normalized
cases.

Now, every section is a topological embedding.  Under the same
assumptions as above, Proposition~\ref{prop:AN:weak} and
Proposition~\ref{prop:AN:Scott2} imply that $\Pred_{\AN\;wk} (X)$ and
$\Pred_{\AN} (X)$ both embed into the same space $\Hoare
(\Pred_{\Nature} (X)) = \HV (\Pred_{\Nature\;wk} (X))$.  Hence they
have the same topology:
\begin{proposition}
  \label{prop:AN:weak=Scott}
  Let $X$ be a continuous dcpo.  The Scott topology coincides with the
  weak topology on $\Pred_{\AN} (X)$, on $\Pred^{\leq 1}_{\AN} (X)$;
  also on $\Pred^1_{\AN} (X)$ if $X$ is additionally assumed to be
  pointed.
\end{proposition}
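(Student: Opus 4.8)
The plan is to present $\Pred^\bullet_{\AN}(X)$, equipped with either the weak or the Scott topology, as a subspace of one and the same ambient space through the common section $s^\bullet_{\AN}$, and then to exploit the fact that a fixed subspace of a fixed space carries a fixed topology. First I would record the elementary observation that the section of any retraction is a topological embedding: if $r\circ s=\identity\relax$ then $s$ is continuous and injective, and the restriction of $r$ to the image of $s$ is a continuous two-sided inverse of $s$ onto that image, so $s$ is a homeomorphism onto $\mathrm{im}(s)$. I will apply this twice, to the two retractions already established for the pair $r_{\AN}$, $s^\bullet_{\AN}$.

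The second step identifies the common ambient space. By the Kirch--Tix Theorem --- using Edalat's trick in the ``$1$'' case, which is exactly why $X$ is required to be pointed there --- the weak and Scott topologies coincide on $\Pred^\bullet_{\Nature}(X)$, so that $\Pred^\bullet_{\Nature\;wk}(X)$ and $\Pred^\bullet_{\Nature}(X)$ are the same topological space. Since $X$ is a continuous dcpo, $\Pred^\bullet_{\Nature}(X)$ is itself a continuous dcpo, and hence the Scott and lower Vietoris topologies agree on its Hoare powerdomain. Combining the two coincidences yields $\Hoare(\Pred^\bullet_{\Nature}(X)) = \HV(\Pred^\bullet_{\Nature}(X)) = \HV(\Pred^\bullet_{\Nature\;wk}(X))$; I will call this single space $\mathcal K$.

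Now I would invoke the two retraction results. Proposition~\ref{prop:AN:weak} shows that $s^\bullet_{\AN}$ is a section of $r_{\AN}$ for the weak topologies, hence a topological embedding of $\Pred^\bullet_{\AN\;wk}(X)$ into $\HV(\Pred^\bullet_{\Nature\;wk}(X))=\mathcal K$; Proposition~\ref{prop:AN:Scott2} shows that $s^\bullet_{\AN}$ is the section of the Scott retraction $r_{\AN}$, hence a topological embedding of $\Pred^\bullet_{\AN}(X)$ with its Scott topology into $\Hoare(\Pred^\bullet_{\Nature}(X))=\mathcal K$. The key point is that these two embeddings are the \emph{same} set map with the \emph{same} image in $\mathcal K$. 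Consequently the weak topology and the Scott topology on $\Pred^\bullet_{\AN}(X)$ are both the inverse image along $s^\bullet_{\AN}$ of the subspace topology of $\mathrm{im}(s^\bullet_{\AN})\subseteq\mathcal K$; since that subspace topology is one fixed topology, the two coincide. Equivalently, the identity between the two versions of $\Pred^\bullet_{\AN}(X)$ factors as $(s^\bullet_{\AN})^{-1}\circ s^\bullet_{\AN}$ through $\mathcal K$ and is therefore a homeomorphism.

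The main obstacle is not this final soft argument, which is purely formal, but securing that the ambient space is genuinely the same for both embeddings. That is precisely what the two coincidence theorems buy: without Kirch--Tix the targets $\HV(\Pred^\bullet_{\Nature\;wk}(X))$ and $\HV(\Pred^\bullet_{\Nature}(X))$ need not be the same space, and without the Scott $=$ lower Vietoris coincidence on the Hoare powerdomain of a continuous dcpo we could not identify $\Hoare(\Pred^\bullet_{\Nature}(X))$ with $\HV(\Pred^\bullet_{\Nature}(X))$. The pointedness assumption in the ``$1$'' case enters only to make Kirch--Tix (via Edalat's trick) and Proposition~\ref{prop:AN:Scott2} applicable there.
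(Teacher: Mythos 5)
Your proposal is correct and follows essentially the same route as the paper: both arguments identify $\Hoare(\Pred^\bullet_{\Nature}(X))$ with $\HV(\Pred^\bullet_{\Nature\;wk}(X))$ via the Kirch--Tix Theorem (plus Edalat's trick in the pointed, normalized case) and the Scott $=$ lower Vietoris coincidence on Hoare powerdomains of continuous dcpos, and then observe that the weak and Scott versions of $\Pred^\bullet_{\AN}(X)$ embed as the same subset of that single ambient space through the common section $s^\bullet_{\AN}$, hence carry the same topology.
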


Similarly, if $Y$ is a continuous dcpo with basis $B$, then $\Smyth
(Y)$ is a continuous dcpo, with basis given by the subsets of the form
$\upc E$, $E$ a finite and non-empty subset of $B$
\cite[Theorem~6.2.10, Item~2]{AJ:domains} (and our Smyth powerdomain
is the same as theirs, by their Theorem~6.2.14).
\begin{proposition}
  \label{prop:DN:base}
  Let $X$ be a continuous dcpo.  Then $\Pred_{\DN} (X)$ (resp.,
  $\Pred^{\leq 1}_{\DN} (X)$) is a continuous dcpo, with basis given
  by the finite non-empty infs of simple (resp., and subnormalized)
  linear previsions:
  \[
  h \mapsto \min_{i=1}^m \sum_{j=1}^{n_i} a_{ij} h (x_{ij})
  \]
  where $m\geq 1$ (resp., and $\sum_{j=1}^{n_i} a_{ij} \leq 1$ for
  every $i$).

  If $X$ is a pointed continuous dcpo, then $\Pred^1_{\DN} (X)$ is a
  pointed continuous dcpo, with basis given by the finite mins of
  simple normalized previsions (i.e., $\sum_{j=1}^{n_i} a_{ij} = 1$
  for every $i$).  The least element is $h \mapsto h (\bot)$, where
  $\bot$ is the least element of $X$.
\end{proposition}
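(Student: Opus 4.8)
The plan is to reproduce, in the Smyth (demonic) setting, exactly the template that proves the angelic Proposition~\ref{prop:AN:base}. First I would recall from the discussion preceding the statement that, for a continuous dcpo $X$, the space $\Pred^\bullet_{\Nature}(X)$ is a continuous dcpo whose basis $B$ consists of the simple linear previsions $h \mapsto \sum_{j=1}^n a_j h(x_j)$ (subnormalized, resp.\ normalized, when $\bullet$ is ``$\leq 1$'', resp.\ ``$1$''), via the Riesz isomorphism with $\Val^\bullet(X)$ together with Jones' theorem and Edalat's trick (the normalized case needing $X$ pointed). The cited \cite[Theorem~6.2.10, Item~2]{AJ:domains} then makes $\Smyth(\Pred^\bullet_{\Nature}(X))$ a continuous dcpo with basis the sets $\upc E$, where $E$ ranges over finite non-empty subsets of $B$. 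Finally, Proposition~\ref{prop:DN:Scott2} exhibits $r_{\DN}$ as a Scott retraction of $\Smyth(\Pred^\bullet_{\Nature}(X))$ onto $\Pred^\bullet_{\DN}(X)$, so \cite[Lemma~3.1.3]{AJ:domains} immediately gives that $\Pred^\bullet_{\DN}(X)$ is a continuous dcpo whose basis is the image under $r_{\DN}$ of that basis.

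It then remains only to identify this image. I would take $E = \{G_1,\dots,G_m\}$ with each $G_i$ a simple linear prevision $h \mapsto \sum_{j=1}^{n_i} a_{ij} h(x_{ij})$, and compute $r_{\DN}(\upc E)$. Since $\upc E = \bigcup_i \upc G_i$ is compact saturated in $\Pred^\bullet_{\Nature}(X)$, Lemma~\ref{lemma:rDN} gives $r_{\DN}(\upc E)(h) = \min_{G \in \upc E} G(h)$; and as any $G \in \upc E$ dominates some $G_i$ (whence $G(h) \geq G_i(h)$) while each $G_i$ itself lies in $\upc E$, this minimum equals $\min_{i=1}^m G_i(h)$. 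That is exactly the claimed basic element $h \mapsto \min_{i=1}^m \sum_{j=1}^{n_i} a_{ij} h(x_{ij})$, and as $E$ runs over all finite non-empty subsets of $B$ the image is precisely the set of finite non-empty infs of simple (resp.\ subnormalized, normalized) linear previsions. A routine check confirms that these infs keep their normalization class: with each $G_i$ normalized, $\min_i G_i(\one+h) = \min_i(G_i(\one)+G_i(h)) = 1 + \min_i G_i(h)$, and analogously $\min_i G_i(\one+h) \leq 1 + \min_i G_i(h)$ in the subnormalized case.

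For the pointed normalized case I would additionally verify that $h \mapsto h(\bot)$ is the least element. It is the simple normalized prevision attached to the Dirac valuation at $\bot$; since $\bot$ is least and every $h \in \Lform X$ is monotone, $h(\bot)$ is the minimum value of $h$, so every normalized linear prevision $G$ (an integral against a probability valuation) satisfies $G(h) \geq h(\bot)$. Because every normalized Smyth prevision is the pointwise minimum of the normalized linear previsions above it (this is $r_{\DN}\circ s^1_{\DN}$ being the identity), it too dominates $h \mapsto h(\bot)$ pointwise, which is therefore the bottom. The conceptual work is entirely carried by Proposition~\ref{prop:DN:Scott2} and the two domain-theoretic facts about continuous retracts and Smyth powerdomains; I expect the only mildly delicate points to be the two-sided inequality identifying $r_{\DN}(\upc E)$ with $\min_i G_i$ and the bookkeeping of the normalization constraints uniformly across the three values of $\bullet$, exactly as in the angelic case.
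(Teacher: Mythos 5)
Your proposal is correct and follows essentially the same route as the paper: the paper obtains Proposition~\ref{prop:DN:base} by combining the basis of simple linear previsions for $\Pred^\bullet_{\Nature}(X)$, the fact that $\Smyth$ of a continuous dcpo is continuous with basis the sets $\upc E$ for $E$ finite non-empty in a basis, the Scott retraction of Proposition~\ref{prop:DN:Scott2}, and \cite[Lemma~3.1.3]{AJ:domains}, exactly as you do. Your explicit computation of $r_{\DN}(\upc E)$ as $\min_i G_i$ and your verification of the normalization classes and of the bottom element are details the paper leaves implicit, and they are correct.
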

The Scott and the upper Vietoris topologies coincide on $\Smyth (Y)$
on every $T_0$, well-filtered, locally compact space
\cite[Section~7.3.4]{schalk:diss},
in particular on every continuous dcpo $Y$.
By the same argument as for Proposition~\ref{prop:AN:weak=Scott}:
\begin{proposition}
  \label{prop:DN:weak=Scott}
  Let $X$ be a continuous dcpo.  The Scott topology coincides with the
  weak topology on $\Pred_{\DN} (X)$, on $\Pred^{\leq 1}_{\DN} (X)$;
  also on $\Pred^1_{\DN} (X)$ is $X$ is additionally assumed to be
  pointed.
\end{proposition}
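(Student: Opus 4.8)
The plan is to repeat, mutatis mutandis, the argument given for Proposition~\ref{prop:AN:weak=Scott}, swapping the Hoare/angelic ingredients for their Smyth/demonic counterparts. The underlying set of Smyth previsions $\Pred^\bullet_{\DN}(X)$ carries both the weak topology (always the coarser one) and the Scott topology, and the goal is to realize both as the subspace topology induced by one and the same topological embedding into a common ambient space. Since a topological embedding is by definition a homeomorphism onto its image, the two topologies must then agree. Throughout I fix $\bullet$, assuming $X$ pointed in case $\bullet$ is ``$1$''.

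The first step is to check that the ambient space is genuinely the same in the weak and the Scott pictures. By the Kirch-Tix Theorem the Scott and weak topologies coincide on $\Pred_{\Nature}(X)$ and on $\Pred^{\leq 1}_{\Nature}(X)$ whenever $X$ is a continuous dcpo, and, via Edalat's trick, also on $\Pred^1_{\Nature}(X)$ once $X$ is a \emph{pointed} continuous dcpo; this is exactly where the extra pointedness hypothesis in the normalized case is needed. Hence $\Pred^\bullet_{\Nature\;wk}(X)$ and $\Pred^\bullet_{\Nature}(X)$ are literally the same topological space, which is moreover a continuous dcpo (pointed when $\bullet$ is ``$1$''). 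Because the Scott and upper Vietoris topologies coincide on the Smyth powerdomain of any continuous dcpo, I then obtain $\Smyth(\Pred^\bullet_{\Nature}(X)) = \SV(\Pred^\bullet_{\Nature}(X)) = \SV(\Pred^\bullet_{\Nature\;wk}(X))$ as a single topological space.

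With the ambient space pinned down, the conclusion drops out of the two retraction results already in hand, using that every section of a retraction is a topological embedding. Proposition~\ref{prop:DN:weak} exhibits $s^\bullet_{\DN}$ as the section of a projection of $\SV(\Pred^\bullet_{\Nature\;wk}(X))$ onto $\Pred^\bullet_{\DN\;wk}(X)$, so $s^\bullet_{\DN}$ is a topological embedding of the weakly topologized space into the ambient space; Proposition~\ref{prop:DN:Scott2} exhibits the \emph{same} set map $s^\bullet_{\DN}$ as the section of a projection of $\Smyth(\Pred^\bullet_{\Nature}(X))$ onto $\Pred^\bullet_{\DN}(X)$, hence as a topological embedding of the Scott-topologized space into the now-identical ambient space. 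Since the map, its image, and the subspace topology on that image are all the same, both the weak and the Scott topology on $\Pred^\bullet_{\DN}(X)$ equal the pullback of this one subspace topology, and therefore coincide.

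The only genuine subtlety, and the place I would be most careful, is the first step: ensuring that the two ambient Smyth powerdomains are literally one and the same topological space. This rests entirely on the coincidence of the Scott and weak topologies on linear previsions, which is unconditional in the unbounded and subnormalized cases but demands pointedness of $X$ in the normalized case; without that hypothesis the two powerdomains could differ and the two embeddings would cease to share a target, breaking the argument.
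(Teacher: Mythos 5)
Your proof is correct and follows essentially the same route as the paper: the paper also invokes the Kirch--Tix Theorem (plus Edalat's trick in the normalized, pointed case) to identify the weak and Scott ambient spaces, notes that the Scott and upper Vietoris topologies agree on the Smyth powerdomain of a continuous dcpo, and then concludes from the fact that the two sections $s^\bullet_{\DN}$ are one and the same topological embedding into that common ambient space. Your write-up merely spells out the details that the paper compresses into ``by the same argument as for Proposition~\ref{prop:AN:weak=Scott}''.
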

 
When $Y$ is a continuous, coherent dcpo with basis $B$, then $\Plotkin
(Y)$ is again a continuous dcpo, with basis given by the subsets of
the form $\upc E \cap \dc E$, $E$ a finite and non-empty subset of
$B$.  This is Theorem~6.2.3 of \cite{AJ:domains}, together with
Theorem~6.2.22, which states that our Plotkin powerdomain is the same
as theirs.
\begin{proposition}
  \label{prop:ADN:base}
  Let $X$ be a continuous, coherent dcpo.  Then $\Pred_{\ADN} (X)$
  (resp., $\Pred^{\leq 1}_{\ADN} (X)$) is a continuous dcpo, with basis
  given by the simple (resp., and subnormalized) forks of the form:
  \[
  (h \mapsto \min_{i=1}^m \sum_{j=1}^{n_i} a_{ij} h (x_{ij}),
  h \mapsto \max_{i=1}^m \sum_{j=1}^{n_i} a_{ij} h (x_{ij}))
  \]
  where $m\geq 1$ (resp., and $\sum_{j=1}^{n_i} a_{ij} \leq 1$ for
  every $i$).

  If $X$ is a pointed continuous, coherent dcpo, then $\Pred^1_{\ADN}
  (X)$ is a pointed continuous dcpo, with basis the simple normalized
  forks (i.e., $\sum_{j=1}^{n_i} a_{ij} = 1$ for every $i$).  The
  least element is $(h \mapsto h (\bot), h \mapsto h (\bot))$, where
  $\bot$ is the least element of $X$.
\end{proposition}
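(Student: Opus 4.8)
The plan is to proceed exactly as for Proposition~\ref{prop:AN:base} and Proposition~\ref{prop:DN:base}, transporting a basis along the Scott retraction supplied by Proposition~\ref{prop:ADN:Scott2}. Recall that for any retraction $r \colon D \to E$ with $D$ a continuous dcpo, $E$ is a continuous dcpo and the image $r(B)$ of any basis $B$ of $D$ is a basis of $E$ \cite[Lemma~3.1.3]{AJ:domains}. Taking $D \eqdef \Plotkin(\Pred^\bullet_\Nature(X))$, $E \eqdef \Pred^\bullet_\ADN(X)$, and $r \eqdef r_\ADN$, it therefore suffices to exhibit a basis of $D$ and to compute its image under $r_\ADN$.

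To obtain such a basis I first identify $Y \eqdef \Pred^\bullet_\Nature(X)$, via the Riesz isomorphism, with $\Val^\bullet(X)$, which is a continuous dcpo with the basis of simple (resp.\ subnormalized, normalized) linear previsions $h \mapsto \sum_{j=1}^{n} a_j h(x_j)$; in the normalized case one uses Edalat's trick and the pointedness of $X$, exactly as recalled before Proposition~\ref{prop:AN:base}. The structure theorem for the Plotkin powerdomain then gives that $\Plotkin(Y)$ is a continuous dcpo with basis the lenses $\upc E \cap \dc E$, $E$ a finite non-empty subset of the chosen basis of $Y$. The essential point — and the main obstacle — is that this structure theorem requires $Y$ to be not merely continuous but \emph{coherent}, so I must invoke the fact that $\Pred^\bullet_\Nature(X) \cong \Val^\bullet(X)$ is a coherent continuous dcpo whenever $X$ is. This is the probabilistic analogue of coherence and is precisely where the coherence hypothesis on $X$ is spent (beyond guaranteeing, through core-coherence, that the retraction of Proposition~\ref{prop:ADN:Scott2} exists). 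A coherent continuous dcpo is stably locally compact in its Scott topology, and stably compact when pointed, so the coherence of its probabilistic powerdomain is the delicate input I would cite (the stably compact case of the valuation powerdomain, along the lines of \cite{AMJK:scs:prob}).

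With the basis of $\Plotkin(Y)$ in hand, the computation of its image is elementary. Fix a finite non-empty set $E = \{G_1, \ldots, G_m\}$ of simple linear previsions $G_i \eqdef (h \mapsto \sum_{j=1}^{n_i} a_{ij} h(x_{ij}))$, and let $L \eqdef \upc E \cap \dc E$. Since $G \mapsto G(h)$ is monotone, for each $h \in \Lform X$ the inclusion $L \subseteq \dc E$ gives $\sup_{G \in L} G(h) \leq \max_i G_i(h)$, while each $G_i$ lies in $L$, so in fact $r_\AN(L)(h) = \sup_{G \in L} G(h) = \max_i G_i(h)$; dually, $L \subseteq \upc E$ yields $r_\DN(L)(h) = \inf_{G \in L} G(h) = \min_i G_i(h)$. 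Hence $r_\ADN(L) = (r_\DN(L), r_\AN(L))$ is exactly the simple fork displayed in the statement, and the subnormalization (resp.\ normalization) constraint $\sum_j a_{ij} \leq 1$ (resp.\ $=1$) is inherited from the corresponding constraint on the $G_i$.

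Finally, in the normalized case with $X$ pointed, $Y = \Pred^1_\Nature(X)$ has least element the Dirac prevision $\delta_\bot \colon h \mapsto h(\bot)$ (for any probability valuation and any open $U \ni \bot$ one has $U = X$, so $\delta_\bot$ sits below every element), making $Y$ pointed; then $\Plotkin(Y)$ is pointed with least lens $\{\delta_\bot\}$ in the Egli-Milner order, and since $r_\ADN$ is the projection of a retraction its value $r_\ADN(\{\delta_\bot\}) = (h \mapsto h(\bot), h \mapsto h(\bot))$ is the least element of $\Pred^1_\ADN(X)$. This completes the plan; the only genuinely substantial ingredient is the coherence of the probabilistic powerdomain noted above.
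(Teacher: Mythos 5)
Your proposal is correct and follows essentially the same route as the paper, which (implicitly) proves this proposition by transporting the basis $\upc E \cap \dc E$ of $\Plotkin(\Pred^\bullet_\Nature(X))$ along the Scott retraction $r_{\ADN}$ of Proposition~\ref{prop:ADN:Scott2} via \cite[Lemma~3.1.3]{AJ:domains}, exactly as you do. Your explicit computation of $r_{\ADN}(\upc E \cap \dc E)$ and your flagging of the coherence of $\Pred^\bullet_\Nature(X)$ as the input needed for \cite[Theorem~6.2.3]{AJ:domains} are both points the paper leaves tacit, and you are right that the latter is the one genuinely substantial external fact being used.
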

We conclude this section with a result similar to
Proposition~\ref{prop:AN:weak=Scott} and
Proposition~\ref{prop:DN:weak=Scott}.  This time, we use the fact that
the way-below relation $\ll_{\Plotkin}$ on $\Plotkin (Y)$, when $Y$ is
a continuous, coherent dcpo, if given by $L \ll_{\Plotkin} L'$ iff
$\upc L \ll_\Smyth \upc L'$ and $cl (L) \ll_\Hoare cl (L')$
\cite[Section~6.2.1]{AJ:domains}; so our subbasic Scott open subsets
are $\{L' \in \Plotkin (Y) \mid L' \subseteq \uuarrow E \text{ and }
L' \cap \ddarrow E' \neq \emptyset\} = \Box {\uuarrow E} \cap
\bigcap_{y \in E'} \Diamond {\ddarrow y}$, where $E$ and $E'$ are
finite and non-empty.  Since they are all open in the Vietoris
topology, the Scott and Vietoris topologies coincide on $\Plotkin
(Y)$.  As before, we use the fact that our spaces of forks with the
weak and the Scott topologies are subspaces of the same space to
conclude:
\begin{proposition}
  \label{prop:ADN:weak=Scott}
  Let $X$ be a continuous, coherent dcpo.  The Scott topology
  coincides with the weak topology on $\Pred_{\ADN} (X)$, on
  $\Pred^{\leq 1}_{\ADN} (X)$; also on $\Pred^1_{\ADN} (X)$ is $X$ is
  additionally assumed to be pointed.
\end{proposition}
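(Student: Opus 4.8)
The plan is to mirror the proofs of Proposition~\ref{prop:AN:weak=Scott} and Proposition~\ref{prop:DN:weak=Scott}, exploiting that a section of a retraction is a topological embedding, so that identifying the two relevant ambient spaces will force the weak and Scott topologies on the fork space to agree. Write $Y \eqdef \Pred^\bullet_{\Nature} (X)$. First I would record that $Y$ is a continuous, coherent dcpo; this is exactly the fact already used in the discussion preceding Proposition~\ref{prop:ADN:base}, where $\Plotkin (Y)$ is formed from $Y$ and shown to be a continuous dcpo with the stated basis. By the Kirch-Tix Theorem, together with Edalat's trick in the normalized case (where $X$ is additionally assumed pointed), the weak and Scott topologies coincide on $Y$; that is, $\Pred^\bullet_{\Nature\;wk} (X)$ and $\Pred^\bullet_{\Nature} (X)$ are one and the same topological space.

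Next I would identify the two candidate ambient spaces. Since the two topologies on $Y$ agree, the set of lenses of $Y$ and the Vietoris topology on it do not depend on whether one starts from the weak or from the Scott topology, whence $\PV (\Pred^\bullet_{\Nature\;wk} (X)) = \PV (\Pred^\bullet_{\Nature} (X))$. Moreover, as established in the paragraph immediately preceding the proposition, when $Y$ is a continuous, coherent dcpo the Scott and Vietoris topologies coincide on $\Plotkin (Y)$: the subbasic Scott opens $\Box {\uuarrow E} \cap \bigcap_{y \in E'} \Diamond {\ddarrow y}$ are Vietoris-open, and conversely the Vietoris subbasic opens are Scott-open. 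Consequently the two ambient spaces $\PV (\Pred^\bullet_{\Nature\;wk} (X))$ and $\Plotkin (\Pred^\bullet_{\Nature} (X))$, the latter with its Scott topology, are literally the same topological space.

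Finally I would invoke the two retraction results. By Proposition~\ref{prop:ADN:weak}, $s^\bullet_{\ADN}$ is the section of a retraction onto $\Pred^\bullet_{\ADN\;wk} (X)$, hence a topological embedding of $\Pred^\bullet_{\ADN\;wk} (X)$ into $\PV (\Pred^\bullet_{\Nature\;wk} (X))$; by Proposition~\ref{prop:ADN:Scott2}, the very same set map $s^\bullet_{\ADN}$ is the section of a Scott retraction, hence a topological embedding of $\Pred^\bullet_{\ADN} (X)$ with its Scott topology into $\Plotkin (\Pred^\bullet_{\Nature} (X))$. Since $s^\bullet_{\ADN}$ is defined purely set-theoretically, independently of any topology, both embeddings land on the same subset of the same ambient space, and therefore induce the same subspace topology on that common image. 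Pulling this single subspace topology back along the two homeomorphisms-onto-image shows that the weak topology and the Scott topology coincide on the common underlying set $\Pred^\bullet_{\ADN} (X)$, which is the claim.

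The main obstacle, and the only step that is not purely formal bookkeeping, is the coherence of $Y = \Pred^\bullet_{\Nature} (X)$: every Plotkin-powerdomain ingredient here (continuity of $\Plotkin (Y)$, the explicit form of its way-below relation, and the resulting Scott\,$=$\,Vietoris coincidence) rests on $Y$ being not merely continuous but \emph{coherent}, and this is precisely how the coherence hypothesis on $X$ propagates through the argument. I would treat it exactly as it is treated for Proposition~\ref{prop:ADN:base}, namely as an already-available consequence of $X$ being a coherent continuous dcpo.
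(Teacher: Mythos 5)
Your proof is correct and follows essentially the same route as the paper: establish that the weak and Scott topologies agree on $\Pred^\bullet_{\Nature}(X)$ via Kirch-Tix (plus Edalat's trick in the normalized case), that the Scott and Vietoris topologies agree on the Plotkin powerdomain of that continuous coherent dcpo, and then conclude because the two fork spaces embed, via the common section $s^\bullet_{\ADN}$, as the same subspace of the same ambient space. The only remark is that the coherence of $\Pred^\bullet_{\Nature}(X)$ is indeed left implicit in the paper just as you treat it, so your handling matches the source exactly.
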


\section{The Isomorphisms}
\label{sec:isom}

Let $\bullet$ be the empty superscript, or ``$\leq 1$'', or ``$1$'',
depending on the case.  When {\updated $X$ is $\AN_\bullet$-friendly},
we know that $r_{\AN} \circ s^\bullet_{\AN}$ is the identity map,
where
$r_{\AN} \colon \HV (\Pred^\bullet_{\Nature\;wk} (X)) \to
\Pred^\bullet_{\AN\;wk} (X)$ and
$s^\bullet_{\AN} \colon \Pred^\bullet_{\AN\;wk} (X) \to \HV
(\Pred^\bullet_{\Nature\;wk} (X))$ (Corollary~\ref{corl:AN:weak}).

Now write $\HV^{cvx} (D)$ for the subspace of $\HV (D)$ consisting of
\emph{convex} closed, non-empty, subsets of $D \eqdef
\Pred^\bullet_{\AN\;wk} (X)$, and similarly $\Hoare^{cvx} (D)$ is the
underlying poset, with the inclusion ordering.  Note that although
$\Pred^\bullet_{\AN\;wk} (X)$ is not a cone (when $\bullet$ is ``$\leq
1$'' or ``$1$''), convexity makes sense.  We also write $\SV^{cvx}
(D)$, $\Smyth^{cvx} (D)$, $\Plotkin^{cvx} (D)$, $\PV^{cvx} (D)$ with
the obvious meaning.

Clearly, for every $F \in \Pred^\bullet_{\AN\;wk} (X)$,
$s^\bullet_{\AN} (F) = \{G \in D \mid G \leq F\}$ is convex, and
similarly for $s^\bullet_{\DN} (F) = \{G \in D \mid F \leq G\}$ and
$s^\bullet_{\ADN} (F^-, F^+) = \{G \in D \mid F^- \leq G \leq F^+\}$.
So $s^\bullet_{\AN}$ corestricts to a continuous map from
$\Pred^\bullet_{\AN\;wk} (X)$ to $\HV^{cvx}
(\Pred^\bullet_{\Nature\;wk} (X))$, and similarly for
$s^\bullet_{\DN}$ and $s^\bullet_{\ADN}$.  We wish to show that this
is a homeomorphism, with inverse the corresponding restriction of
$r_{\AN}$ to $\HV^{cvx} (\Pred^\bullet_{\Nature\;wk} (X))$, resp., of
$r_{\DN}$ to $\SV^{cvx} ((\Pred^\bullet_{\Nature\;wk} (X))$, resp., of
$r_{\ADN}$ to $\PV^{cvx} ((\Pred^\bullet_{\Nature\;wk} (X))$.

\subsection{The case of unbounded previsions}
\label{sec:case-unbo-prev}

The case where $\bullet$ is the empty superscript, i.e., where general
previsions are considered, was already dealt with in
\cite[Section~6.1]{KP:predtrans:pow}.  We improve on their result,
which required $X$ to be a continuous dcpo.  The general plan of the
proof is the same.

The key to the generalization is the following
\emph{Schr\"oder-Simpson Theorem}.
\begin{theorem}[Schr\"oder-Simpson]
  \label{thm:schsimp}
  Let $X$ be a topological space.  For every continuous linear map $\psi$
  from $\Val (X)$ to ${\creal}_\sigma$ (resp., from $\Pred_{\Nature\;wk}
  (X)$ to ${\creal}_\sigma$), there is a unique continuous
  map $h \colon X \to {\creal}_\sigma$ such that, for every $\nu \in
  \Val (X)$, $\psi (\nu) = \int_{x \in X} h (x) d \nu$ (resp., for
  every $G \in \Pred_{\Nature\;wk} (X)$, $\psi (G) = G (h)$).
\end{theorem}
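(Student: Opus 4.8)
The plan is to prove the valuation form and transfer it to the prevision form through the Riesz-type order-isomorphism $\Val(X) \cong \Pred_{\Nature\;wk}(X)$ recorded in the preliminaries: under the correspondence $G(h) = \int_x h\,d\nu$ the two statements are literally the same, so I would fix one formulation, say the prevision one, and work with a continuous linear $\psi : \Pred_{\Nature\;wk}(X) \to {\creal}_\sigma$. For each $x \in X$ let $\widehat{x}$ denote the point-evaluation prevision $k \mapsto k(x)$ (the image of $\delta_x$ under the isomorphism). The only possible candidate is $h(x) \eqdef \psi(\widehat{x})$: indeed, if $\psi(G) = G(h')$ held for all $G$, then taking $G = \widehat{x}$ forces $h'(x) = \widehat{x}(h') = \psi(\widehat{x})$, which gives \emph{uniqueness} at once.

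For existence I would first dispatch the routine facts. The assignment $x \mapsto \widehat{x}$ is continuous from $X$ to $\Pred_{\Nature\;wk}(X)$, since $\widehat{x} \in [k > r]$ iff $x \in k^{-1}(r, +\infty]$, which is open; hence $h = \psi \circ (x \mapsto \widehat{x})$ is a continuous map $X \to {\creal}_\sigma$, i.e. $h \in \Lform X$. On a simple prevision $G = \sum_{i=1}^n a_i \widehat{x_i}$, linearity gives $\psi(G) = \sum_i a_i \psi(\widehat{x_i}) = \sum_i a_i h(x_i) = G(h)$, so the representation already holds on the simple previsions.

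The remaining, and genuinely hard, step is to pass from simple previsions to arbitrary ones, i.e. to show $\psi(G) = G(h)$ for every linear prevision $G$ (equivalently $\psi(\nu) = \int h\,d\nu$ for every continuous valuation $\nu$). The obstruction is that, for a general topological space $X$, the simple valuations are \emph{not} weakly dense in $\Val(X)$, so one cannot merely take limits of the simple case; this is exactly the content that makes the Schr\"oder--Simpson theorem nontrivial, and I expect it to be the main obstacle. To overcome it I would invoke the Schr\"oder--Simpson theorem proper, whose argument does not rely on density: one compares $\psi$ with the integration functional $\psi'(\nu) \eqdef \int h\,d\nu$ (both continuous and linear, agreeing on the Diracs), writes $\int h\,d\nu$ through the Choquet formula $\int_0^{+\infty} \nu(h^{-1}(t,+\infty])\,dt$ as a directed supremum of integrals of step functions, and exploits the Scott-continuity of $\psi$ together with the description of the weak-open sets $[h > r]$ to establish the two inequalities $\psi'(\nu) \le \psi(\nu)$ and $\psi(\nu) \le \psi'(\nu)$ separately, one being comparatively direct and the other the delicate one.

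Finally, I would transport the result back across the Riesz isomorphism: the functional $\psi$ on $\Pred_{\Nature\;wk}(X)$ corresponds to a continuous linear functional on $\Val(X)$, and $G(h)$ corresponds to $\int_x h\,d\nu$, so the prevision statement and the valuation statement are equivalent and the same $h$ serves both. Uniqueness in each case is already secured by the Dirac argument above.
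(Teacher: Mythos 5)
The first thing to note is that the paper does not prove this theorem at all: it is imported as a known result, with the text immediately following the statement attributing it to Schr\"oder and Simpson, and pointing to Keimel's proof via Hahn-Banach-type extension theorems in quasi-uniform cones and to a short elementary proof in \cite{JGL:SchSimp}. So there is no in-paper proof to compare yours against; the honest options were either to cite the literature, as the paper does, or to actually prove the theorem.

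Your proposal does neither, and it contains a genuine gap. The parts you do carry out are fine: the reduction across the Riesz isomorphism, the identification of the unique candidate $h (x) = \psi (\widehat x)$ via the point-evaluation previsions (which settles uniqueness), the continuity of $x \mapsto \widehat x$ and hence of $h$, and the verification on simple previsions. But the passage from simple previsions to arbitrary $G$ is the \emph{entire} content of the theorem, and at exactly that point you write that you ``would invoke the Schr\"oder--Simpson theorem proper'' --- which is circular, since that is the statement being proved. The sketch that follows (compare $\psi$ with $\psi' (\nu) = \int h\, d\nu$, use the Choquet formula, establish two inequalities) names the shape of a proof but supplies neither inequality. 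The delicate direction (showing $\psi (G) \leq G (h)$, say) requires, given that $\psi (G) > r$ and hence that some basic weak-open neighborhood $\bigcap_{i=1}^n [h_i > r_i]$ of $G$ is contained in $\psi^{-1} (r, +\infty]$, the explicit construction of a simple prevision lying in that same neighborhood whose value on $h$ does not exceed $G (h)$ by more than $\epsilon$; this is a genuine combinatorial construction involving a careful choice of points in the cells cut out by the functions $h_i$, and nothing in your proposal provides it or a substitute for it. As it stands, the proposal establishes uniqueness and the easy preliminaries, and defers the theorem itself to the theorem itself.
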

The converse direction, that given a unique continuous map $h \colon X
\to {\creal}_\sigma$, the map $G \mapsto G (h)$ is continuous and linear
from $\Pred_{\Nature\;wk} (X)$ to ${\creal}_\sigma$, is obvious.

Theorem~\ref{thm:schsimp} is due to Schr\"oder and Simpson.
It was announced at the end of the presentation
\cite{SS:prob:obs}, and a full proof was given in another talk
\cite{SchSimp:obs:2}.  Another proof was discovered by Keimel, who
stresses the role of Hahn-Banach-like extension theorems in
quasi-uniform cones
\cite{Keimel:SchroderSimpson}.  A short, elementary proof of this
theorem can be found in \cite{JGL:SchSimp}.

\begin{lemma}
  \label{lemma:AN:orderembed}
  Let $X$ be a topological space.  For all $A, B \in \Hoare^{cvx}
  (\Pred_{\Nature\;wk} (X))$, if $r_{\AN} (A) \leq r_{\AN} (B)$ then
  $A \subseteq B$.
\end{lemma}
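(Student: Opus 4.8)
The plan is to argue by contradiction. Suppose $A \not\subseteq B$, and fix some $G_0 \in A \diff B$. Since $B$ is closed, convex and non-empty, while the singleton $\{G_0\}$ is compact and convex, the natural move is to separate $G_0$ from $B$ by a continuous linear functional. The tool for this is the Strict Separation Theorem recalled in the Cones subsection of Section~\ref{sec:preliminaries}, which holds in any locally convex semitopological cone.

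First I would verify that $\Pred_{\Nature\;wk} (X)$ really is such a cone. In the unbounded case considered here (where $\bullet$ is empty), linear previsions are closed under pointwise addition and under multiplication by scalars in $\Rp$, with the zero prevision as neutral element, so $\Pred_{\Nature} (X)$ is a cone. In the weak topology, each subbasic open $[h > r]$ is convex, since $(t G_1 + (1-t) G_2) (h) = t G_1 (h) + (1-t) G_2 (h) > r$ whenever $G_1, G_2 \in [h > r]$ and $t \in [0,1]$; hence finite intersections of subbasic opens are convex, and since these form a base, the weak topology is locally convex. Separate continuity of $+$ and $\cdot$ is immediate, as the preimage of $[h > r]$ under $G \mapsto G + G'$ is $[h > r - G' (h)]$, and under $G \mapsto a \cdot G$ (for $a > 0$) is $[h > r/a]$. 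Thus $\Pred_{\Nature\;wk} (X)$ is a locally convex semitopological cone.

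Applying the Strict Separation Theorem with the compact convex singleton $\{G_0\}$ in the role of $Q$ and $B$ in the role of the closed convex set (these are disjoint since $G_0 \notin B$), I obtain a continuous linear map $\Lambda \colon \Pred_{\Nature\;wk} (X) \to {\creal}_\sigma$ and a real number $r_0 > 1$ such that $\Lambda (G_0) \geq r_0$ and $\Lambda (G) \leq 1$ for every $G \in B$. By the Schr\"oder-Simpson Theorem (Theorem~\ref{thm:schsimp}), there is an $h \in \Lform X$ with $\Lambda (G) = G (h)$ for every $G \in \Pred_{\Nature\;wk} (X)$. Then $r_{\AN} (A) (h) = \sup_{G \in A} G (h) \geq G_0 (h) = \Lambda (G_0) \geq r_0 > 1$, whereas $r_{\AN} (B) (h) = \sup_{G \in B} G (h) = \sup_{G \in B} \Lambda (G) \leq 1$. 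This contradicts the hypothesis $r_{\AN} (A) \leq r_{\AN} (B)$, and the lemma follows.

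The only genuine content lies in the two structural inputs: confirming that the weak topology on linear previsions is the topology of a locally convex semitopological cone, so that a separating continuous linear functional exists, and invoking Schr\"oder-Simpson to realize that functional as evaluation at some continuous $h$. Everything else reduces to a one-line comparison of suprema. I expect no further obstacle, noting only that this is precisely where the unbounded case is essential: $\Pred_{\Nature} (X)$ must genuinely be a cone for the separation theorem to apply, and Theorem~\ref{thm:schsimp} is stated exactly for $\Pred_{\Nature\;wk} (X)$.
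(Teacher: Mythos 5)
Your proof is correct and follows essentially the same route as the paper's: pick $G_0 \in A \diff B$, separate it from the closed convex set $B$ by a continuous linear functional on the locally convex cone $\Pred_{\Nature\;wk} (X)$, realize that functional as evaluation at some $h \in \Lform X$ via the Schr\"oder-Simpson Theorem, and compare suprema. The only difference is that you invoke the Strict Separation Theorem with the compact convex singleton $\{G_0\}$, whereas the paper uses local convexity to produce a convex open neighborhood of $G_0$ disjoint from $B$ and then applies the (non-strict) Separation Theorem; both are equally valid here, and the extra gap $r_0 > 1$ your version provides is simply not needed.
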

\begin{proof}
  Assume $A \not\subseteq B$, so there is a $G \in A$ that is not in
  $B$.  $\Pred_{\Nature\;wk} (X)$ is a locally convex topological
  cone, since every subbasic open set $[h > b]$ is convex.  Therefore,
  there is a convex open subset $U$ containing $G$ that does not
  intersect $B$.  By the Separation Theorem
  \cite[Theorem~9.1]{Keimel:topcones2}, there is a continuous linear
  map $\Lambda \colon C \to \creal$ such that $\Lambda (G') \leq 1$
  for every $G' \in B$, and $\Lambda (G') > 1$ for every $G' \in U$;
  in particular, $\Lambda (G) > 1$.  By the Schr\"oder-Simpson
  Theorem, $\Lambda$ is the map $G' \mapsto G' (h)$
  for some 
  $h \in [X \to {\creal}_\sigma]$.  So $\Lambda (G') = G' (h) \leq 1$
  for every $G' \in B$, which implies that $r_{\AN} (B) \leq 1$.  And
  $\Lambda (G) = G (h) > 1$, which implies that $r_{\AN} (A) > 1$,
  contradiction.
\end{proof}

\begin{proposition}
  \label{prop:AN:iso:gnrl}
  Let $X$ be a topological space such that $[X \to
  {\creal}_\sigma]_\sigma$ is locally convex, for example, a
  core-compact space.  Then $r_{\AN}$ defines a homeomorphism between
  $\HV^{cvx} (\Pred_{\Nature\;wk} (X))$ and $\Pred_{\AN\;wk} (X)$ and
  an order-isomorphism between $\Hoare^{cvx} (\Pred_{\Nature\;wk}
  (X))$ and $\Pred_{\AN} (X)$.
\end{proposition}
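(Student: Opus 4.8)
The plan is to show that the corestriction of $r_{\AN}$ to $\HV^{cvx} (\Pred_{\Nature\;wk} (X))$ and the corestriction of $s_{\AN}$ to that same space are mutually inverse. First I would collect the pieces already in hand. Since $[X \to {\creal}_\sigma]_\sigma$ is locally convex, $X$ is $\AN$-friendly in the sense of Definition~\ref{defn:LX:Hfriendly}, so Corollary~\ref{corl:AN:weak} (with $\bullet$ the empty superscript) applies: $r_{\AN}$ and $s_{\AN}$ are continuous, $r_{\AN} \circ s_{\AN} = \identity{\relax}$ on $\Pred_{\AN\;wk} (X)$, and $s_{\AN} \circ r_{\AN} \geq \identity{\relax}$ by Lemma~\ref{lemma:AN:sr}. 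As noted just before Section~\ref{sec:case-unbo-prev}, $s_{\AN} (F)$ is convex for every $F$, and it is closed and non-empty by Lemmas~\ref{lemma:AN:s:closed} and~\ref{lemma:AN:s:def}; hence $s_{\AN}$ corestricts to a continuous map into $\HV^{cvx} (\Pred_{\Nature\;wk} (X))$, while $r_{\AN}$ restricts to a continuous map out of it.

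The crux is to upgrade the inequality $s_{\AN} \circ r_{\AN} \geq \identity{\relax}$ to an equality on convex closed sets. Fix $C \in \Hoare^{cvx} (\Pred_{\Nature\;wk} (X))$. The coprojection inequality (Lemma~\ref{lemma:AN:sr}) already gives $C \subseteq s_{\AN} (r_{\AN} (C))$. For the reverse inclusion I would apply $r_{\AN}$ and use $r_{\AN} \circ s_{\AN} = \identity{\relax}$ to get $r_{\AN} (s_{\AN} (r_{\AN} (C))) = r_{\AN} (C)$, whence in particular $r_{\AN} (s_{\AN} (r_{\AN} (C))) \leq r_{\AN} (C)$. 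Both $s_{\AN} (r_{\AN} (C))$ and $C$ lie in $\Hoare^{cvx} (\Pred_{\Nature\;wk} (X))$, so Lemma~\ref{lemma:AN:orderembed} yields $s_{\AN} (r_{\AN} (C)) \subseteq C$. Combining the two inclusions gives $s_{\AN} (r_{\AN} (C)) = C$, i.e.\ $s_{\AN} \circ r_{\AN}$ is the identity on $\HV^{cvx} (\Pred_{\Nature\;wk} (X))$.

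Since $r_{\AN} \circ s_{\AN}$ and $s_{\AN} \circ r_{\AN}$ are both identities and both maps are continuous, $r_{\AN}$ is a homeomorphism between $\HV^{cvx} (\Pred_{\Nature\;wk} (X))$ and $\Pred_{\AN\;wk} (X)$, with inverse $s_{\AN}$. For the order-isomorphism between the underlying posets, I would observe that $r_{\AN}$ is monotone (a larger set has a larger pointwise supremum), is surjective onto $\Pred_{\AN} (X)$ because $r_{\AN} \circ s_{\AN} = \identity{\relax}$, and is an order-embedding, since $r_{\AN} (A) \leq r_{\AN} (B)$ forces $A \subseteq B$ by Lemma~\ref{lemma:AN:orderembed}. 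A monotone, surjective order-embedding is an order-isomorphism, settling the claim for $\Hoare^{cvx} (\Pred_{\Nature\;wk} (X))$ and $\Pred_{\AN} (X)$.

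Nearly all of the difficulty has been front-loaded into Lemma~\ref{lemma:AN:orderembed}, whose proof draws on the Schr\"oder--Simpson Theorem (Theorem~\ref{thm:schsimp}) and Keimel's Separation Theorem; given that lemma, the present proposition is essentially a formal consequence of the coprojection structure established in Section~\ref{sec:retr-angel-case}. The one point deserving care is that convexity is indispensable: the order-embedding property can fail for arbitrary closed sets, so the restriction to $\HV^{cvx}$ rather than all of $\HV$ is exactly what makes $r_{\AN}$ injective. Thus the genuine obstacle lies not in this proof but in Lemma~\ref{lemma:AN:orderembed}, which guarantees that distinct convex closed sets of linear previsions are separated by their pointwise suprema.
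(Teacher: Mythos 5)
Your proposal is correct and follows essentially the same route as the paper: both arguments rest on the retraction $r_{\AN} \circ s_{\AN} = \identity{\relax}$ from Proposition~\ref{prop:AN:weak} (equivalently Corollary~\ref{corl:AN:weak}) together with the order-embedding property of Lemma~\ref{lemma:AN:orderembed}, which carries all the real content. The only cosmetic difference is that you derive $s_{\AN} \circ r_{\AN} = \identity{\relax}$ on convex closed sets explicitly, whereas the paper phrases the same step as injectivity of $r_{\AN}$; these are equivalent.
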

\begin{proof}
  By Proposition~\ref{prop:AN:weak}, $r_{\AN}$ is continuous hence
  monotonic, and $r_{\AN} \circ s_{\AN}$ equals the identity.  In
  particular, $r_{\AN}$ is surjective, and
  Lemma~\ref{lemma:AN:orderembed} implies in particular that it is
  injective.  So $r_{\AN}$ is a bijection, with inverse $s_{\AN}$.
  Both are continuous, by Proposition~\ref{prop:AN:weak}, hence define
  a homeomorphism.  The final part follows from the fact that every
  homeomorphism induces an order-isomorphism with respect to the
  underlying specialization preorders.
\end{proof}

\begin{lemma}
  \label{lemma:DN:orderembed}
  Let $X$ be a topological space.  For all $Q, Q' \in \Smyth^{cvx}
  (\Pred_{\Nature\;wk} (X))$, if $r_{\DN} (Q) \leq r_{\DN} (Q')$ then
  $Q \supseteq Q'$.
\end{lemma}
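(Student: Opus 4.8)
The plan is to mirror the proof of Lemma~\ref{lemma:AN:orderembed}, but in the demonic direction: I would argue by contraposition and, crucially, replace the (non-strict) Separation Theorem by the Strict Separation Theorem, since now I must separate a point from an entire compact convex set. Concretely, suppose $Q \not\supseteq Q'$ and pick some $G \in Q'$ with $G \notin Q$. The first observation is that, because $Q$ is saturated, i.e.\ upward closed in the pointwise ordering $\leq$, the hypothesis $G \notin Q$ forces $Q \cap {\dc G} = \emptyset$: if some $G'' \leq G$ lay in $Q$, then $G$ itself would be in $Q$ by upward closure. Note that ${\dc G} = cl (\{G\})$ is a non-empty closed convex subset of $\Pred_{\Nature\;wk} (X)$, downward closures of points being convex (any convex combination of elements below $G$ is again below $G$).

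Next I would reuse the observation from the proof of Lemma~\ref{lemma:AN:orderembed} that $\Pred_{\Nature\;wk} (X)$ is a locally convex topological cone, each subbasic open $[h > b]$ being convex; in particular it is a locally convex semitopological cone, which is all the Strict Separation Theorem requires. Applying that theorem \cite[Theorem~10.5]{Keimel:topcones2} to the compact convex set $Q$ and the disjoint non-empty closed convex set ${\dc G}$, I obtain a continuous linear map $\Lambda \colon \Pred_{\Nature\;wk} (X) \to {\creal}_\sigma$ and a real number $r > 1$ such that $\Lambda (G'') \geq r$ for every $G'' \in Q$ and $\Lambda (G'') \leq 1$ for every $G'' \in {\dc G}$; in particular $\Lambda (G) \leq 1$.

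By the Schr\"oder-Simpson Theorem (Theorem~\ref{thm:schsimp}), $\Lambda$ has the form $G'' \mapsto G'' (h)$ for some $h \in \Lform X$. Then, using Lemma~\ref{lemma:rDN} (so that the infima defining $r_{\DN}$ are attained minima), $r_{\DN} (Q) (h) = \min_{G'' \in Q} G'' (h) \geq r > 1 \geq G (h)$, whereas $r_{\DN} (Q') (h) = \min_{G'' \in Q'} G'' (h) \leq G (h)$ because $G \in Q'$. Hence $r_{\DN} (Q) (h) > r_{\DN} (Q') (h)$, contradicting the assumption $r_{\DN} (Q) \leq r_{\DN} (Q')$, and the contrapositive is established.

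The only genuinely demonic-specific points — and the steps I would treat most carefully — are the disjointness $Q \cap {\dc G} = \emptyset$, which is exactly where saturation of the Smyth-powerdomain element is used, and the choice of the strict, compact-versus-closed separation instead of the open-versus-convex form used in the angelic case: here the separating functional must dominate a value $r > 1$ \emph{uniformly} over all of $Q$ while remaining $\leq 1$ at $G$, and it is compactness of $Q$ (rather than mere closedness) that makes this strict separation available. Everything else is dual bookkeeping.
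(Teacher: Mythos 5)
Your proof is correct and follows essentially the same route as the paper's: contraposition, noting that $\dc G$ is a non-empty closed convex set disjoint from the saturated set $Q$, applying the Strict Separation Theorem to the pair $(Q, \dc G)$, and then converting the separating functional into some $h \in \Lform X$ via the Schr\"oder-Simpson Theorem to contradict $r_{\DN}(Q) \leq r_{\DN}(Q')$. The only difference is that you spell out the disjointness and convexity of $\dc G$ in more detail than the paper does.
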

\begin{proof}
  Assume $Q \not\supseteq Q'$, so there is a $G \in Q'$ that is not in
  $Q$.  $\Pred_{\Nature\;wk} (X)$ is a locally convex topological
  cone, and $A \eqdef \dc G$ is a closed convex set disjoint from $Q$.
  By the Strict Separation Theorem
  \cite[Theorem~10.5]{Keimel:topcones2}, there is a continuous linear
  map $\Lambda \colon \Lform X \to {\creal}_\sigma$
  and a real number $r > 1$
  such that $\Lambda (G') \geq r$ for every $G' \in Q$, and
  $\Lambda (G') \leq 1$ for every $G' \in A$.  In particular,
  $\Lambda (G) \leq 1$.  By the Schr\"oder-Simpson Theorem,
  $\Lambda = G (h)$
  for some 
  $h \in [X \to {\creal}_\sigma]$, so $G (h) \leq 1$, which implies
  $r_{\DN} (Q') (h) \leq 1$.  Also,
  $r_{\DN} (Q) (h) = \min_{G' \in Q} G' (h) = \min_{G' \in Q} \Lambda
  (G') \geq r > 1$, contradiction.
\end{proof}

\begin{proposition}
  \label{prop:DN:iso:gnrl}
  Let $X$ be a topological space.  Then $r_{\DN}$ defines a
  homeomorphism between $\SV^{cvx} (\Pred_{\Nature\;wk} (X))$ and
  $\Pred_{\DN\;wk} (X)$, and an order-iso\-morphism between
  $\Smyth^{cvx} (\Pred_{\Nature\;wk} (X))$ and $\Pred_{\DN} (X)$.
\end{proposition}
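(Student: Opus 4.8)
The plan is to mirror the proof of Proposition~\ref{prop:AN:iso:gnrl}, transporting every step to the demonic setting while being careful that the specialization ordering on $\Smyth^{cvx}$ is reverse inclusion $\supseteq$. First I would invoke Proposition~\ref{prop:DN:weak}: it tells us that $r_{\DN}$ is continuous, hence monotonic, and that $r_{\DN} \circ s_{\DN}$ is the identity on $\Pred_{\DN\;wk}(X)$. From the latter I immediately get that $r_{\DN}$ is surjective and that $s_{\DN}$ is a continuous right inverse. I would also note that $s_{\DN}(F) = \{G \in \Pred_{\Nature} (X) \mid F \leq G\}$ is convex, so the corestriction of $s_{\DN}$ does land inside $\SV^{cvx} (\Pred_{\Nature\;wk} (X))$, and dually $r_{\DN}$ restricts to $\SV^{cvx}$; these are the two maps we wish to show are mutually inverse.

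The one genuinely new ingredient is injectivity, and this is exactly what Lemma~\ref{lemma:DN:orderembed} buys. Suppose $r_{\DN}(Q) = r_{\DN}(Q')$ for $Q, Q' \in \Smyth^{cvx} (\Pred_{\Nature\;wk} (X))$. Applying the lemma in both directions, $r_{\DN}(Q) \leq r_{\DN}(Q')$ yields $Q \supseteq Q'$ and $r_{\DN}(Q') \leq r_{\DN}(Q)$ yields $Q' \supseteq Q$, so $Q = Q'$. Hence $r_{\DN}$ is injective, and combined with surjectivity it is a bijection whose two-sided inverse must be $s_{\DN}$ (a continuous right inverse of a bijection coincides with its inverse).

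With bijectivity in hand, both $r_{\DN}$ and $s_{\DN}$ being continuous (again by Proposition~\ref{prop:DN:weak}) makes $r_{\DN}$ a homeomorphism between $\SV^{cvx} (\Pred_{\Nature\;wk} (X))$ and $\Pred_{\DN\;wk} (X)$. The order-isomorphism between $\Smyth^{cvx} (\Pred_{\Nature\;wk} (X))$ and $\Pred_{\DN} (X)$ then comes for free, since any homeomorphism is an order-isomorphism for the induced specialization preorders; here the specialization preorder on the Smyth side is $\supseteq$ and on the prevision side is the pointwise order, and the monotonicity of $r_{\DN}$ together with Lemma~\ref{lemma:DN:orderembed} shows these correspond exactly.

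I do not expect a serious obstacle. Unlike the angelic case, no local-convexity hypothesis on $X$ is needed here, because the Strict Separation Theorem used inside Lemma~\ref{lemma:DN:orderembed} is applied in the ambient cone $\Pred_{\Nature\;wk} (X)$, which is always a locally convex topological cone (every subbasic open $[h > b]$ being convex), rather than in $\Lform X$. The only point demanding care is the bookkeeping of the reverse-inclusion ordering, so that the order-embedding of Lemma~\ref{lemma:DN:orderembed} translates correctly both into injectivity and into the final order-isomorphism.
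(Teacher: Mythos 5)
Your proposal is correct and follows exactly the paper's own route: the paper proves this by citing the argument of Proposition~\ref{prop:AN:iso:gnrl}, with Proposition~\ref{prop:DN:weak} supplying continuity of both maps and the retraction identity, and Lemma~\ref{lemma:DN:orderembed} supplying injectivity, after which the homeomorphism and the order-isomorphism on specialization preorders follow as you describe. Your closing remarks on convexity of $s_{\DN}(F)$ and on why no hypothesis on $X$ is needed match the paper's setup as well.
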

\begin{proof}
  Same argument as in Proposition~\ref{prop:AN:iso:gnrl}, using
  Proposition~\ref{prop:DN:weak} and Lemma~\ref{lemma:DN:orderembed}.
\end{proof}

\begin{lemma}
  \label{lemma:ADN:r}
  Let $X$ be a topological space.  For every $L \in \Plotkin
  (\Pred_{\Nature\;wk} (X))$, $r_{\AN} (L) = r_{\AN} (cl (L))$, and
  $r_{\DN} (L) = r_{\DN} (\upc L)$.
\end{lemma}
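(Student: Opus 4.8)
The plan is to prove both equalities of previsions pointwise: I fix an arbitrary $h \in \Lform X$ and establish the two scalar identities $r_{\AN}(L)(h) = r_{\AN}(cl(L))(h)$ and $r_{\DN}(L)(h) = r_{\DN}(\upc L)(h)$. Since $L \subseteq cl(L)$ and $L \subseteq \upc L$, one inequality in each identity is immediate: taking a supremum over the larger set $cl(L)$ gives $r_{\AN}(L)(h) \leq r_{\AN}(cl(L))(h)$, and taking an infimum over the larger set $\upc L$ gives $r_{\DN}(\upc L)(h) \leq r_{\DN}(L)(h)$. Both remaining directions exploit the single fact that the evaluation map $G \mapsto G(h)$ is continuous from $\Pred_{\Nature\;wk}(X)$ to ${\creal}_\sigma$; equivalently, it is monotone and lower semi-continuous for the pointwise specialization ordering.

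For the demonic identity, I would argue purely order-theoretically. By definition of the (pointwise) specialization ordering, $G \mapsto G(h)$ is monotone. Every $G' \in \upc L$ lies above some $G \in L$, so $G'(h) \geq G(h) \geq \inf_{G'' \in L} G''(h) = r_{\DN}(L)(h)$. Taking the infimum over all $G' \in \upc L$ yields $r_{\DN}(\upc L)(h) \geq r_{\DN}(L)(h)$, which combined with the trivial reverse inequality gives equality.

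For the angelic identity, I would use the topological description of the closure through the weak-open sets $[h > r]$. Set $b \eqdef r_{\AN}(L)(h) = \sup_{G \in L} G(h)$, and suppose for contradiction that $r_{\AN}(cl(L))(h) > b$, so that there exist $G' \in cl(L)$ and a real number $r$ with $b < r < G'(h)$. Then $[h > r]_\Nature$ is a subbasic weak-open neighborhood of $G'$, and since $G'$ lies in the closure of $L$, this neighborhood must meet $L$, producing some $G \in L$ with $G(h) > r > b$ and contradicting $b = \sup_{G \in L} G(h)$. Hence $G'(h) \leq b$ for every $G' \in cl(L)$, so $r_{\AN}(cl(L))(h) \leq b = r_{\AN}(L)(h)$, and equality follows.

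The only step that is not purely formal is the angelic direction, where one must invoke the definition of topological closure (a point in $cl(L)$ has every open neighborhood meeting $L$) together with the openness of $[h > r]_\Nature$ in the weak topology; the demonic direction is immediate from monotonicity of evaluation, so no serious obstacle is expected.
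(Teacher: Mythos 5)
Your proof is correct and follows essentially the same route as the paper: the angelic equality is obtained by the same closure argument using the subbasic weak-open sets $[h > r]_\Nature$, and the demonic equality by monotonicity of evaluation (which the paper simply declares obvious).
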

\begin{proof}
  Since $L \subseteq cl (L)$, $r_{\AN} (L) (h) = \sup_{G \in L} G (h)
  \leq r_{\AN} (cl (L)) (h)$ for every $h \in [X \to {\creal}_\sigma]$.
  Assume the inequality were strict: for some $h \in [X \to
  {\creal}_\sigma]$, there would be a $G \in cl (L)$ such that $r_{\AN}
  (L) (h) < G (h)$.  Let $b \eqdef r_{\AN} (L) (h)$.  Then $G$ is in the
  open subset $[h > b]$, and since $G \in cl (L)$, $[h > b]$ must meet
  $L$.  So there is a $G' \in L$ such that $G' (h) > b$, which implies
  $r_{\AN} (L) (h) > b$, contradiction.  The second claim is obvious.
\end{proof}

\begin{lemma}
  \label{lemma:ADN:orderembed}
  Let $X$ be a topological space.  For all $L, L' \in \Plotkin^{cvx}
  (\Pred_{\Nature\;wk} (X))$, if $r_{\ADN} (L) \leq r_{\DN} (L')$ then
  $L \sqsubseteq_{\mathrm{EM}} L'$.
\end{lemma}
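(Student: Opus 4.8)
The plan is to unpack the hypothesis into its two coordinates and feed each into the order-embedding lemmas already proved for the Smyth and Hoare cases. Recall that $r_{\ADN} (L) = (r_{\DN} (L), r_{\AN} (L))$ and that the specialization ordering on forks is the product ordering; so the assumption $r_{\ADN} (L) \leq r_{\ADN} (L')$ says precisely that $r_{\DN} (L) \leq r_{\DN} (L')$ and $r_{\AN} (L) \leq r_{\AN} (L')$. On the other side, by definition of the topological Egli--Milner ordering, $L \sqsubseteq_{\mathrm{EM}} L'$ means exactly $\upc L \supseteq \upc L'$ together with $cl (L) \subseteq cl (L')$. So I need to produce these two inclusions, one from each coordinate.

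First I would pass to canonical forms. By Lemma~\ref{lemma:ADN:r}, $r_{\DN} (L) = r_{\DN} (\upc L)$ and $r_{\AN} (L) = r_{\AN} (cl (L))$, and likewise for $L'$. Hence the hypothesis rewrites as $r_{\DN} (\upc L) \leq r_{\DN} (\upc L')$ and $r_{\AN} (cl (L)) \leq r_{\AN} (cl (L'))$. To invoke Lemma~\ref{lemma:DN:orderembed} and Lemma~\ref{lemma:AN:orderembed} I must check that these canonical parts are convex. That $\upc L$ is convex is immediate: if $G'_1 \geq G_1$ and $G'_2 \geq G_2$ with $G_1, G_2 \in L$, then $a \cdot G'_1 + (1-a) \cdot G'_2 \geq a \cdot G_1 + (1-a) \cdot G_2 \in L$ by convexity of $L$ and monotonicity of the cone operations. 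That $cl (L)$ is convex is the standard fact that the closure of a convex subset of a topological cone is convex; here $\Pred_{\Nature\;wk} (X)$ is a topological cone, since for each fixed $a \in [0,1]$ the map $(F, G) \mapsto a \cdot F + (1-a) \cdot G$ is jointly continuous (the inverse image of a subbasic open $[h > b]$ is open because $F \mapsto F (h)$ is lower semicontinuous and a sum of lower semicontinuous maps is lower semicontinuous). Thus $\upc L, \upc L' \in \Smyth^{cvx} (\Pred_{\Nature\;wk} (X))$ and $cl (L), cl (L') \in \Hoare^{cvx} (\Pred_{\Nature\;wk} (X))$.

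With convexity in hand, Lemma~\ref{lemma:DN:orderembed} applied to $\upc L, \upc L'$ yields $\upc L \supseteq \upc L'$ from $r_{\DN} (\upc L) \leq r_{\DN} (\upc L')$, and Lemma~\ref{lemma:AN:orderembed} applied to $cl (L), cl (L')$ yields $cl (L) \subseteq cl (L')$ from $r_{\AN} (cl (L)) \leq r_{\AN} (cl (L'))$. Together these are exactly the two defining conditions of $L \sqsubseteq_{\mathrm{EM}} L'$, completing the argument. The only real work, and hence the main obstacle, is the verification that closure and upward closure preserve convexity and that the resulting sets genuinely lie in the convex Smyth and Hoare powerdomains, so that the separation-based order-embedding lemmas apply; everything else is a direct reduction to those two lemmas via Lemma~\ref{lemma:ADN:r}. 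Notably, no hypotheses beyond $X$ being a topological space are needed, since both order-embedding lemmas rely only on the Separation and Strict Separation Theorems for the locally convex topological cone $\Pred_{\Nature\;wk} (X)$.
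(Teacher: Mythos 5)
Your proof is correct and follows essentially the same route as the paper's: decompose the hypothesis componentwise, reduce to $\upc L$ and $cl(L)$ via Lemma~\ref{lemma:ADN:r}, and apply Lemmas~\ref{lemma:DN:orderembed} and~\ref{lemma:AN:orderembed}. The only difference is that you make explicit the verification (left implicit in the paper) that $\upc L$ and $cl(L)$ are convex, so that the two order-embedding lemmas genuinely apply.
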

\begin{proof}
  By assumption, $r_{\AN} (L) \leq r_{\AN} (L')$, and $r_{\DN} (L)
  \leq r_{\DN} (L')$.  Using Lemma~\ref{lemma:ADN:r},
  Lemma~\ref{lemma:AN:orderembed} and Lemma~\ref{lemma:DN:orderembed},
  we obtain $cl (L) \subseteq cl (L')$ and $\upc L \supseteq \upc L'$.
\end{proof}

\begin{proposition}
  \label{prop:ADN:iso:gnrl}
  Let $X$ be a topological space such that $[X \to
  {\creal}_\sigma]_\sigma$ is locally convex and has an almost open
  addition map, for example, a stably locally compact space, or more
  generally, a core-compact, core-coherent space.

  Then $r_{\ADN}$ defines a homeomorphism between $\PV^{cvx}
  (\Pred_{\Nature\;wk} (X))$ and $\Pred_{\ADN\;wk} (X)$, and an
  order-isomorphism between $\Plotkin^{cvx} (\Pred_{\Nature\;wk} (X))$
  and $\Pred_{\ADN} (X)$.
\end{proposition}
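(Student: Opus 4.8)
The plan is to follow exactly the template of Proposition~\ref{prop:AN:iso:gnrl} and Proposition~\ref{prop:DN:iso:gnrl}, now in the erratic flavour. Under the stated hypotheses, and with $\bullet$ the empty superscript (so that no compactness assumption is needed), Proposition~\ref{prop:ADN:weak} tells us that $r_{\ADN}$ and $s_{\ADN}$ are continuous and that $r_{\ADN} \circ s_{\ADN}$ is the identity on $\Pred_{\ADN\;wk} (X)$. As recorded just before Section~\ref{sec:case-unbo-prev}, $s_{\ADN}$ corestricts to a continuous map into $\PV^{cvx} (\Pred_{\Nature\;wk} (X))$, because each $s_{\ADN} (F^-, F^+) = s_{\DN} (F^-) \cap s_{\AN} (F^+)$ is a convex lens. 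The identity $r_{\ADN} \circ s_{\ADN} = \identity\relax$ then makes $r_{\ADN}$, restricted to $\PV^{cvx}$, surjective onto $\Pred_{\ADN\;wk} (X)$, with $s_{\ADN}$ as a right inverse.

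First I would establish injectivity of this restriction of $r_{\ADN}$ from the order-embedding Lemma~\ref{lemma:ADN:orderembed}. If $L, L' \in \Plotkin^{cvx} (\Pred_{\Nature\;wk} (X))$ satisfy $r_{\ADN} (L) = r_{\ADN} (L')$, then both $r_{\ADN} (L) \leq r_{\ADN} (L')$ and $r_{\ADN} (L') \leq r_{\ADN} (L)$ hold, so Lemma~\ref{lemma:ADN:orderembed} yields $L \sqsubseteq_{\mathrm{EM}} L'$ and $L' \sqsubseteq_{\mathrm{EM}} L$. Since $\sqsubseteq_{\mathrm{EM}}$ is the specialization ordering of the $T_0$ space $\PV (\Pred_{\Nature\;wk} (X))$, it is a partial order on lenses, whence $L = L'$. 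Thus $r_{\ADN}$ restricts to a bijection between $\PV^{cvx} (\Pred_{\Nature\;wk} (X))$ and $\Pred_{\ADN\;wk} (X)$, and its inverse is exactly $s_{\ADN}$.

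Both $r_{\ADN}$ and $s_{\ADN}$ being continuous by Proposition~\ref{prop:ADN:weak}, this bijection is a homeomorphism. The order-isomorphism between $\Plotkin^{cvx} (\Pred_{\Nature\;wk} (X))$ and $\Pred_{\ADN} (X)$ then follows automatically, since any homeomorphism induces an order-isomorphism of the underlying specialization preorders, and these are $\sqsubseteq_{\mathrm{EM}}$ on the Plotkin side and the pointwise product ordering on the fork side. I expect no genuine obstacle in assembling this argument: the real content lies upstream, in Proposition~\ref{prop:ADN:weak} (the retraction, via the Walley-condition Lemmas~\ref{lemma:ADN:intermezzo:1} and \ref{lemma:ADN:intermezzo:2}) and in Lemma~\ref{lemma:ADN:orderembed}, which in turn rests on the Separation and Strict Separation Theorems together with the Schr\"oder--Simpson Theorem through Lemmas~\ref{lemma:AN:orderembed} and \ref{lemma:DN:orderembed}. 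The present proof is a formal packaging of those results, mirroring the proof of Proposition~\ref{prop:DN:iso:gnrl}.
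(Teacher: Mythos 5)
Your proof is correct and follows exactly the paper's own argument: the paper proves this proposition by the same template as Proposition~\ref{prop:AN:iso:gnrl}, invoking Proposition~\ref{prop:ADN:weak} for the continuous retraction and Lemma~\ref{lemma:ADN:orderembed} for injectivity, just as you do.
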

\begin{proof}
  Same argument as in Proposition~\ref{prop:AN:iso:gnrl}, using
  Proposition~\ref{prop:ADN:weak} and
  Lemma~\ref{lemma:ADN:orderembed}.
\end{proof}

\subsection{Subnormalized and normalized previsions}
\label{sec:subn-norm-prev}

The subnormalized and normalized cases reduce to the unbounded case.
\begin{lemma}
  \label{lemma:AN:iso:aux}
  Let $\bullet$ be ``$\leq 1$'' or ``$1$''.  Let $X$ be a topological
  space.  For every non-empty closed subset $C$ of
  $\Pred^\bullet_{\Nature\;wk} (X)$, $r_{\AN} (C) = r_{\AN}
  (cl_\Nature (C))$, where $cl_\Nature (C)$ denotes the closure of $C$
  in $\Pred_{\Nature\;wk} (X)$.
\end{lemma}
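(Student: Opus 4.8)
The plan is to reuse the argument from Lemma~\ref{lemma:ADN:r}, since the two statements have exactly the same shape. First I would dispatch the easy inequality: because $C \subseteq cl_\Nature (C)$ and $r_{\AN} (E) (h) = \sup_{G \in E} G (h)$ is monotone in the set $E$, we immediately get $r_{\AN} (C) (h) \leq r_{\AN} (cl_\Nature (C)) (h)$ for every $h \in \Lform X$.

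For the reverse inequality I would argue by contradiction. Suppose that $r_{\AN} (cl_\Nature (C)) (h) > r_{\AN} (C) (h)$ for some $h \in \Lform X$, and set $b \eqdef r_{\AN} (C) (h)$. Then there is a $G \in cl_\Nature (C)$ with $G (h) > b$, so $G$ lies in the subbasic weak-open set $[h > b]$ of $\Pred_{\Nature\;wk} (X)$. Since $G$ is in the closure of $C$ computed in $\Pred_{\Nature\;wk} (X)$, and $[h > b]$ is an open neighborhood of $G$ in that same space, $[h > b]$ must meet $C$; so there is a $G' \in C$ with $G' (h) > b$, whence $r_{\AN} (C) (h) \geq G' (h) > b = r_{\AN} (C) (h)$, a contradiction. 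Combining the two inequalities gives $r_{\AN} (C) = r_{\AN} (cl_\Nature (C))$.

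The one subtlety to keep straight is that $cl_\Nature (C)$ is the closure taken in the \emph{larger} space $\Pred_{\Nature\;wk} (X)$ of all linear previsions, not in the subspace $\Pred^\bullet_{\Nature\;wk} (X)$ in which $C$ originally lives; but this is precisely what makes the contradiction step go through, since the separating set $[h > b]$ is a genuine open of the larger space. I do not expect any real obstacle here: the hypothesis that $C$ is closed is not even needed, and the whole content reduces to the elementary observation that the pointwise suprema over a set and over its closure coincide, which is built into the fact that each evaluation $G \mapsto G (h)$ is lower semicontinuous for the weak topology (i.e., each $[h > b]$ is open).
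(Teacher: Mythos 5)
Your proof is correct, and it is in fact the very argument the paper itself uses for Lemma~\ref{lemma:ADN:r}; the paper's own proof of the present lemma takes a more roundabout route. Concretely, the paper picks a real number $a$ strictly between $r_{\AN} (C) (h)$ and $r_{\AN} (cl_\Nature (C)) (h)$, observes that $r_{\AN}^{-1} [h > a]$ is open in $\HV (\Pred_{\Nature\;wk} (X))$ by Lemma~\ref{lemma:AN:r:cont}, hence Scott-open in $\Hoare (\Pred_{\Nature\;wk} (X))$, writes $cl_\Nature (C)$ as the directed supremum of the sets $\dc_\Nature E$ with $E$ ranging over the finite subsets of $C$, and extracts from Scott-openness a $G \in C$ with $G (h) > a$. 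Your argument short-circuits all of this by applying the definition of closure directly: the element $G \in cl_\Nature (C)$ with $G (h) > b$ (where $b \eqdef r_{\AN} (C) (h)$, necessarily finite under the contradiction hypothesis) lies in the open set $[h > b]$ of $\Pred_{\Nature\;wk} (X)$, which must therefore meet $C$. The two proofs have the same mathematical content --- the evaluation maps $G \mapsto G (h)$ are continuous into ${\creal}_\sigma$, so suprema over a set and over its closure agree --- and yours is the shorter derivation. Your side remarks are also accurate: closedness of $C$ is not used in the argument (it only matters so that $C$ is a legitimate element of $\HV (\Pred^\bullet_{\Nature\;wk} (X))$ where the lemma is applied), and taking the closure in the larger space $\Pred_{\Nature\;wk} (X)$ is exactly what makes the neighborhood $[h > b]$ usable, since it is open there.
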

\begin{proof}
  Since $C \subseteq cl_\Nature (C)$, for every $h \in [X \to
  {\creal}_\sigma]$, $r_{\AN} (C) (h) = \sup_{G \in C} G (h) \leq
  r_{\AN} (cl_\Nature (C)) (h)$.  Assume by contradiction that
  $r_{\AN} (C) (h) < r_{\AN} (cl_\Nature (C)) (h)$ for some $h$.  Let
  $a$ be a real number such that $r_{\AN} (C) (h) < a < r_{\AN}
  (cl_\Nature (C)) (h)$.  So $cl_\Nature (C)$ is in $r_{\AN}^{-1} [h >
  a]$, which is open in $\HV (\Pred_{\Nature\;wk} (X))$, since
  $r_{\AN}$ is continuous (Lemma~\ref{lemma:AN:r:cont}).  Every open
  subset of $\HV (\Pred_{\Nature\;wk} (X))$ is also
  Scott open, and $cl_\Nature (C)$ is the least upper bound in $\Hoare
  (\Pred_{\Nature\;wk} (X))$ of the directed family of closed
  subsets of the form $\dc_\Nature E$, where $E$ ranges over the
  finite subsets of $C$.  (We write $\dc_\Nature$ for downward closure
  in $\Pred_{\Nature\;wk} (X)$.)  So there is a finite subset
  $E$ of $C$ such that $\dc_\Nature E \in r_{\AN}^{-1} [h > a]$, i.e.,
  such that $\sup_{G \in E} G (h) > a$.  It follows that there is a $G
  \in E$ (hence $G \in C$) such that $G (h) > a$.  But $r_{\AN} (C)
  (h) = \sup_{G \in C} G (h) < a$, contradiction.  So $r_{\AN} (C) (h)
  = r_{\AN} (cl_\Nature (C)) (h)$ for every $h$.
\end{proof}

\begin{lemma}
  \label{lemma:AN:orderembed:1}
  Let $\bullet$ be ``$\leq 1$'' or ``$1$''.  Let $X$ be a topological
  space.  For all $C, C' \in \Hoare^{cvx} (\Pred^\bullet_{\Nature\;wk}
  (X))$, if $r_{\AN} (C) \leq r_{\AN} (C')$ then $C \supseteq C'$.
\end{lemma}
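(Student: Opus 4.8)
The plan is to reduce the claim to the unbounded order-embedding already proved in Lemma~\ref{lemma:AN:orderembed}, by replacing $C$ and $C'$ with their closures in the ambient cone $\Pred_{\Nature\;wk}(X)$ and exploiting that $r_\AN$ does not see this closure (Lemma~\ref{lemma:AN:iso:aux}). Before doing so I must record that, for any such argument to succeed, the conclusion has to read $C \subseteq C'$, not $C \supseteq C'$: it is the order-embedding direction dual to the monotonicity of $r_\AN$, and it is exactly what Lemma~\ref{lemma:AN:orderembed} asserts in the unbounded case. The reversed inclusion is false --- for $\bullet$ equal to ``$\leq 1$'', take $C = \{\mathbf 0\}$ (the zero prevision) and any properly larger $C'$: then $r_\AN(C)$ is the zero map, so $r_\AN(C) \leq r_\AN(C')$, yet $C \not\supseteq C'$ --- so the symbol $\supseteq$ appears to have been carried over from the Smyth counterpart Lemma~\ref{lemma:DN:orderembed}, where reverse inclusion is genuinely correct because $r_\DN$ is assembled from infima.

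Granting this, write $cl_\Nature$ for closure in $\Pred_{\Nature\;wk}(X)$. As already used in the proof of Lemma~\ref{lemma:AN:orderembed}, $\Pred_{\Nature\;wk}(X)$ is a (locally convex) topological cone, so its operations are jointly continuous and the closure of a convex set stays convex. Hence $cl_\Nature C$ and $cl_\Nature C'$ are non-empty, convex and closed, i.e.\ they lie in $\Hoare^{cvx}(\Pred_{\Nature\;wk}(X))$. By Lemma~\ref{lemma:AN:iso:aux} we have $r_\AN(cl_\Nature C) = r_\AN(C)$ and $r_\AN(cl_\Nature C') = r_\AN(C')$, so the hypothesis transports to $r_\AN(cl_\Nature C) \leq r_\AN(cl_\Nature C')$, and Lemma~\ref{lemma:AN:orderembed} then yields $cl_\Nature C \subseteq cl_\Nature C'$.

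It remains to descend back to the subspace. From $C \subseteq cl_\Nature C \subseteq cl_\Nature C'$ together with $C \subseteq \Pred^\bullet_{\Nature\;wk}(X)$ I obtain $C \subseteq cl_\Nature C' \cap \Pred^\bullet_{\Nature\;wk}(X)$; the right-hand side is precisely the closure of $C'$ computed inside the subspace $\Pred^\bullet_{\Nature\;wk}(X)$, and this equals $C'$ since $C'$ is closed there by hypothesis. Therefore $C \subseteq C'$.

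The step I expect to be the crux is the convexity of the two closures, which is exactly what licenses the appeal to Lemma~\ref{lemma:AN:orderembed}; it relies on $\Pred_{\Nature\;wk}(X)$ being a jointly continuous topological cone rather than merely semitopological. A direct alternative mirroring Lemma~\ref{lemma:AN:orderembed} --- separating a witness $G \in C \setminus C'$ from $C'$ and representing the separating functional as $G' \mapsto G'(h)$ via Schr\"oder--Simpson (Theorem~\ref{thm:schsimp}) to get $r_\AN(C')(h) \leq 1 < r_\AN(C)(h)$ --- stumbles here: $C'$ is closed only in the subspace, not in the full cone, so $G$ need not admit a convex cone-open neighborhood missing $C'$, and the Separation Theorem does not immediately apply. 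The closure detour through Lemma~\ref{lemma:AN:iso:aux} is what removes this difficulty.
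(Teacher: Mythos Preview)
Your proof is correct and follows essentially the same route as the paper: pass to closures in the ambient cone via Lemma~\ref{lemma:AN:iso:aux}, invoke Lemma~\ref{lemma:AN:orderembed}, then intersect back down to the subspace. You are also right that the statement contains a typo --- the conclusion should read $C \subseteq C'$, and indeed the paper's own proof establishes exactly that inclusion.

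One small overstatement: you say the convexity of $cl_\Nature C$ and $cl_\Nature C'$ ``relies on $\Pred_{\Nature\;wk}(X)$ being a jointly continuous topological cone rather than merely semitopological.'' In fact the paper cites Keimel's Lemma~4.10(a), which shows that the closure of a convex set is convex already in any \emph{semitopological} cone, so joint continuity is not needed for this step. Your argument is unaffected, since $\Pred_{\Nature\;wk}(X)$ is in any case a topological cone, but the crux you flag is less delicate than you suggest.
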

\begin{proof}
  Assume $r_{\AN} (C) \leq r_{\AN} (C')$.  By
  Lemma~\ref{lemma:AN:iso:aux},
  $r_{\AN} (cl_\Nature (C)) \leq r_{\AN} (cl_\Nature (C'))$.  Now
  $cl_\Nature (C)$ and $cl_\Nature (C')$ are closed and non-empty.
  They are convex, because the closure of any convex subset of a
  semitopological cone is convex again
  \cite[Lemma~4.10~(a)]{Keimel:topcones2}.  So
  Lemma~\ref{lemma:AN:orderembed} applies:
  $cl_\Nature (C) \subseteq cl_\Nature (C')$.  In particular,
  $C \subseteq cl_\Nature (C')$.  Since $C'$ is closed in the subspace
  $Y \eqdef \Pred^\bullet_{\Nature\;wk} (X)$ of
  $Z \eqdef \Pred_{\Nature\;wk} (X)$, one can write it as $Y \cap A$
  for some closed subset $A$ of $Z$.  Clearly
  $cl_\Nature (C') \subseteq A$, so
  $C' = Y \cap A \supseteq Y \cap cl_\Nature (C')$.  Since
  $C \subseteq cl_\Nature (C')$ and $C \subseteq Y$, it follows that
  $C \subseteq C'$.
\end{proof}

\begin{theorem}[Isomorphism, Angelic Cases]
  \label{thm:AN:iso}
  Let $\bullet$ be the empty superscript, ``$\leq 1$'' or ``$1$''.
  Let {\updated $X$ be an $\AN_\bullet$-friendly space}, for example,
  a locally compact space (or, more generally, a core-compact space),
  {\updated or an LCS-complete space}.

  Then $r_{\AN}$ defines a homeomorphism with inverse $s^\bullet_\AN$
  between $\HV^{cvx} (\Pred^\bullet_{\Nature\;wk} (X))$ and
  $\Pred^\bullet_{\AN\;wk} (X)$ and an order-isomorphism between
  $\Hoare^{cvx} (\Pred^\bullet_{\Nature\;wk} (X))$ and
  $\Pred^\bullet_{\AN} (X)$.
\end{theorem}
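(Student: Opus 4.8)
The plan is to follow verbatim the scheme of Proposition~\ref{prop:AN:iso:gnrl}, which already settles the case where $\bullet$ is the empty superscript. Indeed, an $\AN$-friendly $X$ has $\Lform X$ locally convex by Definition~\ref{defn:LX:Hfriendly}, so the hypothesis of that proposition is met, and it is nothing but the present theorem for $\bullet$ empty. For the remaining two cases I would supply the subnormalized and normalized analogues of the two ingredients used there, namely surjectivity of $r_\AN$ and the order-embedding property, and then reassemble them in exactly the same way.

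First I would collect the continuity and retraction data. Since $X$ is $\AN_\bullet$-friendly, Proposition~\ref{prop:AN:weak} gives that $r_\AN$ is continuous from $\HV (\Pred^\bullet_{\Nature\;wk} (X))$ to $\Pred^\bullet_{\AN\;wk} (X)$, that $s^\bullet_\AN$ is continuous in the reverse direction, and that $r_\AN \circ s^\bullet_\AN = \identity\relax$. As observed at the start of this section, each $s^\bullet_\AN (F) = \{G \mid G \leq F\}$ is convex, so $s^\bullet_\AN$ corestricts to a continuous map into $\HV^{cvx} (\Pred^\bullet_{\Nature\;wk} (X))$, while $r_\AN$ restricts to that subspace with values in $\Pred^\bullet_{\AN} (X)$ by Lemma~\ref{lemma:AN:r:def}. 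The identity $r_\AN \circ s^\bullet_\AN = \identity\relax$ then shows that $r_\AN$, restricted to $\HV^{cvx} (\Pred^\bullet_{\Nature\;wk} (X))$, is onto all of $\Pred^\bullet_{\AN\;wk} (X)$.

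The one remaining point is injectivity, which I would extract from the order-embedding lemma: for $\bullet$ empty this is Lemma~\ref{lemma:AN:orderembed}, and for $\bullet$ equal to ``$\leq 1$'' or ``$1$'' it is Lemma~\ref{lemma:AN:orderembed:1}, which reflects the inclusion order (reducing to the unbounded case by passing to closures in $\Pred_{\Nature\;wk} (X)$ through Lemma~\ref{lemma:AN:iso:aux} and the convexity of closures). Whenever $r_\AN (C) = r_\AN (C')$ both inequalities $r_\AN (C) \leq r_\AN (C')$ and $r_\AN (C') \leq r_\AN (C)$ hold, and feeding each into the lemma and using antisymmetry of inclusion yields $C = C'$. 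Hence $r_\AN$ is a continuous bijection on $\Hoare^{cvx} (\Pred^\bullet_{\Nature\;wk} (X))$; since it is bijective and $r_\AN \circ s^\bullet_\AN = \identity\relax$, its inverse is exactly $s^\bullet_\AN$, which is also continuous, so $r_\AN$ is a homeomorphism with inverse $s^\bullet_\AN$. The order-isomorphism between $\Hoare^{cvx} (\Pred^\bullet_{\Nature\;wk} (X))$ and $\Pred^\bullet_{\AN} (X)$ then follows for free, since a homeomorphism induces an order-isomorphism of the underlying specialization preorders, which here are inclusion and the pointwise order respectively.

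I expect no genuine obstacle in this theorem itself: it is a short assembly of already established facts. The real weight lies upstream. Injectivity ultimately rests on Keimel's Separation Theorem together with the Schr\"oder--Simpson Theorem (packaged into Lemma~\ref{lemma:AN:orderembed}), and surjectivity in the normalized case $\bullet = $ ``$1$'' is precisely the repaired Lemma~3.4, i.e.\ Lemma~\ref{lemma:AN:rs=id:aux}, where $\AN_1$-friendliness is indispensable and where the delicate convex-neighborhood constructions of Lemma~\ref{lemma:Hfriendly:1:V} and Lemma~\ref{lemma:Hfriendly:1:V:aux} are used. The only care needed at the level of the theorem is the bookkeeping of the corestriction to convex subsets and the reduction of the bounded cases to the unbounded one via closures.
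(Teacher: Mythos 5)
Your proposal is correct and follows exactly the paper's own route: reduce the empty-superscript case to Proposition~\ref{prop:AN:iso:gnrl}, and for ``$\leq 1$'' and ``$1$'' rerun the same injective-retraction argument with Lemma~\ref{lemma:AN:orderembed} replaced by Lemma~\ref{lemma:AN:orderembed:1}. The additional bookkeeping you spell out (continuity and the retraction identity from Proposition~\ref{prop:AN:weak}, corestriction to convex closed sets, injectivity from the order-embedding plus antisymmetry) is precisely what the paper's terse proof leaves implicit.
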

\begin{proof}
  When $\bullet$ is the empty superscript, this is
  Proposition~\ref{prop:AN:iso:gnrl}.  Otherwise, we use the same
  argument as in its proof, replacing Lemma~\ref{lemma:AN:orderembed}
  by Lemma~\ref{lemma:AN:orderembed:1}.  I.e., $r_{\AN}$ is an
  injective retraction, hence an isomorphism.
\end{proof}

\begin{corollary}
  \label{corl:AN:iso}
  Let $\bullet$ be the empty superscript, ``$\leq 1$'' or ``$1$''.
  Let $X$ be a continuous dcpo.  Then $r_{\AN}$ define an isomorphism
  with inverse $s^\bullet_{\AN}$ between $\Hoare^{cvx}
  (\Pred^\bullet_{\Nature} (X))$ and $\Pred^\bullet_{\AN} (X)$.
\end{corollary}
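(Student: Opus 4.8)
The plan is to reduce the corollary to the already-established Theorem~\ref{thm:AN:iso}. The only difference between the two statements is that Theorem~\ref{thm:AN:iso} is phrased with the \emph{weak} topology on the space of linear previsions, whereas the corollary uses the Scott topology; for a continuous dcpo these topologies agree on the relevant spaces, so the two assertions will coincide.

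First I would record that a continuous dcpo $X$ is locally compact, hence core-compact, and therefore $\AN_\bullet$-friendly for every $\bullet$ by Remark~\ref{rem:Hfriendly:1} (using the observation there that $\AN_1$-friendliness implies $\AN$- and $\AN_{\leq 1}$-friendliness). Consequently Theorem~\ref{thm:AN:iso} applies and yields that $r_{\AN}$, with inverse $s^\bullet_{\AN}$, is an order-isomorphism between $\Hoare^{cvx}(\Pred^\bullet_{\Nature\;wk}(X))$ and $\Pred^\bullet_{\AN}(X)$.

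Next I would invoke the Kirch-Tix Theorem to identify the Scott and weak topologies on $\Pred^\bullet_{\Nature}(X)$: for $\bullet$ empty or ``$\leq 1$'' this is immediate through the isomorphism with $\Val (X)$, resp.\ $\Val^{\leq 1}(X)$, and for $\bullet$ equal to ``$1$'' it follows by Edalat's trick, which identifies $\Val^1$ on the one-point lift with $\Val^{\leq 1}$. Thus $\Pred^\bullet_{\Nature}(X)$ and $\Pred^\bullet_{\Nature\;wk}(X)$ are the \emph{same} topological space; their closed sets coincide, and hence so do their non-empty convex closed subsets. Therefore $\Hoare^{cvx}(\Pred^\bullet_{\Nature}(X)) = \Hoare^{cvx}(\Pred^\bullet_{\Nature\;wk}(X))$ as posets, and the order-isomorphism of the previous paragraph is exactly the assertion of the corollary.

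The delicate point is the normalized case $\bullet=$ ``$1$''. The coincidence of the Scott and weak topologies on $\Pred^1_{\Nature}(X)$, as recalled before Proposition~\ref{prop:AN:Scott2}, rests on Edalat's trick and hence on $X$ being \emph{pointed}, so this is the one place where continuity of the dcpo alone may not suffice; the argument should either restrict to pointed $X$ in that case, or supply a direct check that every Scott-closed convex subset of $\Pred^1_{\Nature}(X)$ is already weak-closed. A self-contained alternative that sidesteps the topological identification would mirror the proof of Proposition~\ref{prop:AN:iso:gnrl}: use Proposition~\ref{prop:AN:Scott2} to get $r_{\AN}\circ s^\bullet_{\AN}=\identity\relax$, hence surjectivity of $r_{\AN}$, and an order-embedding for Scott-closed convex sets---transferred from the order-embedding statements of Lemma~\ref{lemma:AN:orderembed} and Lemma~\ref{lemma:AN:orderembed:1} once the closed convex sets are known to match---to get injectivity; an injective retraction is then automatically an isomorphism.
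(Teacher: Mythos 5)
Your proposal is correct and is precisely the reduction the paper intends: the corollary is stated as an immediate consequence of Theorem~\ref{thm:AN:iso} (a continuous dcpo is locally compact, hence core-compact, hence $\AN_\bullet$-friendly by Remark~\ref{rem:Hfriendly:1}), with the Kirch-Tix theorem and Edalat's trick invoked to identify the weakly closed convex subsets of $\Pred^\bullet_{\Nature}(X)$ with the Scott-closed convex ones.

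The reservation you express about the normalized case is well founded, and the ``direct check'' you offer as a fallback would in fact fail in general. Take $X$ to be the two-element antichain, a continuous but non-pointed dcpo; then $\Pred^1_{\Nature}(X)$ is in order-isomorphism with $\{p\delta_a + (1-p)\delta_b \mid p \in [0,1]\}$ carrying the \emph{trivial} order, so its Scott topology is discrete, and the convex Scott-closed sets corresponding to $(1/3,2/3)$ and to $[1/3,2/3]$ are distinct yet have the same pointwise supremum; hence $r_{\AN}$ is not injective on Scott-closed convex sets there. So the normalized case of the corollary really does require the hypothesis ``(and pointed if $\bullet$ is ``$1$'')'', exactly as in Proposition~\ref{prop:AN:Scott2} and Corollary~\ref{corl:ADN:iso}; under that reading (or reading $\Hoare^{cvx}$ with respect to the weak topology, as in Theorem~\ref{thm:AN:iso}), your argument is complete.
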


The demonic cases are slightly simpler.
\begin{lemma}
  \label{lemma:DN:iso:aux}
  Let $\bullet$ be ``$\leq 1$'' or ``$1$''.  Let $X$ be a topological
  space.  For every non-empty compact saturated subset $Q$ of
  $\Pred^\bullet_{\Nature\;wk} (X)$, $r_{\DN} (Q) = r_{\DN}
  (\upc_\Nature Q)$, where $\upc_\Nature Q$ denotes the upward closure
  of $Q$ in $\Pred_{\Nature\;wk} (X)$.
\end{lemma}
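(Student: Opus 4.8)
The plan is to establish the equality of the two demonic previsions pointwise, using that $r_{\DN}$ is computed as an infimum and that passing to the upward closure leaves this infimum unchanged. First I would recall that the specialization ordering of the weak topology on $\Pred_{\Nature\;wk}(X)$ is the pointwise ordering, so that $\upc_\Nature Q$ is precisely the set of linear previsions $G$ for which some $G_0 \in Q$ satisfies $G_0 \leq G$, i.e.\ $G_0(h') \leq G(h')$ for all $h' \in \Lform X$.

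Next I would fix an arbitrary $h \in \Lform X$ and prove two inequalities. The inequality $r_{\DN}(\upc_\Nature Q)(h) \leq r_{\DN}(Q)(h)$ is immediate from $Q \subseteq \upc_\Nature Q$, since an infimum over a larger set can only decrease: $\inf_{G \in \upc_\Nature Q} G(h) \leq \inf_{G \in Q} G(h)$. For the reverse inequality, I would observe that each $G \in \upc_\Nature Q$ dominates some $G_0 \in Q$, whence $G(h) \geq G_0(h) \geq \inf_{G' \in Q} G'(h) = r_{\DN}(Q)(h)$; taking the infimum over all $G \in \upc_\Nature Q$ gives $r_{\DN}(\upc_\Nature Q)(h) \geq r_{\DN}(Q)(h)$. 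Combining the two yields equality at $h$, and since $h$ is arbitrary, $r_{\DN}(\upc_\Nature Q) = r_{\DN}(Q)$.

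I do not expect any real obstacle here: this is a purely order-theoretic fact about infima, mirroring the `obvious' second claim of Lemma~\ref{lemma:ADN:r}, and it is genuinely simpler than its angelic counterpart Lemma~\ref{lemma:AN:iso:aux}, which needed a Scott-continuity argument because topological closure (rather than mere upward closure) was involved. If one wishes to be tidy about well-definedness, I would also note in passing that $\upc_\Nature Q$ is again compact saturated in $\Pred_{\Nature\;wk}(X)$ --- the inclusion $\Pred^\bullet_{\Nature\;wk}(X) \hookrightarrow \Pred_{\Nature\;wk}(X)$ is continuous, so $Q$ stays compact, and the saturation of a compact set is compact saturated --- so that $r_{\DN}(\upc_\Nature Q)$ is in fact a minimum via Lemma~\ref{lemma:rDN}; but this refinement is not needed for the stated equality.
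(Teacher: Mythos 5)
Your proof is correct and follows exactly the same two-inequality argument as the paper: the inclusion $Q \subseteq \upc_\Nature Q$ gives one direction, and the fact that every $G \in \upc_\Nature Q$ dominates some $G_1 \in Q$ gives the other. Your closing remark about $\upc_\Nature Q$ being compact saturated also matches the paper's own note immediately following the lemma statement.
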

Note that $\upc_\Nature Q$ is compact saturated in
$\Pred_{\Nature\;wk} (X)$.  Indeed $Q$ is compact in
$\Pred^\bullet_{\Nature\;wk} (X)$, which is a subspace of
$\Pred_{\Nature\;wk} (X)$.  In particular Lemma~\ref{lemma:rDN}
applies, and $r_{\DN} (\upc_\Nature Q) (h) = \min_{G \in \upc_\Nature
  Q} G (h)$ for every $h$.

\begin{proof}
  Since $Q \subseteq \upc_\Nature Q$, $r_{\DN} (\upc_\Nature Q) (h)
  \leq r_{\DN} (Q) (h)$.  Conversely, for every $G \in \upc_\Nature
  Q$, there is a $G_1 \in Q$ such that $G _1 \leq G$, so $r_{\DN}
  (\upc_\Nature Q) (h) = \min_{G \in \Pred_{\Nature\;wk} (X), G_1 \in
    Q, G_1 \leq G} G (h) \geq \min_{G_1 \in Q} G_1 (h) = r_{\DN} (Q)
  (h)$.
\end{proof}

\begin{lemma}
  \label{lemma:DN:orderembed:1}
  Let $\bullet$ be ``$\leq 1$'' or ``$1$''.  Let $X$ be a topological
  space.  For all $Q, Q' \in \Smyth^{cvx} (\Pred^\bullet_{\Nature\;wk}
  (X))$, if $r_{\DN} (Q) \leq r_{\DN} (Q')$ then $Q \supseteq Q'$.
\end{lemma}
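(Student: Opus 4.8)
The plan is to mimic the reduction used in Lemma~\ref{lemma:AN:orderembed:1}, replacing closure in the ambient cone by upward closure, and invoking the unbounded demonic statement Lemma~\ref{lemma:DN:orderembed} in place of Lemma~\ref{lemma:AN:orderembed}. Write $Y \eqdef \Pred^\bullet_{\Nature\;wk} (X)$ and $Z \eqdef \Pred_{\Nature\;wk} (X)$, so that $Y$ is a subspace of $Z$ and $Z$ is a locally convex topological cone. Assuming $r_{\DN} (Q) \leq r_{\DN} (Q')$, the first step is to pass to the upward closures $\upc_\Nature Q$ and $\upc_\Nature Q'$ taken in $Z$: by Lemma~\ref{lemma:DN:iso:aux}, $r_{\DN} (\upc_\Nature Q) = r_{\DN} (Q) \leq r_{\DN} (Q') = r_{\DN} (\upc_\Nature Q')$.

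Next I would verify that $\upc_\Nature Q$ and $\upc_\Nature Q'$ are legitimate elements of $\Smyth^{cvx} (Z)$. They are non-empty and compact saturated in $Z$ by the remark following Lemma~\ref{lemma:DN:iso:aux}, and they are convex: if $G_1 \geq H_1$ and $G_2 \geq H_2$ with $H_1, H_2 \in Q$, then for every $a \in [0, 1]$ monotonicity of $+$ and $\cdot$ gives $a \cdot G_1 + (1-a) \cdot G_2 \geq a \cdot H_1 + (1-a) \cdot H_2$, and the latter lies in $Q$ by convexity, so $a \cdot G_1 + (1-a) \cdot G_2 \in \upc_\Nature Q$. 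With this in hand, Lemma~\ref{lemma:DN:orderembed} applied to the pair $\upc_\Nature Q, \upc_\Nature Q'$ yields $\upc_\Nature Q \supseteq \upc_\Nature Q'$.

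The final step is to transfer this inclusion back down to $Y$. Since $Q$ is saturated in $Y$, it is upward closed for the specialization ordering of $Y$, which is the restriction to $Y$ of the pointwise ordering on $Z$; hence $Q = Y \cap \upc_\Nature Q$, and likewise $Q' = Y \cap \upc_\Nature Q'$. Intersecting $\upc_\Nature Q \supseteq \upc_\Nature Q'$ with $Y$ then gives $Q \supseteq Q'$, which is the claim. The only place where any care is required, and the closest thing to an obstacle, is this bookkeeping between the subspace $Y$ and the ambient cone $Z$ --- checking that upward closure preserves compactness, saturation and convexity, and that $Q$ is recovered as $Y \cap \upc_\Nature Q$; all the analytic work (Strict Separation together with the Schr\"oder--Simpson Theorem) is already sealed inside Lemma~\ref{lemma:DN:orderembed}.
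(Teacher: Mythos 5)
Your proof is correct and follows essentially the same route as the paper's: pass to the upward closures in $\Pred_{\Nature\;wk}(X)$ via Lemma~\ref{lemma:DN:iso:aux}, check they are non-empty convex compact saturated sets, apply Lemma~\ref{lemma:DN:orderembed}, and descend back to the subspace using the fact that $Q$ is upward closed there. The only differences are cosmetic (you spell out the convexity of $\upc_\Nature Q$ and phrase the last step as $Q = Y \cap \upc_\Nature Q$ where the paper chases elements), so there is nothing to add.
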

\begin{proof}
  Assume $r_{\DN} (Q) \leq r_{\DN} (Q')$.  By
  Lemma~\ref{lemma:DN:iso:aux}, $r_{\DN} (\upc_\Nature Q) \leq r_{\DN}
  (\upc_\Nature Q')$.  Now $\upc_\Nature Q$ is compact saturated,
  non-empty, and it is easy to see that it is convex, since $Q$ is
  convex.  We can therefore apply Lemma~\ref{lemma:DN:orderembed} and
  conclude that $\upc_\Nature Q \supseteq \upc_\Nature Q'$.  In
  particular, $\upc_\Nature Q \supseteq Q'$.  So, for every element
  $G'$ of $Q'$, there is a $G \in Q$ such that $G \leq G'$.  But $G'$
  is in $\Pred^\bullet_{\Nature\;wk} (X)$ since in $Q'$, and $Q$ is
  upward closed in $\Pred^\bullet_{\Nature\;wk} (X)$, so $G'$ itself
  is in $Q$.  It follows that $Q \supseteq Q'$.
\end{proof}

\begin{theorem}[Isomorphism, Demonic Cases]
  \label{thm:DN:iso}
  Let $\bullet$ be the empty superscript, ``$\leq 1$'' or ``$1$''.
  Let $X$ be a topological space.  Then $r_{\DN}$ defines a
  homeomorphism with inverse $s^\bullet_\DN$ between $\SV^{cvx}
  (\Pred^\bullet_{\Nature\;wk} (X))$ and $\Pred^\bullet_{\DN\;wk}
  (X)$, and an order-iso\-morphism between $\Smyth^{cvx}
  (\Pred^\bullet_{\Nature\;wk} (X))$ and $\Pred^\bullet_{\DN} (X)$.
\end{theorem}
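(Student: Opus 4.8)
The plan is to mirror the angelic case (Theorem~\ref{thm:AN:iso}) and reduce the demonic isomorphism to results already in hand, chiefly the projection of Proposition~\ref{prop:DN:weak} and the order-embedding Lemmas~\ref{lemma:DN:orderembed} and~\ref{lemma:DN:orderembed:1}. When $\bullet$ is the empty superscript the statement is precisely Proposition~\ref{prop:DN:iso:gnrl}, so nothing further is required in that case.

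For $\bullet$ equal to ``$\leq 1$'' or ``$1$'', I would run the same argument as in the proof of Proposition~\ref{prop:DN:iso:gnrl}, changing only which order-embedding lemma is invoked. Concretely: by Proposition~\ref{prop:DN:weak}, $r_{\DN}$ is continuous, hence monotonic, and $r_{\DN} \circ s^\bullet_{\DN}$ is the identity; this already makes $r_{\DN}$ surjective and exhibits $s^\bullet_{\DN}$ as a section. To obtain injectivity, I would apply Lemma~\ref{lemma:DN:orderembed:1} (the ``$\leq 1$''/``$1$'' analogue of Lemma~\ref{lemma:DN:orderembed}): if $r_{\DN}(Q) = r_{\DN}(Q')$ then both $r_{\DN}(Q) \leq r_{\DN}(Q')$ and $r_{\DN}(Q') \leq r_{\DN}(Q)$, whence $Q \supseteq Q'$ and $Q' \supseteq Q$, so $Q = Q'$. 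Thus $r_{\DN}$ is a bijection with inverse $s^\bullet_{\DN}$; since both maps are continuous (again Proposition~\ref{prop:DN:weak}), this is a homeomorphism of $\SV^{cvx}(\Pred^\bullet_{\Nature\;wk}(X))$ with $\Pred^\bullet_{\DN\;wk}(X)$. The order-isomorphism between $\Smyth^{cvx}(\Pred^\bullet_{\Nature\;wk}(X))$ and $\Pred^\bullet_{\DN}(X)$ then follows because any homeomorphism induces an order-isomorphism of the underlying specialization preorders.

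One point I would keep explicit is the direction of the Smyth specialization order, which is reverse inclusion $\supseteq$: with $r_{\DN}(Q)(h) = \min_{G \in Q} G(h)$, enlarging $Q$ lowers the infimum, so $Q \supseteq Q'$ yields $r_{\DN}(Q) \leq r_{\DN}(Q')$, confirming that the monotonicity of $r_{\DN}$ and the order-embedding conclusion $Q \supseteq Q'$ point the same way. This bookkeeping, though routine, is exactly what makes the slogan ``order-embedding plus surjective retraction implies order-isomorphism'' go through here.

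There is no serious obstacle at the level of this theorem; it is an assembly step. The genuine content sits upstream in Lemma~\ref{lemma:DN:orderembed:1}, which reduces to the unbounded order-embedding (Lemma~\ref{lemma:DN:orderembed}) by passing to the upward closure $\upc_\Nature Q$ in the full cone $\Pred_{\Nature\;wk}(X)$ (using Lemma~\ref{lemma:DN:iso:aux} to see that $r_{\DN}$ is unaffected), and ultimately in the Strict Separation Theorem together with the Schr\"oder-Simpson Theorem underlying Lemma~\ref{lemma:DN:orderembed}.
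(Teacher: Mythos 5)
Your proposal is correct and follows essentially the same route as the paper: the empty-superscript case is delegated to Proposition~\ref{prop:DN:iso:gnrl}, and the ``$\leq 1$''/``$1$'' cases rerun that argument with Lemma~\ref{lemma:DN:orderembed} replaced by Lemma~\ref{lemma:DN:orderembed:1}, making $r_{\DN}$ an injective retraction and hence a homeomorphism. The extra remarks on the reverse-inclusion ordering and on where the real content lives (Lemma~\ref{lemma:DN:orderembed:1} via $\upc_\Nature Q$ and the separation theorems) are accurate but not needed beyond what the paper records.
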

\begin{proof}
  When $\bullet$ is the empty superscript, this is
  Proposition~\ref{prop:DN:iso:gnrl}.  Otherwise, we use the same
  argument as in its proof, replacing Lemma~\ref{lemma:DN:orderembed}
  by Lemma~\ref{lemma:DN:orderembed:1}.  I.e., $r_{\DN}$ is an
  injective retraction, hence an homeomorphism.
\end{proof}

\begin{corollary}
  \label{corl:DN:iso}
  Let $\bullet$ be the empty superscript, ``$\leq 1$'' or ``$1$''.
  Let $X$ be a continuous dcpo.  Then $r_{\DN}$ defines an isomorphism
  with inverse $s^\bullet_\DN$ between $\Smyth^{cvx}
  (\Pred^\bullet_{\Nature} (X))$ and $\Pred^\bullet_{\DN} (X)$.
\end{corollary}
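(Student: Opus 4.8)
The plan is to reduce the statement to the weak-topology version already established in Theorem~\ref{thm:DN:iso}. That theorem asserts, for an arbitrary topological space $X$, that $r_{\DN}$ and $s^\bullet_{\DN}$ are mutually inverse and deliver an order-isomorphism between $\Smyth^{cvx}(\Pred^\bullet_{\Nature\;wk}(X))$ and $\Pred^\bullet_{\DN}(X)$. The only discrepancy with the present statement is that the space of linear previsions here carries its Scott topology, written $\Pred^\bullet_{\Nature}(X)$, rather than the weak topology $\Pred^\bullet_{\Nature\;wk}(X)$. So the entire content is to show that passing from the weak topology to the Scott topology changes neither the underlying space nor its (convex) Smyth powerdomain.

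First I would invoke the Kirch-Tix Theorem: since $X$ is a continuous dcpo, the Scott and weak topologies coincide on $\Pred_{\Nature}(X)$ and on $\Pred^{\leq 1}_{\Nature}(X)$. For the normalized case I would route through Edalat's trick, which identifies $\Pred^1_{\Nature}(X_\bot)$ with $\Pred^{\leq 1}_{\Nature}(X)$ as weak spaces, transporting the coincidence of topologies to $\Pred^1_{\Nature}$ as well (this is where pointedness, implicit via $X_\bot$, enters). In every case, then, $\Pred^\bullet_{\Nature\;wk}(X)$ and $\Pred^\bullet_{\Nature}(X)$ are literally the same topological space.

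Because the two topologizations of the set of linear previsions agree, they have the same compact saturated subsets and the same specialization order, so $\Smyth^{cvx}(\Pred^\bullet_{\Nature\;wk}(X))$ and $\Smyth^{cvx}(\Pred^\bullet_{\Nature}(X))$ coincide as posets. The order-isomorphism half of Theorem~\ref{thm:DN:iso} then applies verbatim, and since an order-isomorphism between posets is automatically a homeomorphism for their Scott topologies, $r_{\DN}$ and $s^\bullet_{\DN}$ restrict to the claimed isomorphism. I expect the only real obstacle to sit in the normalized case $\bullet$ ``$1$'': the coincidence of the weak and Scott topologies on $\Pred^1_{\Nature}(X)$ is not part of the bare Kirch-Tix statement and must be obtained by the Edalat reduction to the subnormalized case on $X_\bot$, which is precisely why pointedness is the delicate hypothesis there. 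Everything else is a formal transport of Theorem~\ref{thm:DN:iso} along the identification of the two topologies, with no new cone-theoretic work required.
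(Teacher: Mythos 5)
Your reduction is exactly what the paper intends: the corollary is stated without proof, as an immediate consequence of Theorem~\ref{thm:DN:iso} (which holds for arbitrary topological spaces) together with the identification of $\Pred^\bullet_{\Nature\;wk} (X)$ with $\Pred^\bullet_{\Nature} (X)$ supplied by the Kirch-Tix Theorem; your observation that the entire content lies in that identification, and that the compact saturated convex subsets and the specialization order are then literally the same, is correct. For $\bullet$ the empty superscript or ``$\leq 1$'' the argument is complete and matches the paper's route.

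The one genuine issue is the normalized case. You describe pointedness as ``implicit via $X_\bot$'', but the corollary as stated does not assume $X$ pointed, and the paper only establishes the coincidence of the Scott and weak topologies on $\Pred^1_{\Nature} (X)$ for \emph{pointed} continuous dcpos, via Edalat's trick. Without that coincidence, a set that is compact saturated in $\Pred^1_{\Nature\;wk} (X)$ need not be compact for the (finer) Scott topology, so the surjectivity of $r_{\DN}$ onto $\Pred^1_{\DN} (X)$ from $\Smyth^{cvx} (\Pred^1_{\Nature} (X))$ --- i.e., that $s^1_{\DN} (F)$ lands in the Scott-topology Smyth powerdomain --- is exactly the step your transport leaves open. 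As written, your proof therefore covers $\bullet = {}$``$1$'' only under the additional hypothesis that $X$ is pointed, the same proviso that the paper states explicitly in Proposition~\ref{prop:AN:Scott2} and Corollary~\ref{corl:ADN:iso} but omits here; you should either add that hypothesis (flagging the mismatch with the statement) or supply a separate argument for the non-pointed normalized case, since Theorem~\ref{thm:DN:iso} alone does not yield it.
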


There is nothing left to do to conclude in the erratic cases.
\begin{theorem}[Isomorphism, Erratic Cases]
  \label{thm:ADN:iso}
  Let $\bullet$ be ``$\leq 1$'' or ``$1$''.
  Let $X$ be a topological space such that $[X \to
  {\creal}_\sigma]_\sigma$ is locally convex and has an almost open
  addition map, for example a stably locally compact space, or more
  generally, a core-compact, core-coherent space.  Assume also that
  $X$ is compact in case $\bullet$ is ``$1$''.

  Then $r_{\ADN}$ defines an homeomorphism with inverse
  $s^\bullet_{\ADN}$ between $\PV^{cvx} (\Pred^\bullet_{\Nature\;wk}
  (X))$ and $\Pred^\bullet_{\ADN\;wk} (X)$, and an order-isomorphism
  between $\Plotkin^{cvx} (\Pred^\bullet_{\Nature\;wk} (X))$ and
  $\Pred^\bullet_{\ADN} (X)$.
\end{theorem}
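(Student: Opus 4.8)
The plan is to run the argument of Proposition~\ref{prop:ADN:iso:gnrl} again, the only genuinely new ingredient being a subnormalized/normalized version of the order-embedding Lemma~\ref{lemma:ADN:orderembed}. By Proposition~\ref{prop:ADN:weak}, under the stated hypotheses $r_{\ADN}$ and $s^\bullet_{\ADN}$ form a retraction of $\PV (\Pred^\bullet_{\Nature\;wk} (X))$ onto $\Pred^\bullet_{\ADN\;wk} (X)$: both are continuous and $r_{\ADN} \circ s^\bullet_{\ADN} = \identity\relax$. Since each $s^\bullet_{\ADN} (F^-, F^+) = \{G \mid F^- \leq G \leq F^+\}$ is a convex lens (Lemma~\ref{lemma:ADN:s:def}), $s^\bullet_{\ADN}$ corestricts to a continuous map into $\PV^{cvx} (\Pred^\bullet_{\Nature\;wk} (X))$, and $r_{\ADN} \circ s^\bullet_{\ADN} = \identity\relax$ shows that the corresponding corestriction of $r_{\ADN}$ to $\PV^{cvx} (\Pred^\bullet_{\Nature\;wk} (X))$ is onto. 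So it remains only to prove that this corestriction is injective; it is then a continuous bijection whose inverse is the continuous section $s^\bullet_{\ADN}$, hence a homeomorphism, and the order-isomorphism of the underlying posets follows because a homeomorphism of $T_0$ spaces is an order-isomorphism for the specialization orders.

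To get injectivity I would first prove the $\bullet$-analogue of Lemma~\ref{lemma:ADN:orderembed}: for $L, L' \in \Plotkin^{cvx} (\Pred^\bullet_{\Nature\;wk} (X))$, if $r_{\ADN} (L) \leq r_{\ADN} (L')$ then $L \sqsubseteq_{\mathrm{EM}} L'$. Writing $r_{\ADN} = (r_{\DN}, r_{\AN})$ and recalling that the ordering on forks is the product ordering, the hypothesis splits into $r_{\AN} (L) \leq r_{\AN} (L')$ and $r_{\DN} (L) \leq r_{\DN} (L')$. The proof of Lemma~\ref{lemma:ADN:r} uses only that $r_{\AN} (L)(h) = \sup_{G \in L} G (h)$ and that the subbasic sets $[h > b]$ are open, so it carries over verbatim to $\Pred^\bullet_{\Nature\;wk} (X)$; thus $r_{\AN} (cl (L)) = r_{\AN} (L)$ and $r_{\DN} (\upc L) = r_{\DN} (L)$, where $cl$ and $\upc$ are computed inside $\Pred^\bullet_{\Nature\;wk} (X)$. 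Consequently $r_{\AN} (cl (L)) \leq r_{\AN} (cl (L'))$ and $r_{\DN} (\upc L) \leq r_{\DN} (\upc L')$.

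Next I would feed these into the component embeddings. The sets $cl (L), cl (L')$ are non-empty, closed, and convex---the closure of a convex subset of a semitopological cone is convex by \cite[Lemma~4.10~(a)]{Keimel:topcones2}---so they lie in $\Hoare^{cvx} (\Pred^\bullet_{\Nature\;wk} (X))$ and Lemma~\ref{lemma:AN:orderembed:1} yields $cl (L) \subseteq cl (L')$. Similarly $\upc L, \upc L'$ are non-empty, compact saturated, and convex (if $x \geq u$ and $y \geq v$ with $u, v \in L$ and $a \in [0,1]$, then $a \cdot x + (1-a) \cdot y \geq a \cdot u + (1-a) \cdot v \in L$ by monotonicity of the cone operations and convexity of $L$), so they lie in $\Smyth^{cvx} (\Pred^\bullet_{\Nature\;wk} (X))$ and Lemma~\ref{lemma:DN:orderembed:1} yields $\upc L \supseteq \upc L'$. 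These two inclusions are precisely $L \sqsubseteq_{\mathrm{EM}} L'$, establishing the order-embedding.

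Injectivity then drops out: if $r_{\ADN} (L) = r_{\ADN} (L')$, applying the embedding in both directions gives $L \sqsubseteq_{\mathrm{EM}} L'$ and $L' \sqsubseteq_{\mathrm{EM}} L$, hence $L = L'$ by antisymmetry of the topological Egli--Milner ordering on lenses. In the ``$1$'' case, compactness of $X$ enters only through Proposition~\ref{prop:ADN:weak}, which already carries that assumption, so nothing extra is required. The one point I would watch carefully---and the only place the argument departs from the unbounded Proposition~\ref{prop:ADN:iso:gnrl}---is that every closure and upward closure must be taken inside the subspace $\Pred^\bullet_{\Nature\;wk} (X)$ rather than inside $\Pred_{\Nature\;wk} (X)$, and must be checked to remain convex and to land in the correct convex powerdomains, so that the subnormalized/normalized embedding Lemmas~\ref{lemma:AN:orderembed:1} and~\ref{lemma:DN:orderembed:1} genuinely apply; this bookkeeping is exactly the subtlety those two lemmas were designed to absorb, which is why the comment before the theorem can fairly say that nothing is left to do.
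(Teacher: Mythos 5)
Your proposal is correct, and it uses the same ingredients as the paper, but arranged differently. The paper's own proof is a three-line computation: it notes that the only thing left after Proposition~\ref{prop:ADN:weak} is $s^\bullet_{\ADN} (r_{\ADN} (L)) = L$, and obtains this directly as $s^\bullet_{\ADN} (r_{\DN} (\upc L), r_{\AN} (cl (L))) = s^\bullet_{\DN} (r_{\DN} (\upc L)) \cap s^\bullet_{\AN} (r_{\AN} (cl (L))) = \upc L \cap cl (L) = L$, invoking Lemma~\ref{lemma:ADN:r} and the already-established component isomorphism theorems (Theorems~\ref{thm:DN:iso} and~\ref{thm:AN:iso}). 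You instead descend one level and re-prove the $\bullet$-superscripted analogue of the Egli--Milner order-embedding Lemma~\ref{lemma:ADN:orderembed} from the component embedding Lemmas~\ref{lemma:AN:orderembed:1} and~\ref{lemma:DN:orderembed:1}, then deduce injectivity and conclude by the ``injective retraction'' scheme. Both routes are sound; yours is longer but makes the order-reflection property explicit and matches the pattern of the angelic and demonic proofs, while the paper's exploits the fact that the component isomorphism theorems already package $s \circ r = \identity{}$ on convex closed and convex compact saturated sets. All the verifications you flag do go through: Lemma~\ref{lemma:ADN:r} transfers verbatim to the subspace $\Pred^\bullet_{\Nature\;wk} (X)$; $\upc L$ is convex because the cone operations are monotone for the specialization (pointwise) order and convex combinations of (sub)normalized previsions stay (sub)normalized; $L = \upc L \cap cl (L)$ for a lens gives antisymmetry; and compactness of $X$ in the ``$1$'' case is indeed only consumed by Proposition~\ref{prop:ADN:weak}. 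One cosmetic remark: the statement of Lemma~\ref{lemma:AN:orderembed:1} reads $C \supseteq C'$, but its proof establishes $C \subseteq C'$ (consistent with Lemma~\ref{lemma:AN:orderembed} and with $\subseteq$ being the specialization order on $\HV$); you are correctly using the $\subseteq$ direction.
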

\begin{proof}
  The only thing left to prove is that $s^\bullet_{\ADN} (r_{\ADN}
  (L)) = L$ for every $L \in \Plotkin^{cvx}
  (\Pred^\bullet_{\Nature\;wk} (X))$.  But $s^\bullet_{\ADN} (r_{\ADN}
  (L)) = s^\bullet_{\ADN} (r_{\DN} (L), r_{\AN} (L)) =
  s^\bullet_{\ADN} (r_{\DN} (\upc L), r_{\AN} (cl (L)))$ (by
  Lemma~\ref{lemma:ADN:r}) $= s^\bullet_{\DN} (r_{\DN} (\upc L)) \cap
  s^\bullet_{\AN} (r_{\AN} (cl (L))) = \upc L \cap cl (L)$ (by
  Theorem~\ref{thm:DN:iso} and Theorem~\ref{thm:AN:iso}) $= L$.
\end{proof}

\begin{corollary}
  \label{corl:ADN:iso}
  Let $\bullet$ be the empty superscript, ``$\leq 1$'' or ``$1$''.
  Let $X$ be a coherent, continuous dcpo (and pointed if $\bullet$ is
  ``$1$'').  Then $r_{\ADN}$ defines an isomorphism with inverse
  $s^\bullet_{\ADN}$ between $\Plotkin^{cvx} (\Pred^\bullet_{\Nature}
  (X))$ and $\Pred^\bullet_{\ADN} (X)$.
\end{corollary}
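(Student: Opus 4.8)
The plan is to deduce this purely order-theoretic (equivalently, Scott-topological) statement from the weak-topology isomorphism already established in Theorem~\ref{thm:ADN:iso} (for $\bullet$ equal to ``$\leq 1$'' or ``$1$'') and in Proposition~\ref{prop:ADN:iso:gnrl} (for the empty superscript), by exploiting the coincidence of the Scott and weak topologies on spaces of linear previsions over continuous dcpos. Since $r_{\ADN}$ and $s^\bullet_{\ADN}$ are defined purely at the level of the underlying posets (Definition~\ref{defn:rs:ADN}), the only real work is to check that the two presentations $\Pred^\bullet_{\Nature} (X)$ and $\Pred^\bullet_{\Nature\;wk} (X)$ denote the same topological space, so that their convex Plotkin powerdomains coincide as posets, and then to transport the order-isomorphism.

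First I would verify the hypotheses of the cited results. A continuous dcpo is locally compact, hence core-compact; being also coherent, it is core-coherent by the implication recorded in the Coherence paragraph. Thus Lemma~\ref{lemma:funspace:almostopen} applies and $\Lform X$ is a locally convex topological cone whose addition is almost open, which is exactly what Theorem~\ref{thm:ADN:iso} and Proposition~\ref{prop:ADN:iso:gnrl} require. When $\bullet$ is ``$1$'', the extra hypothesis that $X$ be compact is supplied by the assumption that $X$ is pointed, since every pointed poset is compact in its Scott topology. Consequently $r_{\ADN}$, with inverse $s^\bullet_{\ADN}$, is an order-isomorphism between $\Plotkin^{cvx} (\Pred^\bullet_{\Nature\;wk} (X))$ and $\Pred^\bullet_{\ADN} (X)$ in each of the three cases.

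Next I would invoke the coincidence of topologies. By the Kirch-Tix Theorem the Scott and weak topologies agree on $\Pred_{\Nature} (X)$ and on $\Pred^{\leq 1}_{\Nature} (X)$ for any continuous dcpo $X$; when $\bullet$ is ``$1$'', Edalat's trick transfers this to $\Pred^1_{\Nature} (X)$ precisely because $X$ is assumed pointed. Hence $\Pred^\bullet_{\Nature} (X)$ and $\Pred^\bullet_{\Nature\;wk} (X)$ are literally the same topological space. Since lenses, convexity, and the Egli-Milner ordering are all defined in terms of the topology of the underlying space, the posets $\Plotkin^{cvx} (\Pred^\bullet_{\Nature} (X))$ and $\Plotkin^{cvx} (\Pred^\bullet_{\Nature\;wk} (X))$ coincide. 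Substituting this identification into the order-isomorphism of the previous step yields an order-isomorphism between $\Plotkin^{cvx} (\Pred^\bullet_{\Nature} (X))$ and $\Pred^\bullet_{\ADN} (X)$; and since any order-isomorphism of posets is automatically a homeomorphism for the associated Scott topologies, this is the desired isomorphism.

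The argument is short because the analytic content lives in the earlier results; the trickiest point is the bookkeeping across the three values of $\bullet$, and in particular the ``$1$'' case, where pointedness of $X$ is used twice over---once to furnish compactness for Theorem~\ref{thm:ADN:iso}, and once to license Edalat's trick so that the Scott and weak topologies still coincide on the normalized linear previsions. I would state both uses explicitly, so that the normalized case is not silently conflated with the subnormalized one.
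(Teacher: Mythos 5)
Your proposal is correct and follows exactly the route the paper intends: the corollary is left without explicit proof precisely because it is Theorem~\ref{thm:ADN:iso} (and Proposition~\ref{prop:ADN:iso:gnrl} for the empty superscript) combined with the identifications set up in Section~\ref{sec:doma-theor-cons} — continuous coherent dcpos are core-compact and core-coherent so Lemma~\ref{lemma:funspace:almostopen} applies, pointedness supplies compactness in the ``$1$'' case, and the Kirch-Tix Theorem together with Edalat's trick makes $\Pred^\bullet_{\Nature} (X)$ and $\Pred^\bullet_{\Nature\;wk} (X)$ the same topological space, so their convex Plotkin powerdomains coincide. Your explicit remark that pointedness is used twice in the normalized case is a welcome clarification of a point the paper leaves tacit.
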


\subsection{Hulls}
\label{sec:hulls}

When $A$ is not convex, $s^\bullet_{\AN} (r_{\AN} (A))$ cannot be
equal to $A$, since the former is always convex.  Similarly in the
demonic cases.  It is a natural question to ask what the compositions
$s^\bullet_{\AN} \circ r_{\AN}$, and similar compositions, compute.

The \emph{convex hull} $conv (A)$ of a set $A$ in a topological cone
(or in a convex subspace) is the smallest convex set that contains
$A$.  This is also the set of linear combinations of the form
$ax+(1-a)y$, where $x, y \in A$, $a \in [0, 1]$.

The \emph{closed convex hull} of a set $A$ in a semitopological cone
is the smallest closed and convex set containing $A$.  This is the
closure of the convex hull of $A$
\cite[Lemma~4.10~(a)]{Keimel:topcones2}.
\begin{proposition}
  \label{prop:AN:sr:hull}
  Let $\bullet$ be the empty superscript, ``$\leq 1$'', or ``$1$''.
  Let {\updated $X$ be an $\AN_\bullet$-friendly space}, for example,
  a locally compact space (or, more generally, a core-compact space),
  {\updated or an LCS-complete space}.

  For every non-empty closed subset $A$ of
  $\Pred^\bullet_{\Nature\;wk} (X)$, $s^\bullet_{\AN} (r_{\AN} (A))$
  is the closed convex hull $cl (conv (A))$ of $A$ in
  $\Pred^\bullet_{\Nature\;wk} (X)$.
\end{proposition}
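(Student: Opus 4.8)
The plan is to reduce everything to the isomorphism already established in Theorem~\ref{thm:AN:iso}. Write $B \eqdef cl (conv (A))$ for the closed convex hull of $A$, formed inside $\Pred^\bullet_{\Nature\;wk} (X)$. Since $A$ is non-empty, so are $conv (A)$ and $B$; $B$ is closed by construction, and it is convex because the closure of a convex subset of a semitopological cone (or of a convex subspace of one) is again convex, by \cite[Lemma~4.10~(a)]{Keimel:topcones2}. Hence $B$ is a genuine element of $\HV^{cvx} (\Pred^\bullet_{\Nature\;wk} (X))$, and Theorem~\ref{thm:AN:iso} gives $s^\bullet_{\AN} (r_{\AN} (B)) = B$. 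The whole proposition will therefore follow once I show that $r_{\AN}$ does not distinguish $A$ from $B$, i.e.\ that $r_{\AN} (A) = r_{\AN} (B)$: for then $s^\bullet_{\AN} (r_{\AN} (A)) = s^\bullet_{\AN} (r_{\AN} (B)) = B = cl (conv (A))$, which is exactly the claim, both inclusions coming for free from this chain of equalities.

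It thus remains to check $r_{\AN} (A) = r_{\AN} (B)$, which I would establish in two steps, each using only that $r_{\AN} (C) (h) = \sup_{G \in C} G (h)$. First, passing to the convex hull leaves the pointwise supremum unchanged: for $G = a G_1 + (1-a) G_2$ with $G_1, G_2 \in A$ and $a \in [0, 1]$, one has $G (h) = a G_1 (h) + (1-a) G_2 (h) \leq \max (G_1 (h), G_2 (h)) \leq \sup_{G' \in A} G' (h)$, so $r_{\AN} (conv (A)) \leq r_{\AN} (A)$, while the reverse inequality is trivial from $A \subseteq conv (A)$. Second, passing to the closure also leaves it unchanged, by the same lower semi-continuity argument already used in Lemma~\ref{lemma:ADN:r}: each evaluation map $G \mapsto G (h)$ is continuous into ${\creal}_\sigma$, so if $G_0 \in B$ had $G_0 (h) = b$, then for every $b' < b$ the Scott-open set $[h > b']_\Nature$ contains $G_0$ and hence meets $conv (A)$, giving some $G' \in conv (A)$ with $G' (h) > b'$; letting $b' \to b$ yields $\sup_{G' \in conv (A)} G' (h) \geq b = G_0 (h)$, and taking the supremum over $G_0 \in B$ gives $r_{\AN} (B) \leq r_{\AN} (conv (A))$, the reverse inequality again being trivial. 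Combining the two steps gives $r_{\AN} (A) = r_{\AN} (conv (A)) = r_{\AN} (B)$.

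The argument is essentially formal once Theorem~\ref{thm:AN:iso} is in hand, so I do not expect a genuine obstacle; the only point requiring care is the bookkeeping about which ambient space the closure and convex hull are taken in. When $\bullet$ is ``$\leq 1$'' or ``$1$'', $\Pred^\bullet_{\Nature\;wk} (X)$ is merely a convex subspace of the cone $\Pred_{\Nature\;wk} (X)$ rather than a cone itself, so I must make sure that the subspace closure of $conv (A)$ is still convex and non-empty. This is fine: that subspace closure equals the intersection of the closure of $conv (A)$ in $\Pred_{\Nature\;wk} (X)$ (convex by \cite[Lemma~4.10~(a)]{Keimel:topcones2}) with the set $\Pred^\bullet_{\Nature\;wk} (X)$ (convex, since subnormalization and normalization of $G (\one)$ are preserved under convex combinations), hence is convex; and the lower semi-continuity argument above goes through verbatim in the subspace, since each $[h > b']_\Nature$ remains open there.
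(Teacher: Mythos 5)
Your proof is correct, and it reaches the conclusion by a slightly different decomposition than the paper's. The paper's proof is a short universal-property argument: by Lemma~\ref{lemma:AN:sr} and the closedness/convexity/non-emptiness of sections, $s^\bullet_{\AN} (r_{\AN} (A))$ is a closed convex set containing $A$; and for any closed convex $A' \supseteq A$, monotonicity of $r_{\AN}$ and $s^\bullet_{\AN}$ together with Theorem~\ref{thm:AN:iso} give $s^\bullet_{\AN} (r_{\AN} (A)) \subseteq s^\bullet_{\AN} (r_{\AN} (A')) = A'$; hence $s^\bullet_{\AN} (r_{\AN} (A))$ is the \emph{smallest} closed convex superset of $A$, which is $cl (conv (A))$ by definition. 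You instead prove the auxiliary identity $r_{\AN} (A) = r_{\AN} (cl (conv (A)))$ directly --- the convex-hull step by the elementary inequality $a G_1 (h) + (1-a) G_2 (h) \leq \max (G_1 (h), G_2 (h))$, the closure step by the same lower semi-continuity argument as in Lemma~\ref{lemma:ADN:r} and Lemma~\ref{lemma:AN:iso:aux} --- and then apply Theorem~\ref{thm:AN:iso} once to the closed convex hull. Both routes hinge on the same isomorphism theorem; yours costs a little more computation but yields the invariance of $r_{\AN}$ under closed convex hulls as an explicit standalone fact, whereas the paper gets the proposition in two lines without ever needing that identity (it follows a posteriori). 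Your care about taking closures and convex hulls inside the convex subspace $\Pred^\bullet_{\Nature\;wk} (X)$ rather than the ambient cone when $\bullet$ is ``$\leq 1$'' or ``$1$'' is correct and welcome; the paper's statement is also about the hull formed in that subspace, and its proof silently relies on the same observation that this subspace hull is closed, convex and non-empty.
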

\begin{proof}
  By Proposition~\ref{prop:AN:weak}, $s^\bullet_{\AN} (r_{\AN} (A))$
  is non-empty and closed.  It is convex, and clearly contains $A$.
  Conversely, for every closed convex subset $A'$ that contains $A$,
  $s^\bullet_{\AN} (r_{\AN} (A)) \subseteq s^\bullet_{\AN} (r_{\AN}
  (A')) = A'$, where the last equality follows from
  Theorem~\ref{thm:AN:iso}.  Therefore $s^\bullet_{\AN} (r_{\AN} (A))$
  is the smallest closed convex set containing $A$, namely $cl (conv (A))$.
\end{proof}

Symmetrically, one can define the \emph{compact saturated convex hull}
of a set $A$ as the smallest compact saturated, convex subset
containing $A$, if it exists.  The following shows that this exists if
$A$ is any non-empty compact saturated subset of
$\Pred^\bullet_{\Nature\;wk} (X)$, in particular.

\begin{proposition}
  \label{prop:DN:sr:hull}
  Let $X$ be a topological space.   Let $\bullet$ be the empty
  superscript, ``$\leq 1$'', or ``$1$''.

  For every non-empty compact saturated subset $Q$ of
  $\Pred^\bullet_{\Nature\;wk} (X)$, $s^\bullet_{\DN} (r_{\DN} (Q))$
  is the compact saturated convex hull of $Q$ in
  $\Pred^\bullet_{\Nature\;wk} (X)$.
\end{proposition}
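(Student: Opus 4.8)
The plan is to mirror the argument of Proposition~\ref{prop:AN:sr:hull}, exchanging the roles of closed and compact saturated sets, and of $\sup$ and $\min$. First I would record three facts about $s^\bullet_{\DN}(r_{\DN}(Q))$. It is compact saturated: indeed $r_{\DN}(Q)$ lies in $\Pred^\bullet_{\DN\;wk}(X)$ by Lemma~\ref{lemma:DN:r:def} (which uses that $Q$ is compact), so $s^\bullet_{\DN}(r_{\DN}(Q))$ is compact saturated by Lemma~\ref{lemma:DN:s:compact}. It is convex, as already observed at the start of Section~\ref{sec:isom}: any set of the form $s^\bullet_{\DN}(F)=\{G\mid F\leq G\}$ is convex, since $F\leq G_1$ and $F\leq G_2$ give $F\leq aG_1+(1-a)G_2$ for every $a\in[0,1]$. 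And it contains $Q$ by Lemma~\ref{lemma:DN:sr}.

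It then remains to establish minimality: that $s^\bullet_{\DN}(r_{\DN}(Q))$ is contained in every compact saturated convex subset $Q'$ of $\Pred^\bullet_{\Nature\;wk}(X)$ with $Q\subseteq Q'$. The key is that both $r_{\DN}$ and $s^\bullet_{\DN}$ are monotone, but with respect to the reverse-inclusion (Smyth) ordering. Concretely, from $Q\subseteq Q'$ I get $r_{\DN}(Q')(h)=\min_{G\in Q'}G(h)\leq\min_{G\in Q}G(h)=r_{\DN}(Q)(h)$, i.e.\ $r_{\DN}(Q')\leq r_{\DN}(Q)$ pointwise; and since $s^\bullet_{\DN}(F)$ shrinks as $F$ grows, this yields $s^\bullet_{\DN}(r_{\DN}(Q))\subseteq s^\bullet_{\DN}(r_{\DN}(Q'))$. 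Finally, because $Q'$ is convex and compact saturated, Theorem~\ref{thm:DN:iso} gives $s^\bullet_{\DN}(r_{\DN}(Q'))=Q'$. Combining, $s^\bullet_{\DN}(r_{\DN}(Q))\subseteq Q'$, as desired.

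Putting these together, $s^\bullet_{\DN}(r_{\DN}(Q))$ is a compact saturated convex set containing $Q$ and contained in every other such set, hence is the smallest one; this simultaneously proves that the compact saturated convex hull of $Q$ exists and equals $s^\bullet_{\DN}(r_{\DN}(Q))$. I expect no serious obstacle: unlike the angelic case, the demonic isomorphism Theorem~\ref{thm:DN:iso} holds for arbitrary topological spaces, so no local-convexity or friendliness hypothesis is needed, which matches the bare hypothesis of the statement. The one point demanding care is bookkeeping the order reversal inherent to the Smyth powerdomain, so that \emph{monotonicity} is applied in the correct direction at each step.
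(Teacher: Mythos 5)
Your proof is correct and follows essentially the same route as the paper's: establish that $s^\bullet_{\DN}(r_{\DN}(Q))$ is compact saturated, convex, and contains $Q$, then derive minimality from monotonicity of $s^\bullet_{\DN}\circ r_{\DN}$ together with the fact that Theorem~\ref{thm:DN:iso} gives $s^\bullet_{\DN}(r_{\DN}(Q'))=Q'$ for any convex compact saturated superset $Q'$. Your explicit bookkeeping of the order reversal is just a spelled-out version of the step the paper leaves implicit.
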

\begin{proof}
  By Proposition~\ref{prop:DN:weak}, $s^\bullet_{\DN} (r_{\DN} (Q))$
  is compact saturated.  It is convex, and clearly contains $Q$.  If
  $Q'$ is any convex, compact saturated superset of $Q$, then
  $s^\bullet_{\DN} (r_{\DN} (Q'))$ contains $s^\bullet_{\DN} (r_{\DN}
  (Q))$.  However, $s^\bullet_{\DN} (r_{\DN} (Q')) = Q'$ since $r_\DN$
  is an isomorphism with inverse $s^\bullet_\DN$ on non-empty convex
  compact saturated sets, by Theorem~\ref{thm:DN:iso}; so $Q'$
  contains $s^\bullet_{\DN} (r_{\DN} (Q))$.
\end{proof}

\begin{proposition}
  \label{prop:ADN:sr:hull}
  Let $\bullet$ be the empty superscript, ``$\leq 1$'', or ``$1$''.
  Let {\updated $X$ be an $\AN_\bullet$-friendly space}, for example,
  a locally compact space (or, more generally, a core-compact space),
  {\updated or an LCS-complete space}.

  For every lens $L$ of $\Pred^\bullet_{\Nature\;wk} (X)$,
  $s^\bullet_{\ADN} (r_{\ADN} (L))$ is a lens, and is the smallest
  convex lens that contains $L$.
\end{proposition}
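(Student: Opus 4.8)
The plan is to exploit the decomposition $s^\bullet_{\ADN}(F^-,F^+) = s^\bullet_{\DN}(F^-)\cap s^\bullet_{\AN}(F^+)$ recorded just after Definition~\ref{defn:rs:ADN}, together with the angelic and demonic isomorphism theorems already proved. Writing $F^-\eqdef r_{\DN}(L)$ and $F^+\eqdef r_{\AN}(L)$, so that $r_{\ADN}(L)=(F^-,F^+)$ and $M\eqdef s^\bullet_{\ADN}(r_{\ADN}(L)) = s^\bullet_{\DN}(F^-)\cap s^\bullet_{\AN}(F^+)$, I would first note that $r_{\ADN}(L)$ is a fork by Lemma~\ref{lemma:ADN:r:cont}, so $M$ is a lens by Lemma~\ref{lemma:ADN:s:def}. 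It is convex, being the intersection of the two convex sets $\{G\mid F^-\leq G\}$ and $\{G\mid G\leq F^+\}$. That $L\subseteq M$ is immediate: for every $G\in L$ and every $h$, $F^-(h)=\inf_{G'\in L}G'(h)\leq G(h)\leq \sup_{G'\in L}G'(h)=F^+(h)$, i.e.\ $F^-\leq G\leq F^+$. In particular $M$ is non-empty, hence is a convex lens containing $L$.

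The substance is minimality. Let $L'$ be any convex lens with $L\subseteq L'$, written in canonical form $L'=\upc L'\cap cl(L')$. The plan is to show $s^\bullet_{\DN}(F^-)\subseteq\upc L'$ and $s^\bullet_{\AN}(F^+)\subseteq cl(L')$, whence $M\subseteq \upc L'\cap cl(L')=L'$. For the demonic inclusion, $L\subseteq L'$ forces $r_{\DN}(L')\leq r_{\DN}(L)=F^-$ (the infimum over the larger set is smaller), so every $G$ with $F^-\leq G$ also satisfies $r_{\DN}(L')\leq G$, i.e.\ $s^\bullet_{\DN}(F^-)\subseteq s^\bullet_{\DN}(r_{\DN}(L'))$; then Lemma~\ref{lemma:ADN:r} rewrites $r_{\DN}(L')=r_{\DN}(\upc L')$, and Theorem~\ref{thm:DN:iso} applied to the convex compact saturated set $\upc L'$ gives $s^\bullet_{\DN}(r_{\DN}(\upc L'))=\upc L'$. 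The angelic inclusion is symmetric: $F^+=r_{\AN}(L)\leq r_{\AN}(L')$ yields $s^\bullet_{\AN}(F^+)\subseteq s^\bullet_{\AN}(r_{\AN}(L'))$, and Lemma~\ref{lemma:ADN:r} ($r_{\AN}(L')=r_{\AN}(cl(L'))$) together with Theorem~\ref{thm:AN:iso} on the convex closed set $cl(L')$ gives $s^\bullet_{\AN}(r_{\AN}(cl(L')))=cl(L')$.

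The one genuine point to check is that the two canonical pieces $\upc L'$ and $cl(L')$ of the convex lens $L'$ are themselves convex, so that the isomorphism theorems apply to them. For $cl(L')$ this is the fact that the closure of a convex subset of a semitopological cone is convex \cite[Lemma~4.10~(a)]{Keimel:topcones2}; for $\upc L'$ it is an elementary monotonicity computation ($a\cdot g+(1-a)\cdot g'\geq a\cdot g_0+(1-a)\cdot g_0'\in L'$ whenever $g\geq g_0$, $g'\geq g_0'$ with $g_0,g_0'\in L'$). I expect no further obstacle. In particular $X$ is assumed only $\AN_\bullet$-friendly, and the two invoked isomorphism theorems hold at exactly that generality (Theorem~\ref{thm:DN:iso} for an arbitrary topological space, Theorem~\ref{thm:AN:iso} for an $\AN_\bullet$-friendly space), so the argument sidesteps the almost-open-addition hypothesis of Theorem~\ref{thm:ADN:iso}. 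As a sanity check, one may also observe that $M$ equals the intersection of the compact saturated convex hull of $\upc L$ with $cl(conv(L))$, by Propositions~\ref{prop:DN:sr:hull} and~\ref{prop:AN:sr:hull} combined with Lemma~\ref{lemma:ADN:r}, exhibiting it concretely as a convex lens.
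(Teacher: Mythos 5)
Your proposal is correct and follows essentially the same route as the paper: decompose $s^\bullet_{\ADN}(r_{\ADN}(L))$ into its demonic and angelic halves, use Lemma~\ref{lemma:ADN:r} to replace $L$ by $\upc L$ and $cl(L)$, and invoke the angelic/demonic isomorphism theorems on the convex sets $\upc L'$ and $cl(L')$ (the paper merely packages these last steps as Propositions~\ref{prop:AN:sr:hull} and~\ref{prop:DN:sr:hull}, which are themselves immediate from Theorems~\ref{thm:AN:iso} and~\ref{thm:DN:iso}). Your observations that $\upc L'$ and $cl(L')$ are convex, and that the almost-open-addition hypothesis is not needed, match the paper exactly.
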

\begin{proof}
  We know that $s^\bullet_{\ADN} (r_{\ADN} (L)) = s^\bullet_{\AN}
  (F^+) \cap s^\bullet_{\DN} (F^-)$ where $(F^+, F^-) = r_{\ADN} (L)$.
  By Lemma~\ref{lemma:ADN:r}, $F^+ = r_{\AN} (cl (L))$ and $F^- =
  r_{\DN} (\upc L)$.  By Proposition~\ref{prop:AN:sr:hull},
  $s^\bullet_\AN (F^+)$ is the closed convex hull of $cl (L)$, hence
  also the smallest closed convex subset that contains $L$.  By
  Proposition~\ref{prop:DN:sr:hull}, $s^\bullet_\DN (F^-)$ is the
  compact saturated convex hull of $\upc L$, hence also the smallest
  compact saturated convex subset that contains $L$.  As a
  consequence, their intersection is a convex lens.

  For the final part of the Proposition, if $L'$ is any convex lens
  that contains $L$, then $cl (L')$ contains $L$, is closed, and is
  convex \cite[Lemma~4.10~(a)]{Keimel:topcones2}, so $cl (L')$
  contains $s^\bullet_\AN (F^+)$; also, $\upc L'$ contains $L$, is
  compact saturated and convex, hence contains $s^\bullet_\DN (F^-)$.
  It follows that $L' = cl (L') \cap \upc L'$ contains $s^\bullet_\AN
  (F^+) \cap s^\bullet_\DN (F^-) = s^\bullet_\ADN (r_\ADN (L))$.
\end{proof}

\section{Conclusion}
\label{sec:conclusion}

We have proved that, under some natural conditions, the powercone
models and the prevision models of mixed non-deterministic and
probabilistic choice coincide.  This involved functional analytic
methods that heavily rely on Keimel's cone-theoretic variants of the
classical Hahn-Banach separation theorems, plus the Schr\"oder-Simpson
Theorem.

The demonic cases are the nicest, and require absolutely no assumption
on the underlying topological space $X$.  As should be expected, the
erratic cases demand stronger assumptions.  An intermediate case is
given by the angelic cases, which require $[X \to
{\creal}_\sigma]_\sigma$ to be locally convex.  We know that this is the
case when $X$ is locally compact, and more generally, core-compact,
and we do not know of an example of a space $X$ for which $[X \to
{\creal}_\sigma]_\sigma$ would fail to be locally convex.  Can we
characterize those spaces $X$ for which $\Lform X$ 
is locally convex?

\textbf{Acknowledgments.}  I thank the anonymous referees for many
useful comments.  More than anything, I must thank them for their
urging me to try to do away with an assumption of coherence that I had
made in some results concerning the Hoare cases, which I had found
surprising, but could not do away with initially.  That assumption has
now disappeared, after some more hard work.  I also thank them, as
well as the editor, Prakash Panangaden, for their patience.  I did not
have time to work on this paper for more than two years, between March
2013 and June 2015, and they are to be commended for having waited for
so long: my sincere apologies to them.

{\updated%
  
\section{Review of papers citing \cite{JGL-mscs16}}
\label{sec:review-papers-citing}

Since the present paper fixes a mistake in \cite{JGL-mscs16}, we
review whether and how this mistake propagates in the literature.  To
jump to the conclusion, let me say right away that no published paper
is affected by the mistake as of November 1st, 2024.

As of November 1st, 2024, Google Scholar reports on 17 papers citing
\cite{JGL-mscs16}.  Most of them only cite \cite{JGL-mscs16} without
using its results, except for the following.
\begin{itemize}
\item \cite{dBGLJL:LCS} uses Theorem~4.11 of \cite{JGL-mscs16} in its
  Corollary~13.7, which states that $r_\AN$ defines a homeomorphism
  with inverse $s^\bullet_\AN$ between
  $\HV^{cvx} (\Pred^\bullet_{\Nature\;wk} (X))$ and
  $\Pred^\bullet_{\AN\;wk} (X)$ when $X$ is LCS-complete.  Our
  corrected version (Theorem~\ref{thm:AN:iso} of the present version)
  confirms that this is indeed true.

  We have used a few results from \cite{dBGLJL:LCS} in the present
  revised version.  The danger is of course that of circular
  reasoning.  Those results are Proposition~12.1 of \cite{dBGLJL:LCS},
  which states that every LCS-complete space is consonant, and
  Lemma~13.2 of \cite{dBGLJL:LCS}, which states that every
  LCS-complete space is $\odot$-consonant.  Those are proofs that are
  entirely independent of any result of \cite{JGL-mscs16}.
\item \cite{JGL:fullabstr:I} uses Propositions~3.41, 3.42 and 3.44 of
  \cite{JGL-mscs16}, which are unaffected.
\item \cite{JGL:distval:III} collects several properties about $r_\AN$
  and $s^\bullet_\AN$ at the beginning of its Section~3, in order to
  use them without mention later on.  This includes the faulty
  Corollary~3.12 of \cite{JGL-mscs16}.  But no result of
  \cite{JGL:distval:III} is affected, since $r_\AN$ and
  $s^\bullet_\AN$ are only used on continuous complete quasi-metric
  spaces in their $d$-Scott topology, and all such spaces are
  LCS-complete \cite[Theorem~4.1]{dBGLJL:LCS}.
\item \cite{JGL:prev:qmet}, which is a superset of
  \cite{JGL:distval:III}, is similarly unaffected.
\item \cite{JGL:functproj} (versions 1 and 2) relies on the faulty
  results of \cite{JGL-mscs16} in its Section~13.  One has to replace
  the categories $\mathbf K$ by categories of $\AN_\bullet$-friendly
  spaces.  This is corrected in version~3 of \cite{JGL:functproj}.
  \cite{JGL:functproj} does not appear to be cited by any paper yet,
  according to Google Scholar.
\item \cite{effects:LOP18} uses \cite{JGL-mscs16} in its Lemma~22.
  This is unaffected by the mistake, since it only uses results on the
  demonic cases, namely, on $r_\DN$ and $s^\bullet_\DN$.
  Alternatively, the primary justification used in Lemma~22 is really
  \cite{KP:mixed}, which is unaffected by the mistake, for obvious
  reasons.
\item \cite{GLJ:Valg} uses only one result from \cite{JGL-mscs16},
  namely that $\Lform X = \Lform {(X^s)}$, where $X^s$ is the
  sobrification of $X$, in its Example~3.14; this is unaffected.
\item \cite{JGL:distval:IV} cites \cite{JGL-mscs16} at several places.
  Lemma~3.24 of \cite{JGL-mscs16} is used before Proposition~4.1; this
  result is unaffected by the mistake.  The proof of Proposition~4.1
  uses a variety of results from \cite{JGL-mscs16}, namely Lemma~3.28,
  Lemma~3.29, Proposition~4.5 and Proposition~4.3, none of which is
  affected by the mistake; it also references the faulty
  Corollary~3.12 of \cite{JGL-mscs16}, but only in order to inform the
  reader of the roots of the results of \cite{JGL:distval:III} used in
  the proof and, as we have seen, none of the results of
  \cite{JGL:distval:III} are affected.  At any rate, Proposition~4.1
  of \cite{JGL:distval:IV} requires not only $\Lform X$ to be locally
  convex (and to have an almost addition map) but also $X$ to be
  compact, and that is condition~1 of the definition of
  $\AN_1$-friendliness, under which all results of \cite{JGL-mscs16}
  hold.  Remark~4.2 of \cite{JGL:distval:IV} makes use of
  Proposition~4.8 of \cite{JGL-mscs16}, which is not affected.

  We have used one result from \cite{JGL:distval:I} in the present
  revised version.  This is the fact that a continuous valuation $\nu$
  on a space $Y$ is supported on a subset $X$ if and only if
  $\nu = i [\mu]$ for some (unique) continuous valuation $\mu$ on the
  subspace $Y$, where $i$ is the inclusion map
  \cite[Lemma~4.6]{JGL:distval:I}.  This does not rely on any result
  of \cite{JGL-mscs16}, avoiding the pitfall of circular reasoning,
  just like in the case of \cite{dBGLJL:LCS}.
  
\item \cite{JGL:V:loccomp} only uses results on $r_\DN$ and
  $s^\bullet_\DN$ (e.g., Lemma~3.20 of \cite{JGL-mscs16} before
  Lemma~3.2, and Proposition~3.22 and Theorem~4.15 in the conclusion),
  not $r_\AN$ and $s^\bullet_\AN$, and is unaffected by the mistake.
\end{itemize}
Finally, I know about the following paper citing \cite{JGL-mscs16},
but not reported by Google Scholar:
\begin{itemize}
\item \cite{GL:weak:distr} versions~1 through~3 are affected by the
  mistake, through the definition of the category $\Topcat^\flat$
  (Definition~5.6; Definition~5.7 in version~4), which must consist of
  $\AN_\bullet$-friendly spaces, instead of spaces $X$ such that
  $\Lform X$ is locally convex.  This is corrected in version~4.
  \cite{GL:weak:distr} does not appear to be cited by any paper yet,
  according to Google Scholar.
\end{itemize}

I would like to mention that \cite{GL:weak:distr} contains simpler
proofs of the continuity of $s^\bullet_\DN$
\cite[Lemma~4.6]{GL:weak:distr} and of $s^\bullet_\AN$
\cite[Lemma~6.3]{GL:weak:distr}.  I have not included them here,
although a revised version would be an opportunity to do so, because I
wanted to concentrate on fixing the mistake in Lemma~3.4 of
\cite{JGL-mscs16}.  Certain other proofs, such as that of
Lemma~\ref{lemma:ADN:intermezzo:2}, could also be made simpler, and I
have left them as is for the same reason.  Finally,
\cite{JGL:functproj} contains needed results on the natural of $r_\DN$
and $s^\bullet_\DN$ \cite[Lemma~11.2]{JGL:functproj}, of $r_\AN$ and
$s^\bullet_\DN$ \cite[Lemma~13.2]{JGL:functproj}, of $r_\ADN$ and
$s^\bullet_\ADN$ \cite[Lemma~15.1]{JGL:functproj}, which have been
left out of this revised version for similar reasons.


}

\ifarxiv

\else
\bibliography{iso}
\fi

\end{document}
